\documentclass[letterpaper, 11pt]{article}
\usepackage[margin=0.86in]{geometry}

\usepackage{amsmath,amssymb,amsthm}
\usepackage{authblk}
\usepackage{dsfont}
\usepackage{cancel}
\usepackage{enumerate}
\usepackage{graphicx}
\usepackage{subfig}
\usepackage{braket}
\usepackage{mathtools}
\usepackage{etoolbox}
\usepackage{cite}
\usepackage[titletoc, title]{appendix}
\usepackage{caption}
\usepackage{tikz}
\usepackage{tikz-qtree}
\usetikzlibrary{patterns}

\usepackage[titletoc]{appendix}
\usepackage{sectsty}
\sectionfont{\fontsize{14}{15}\selectfont}
\subsectionfont{\fontsize{11}{15}\selectfont}
\captionsetup{font=small}

\usepackage[colorlinks=true, citecolor=blue, linkcolor=blue, urlcolor=blue, hypertexnames=false]{hyperref}

\graphicspath{{./figures/}}
\numberwithin{equation}{section}
\numberwithin{figure}{section}

\newcommand{\ketbra}[1]{|#1\rangle\langle#1|}
\newcommand{\ketbrat}[2]{|#1\rangle\langle#2|}
\newcommand{\vect}[1]{\boldsymbol{#1}}
\renewcommand{\Re}{\text{Re}}

\DeclareMathOperator{\Tr}{Tr}
\DeclareMathOperator{\poly}{poly}
\DeclareMathOperator{\dist}{dist}
\DeclareMathOperator{\diam}{diam}
\DeclareMathOperator{\spn}{span}
\DeclareMathOperator{\sgn}{sign}
\DeclareMathOperator{\rank}{rank}
\DeclareMathOperator{\erf}{erf}
\newcommand{\anc}{\textnormal{anc}}
\newcommand{\clock}{\textnormal{clock}}
\newcommand{\I}{\mathcal{I}}
\newcommand{\Id}{\mathds{1}}
\renewcommand{\H}{\mathcal{H}}
\newcommand{\V}{\mathcal{V}}
\newcommand{\E}{\mathcal{E}}
\renewcommand{\S}{\mathcal{S}}
\newcommand{\A}{\mathcal{A}}
\renewcommand{\O}{\mathcal{O}}
\newcommand{\C}{\mathds{C}}
\newcommand{\J}{\mathcal{J}}

\newcommand{\eff}{\textnormal{eff}}

\newcommand{\U}{\mathcal{U}}
\newcommand{\Hcirc}{\tilde{H}_\oneone^\textnormal{circuit}}

\newcommand{\circuit}{\textnormal{circuit}}
\newcommand{\gadget}{\textnormal{gadget}}
\newcommand{\els}{\textnormal{else}}
\newcommand{\targ}{\textnormal{target}}
\newcommand{\eps}{\mathcal{E}}
\renewcommand{\L}{\mathcal{L}}
\newcommand{\rest}{\textnormal{rest}}
\newcommand{\ideal}{\textnormal{ideal}}
\newcommand{\local}{\textnormal{local}}

\newcommand{\qedextra}{\hfill\ensuremath{\blacklozenge}}

\newcommand{\VC}{\textsc{VertexCover}}
\newcommand{\G}{\mathcal{G}}

\newcommand{\coNP}{\textnormal{\texttt{coNP}}}
\newcommand{\NPpoly}{\textnormal{\texttt{NP/poly}}}

\newtheorem{lemma}{Lemma}
\newtheorem{defn}{Definition}
\newtheorem{thm}{Theorem}

\newtheorem{prop}{Proposition}
\newtheorem{coro}{Corollary}
\newtheorem{claim}{Claim}

\newcommand{\oneone}{A}
\newcommand{\dicke}{B}

\let\oldproofname=\proofname
\renewcommand{\proofname}{\textbf{\oldproofname}}

\usepackage{soul}

\newcommand{\delete}[1]{}

\title{Hamiltonian Sparsification and Gap-Simulations}

\author[1]{Dorit Aharonov\thanks{dorit.aharonov@gmail.com}}
\author[2]{Leo Zhou\thanks{leozhou@g.harvard.edu}}
\affil[1]{\small School of Computer Science and Engineering, The Hebrew University, Jerusalem 91904, Israel}
\affil[2]{\small Department of Physics, Harvard University, Cambridge, MA 02138, USA}

\date{\today}

\begin{document}

\maketitle
\thispagestyle{empty}

\begin{abstract}

Analog quantum simulations---simulations of one Hamiltonian by another---is one of the major goals in the noisy intermediate-scale quantum computation (NISQ) era, and has many applications in quantum complexity.
We initiate the rigorous study of the physical resources required for such simulations, where we focus on the task of {\it Hamiltonian sparsification}.
The goal is to find a simulating Hamiltonian $\tilde{H}$ whose underlying  interaction graph has bounded degree (this is called {\it degree-reduction}) or much fewer edges than that of the original Hamiltonian $H$ (this is called {\it dilution}).
We set this study in a relaxed framework for analog simulations that we call {\it gap-simulation}, where $\tilde{H}$ is only required to simulate the groundstate(s) and spectral gap of $H$ instead of its full spectrum, and we believe it is of independent interest.

Our main result is a proof that in stark contrast to the classical setting, general degree-reduction is {\it impossible} in the quantum world, even under our relaxed notion of gap-simulation.
The impossibility proof relies on devising counterexample Hamiltonians and applying a strengthened variant of Hastings-Koma decay of correlations theorem.
We also show a complementary result where degree-reduction is possible when the strength of interactions is allowed to grow polynomially.
Furthermore, we prove the impossibility of the related sparsification task of generic Hamiltonian dilution, under a computational hardness assumption.
We also clarify the (currently weak) implications of our results to the question of quantum PCP.
Our work provides basic answers to many of the ``first questions'' one would ask about Hamiltonian sparsification and gap-simulation;
we hope this serves a good starting point for future research of these topics.

\end{abstract}

\clearpage

\newpage

\thispagestyle{empty}

\tableofcontents

\thispagestyle{empty}

\newpage

\setcounter{page}{1}

\section{Introduction}

A major theme in quantum computation is the idea of {\it analog quantum simulation}.
This is the task of simulating one Hamiltonian $H$ by another Hamiltonian $\tilde{H}$,
which might be more readily or easily implemented.
In fact, this goal was identified as a main motivation for realizing quantum computers as early as 1981 by Feynman\cite{Feynman1982},
with the idea that such analog quantum simulations can shed light on
properties of physical quantum systems that are hard
to simulate efficiently on classical computers.
Cirac and Zoller~\cite{CiracZollerNatPhys2012}
further developed this idea, and explained that
such simulators are likely to be achievable well
before fully fault-tolerant quantum computation
\cite{AharonovFaultTolerance, KnillThreshold, KitaevFaultTolerantAnyon}
becomes practical, which might take a long time.
While fault-tolerant quantum computers when realized can be used to apply {\it digital} quantum simulations~\cite{LloydDigitalSimulation1996}
(where a quantum {\it circuit}
simulates the time-evolution $e^{-iHt}$ under a
local Hamiltonian $H$),
{\it analog} quantum simulations are more accessible for
near-term experiments because they do not require full-fledged quantum computer.
Many groups are designing implementations
in a variety of experimental platforms\cite{SimonOpticalLatticeSim2011,Bloch2012,Blatt2012,Aspuru-Guzik2012,Houck2012,GeorgescuQuantumSimulationReview2014}, and we have recently seen
some experiments in intermediate-sized quantum systems in regimes where classical simulations are difficult~\cite{Bernien2017,Zhang2017}.
It has been argued that analog quantum simulation constitutes
one of the more interesting challenges in the
{\it noisy intermediate-scale quantum computing}
(NISQ) era \cite{Preskill2018}.

Beyond their natural physical applications,
analog simulations of Hamiltonians
are also very important
for quantum complexity theory.
For example, in the theory of quantum NP,
one is often interested in {\it reducing} problems defined by one class of
Hamiltonians to another
(e.g.~\cite{KSV02, quantumNPsurvey, BravyiHastingsSim,UniversalHamiltonian,KKR06}).
These reductions are often derived using {\it perturbative
gadgets} (e.g.~\cite{KKR06, OliveiraTerhal, BDLT08, JordanGadgets,
CaoImprovedOTGadget, CaoNagajGadget}).
Moreover, analog Hamiltonian simulators might also be useful
for the design of Hamiltonian-based quantum algorithms,
such as adiabatic algorithm~\cite{FarhiAdiabatic2000} and QAOA~\cite{QAOA}.
In those settings, it is often desirable to
tailor the Hamiltonians being used,
while maintaining the properties essential for the algorithm.

In this paper, we initiate the rigorous study of the
minimal resources required to simulate a given target Hamiltonian,
and ask:
When can we simulate a Hamiltonian $H$ by another $\tilde{H}$ that is {\it simpler}, easier, or more economic to implement?
Of course, this vaguely stated question can take several forms if made
more rigorous;
here we  focus on a natural goal which we loosely call {\it Hamiltonian sparsification},
which aims to simplify the {\it interaction graph} of the Hamiltonian.
For a $2$-local $n$-qubit Hamiltonian $H$, the interaction graph has $n$ vertices,
with edges connecting any pairs of qubits
that participate in a local term in $H$.
For a $k$-local Hamiltonian, we consider an interaction {\it hyper}graph, where
each term acting on $k$ qubits is represented by a hyperedge.
A generic $k$-local Hamiltonian has $\Theta(n^k)$ edges, and $\Theta(n^{k-1})$ degree per
vertex.
Roughly speaking, Hamiltonian sparsification
aims to simulate a Hamiltonian using another whose
interaction graph is more ``economic'', e.g., it has less edges (we refer to this as {\it dilution}) or
its degree is bounded (we refer to this as {\it degree-reduction}).

Hamiltonian sparsification has several important motivations.
First, it can help physicists tackle the immense hurdles they face when trying to realize Hamiltonians in the lab.
In addition, in many settings in
quantum complexity, such as in the study of quantum PCP \cite{qPCPsurvey}
and recent approaches to the area law question \cite{LocalTestOfEntanglement}, simulating a Hamiltonian by one with constant degree or fewer edges is a potentially important primitive.
Indeed, sparsification is used ubiquitously in classical computer science,
in a variety of applications;
we mention two important ones.
The first, graph sparsification (and more generally, matrix sparsification) is a central tool in matrix algorithms \cite{STNearlyLinearTimeAlg, SpielmanSpectralSparsify, SpielmanEffectiveResistance, BSST13}.
Famously, Ref.~\cite{BSS12} proved that any graph can be replaced by another which is sparse (namely, has small degree on average), such that their Laplacian matrices are spectrally similar.
Another common use of sparsification in classical computer science is {\it degree-reduction} (DR), used in the study of
local Constraint Satisfaction Problems (CSPs) and PCPs~\cite{dinur}.
We believe that this natural and basic notion deserves to be studied in the
quantum context as well, and might have interesting applications beyond
those we can foresee today.

\subsection{Gap-Simulations: Simulating only the low-lying part of the spectrum}

Before embarking on the study of Hamiltonian sparsification,
we first need an appropriate definition of analog simulation.
The study of analog Hamiltonian simulation was set on rigorous footing
in a recent work by Cubitt, Montanaro, and Piddock~\cite{UniversalHamiltonian};
their definition refines that of Bravyi and Hastings~\cite{BravyiHastingsSim},
and it roughly goes as follows:
A given Hamiltonian $H$ is simulated by ``encoding'' its full spectrum into the low-lying part of the
spectrum of $\tilde{H}$ acting on a larger Hilbert space.
When $\tilde{H}$ is implemented, then the low-lying part of its spectrum
can be used to derive properties and information about the original Hamiltonian $H$.
For obvious reasons, we will refer to this definition as
{\it full-spectrum simulation}.
In Ref.~\cite{UniversalHamiltonian}, the notion of {\it universal} Hamiltonians
was defined and studied: these are families of Hamiltonians
which are capable of performing full-spectrum simulations of
{\it any} given Hamiltonian, albeit generally with exponential overhead in energy.

While this strong notion of {\it full-spectrum simulation} is necessary for simulating all dynamical properties of a system,
it is common in physics that one is only interested in the properties of the low-energy states and, particularly,
the groundstates.
In addition, the spectral gap separating the groundstates from the rest of the spectrum is an intimately related quantity that is usually physically important.
For example, the groundstates encode exotic quantum behaviors such as topological order, and the spectral gap protects them~\cite{Wen1990,LevinWenTopo}.
Also, they are used together to define quantum phases of matter and characterize phase transitions \cite{SachdevQPT, UndecidabilityOfGap}.
Moreover, both are the main objects of interest in quantum computational complexity:
In quantum adiabatic algorithms~\cite{FarhiAdiabatic2000}, the goal is to prepare
a groundstate of a problem Hamiltonian, and the spectral gap governs the efficiency of the process.
In quantum NP theory \cite{quantumNPsurvey}, only the groundstate(s) of the Hamiltonian matters as it is the witness for the problem.
The spectral gap also determines the temperature of a thermal equilibrium (Gibbs) state that can be used to approximate the groundstate.
Hence, we believe that a natural and minimal notion of analog Hamiltonian simulation, which is still meaningful for many physical contexts, should require that both the space of groundstates and the spectral gap above it be preserved.

Therefore,
we suggest to consider sparsification,
or more generally Hamiltonian simulation,
using this minimal notion,
which we formally define as {\it gap-simulation}.
To the best of our knowledge, despite its naturalness, this
relaxed notion of Hamiltonian simulation was not formally
defined and rigorously studied previously
in the quantum complexity literature.

A Hamiltonian $\tilde{H}$ is said to \emph{gap-simulate} $H$ if it mimics
the groundstate(s) and the spectral gap of $H$; no constraints are imposed
on the excited part of the spectrum.
To provide a sensible definition requires some care,
since in the quantum world we can allow inaccuracies and
entanglement to an ancilla.
We provide two versions of the definition:
In the weaker one (Def.~\ref{defn:hamsimul-incoherent}),
the groundspace is mimicked {\it faithfully},
i.e. the {\it support} of any groundstate
of $\tilde{H}$, when reduced to the original Hilbert space,
is close to the groundspace of $H$.
However, this definition does
not require quantum {\it coherence} within the groundspace be maintained.
Such coherence is guaranteed by our stronger definition (Def.~\ref{defn:hamsimul}),
in which all superpositions within the groundspace are simulated.
The extent to which the gap-simulation is incoherent (or unfaithful) is quantified via a small constant $\epsilon$ (or $\delta$).
It seems that the coherent notion
is the ``correct'' one for most quantum applications,
though the weaker one might also be useful in certain contexts (see Sec.~\ref{sec:discussion}).
We mention that here, like in Ref.~\cite{BravyiHastingsSim, UniversalHamiltonian},
we allow encoding of the qubits.
Typically, we consider ``localized'' encodings, though this is not explicitly required in the definition.

To set the stage, some basic results about the framework are provided: We show in Lemma \ref{lem:equiv} that
for Hamiltonians with unique groundstates, our two definitions of gap-simulations coincide.
Moreover, both coherent and incoherent gap-simulation definitions are
shown to be stable under compositions.

How does the gap-simulation framework compare with the
stricter definitions of full-spectrum simulations developed in
Ref.~\cite{BravyiHastingsSim,UniversalHamiltonian}?
In Appendix~\ref{sec:comp-defns}, this connection is discussed formally;
roughly,
our definition is indeed a relaxed version of full-spectrum simulations whose spectral error is smaller than the spectral gap, up to varying restrictions about encoding.
We choose to work here with the more relaxed definition of
gap-simulation, since impossibility results for a weaker definition are
of course stronger. More generally,
it seems that this framework is an important and relevant one to consider
in physics and quantum complexity contexts.
Being less demanding, gap-simulation is likely achievable in certain cases where
full-spectrum simulation is difficult or even impossible.

\vspace{-10pt}
\subsection{Main Results}
Equipped with this framework of Hamiltonian sparsification via
gap-simulations, we ask: When are sparsifications
possible in the quantum world?
It is conceivable that, like in the analogous classical settings mentioned above \cite{dinur,BSST13}, they ought to be always possible.
The main result of in this paper (Theorem \ref{thm:main}) shows that in stark contrast to the classical setting,
both coherent and incoherent degree-reductions are not generally possible in the quantum world, even if one uses the relaxed notion of gap-simulation.
This impossibility phenomenon is due to the existence of many-body entanglement in some quantum groundstates; we show, using a strengthened version of Hastings-Koma decay of correlation theorem \cite{HastingsKoma}, that there exist local Hamiltonians whose groundstates cannot be coherently mapped into the groundspace of a gapped Hamiltonian with constant degree.
Though one might suspect this is a consequence of degeneracy in the groundspace, we show that it holds even in the case of a unique groundstate.
We believe this is a surprising and curious phenomenon, which demonstrates
the richness in the subject, and highlights the difference in the resources required for classical versus quantum Hamiltonian simulation.

This impossibility result on degree-reduction is essentially tight, as we provide a complementary result (Theorem~\ref{thm:degree-reduction-poly}) based on a somewhat sophisticated application of the circuit-to-Hamiltonian construction, stating that  degree-reduction becomes possible for any local Hamiltonian with non-negligible spectral gap, when polynomially large overhead in interaction strength is allowed.

We also study a related important sparsification task: dilution.
While our main result Theorem~\ref{thm:main} is an information-theoretic result that rules out {\it existence} of degree-reducers regardless of computational power, we are unable to provide such a strong result in the case of dilution.
Information-theoretically, we can only rule out dilution with perfect (or inverse-polynomially close to perfect) coherence (Theorem~\ref{thm:imposs1-dilute}).
Nevertheless, we are able to prove impossibility of any efficient classical algorithm to find diluters with constant unfaithfulness, for generic (even classical) Hamiltonians (Theorem~\ref{thm:imposs-dilute}).
The proof of this theorem (relying on Ref.~\cite{DellvanMelkebeek}) works under the assumption that $\coNP \not\subseteq \NPpoly$ (alternatively, the polynomial hierarchy does not collapse to its third level).
Although generic constructive dilution is ruled out by our Theorem~\ref{thm:imposs-dilute}, the question of existence of diluters for general Hamiltonian, with bounded or large interaction strengths, remains open.

The paper provides quite a few further results
complementing the above-mentioned main contributions.
These build on ideas in classical PCP reductions and perturbative gadgets.
In addition, the ideas studied here are strongly reminiscent of questions arising in the context of the major open problem of quantum PCP~\cite{qPCPsurvey}.
We clarify this connection and provide some preliminary results along these lines.

We believe that
the study of the resources required for
Hamiltonian simulations in various contexts,
as well as the framework of gap-simulation, are
of potential deep interest for physics as well as
quantum complexity.
The questions raised touch upon
a variety of important challenges, from
quantum simulations, to algorithm design, to quantum PCP and NP reductions,
to physical implementations on near-term quantum processors,
and more.
Their study might also shed light on questions in many-body physics, by developing tools
to construct ``equivalent'' Hamiltonians, from the point of view
of the study of groundstate physics.
The discussion in Sec.~\ref{sec:discussion} includes a more detailed list
of some of the open questions and implications.

\subsection{Overview}
In Sec.~\ref{sec:set-the-stage}, we set the stage by providing definitions of gap-simulation and sparsification, and proving basic facts about this new framework.
In Sec.~\ref{sec:results}, we state our results formally.
Subsequently, Sec.~\ref{sec:proofs-overview} provides elaborated and intuitive proof sketches, and Sec.~\ref{sec:discussion} provides further discussion.
All technical proofs are deferred to the appendices.

\section{Definition of the Framework: Setting the Stage\label{sec:set-the-stage}}
\vspace{-5pt}

\subsection{Gap-Simulations of Hamiltonians\label{sec:gap-simulation}}

We restrict our attention to $k$-local Hamiltonians $H=\sum_{i=1}^M H_i$ acting
on $n$ qu$d$its (with internal states $\{\ket{0},\ldots,\ket{d-1}\}$), where each term $H_i$ acts nontrivially on a (distinct)
subset of at most $k$ qudits.
We denote $\lambda_j(X)$ as the $j$-th lowest eigenvalue of $X$, and $\|X\|$ as the spectral norm of $X$.
In addition, for any Hermitian projector $P$, we
denote $P^\perp \equiv \Id-P$, and $\ket{\psi}\in P \Longleftrightarrow P\ket{\psi} = \ket{\psi}$.

\begin{defn}[groundspace, energy spread and gap]
\label{defn:gap}
Consider a family of $n$-qudit Hamiltonians $\{H_{(n)}\}_{n=1}^\infty$.
Let $E_n^g = \lambda_1(H_{(n)})$, and suppose $P_{(n)}$ is a Hermitian projector onto the subspace of eigenstates of $H_{(n)}$ with energy  $\le E_n^g + w_n\gamma_n$,
for some $\gamma_n > 0$, $0\le w_n < 1$, such that
\begin{gather}
[H_{(n)}, P_{(n)}] = 0, \quad \|P_{(n)} (H_{(n)}-E_n^g) P_{(n)} \|\le w_n \gamma_n, \nonumber \\
\textnormal{and} \quad
\lambda_j(P^\perp_{(n)} (H_{(n)} -E_n^g ) P^\perp_{(n)} + \gamma_n P_{(n)}) \ge \gamma_n \quad \forall j.
\end{gather}
We call the subspace onto which $P_{(n)}$ projects a
\emph{quasi-groundspace}, $w_n$ its \emph{energy spread},
and $\gamma_n$ its \emph{quasi-spectral gap}.
When we choose $w_n = 0$ and $\gamma_n = \min_j \{\lambda_j(H_{(n)})-E_n^g: \lambda_j(H_{(n)})\neq E_n^g\}$, we call the quasi-groundspace that $P_{(n)}$ projects onto simply \emph{the groundspace} of $H_{(n)}$, and $\gamma_n$ \emph{the spectral gap} of $H_{(n)}$.
Let $w_\infty = \sup_n w_n$ and $\gamma_\infty = \inf_n \gamma_n$.
If $\gamma_\infty > 0$ and $w_\infty<1$, we say  $\{H_{(n)}\}_{n=1}^\infty$ is \emph{spectrally gapped}.
\end{defn}

Below, we omit the subscript $n$ in $H_{(n)}$, referring to a single
$H$, with the usual implicit understanding that we consider
families of Hamiltonians, where $n\to\infty$.
All explicit Hamiltonians we gap-simulate here
have $w_n=0$, but Definition \ref{defn:gap} is more general and
allows $w_n>0$, so that it can capture situations with slightly
perturbed groundstates (or when simulating a larger
low-energy part of the spectrum). We now define Hamiltonian
gap-simulation, visualized in Fig.~\ref{fig:gapsimul}:

\begin{defn}[gap-simulation of Hamiltonian]
\label{defn:hamsimul}
Let $H$ and $\tilde{H}$ be two Hamiltonians, defined on Hilbert spaces $\H$ and $\tilde{\H}$ respectively, where $\tilde{\H}$.
Let $V: \H\otimes \H_\anc  \to \tilde{\H}$ be an isometry ($V^\dag V=\Id$), where $\H_\anc$ is some ancilla Hilbert space.
Denote $\tilde{E}^g \equiv \lambda_1(\tilde{H})$.
Per Definition~\ref{defn:gap}, let $P$ be a quasi-groundspace projector of $H$, $\gamma$ its quasi-spectral gap.
We say that $\tilde{H}$ \emph{gap-simulates}
$(H,P)$ with \emph{encoding} $V$, \emph{incoherence} $\epsilon\ge 0$ and \emph{energy spread} $0\le\tilde{w}<1$ if the following conditions are both satisfied:
\begin{enumerate}
\item There exists a Hermitian projector $\tilde{P}$ projecting onto a
subspace of eigenstates of $\tilde{H}$ such that
    \begin{gather}
    [\tilde{H}, \tilde{P}]= 0, \quad \|\tilde{P}(\tilde{H} -\tilde{E}^g)\tilde{P}\|\le \tilde{w}\gamma, \quad \textnormal{and} \quad
    \lambda_j(\tilde{P}^\perp (\tilde{H} - \tilde{E}^g )\tilde{P}^\perp + \gamma \tilde{P}) \ge \gamma \quad \forall j.
     \label{eq:strongsimul}
    \end{gather}
	I.e., $\tilde{P}$ projects onto a
 quasi-groundspace of $\tilde{H}$ with quasi-spectral gap not smaller
than that of $P$ in $H$, and energy spread $\tilde{w}$.
\item There exists a Hermitian projector $P_\anc$ acting on $\H_\anc$, so that
\end{enumerate}
\begin{flalign} \label{eq:incoherence}
\textnormal{[bounded incoherence]} &&
\|\tilde{P} - V(P\otimes P_\anc)V^\dag \| \le \epsilon &&
\phantom{\textnormal{(incoherence)}}
\end{flalign}
When $P$ projects onto the groundspace of $H$, rather than a
quasi-groundspace,
we usually do not mention $P$ explicitly, and simply
say that $\tilde{H}$ \emph{gap-simulates} $H$.
\end{defn}

\begin{figure}[h]
\centering
\includegraphics[height=3.5cm]{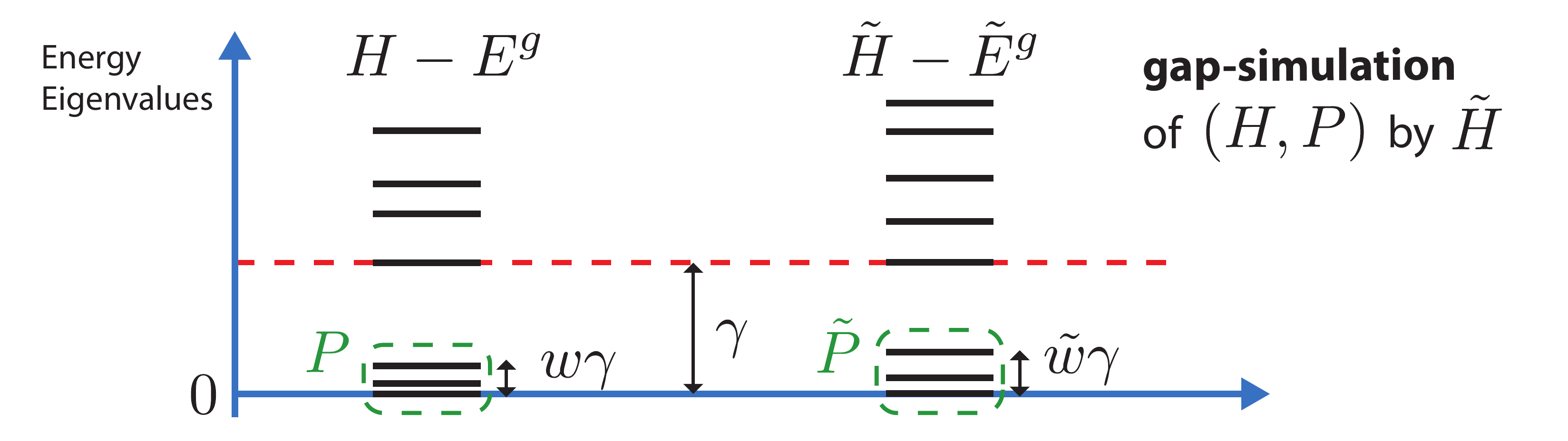}
\caption{\label{fig:gapsimul}Visualizing gap-simulation of Hamiltonian $(H, P)$ by $\tilde{H}$. If $\|\tilde{P}-V(P\otimes P_\anc)V^\dag\|\le \epsilon$, for some isometry $V$, then this is a coherent gap-simulation with $\epsilon$-incoherence.
If $\|\tilde{P} - V(P\otimes \Id_\anc)V^\dag \tilde{P}\|\le \delta$, then this is an incoherent but faithful gap-simulation with $\delta$-unfaithfulness.}
\vspace{-5pt}
\end{figure}

Requiring $\epsilon$ from Eq.~\eqref{eq:incoherence} be small
ensures that
\emph{coherence} in the groundspace is maintained by the gap-simulation.
This is illustrated by considering a Hamiltonian $H$ with two orthogonal
groundstates $\ket{g_1}$ and $\ket{g_2}$.
The condition of Eq.~\eqref{eq:incoherence} essentially says that for any coherent superposition $\ket{g}=c_1\ket{g_1}+c_2\ket{g_2}$, and a state $\ket{a}\in P_\anc$ on the ancilla,
there exists a ground state of $\tilde{H}$ that looks like $\ket{\tilde{g}}= V \ket{g}\otimes\ket{a} + \O(\epsilon)$. Moreover, any groundstate of $\tilde{H}$
could be written in this form.
This would preserve the expectation value of any observable in the groundspace, i.e.
$\braket{g|\hat{\sigma}|g}\approx \braket{\tilde{g}|V\hat{\sigma}V^\dag|\tilde{g}} + \O(\epsilon)$.
In contrast, one can consider an alternative situation where the groundspace of a simulator $\tilde{H}$ is spanned by states of the form $\ket{\tilde{g}_i'} \approx V \ket{g_i}\otimes\ket{a_i}$, where $\braket{a_i|a_j}\ll 1$.
This situation remains interesting,
as finding a ground state $\ket{\tilde{g}'_i}$ of $\tilde{H}$ reveals information about a ground state of $H$ by decoding:
$\ketbra{g_i} \approx \Tr_\anc(V^\dag\ketbra{\tilde{g}'_i} V)$.
However,
the coherence among groundstates is destroyed,
since $\ket{g}=\ket{g_1}+\ket{g_2}$ is mapped to $\ket{\tilde{g}'}\approx V(\ket{g_1}\otimes\ket{a_1}+\ket{g_2}\otimes\ket{a_2})$, and observables
such as $\hat{\sigma}=\ketbrat{g_1}{g_2}$ are not preserved: $\braket{g|\hat{\sigma}|g}\not\approx\braket{\tilde{g}'|V\hat{\sigma}V^\dag|\tilde{g}'}$.

Although coherence seems important to maintain in most quantum settings,
we also define {\it incoherent} gap-simulation, which
may be relevant in some situations (see discussion in Sec.~\ref{sec:discussion}).

\begin{defn}[incoherent gap-simulation]
\label{defn:hamsimul-incoherent}
Consider two Hamiltonians $H$ and $\tilde{H}$, $P$ a quasi-groundspace projector of $H$, and $V$ some isometry in the same setting as in Definition~\ref{defn:hamsimul}.
We say that $\tilde{H}$ \emph{incoherently gap-simulates} $(H,P)$ with \emph{encoding} $V$, \emph{unfaithfulness} $\delta \ge 0$ and energy spread $0\le \tilde{w}<1$ if it satisfies the first condition of Definition~\ref{defn:hamsimul} and, instead of the second condition of Eq.~\eqref{eq:incoherence},
\begin{flalign}\label{eq:unfaithfulness}
\textnormal{[bounded unfaithfulness]}&&
\|\tilde{P} - V(P\otimes \Id_\anc)V^\dag \tilde{P}\|\le \delta
&&
\phantom{\textnormal{(unfaithfulness)}}
\end{flalign}
Again, when $P$ projects onto the groundspace of $H$, we simply say $\tilde{H}$ \emph{incoherently gap-simulates} $H$.
\end{defn}

Small unfaithfulness essentially
means that the {\it support} of the vectors in the groundspace of
$\tilde{H}$
is roughly contained in an subspace spanned by encoding the groundspace of $H$ with some ancilla.

It is easy to see that small incoherence implies small unfaithfulness, namely $\delta\le 2\epsilon$ (see Appendix \ref{sec:uniqueGS}).
However, small unfaithfulness is a strictly weaker condition
than small incoherence; we will see an example in
Prop.~\ref{prop:incoherent-tree}.
Importantly, when $H$ has a {\it unique}
groundstate, the two notions are equivalent up to a constant
(the proof of this fact is perhaps surprisingly not entirely trivial; see Appendix \ref{sec:uniqueGS}):

\begin{lemma}[incoherent gap-simulation is coherent when groundstate is unique]
\label{lem:equiv}
Suppose $H$ has a unique groundstate, with
groundspace projector $P=\ketbra{g}$.
If $\tilde{H}$ incoherently gap-simulates $H$ with unfaithfulness
$\delta <1$, then it also gap-simulates $H$ with incoherence
$\epsilon \le \sqrt{2}\delta/\sqrt{1-\delta^2}$.
\end{lemma}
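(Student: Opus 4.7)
The plan is to exhibit an explicit ancilla projector $P_\anc$ and control $\|\tilde{P}-V(\ketbra{g}\otimes P_\anc)V^\dag\|$ by elementary projector algebra. Set $Q\equiv V(\ketbra{g}\otimes\Id_\anc)V^\dag$; since $V$ is an isometry and $\ketbra{g}\otimes\Id_\anc$ is a projector, $Q$ is itself a projector on $\tilde{\H}$ with range $V(\ket{g}\otimes\H_\anc)$, and the unfaithfulness hypothesis rewrites as $\|(\Id-Q)\tilde{P}\|\le\delta$. The crucial structural observation — which uses the uniqueness of the groundstate essentially — is that \emph{every} subprojector $Q_0\preceq Q$ is automatically of the form $V(\ketbra{g}\otimes P_\anc)V^\dag$: any subspace of the one-dimensional-times-ancilla space $V(\ket{g}\otimes\H_\anc)$ equals $V(\ket{g}\otimes W)$ for some $W\subseteq\H_\anc$, and we let $P_\anc$ project onto $W$. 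The problem therefore reduces to finding a subprojector of $Q$ of the same rank as $\tilde{P}$ that is close to $\tilde{P}$ in operator norm.

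I take $Q_0$ to be the orthogonal projector onto $Q\cdot\mathrm{range}(\tilde{P})$. Fix an orthonormal basis $\{\ket{\tilde{\psi}_i}\}_{i=1}^r$ of $\mathrm{range}(\tilde{P})$ and set $\ket{\psi'_i}\equiv Q\ket{\tilde{\psi}_i}$ and $\ket{\xi_i}\equiv(\Id-Q)\ket{\tilde{\psi}_i}$. The Gram matrix of $\{\ket{\psi'_i}\}$ equals $\Id - X$ with $X_{ij}=\braket{\xi_i|\xi_j}$; writing $X=\Xi^\dag\Xi$ where $\Xi$ is the matrix with columns $\ket{\xi_i}$, one has $\|X\|=\|\Xi\|^2\le\|(\Id-Q)\tilde{P}\|^2\le\delta^2<1$. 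Hence the $\ket{\psi'_i}$ are linearly independent and $\rank(Q_0)=r=\rank(\tilde{P})$; the corresponding $P_\anc$ projects onto $\spn\{\ket{a_i}\}$, where $\ket{g}\otimes\ket{a_i}\equiv(\ketbra{g}\otimes\Id_\anc)V^\dag\ket{\tilde{\psi}_i}$.

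The final step bounds $\|\tilde{P}-Q_0\|$. The key identity is $Q_0\ket{\tilde{\psi}}=Q\ket{\tilde{\psi}}$ for every $\ket{\tilde{\psi}}\in\mathrm{range}(\tilde{P})$: indeed $Q_0\preceq Q$ yields $Q_0 Q=Q_0$ and hence $Q_0(\Id-Q)=0$, while $Q\ket{\tilde{\psi}}\in\mathrm{range}(Q_0)$ by construction. Therefore $(\Id-Q_0)\tilde{P}=(\Id-Q)\tilde{P}$, so $\|(\Id-Q_0)\tilde{P}\|\le\delta$. Invoking the standard principal-angles identity $\|A-B\|=\sin\theta_{\max}$ for projectors of equal rank (with $\theta_{\max}$ the largest principal angle between their ranges) gives $\|\tilde{P}-Q_0\|\le\delta$, which is stronger than the stated bound. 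A more direct but less sharp route — normalizing each $\ket{\psi'_i}$ (whose norm is $\ge\sqrt{1-\delta^2}$) to an orthonormal basis of $\mathrm{range}(Q_0)$ and estimating $\|\tilde{P}-Q_0\|$ vector-by-vector — produces exactly the constant $\sqrt{2}\delta/\sqrt{1-\delta^2}$ claimed in the lemma, with the $\sqrt{1-\delta^2}$ in the denominator coming from this normalization factor.

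I expect the main obstacle to be the rank-matching step: one cannot simply take $P_\anc=\Id_\anc$, since $\|\tilde{P}-Q\|=1$ whenever $\rank(Q)>\rank(\tilde{P})$. Passing to a rank-matched subprojector requires both the hypothesis $\delta<1$ (to keep $\{Q\ket{\tilde{\psi}_i}\}$ linearly independent) and the rank-$1$ structure of $P$ (so that every subprojector of $Q$ inherits the required tensor-product form on the ancilla). The latter is where uniqueness of the groundstate enters essentially, and without it the construction breaks down — consistent with the strict separation between incoherence and unfaithfulness when $\rank(P)>1$, as witnessed by Proposition~\ref{prop:incoherent-tree}.
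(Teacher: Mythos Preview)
Your construction of $P_\anc$ coincides with the paper's: both take the ancilla projector onto $\spn\{\ket{a_i}\}$ where $V(\ket{g}\otimes\ket{a_i})$ are the $Q$-components of an orthonormal basis of $\mathrm{range}(\tilde{P})$. The paper, however, does not establish rank \emph{equality}; it only argues $\rank(Q_0)\le\rank(\tilde{P})$ and then invokes a general ``Projector Difference Lemma'' (valid for $\rank(\Pi_A)\le\rank(\Pi_B)$) that bounds $\|\Pi_A-\Pi_B\|$ by separately controlling the action on $\mathrm{range}(\Pi_B)$ and on its complement, producing the constant $\sqrt{2}\delta/\sqrt{1-\delta^2}$. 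Your route is sharper: by using $\delta<1$ to force $\rank(Q_0)=\rank(\tilde{P})$ via the Gram matrix, you can apply the equal-rank principal-angle identity $\|\tilde{P}-Q_0\|=\|(\Id-Q_0)\tilde{P}\|$ and obtain $\epsilon\le\delta$ outright. So your argument actually strengthens the lemma; the paper's extra factor is an artifact of using a lemma that does not exploit rank equality. The ``less sharp route'' you sketch is close in spirit to the paper's Projector Difference Lemma, though the paper's vector-by-vector analysis is organized somewhat differently (it picks an arbitrary $\ket{\phi^\perp}\perp\tilde{P}$ and finds a vector in $\spn\{\tilde{P},\ket{\phi^\perp}\}$ orthogonal to $Q_0$ by dimension counting).
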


While we do not explicitly restrict the form of encoding $V$ in the above definitions, we need to specify them for the impossibility proofs, where we will consider localized encoding:

\begin{defn}[localized encoding]
\label{defn:localized-encoding}
Consider a (possibly incoherent) gap-simulation of $H$ by $\tilde{H}$ encoded by an isometry $V:\H\otimes \H_\anc \to \tilde{\H}$.
Let $\H\otimes \H_\anc =\bigotimes_{i=1}^n(\H_i\otimes \A_i)$, where $\A_i$ is the $i$-th ancilla subsystem;
also let $\tilde{\H}=\bigotimes_{i=1}^m \tilde{\H}_i$, $m\ge n$.
We say $V$ is a \emph{localized encoding} if either of the following is true:
\begin{enumerate}
\item $V=\bigotimes_{i=1}^n V_i$, where $V_i:\H_i\otimes \A_i\to \tilde{\H}_i$, and $\tilde{\H}_i$ consists of $O(1)$ qudits in $\tilde{H}$ for $i=1,\ldots,n$.
\item $V$ is a constant-depth quantum circuit: $V=\prod_{a=1}^D U_a$, where $D=O(1)$, $U_a = \bigotimes_{\mu} U_{a,\mu}$, and $U_{a,\mu}$ is a unitary operator acting on $O(1)$ number of qudits.
\end{enumerate}
We say $V$ is an \emph{$\eta$-localized encoding} if there exists a localized encoding $V_L$ such that $\|V-V_L\|\le \eta$.
\end{defn}

In addition to constant-depth quantum circuits, any quantum error-correcting code where each logical qudit is encoded as $O(1)$ qudits is also a localized encoding.
Note it is easy to see that if a gap-simulation has $\eta$-localized encoding $V$ and incoherence $\epsilon$ (or unfaithfulness $\delta$), it is also a gap-simulation with localized encoding $V_L$ and incoherence $\epsilon'\le\epsilon+2\eta$ (or unfaithfulness $\delta'\le \delta+2\eta$).
Hence, we usually restrict our attention to fully localized encoding in the remainder of the paper.

It is also fairly straightforward to show that compositions of gap-simulation behave intuitively:
\begin{lemma}[Composition]
\label{lem:composition}
Suppose $H_1$ (incoherently)  gap-simulates $(H_0,P_0)$ with encoding $V_1$, incoherence $\epsilon_1$ (or unfaithfulness $\delta_1$), energy spread $\tilde{w}_1$, and a corresponding quasi-groundspace projector $P_1$.
Also suppose $H_2$ (incoherently) gap-simulates $(H_1, P_1)$ with encoding $V_2$, incoherence $\epsilon_2$ (or unfaithfulness $\delta_2$), and energy spread $\tilde{w}_2$.
Then $H_2$ (incoherently) gap-simulates $(H_0,P_0)$ with encoding $V_2 (V_1\otimes \Id_{\anc,1})$, incoherence $\le\epsilon_2+\epsilon_1$ (or unfaithfulness $\le 2\delta_2+\delta_1$), and energy spread $\tilde{w}_2$.
\end{lemma}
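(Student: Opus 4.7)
The plan is to use $P_2$ itself as the combined quasi-groundspace projector of the composed simulation, and to verify the two conditions of Definition~\ref{defn:hamsimul} in turn. The spectral/spread condition is immediate once we observe that the quasi-spectral gap in Definition~\ref{defn:gap} is a lower bound, so we may reuse the value $\gamma$ (the quasi-spectral gap of $P_0$ in $H_0$) throughout. Declare $P_1$ to have effective quasi-spectral gap exactly $\gamma$, which is valid since the first-simulation hypothesis already guarantees it is at least $\gamma$. Then the spread bound for $P_1$ reads $\|P_1(H_1-E_1^g)P_1\|\le \tilde w_1\gamma$, and feeding this into the second simulation yields $\|P_2(H_2-E_2^g)P_2\|\le \tilde w_2\gamma$ together with $\lambda_j(P_2^\perp(H_2-E_2^g)P_2^\perp+\gamma P_2)\ge \gamma$. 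Hence $P_2$ has energy spread $\tilde w_2$ and quasi-spectral gap $\gamma$ with respect to $(H_0, P_0)$. Set $V\equiv V_2(V_1\otimes \Id_{\anc,1})$, an isometry as a composition of isometries, and take $P_{\anc,1}\otimes P_{\anc,2}$ as the combined ancilla projector.

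For the coherent case, the tensor structure collapses: since $V_1$ acts trivially on $\H_{\anc,2}$,
\[
V(P_0\otimes P_{\anc,1}\otimes P_{\anc,2})V^\dag \;=\; V_2\bigl[V_1(P_0\otimes P_{\anc,1})V_1^\dag \otimes P_{\anc,2}\bigr]V_2^\dag.
\]
Insert and subtract $V_2(P_1\otimes P_{\anc,2})V_2^\dag$ and apply the triangle inequality: one picks up $\epsilon_2$ from the second-simulation hypothesis plus a term bounded by $\|V_2\|\,\|P_1-V_1(P_0\otimes P_{\anc,1})V_1^\dag\|\,\|P_{\anc,2}\|\,\|V_2^\dag\|\le \epsilon_1$, giving total incoherence $\epsilon_1+\epsilon_2$.

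The incoherent case is where the main obstacle lies, because the unfaithfulness bound is one-sided (it controls $(\Id-Q)P$ but not $Q(\Id-P)$). Let $Q_1\equiv V_1(P_0\otimes \Id_{\anc,1})V_1^\dag$; this is a projector since $V_1$ is an isometry, and the target operator becomes $V_2(Q_1\otimes \Id_{\anc,2})V_2^\dag$. After a first triangle-inequality step through $V_2(P_1\otimes \Id_{\anc,2})V_2^\dag$, we must bound
\[
\bigl\|V_2\bigl((P_1-Q_1)\otimes \Id_{\anc,2}\bigr)V_2^\dag P_2\bigr\|.
\]
The key step is the decomposition $P_1-Q_1 = (\Id-Q_1)P_1 - Q_1(\Id-P_1)$. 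The first piece has operator norm $\|(\Id-Q_1)P_1\|\le \delta_1$ directly from hypothesis. For the second piece, $\|Q_1(\Id-P_1)\|$ itself admits no useful bound, but we only need its action on $V_2^\dag P_2$: from the second-simulation unfaithfulness hypothesis combined with $V_2^\dag V_2=\Id$, one obtains $\|((\Id-P_1)\otimes \Id_{\anc,2})V_2^\dag P_2\|\le \delta_2$, and multiplying by $Q_1$ (norm $1$) preserves this bound. Adding the outer $\delta_2$ from the first triangle step gives the promised $2\delta_2+\delta_1$, and the asymmetric extra factor of $2$ is the price of working with one-sided unfaithfulness rather than two-sided incoherence.
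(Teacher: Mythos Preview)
Your proof is correct and follows essentially the same approach as the paper. The coherent case is identical; in the incoherent case you organize the algebra slightly differently---you apply one triangle inequality and then decompose $P_1-Q_1=(\Id-Q_1)P_1-Q_1(\Id-P_1)$, whereas the paper telescopes directly into three terms $(P_2-P_1'P_2)+V_2(P_1-P_0'P_1)V_2^\dag P_2+V_2P_0'V_2^\dag(P_1'P_2-P_2)$---but these are the same three contributions (your $Q_1(\Id-P_1)$ term, once pushed through $V_2^\dag V_2=\Id$, equals the paper's third term), and both rely on the same key observation that the ``wrong-direction'' piece $Q_1(\Id-P_1)$ becomes small only when acting on $V_2^\dag P_2$.
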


\subsection{Hamiltonian Sparsification: Degree-Reduction and Dilution}
\vspace{-5pt}

We define here the set of parameters of interest when considering minimizing
resources in gap-simulations:
\begin{enumerate}
\item $k$ -- locality of individual Hamiltonian term; typically $O(1)$
in physical systems, but we parametrize it to allow minimization,
as well as to allow $O(\log^a n)$-local Hamiltonians, for some constant $a$.
\item  $r$ -- maximum degree of Hamiltonian, the main objective in degree-reduction.
\item $M$ -- number of terms in the Hamiltonian, the main objective in dilution.
\item $J$ -- the interaction strength of individual Hamiltonian terms.
This is typically restricted to $O(1)$ in physical systems, but allowing it to grow with $n$
leads to more possibilities of gap-simulators.
Equivalently, a gap-simulator with $J$ growing with $n$ can be converted to one that simulates the original Hamiltonian
but has a vanishing gap if we restrict to bounded-strength Hamiltonian terms.
\item $\epsilon$ and $\delta$ -- incoherence $\epsilon$ and unfaithfulness $\delta$ that capture how well the Hamiltonian gap-simulates the original Hamiltonian in terms of groundspace projectors.
\item  $\tilde{w}$ -- energy spread in the gap-simulator Hamiltonian; allowing it to be different from the original Hamiltonian gives more freedom in
  gap-simulations.
\end{enumerate}

We will use the notation of $[r,M,J]$-gap-simulator
to indicate that the maximum degree is $r$, the number of local terms is
$M$, and for each term $\tilde{H}_i$ we have $\|\tilde{H}_i\|\le J$.
We define:

\vspace{-5pt}

{~}

\begin{defn}[Degree-reduction (DR) and dilution]
Let $\tilde{H}$ be a $k$-local $[r,M,J]$-gap-simulator of $H$ with $\epsilon$-incoherence (or $\delta$-unfaithfulness) and energy spread $\tilde{w}$.
Additionally suppose $H=\sum_{i=1}^{M_0} H_i$ is a sum of $M_0=M_0(n)$ terms,
each of which is $O(1)$-local.
Then
\begin{itemize}
\item We call $\tilde{H}$ an $[r,M,J]$-\emph{degree-reducer} of $H$ if $r=O(1)$.
\item We call $\tilde{H}$ an $[r,M,J]$-\emph{diluter} of $H$ if $M=o(M_0(n))$.
\end{itemize}
We also call any degree-reducer or diluter of $H$ a \emph{sparsifier} of $H$.
\end{defn}

\vspace{-10pt}

\section{Results\label{sec:results}}
\vspace{-5pt}
Our impossibility results are based on
two families of $2$-local $n$ qubits Hamiltonians, which can both be
expressed in terms of the collective angular momentum
operator $\J_\alpha = \sum_{i=1}^n \sigma_\alpha^{(i)}/2$ for $\alpha\in\{x,y,z\}$,
where $\sigma_\alpha^i$ are the standard Pauli matrices.

\paragraph{Example A (degenerate groundstates)}---
\vspace{-5pt}
\begin{equation}\label{eq:Honeone}
H_\oneone  = \left(\J_z+\frac{n}{2} \right)\left(\J_z+\frac{n}{2}-1\right) =  \frac14\sum_{i< j}^n (1-\sigma_z^{(i)})\otimes(1-\sigma_z^{(j)}) = \sum_{i< j}^n \ketbra{1}^{(i)}\otimes\ketbra{1}^{(j)}.
\vspace{-5pt}
\end{equation}
There are $M_0(n)=n(n-1)/2=\Omega(n^2)$ terms in $H_\oneone$, and each qubit
has degree $n-1$. The terms in $H_\oneone$ mutually commute,
and its groundspace is spanned by the following $n+1$ zero-energy
orthonormal states that have $\J_z=-n/2$ or $\J_z=-n/2+1$:
\vspace{-5pt}
\begin{equation}
GS(H_\oneone) = \text{span}\{\ket{00\cdots00}, \ket{00\cdots01}, \ket{00\cdots10}, \ldots, \ket{10\cdots00}\}.
\vspace{-5pt}
\end{equation}
If we consider a qubit in $\ket{1}$ to be an ``excitation'',
the groundstates are states with one or zero ``excitations".
Observe that $w_n=0$ and $\gamma_n=1$, independent of $n$;
the system is thus spectrally gapped.

\vspace{-5pt}
\paragraph{Example B (unique groundstate)}---
In this example we require that $n$ is even, $n=2s$:
\vspace{-5pt}
\begin{equation}
\label{eq:Hdicke}
H_\dicke =  \J_z^2 - \frac12 \vec{\J}^2 + b_n = \frac12(\J_z^2-\J_x^2-\J_y^2) +b_n = \frac14 \sum_{i<j}^{n} (\sigma_z^{(i)}\sigma_z^{(j)} - \sigma_x^{(i)}\sigma_x^{(j)} - \sigma_y^{(i)}\sigma_y^{(j)}) -\frac{n}{8} + b_n
\end{equation}
where $b_n\equiv \frac18 n(n + 2)$ is a constant chosen so that $\lambda_1(H_\dicke)=0$. Similarly to $H_\oneone$, this
Hamiltonian has $M_0(n)=\frac12 n(n-1)$ $2$-local terms,
and each qubit has degree $n-1$. Since $[\vec{\J}^2,\J_z]=0$,
the eigenstates of $H_B$ can be written in eigenbasis of
both $\vec{\J}^2$ and $\J_z$; it is an easy exercise
(see Appendix~\ref{sec:HamProperties})
that the following well-known Dicke state from atomic physics~\cite{Dicke}
is the unique groundstate of $H_\dicke$ with eigenvalue $0$:
\begin{equation}
\ket{g_\dicke} = \ket{\J=\frac{n}{2}; \J_z = 0} =
\binom{n}{n/2}^{-1/2} \sum_{|\{i\,:\,x_i=1\}| = n/2} \ket{x_1\cdots x_n}.
\end{equation}
Other eigenstates have energy at least $1$, so
the system is spectrally gapped with $w_n=0$ and $\gamma_n = 1$.

It turns out that these deceptively simple examples form a
challenge for Hamiltonian sparsification.

\subsection{Limitations on Degree-Reduction}
For didactic reasons, we start
by ruling out generic {\it perfectly coherent}
DR. This is done by showing that such DR is impossible for
$H_\oneone$.

\begin{lemma}[Impossibility of generic $0$-incoherence DR]
\label{lem:imposs1-DR}
There does not exist any $k$-local
Hamiltonian $\tilde{H}_\oneone$ that is an
$[o(n/k),M,J]$-degree-reducer of the $n$-qubit Hamiltonian $H_\oneone$
with localized encoding, $0$-incoherence, and energy spread $\tilde{w} <1/2$,
regardless of number of terms $M$ or interaction strength $J$.
\end{lemma}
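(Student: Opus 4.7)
The plan is to exploit the specific structure of $H_\oneone$: its groundspace is spanned by $\ket{\phi_0}=\ket{0^n}$ and the single-excitation states $\ket{\phi_i}$ ($i=1,\ldots,n$), while any two-excitation state $\ket{\phi_{ij}}$ (with $1$'s exactly at positions $i\ne j$) sits at energy $\ge 1$ above the groundspace. Suppose toward contradiction that such a $\tilde{H}_\oneone$ exists with localized encoding $V$, $0$-incoherence, and energy spread $\tilde{w}<1/2$. For each original qubit $l$, let $\tilde{S}_l$ denote the forward light cone of qubit $l$ under $V$: the bundle $\tilde{\H}_l$ in the tensor-product case of localized encoding, or the $O(1)$-sized light-cone image in the depth-$O(1)$-circuit case. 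Fix any $\ket{a}\in P_\anc$ and set $\ket{g_l}:=V(\ket{\phi_l}\otimes\ket{a})$ and $\ket{g_{ij}}:=V(\ket{\phi_{ij}}\otimes\ket{a})$; by $0$-incoherence and $\tilde{P}=V(P\otimes P_\anc)V^\dag$, we have $\ket{g_l}\in\tilde{P}$ and $\ket{g_{ij}}\in\tilde{P}^\perp$.

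The next step is to use the degree bound $r=o(n/k)$ to find $i\ne j$ with disjoint light cones $\tilde{S}_i,\tilde{S}_j$ \emph{and} no term of $\tilde{H}_\oneone$ acting on both. Since each $\tilde{S}_l$ has $O(1)$ size, is touched by $O(r)$ terms of $\tilde{H}_\oneone$, and each such term touches $O(k)$ simulator qudits each lying in $O(1)$ light cones, the count of ``bad'' partners $j$ for any given $i$ is $O(rk)=o(n)$, so valid pairs exist for large $n$. I then decompose $\tilde{H}_\oneone=\tilde{H}_A+\tilde{H}_B+\tilde{H}_R$, where $\tilde{H}_A$ collects terms touching $\tilde{S}_i$, $\tilde{H}_B$ those touching $\tilde{S}_j$ but not $\tilde{S}_i$, and $\tilde{H}_R$ the rest; by construction $\mathrm{supp}(\tilde{H}_A)\cap\tilde{S}_j=\emptyset$ and $\mathrm{supp}(\tilde{H}_B)\cap\tilde{S}_i=\emptyset$.

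The heart of the argument is a light-cone factorization identity. Because $j$ lies outside the backward light cone of $\mathrm{supp}(\tilde{H}_A)$ under $V$, the reduced density matrix of $\ket{g_{ij}}$ on $\mathrm{supp}(\tilde{H}_A)$ coincides with that of $\ket{g_i}$; symmetrically, on $\mathrm{supp}(\tilde{H}_B)$ the marginal of $\ket{g_{ij}}$ equals that of $\ket{g_j}$, and on $\mathrm{supp}(\tilde{H}_R)$ both reduce to the $\ket{g_0}$ marginal. Writing $E_l:=\braket{g_l|\tilde{H}_\oneone|g_l}$, these three equalities give the additive relation
\begin{equation}
E_{ij}=E_i+E_j-E_0.
\end{equation}

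To close the argument, note that $\ket{g_0},\ket{g_i},\ket{g_j}\in\tilde{P}$ together with $\|\tilde{P}(\tilde{H}_\oneone-\tilde{E}^g)\tilde{P}\|\le\tilde{w}\gamma=\tilde{w}$ imply $E_0,E_i,E_j\in[\tilde{E}^g,\tilde{E}^g+\tilde{w}]$, yielding $E_{ij}\le\tilde{E}^g+2\tilde{w}$. On the other hand $\ket{g_{ij}}\in\tilde{P}^\perp$, so the gap condition forces $E_{ij}\ge\tilde{E}^g+\gamma=\tilde{E}^g+1$. Combining gives $\tilde{w}\ge 1/2$, contradicting the hypothesis. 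I expect the main technical care to be needed in verifying the light-cone factorization for the constant-depth-circuit encoding, where one must carefully track both forward and backward light cones to confirm that the two ``excitations'' in $\ket{g_{ij}}$ propagate independently---just as they do trivially in the tensor-product case.
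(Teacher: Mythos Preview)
Your proposal is correct and follows essentially the same ``contradiction-by-energy'' approach as the paper: both derive the additive identity $E_{ij}=E_i+E_j-E_0$ from the fact that no term of $\tilde{H}_\oneone$ couples the (encoded) supports of $i$ and $j$, then combine the energy-spread bound on $E_0,E_i,E_j$ with the gap lower bound on $E_{ij}$ to force $\tilde w\ge 1/2$. The minor presentational differences---your three-part decomposition and reduced-density-matrix phrasing versus the paper's two-part split using the encoded Pauli operators $\tilde X_i=V X_i V^\dag$ and their commutation with the pieces of $\tilde H_\oneone$, and your direct light-cone tracking versus the paper's conjugation trick $\tilde H_\oneone'=V^\dag\tilde H_\oneone V$ for the circuit-encoding case---do not change the substance of the argument.
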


A closer inspection of the proof implies a
trade-off between $\epsilon$ and $J$,
from which it follows that if $J=O(1)$
then generic DR is impossible even if we allow $\epsilon$
which is inverse polynomially small
(see exact statement in Lemma~\ref{lem:imposs1},
Appendix \ref{sec:imposs1}.)
We note that this result in fact rules out any improvement of
the degree
for $H_\oneone$, to some sub-linear degree.

However, perfect (or even inverse-polynomially close to perfect)
coherence is a rather strong requirement.
Indeed, by improving our proof techniques,
we manage to improve our results for $H_\oneone$ to show
impossibility even for constant coherence.
Moreover, by
devising another Hamiltonian with a unique groundstate, $H_\dicke$, and proving such an impossibility result also
for this Hamiltonian, we arrive at the following theorem.
Our main result is a strong impossibility result, ruling out generic
DR with {\it constant
unfaithfulness} (and consequently, also constant incoherence).

\begin{thm}[Main: Impossibility of constant coherence (faithfulness)
DR for
$H_\oneone$ ($H_\dicke$)]
\label{thm:main}
For sufficiently small constants $\epsilon\ge0$ $(\delta\ge0)$ and $\tilde{w}\ge0$,
there exists system size $n_0$ where for any $n\ge n_0$, there is no
$O(1)$-local $[O(1),M,O(1)]$-degree-reducer of the $n$-qubit Hamiltonian $H_\oneone$ $(H_\dicke)$
with localized encoding,  $\epsilon$-incoherence ($\delta$-unfaithfulness),
and energy spread $\tilde{w}$, for any number of Hamiltonian terms $M$.
\end{thm}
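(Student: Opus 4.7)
The plan is to derive a contradiction from the hypothetical existence of such a degree-reducer $\tilde H$ by combining the gap-simulation guarantees with a Hastings--Koma-type decay-of-correlations bound for constant-degree, constant-strength, spectrally-gapped local Hamiltonians. Throughout, I use that the localized encoding forces the simulator to live on $\tilde n = O(n)$ qudits, that the simulator's quasi-groundspace projector $\tilde P$ is separated from the rest of the spectrum by a gap $\gamma(1-\tilde w) = \Theta(1)$, and that in any constant-degree graph on $\tilde n$ vertices at least $\Omega(n^2)$ pairs are at graph distance $D(n) = \Omega(\log n)$.

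First, I would identify, inside the groundspace of $H$, a connected two-point correlator of single-qubit Pauli observables that is $\Theta(1)$ for every pair $(i,j)$. For $H_\dicke$, a direct computation on the Dicke state gives $\braket{g_\dicke|\sigma_x^{(i)}\sigma_x^{(j)}|g_\dicke} = n/(2(n-1))$ and $\braket{g_\dicke|\sigma_x^{(i)}|g_\dicke}=0$, so the connected correlator tends to $1/2$. For $H_\oneone$, the degenerate groundspace contains, for every $i\ne j$ with $i,j\ge 1$, the superposition $\ket{\phi_{ij}} = (\ket{g_i}+\ket{g_j})/\sqrt{2}$, for which $\braket{\phi_{ij}|\sigma_z^{(i)}|\phi_{ij}} = \braket{\phi_{ij}|\sigma_z^{(j)}|\phi_{ij}} = 0$ while $\braket{\phi_{ij}|\sigma_z^{(i)}\sigma_z^{(j)}|\phi_{ij}} = -1$. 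Crucially, in both cases the correlator is $\Theta(1)$ for every pair, independent of how the pair sits in the simulator graph.

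Next, I would transfer these correlators into the simulator via the gap-simulation guarantees. Let $A_\alpha^{(i)} := V(\sigma_\alpha^{(i)}\otimes\Id_\anc)V^\dag$; by localized encoding, $A_\alpha^{(i)}$ is supported in an $O(1)$-neighborhood of the $i$-th encoded block in $\tilde H$, with bounded operator norm. For $H_\dicke$, Lemma~\ref{lem:equiv} converts the hypothesized $\delta$-unfaithfulness into $O(\delta)$-incoherence (this is where the unique Dicke groundstate is essential), so a genuine quasi-groundstate of $\tilde H$ is $O(\delta)$-close to $V(\ket{g_\dicke}\otimes\ket{a})$ for some $\ket{a}\in P_\anc$, and the two-point correlator of $A_x^{(i)}, A_x^{(j)}$ in that state matches the Dicke value up to additive error $O(\delta+\tilde w)$. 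For $H_\oneone$, the $\epsilon$-incoherence bound $\|\tilde P - V(P\otimes P_\anc)V^\dag\|\le\epsilon$ guarantees that $V(\ket{\phi_{ij}}\otimes\ket{a})$ lies within $O(\epsilon)$ of a genuine quasi-groundstate of $\tilde H$, so the correlator of $A_z^{(i)}, A_z^{(j)}$ in that state matches $-1$ up to additive error $O(\epsilon+\tilde w)$.

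Finally, I would invoke a Hastings--Koma-style decay-of-correlations theorem on $\tilde H$ to bound the same correlator by $C\,e^{-c\,\dist_{\tilde H}(\mathrm{supp}(A_\alpha^{(i)}),\,\mathrm{supp}(A_\alpha^{(j)}))}$ with $c = \Theta(\gamma/J) = \Theta(1)$. For any pair at graph distance $D(n) = \Omega(\log n)$ this gives a bound $n^{-c'} = o(1)$, contradicting the $\Theta(1)$ lower bound from the previous step once $\epsilon,\delta,\tilde w$ are sufficiently small and $n$ is sufficiently large. The main obstacle, and the technical heart of the argument, is the $H_\oneone$ case: standard Hastings--Koma bounds correlations in the unique groundstate of a gapped Hamiltonian, whereas here I need a strengthened version that controls the two-point correlator inside a specific highly-entangled superposition $V(\ket{\phi_{ij}}\otimes\ket{a})$ living in a degenerate groundspace (equivalently, it must bound certain cross-groundspace matrix elements of products of far-apart local operators). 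I expect to establish this either by weakly perturbing $\tilde H$ locally so that the desired superposition becomes the approximate unique groundstate of a nearby gapped Hamiltonian and then applying the standard bound while tracking the perturbation-induced errors, or by adapting the underlying Lieb--Robinson and spectral-flow machinery directly to the degenerate-groundspace setting; careful propagation of the errors $\epsilon,\delta,\tilde w$ through both reductions is what must be made quantitative to close the argument.
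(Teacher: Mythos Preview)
Your treatment of $H_\dicke$ is essentially the paper's: unique groundstate, Lemma~\ref{lem:equiv} to convert $\delta$-unfaithfulness to $O(\delta)$-incoherence, the $\sigma_x$ two-point function $n/(2(n-1))$, and Hastings--Koma for the contradiction.

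For $H_\oneone$, however, there is a genuine gap. Your choice of $\sigma_z$ is fatal, because $\sigma_z^{(i)}$ \emph{preserves} the groundspace of $H_\oneone$ (it acts diagonally on the basis $\{\ket{g_0},\ket{g_1},\ldots,\ket{g_n}\}$). The Hastings--Koma bound in the degenerate setting does \emph{not} control the connected correlator $\langle AB\rangle-\langle A\rangle\langle B\rangle$; it controls $\langle AB\rangle-\tfrac12(\langle A P_0 B\rangle+\langle B P_0 A\rangle)$, where $P_0$ is the (quasi-)groundspace projector. For your state $\ket{\phi_{ij}}$ and $A=\sigma_z^{(i)},\,B=\sigma_z^{(j)}$, one has $\sigma_z^{(j)}\ket{\phi_{ij}}\in P$, hence $\langle \sigma_z^{(i)} P\, \sigma_z^{(j)}\rangle_{\phi_{ij}}=\langle \sigma_z^{(i)}\sigma_z^{(j)}\rangle_{\phi_{ij}}=-1$ and the Hastings--Koma quantity vanishes identically---no contradiction. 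Neither of your proposed workarounds can rescue this: there is \emph{no} general decay of the connected correlator inside a degenerate groundspace (take the Ising ferromagnet $H=-\sum_i Z_iZ_{i+1}$, which is local, constant-degree, gapped, yet the cat state $(\ket{\uparrow\cdots\uparrow}+\ket{\downarrow\cdots\downarrow})/\sqrt2$ has $\langle Z_iZ_j\rangle-\langle Z_i\rangle\langle Z_j\rangle=1$ for all $i,j$); and your ``perturb locally so $\ket{\phi_{ij}}$ becomes the unique groundstate'' runs into exactly this obstruction---selecting a cat-like superposition of $\ket{g_i}$ and $\ket{g_j}$ requires a term coupling sites $i$ and $j$, which is non-local in the simulator graph.

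The paper's fix is simple but essential: use $\sigma_x$ instead of $\sigma_z$. On $\ket{\phi}=(\ket{g_i}+\ket{g_j})/\sqrt2$ one gets $\langle X_iX_j\rangle_\phi=1$, while $X_j\ket{\phi}=(\ket{g_0}+\ket{e_{ij}})/\sqrt2$ now has a component $\ket{e_{ij}}$ \emph{outside} the groundspace, so $\langle X_i P X_j\rangle_\phi=\tfrac12$ and the Hastings--Koma quantity equals $\tfrac12$. Transferring to $\tilde H$ via $\epsilon$-incoherence (which replaces $P$ by $\tilde P$ up to $O(\epsilon)$) and applying the degenerate-groundspace Hastings--Koma bound (the paper's Lemma~\ref{lem:HastingsKoma}, which also absorbs the nonzero energy spread $\tilde w$) then yields the desired contradiction. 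No exotic strengthening of Hastings--Koma is needed; the existing degenerate-case statement already suffices once you pick observables that exit the groundspace.
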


We deduce that generic quantum DR, with even constant unfaithfulness,
is impossible. This stands in striking contrast to the classical setting.
It is well known that classical DR is possible for all
CSPs in the context of PCP reductions\cite{dinur}.
This construction
easily translates to a $0$-unfaithfulness degree-reducer for any
{\it classical} local Hamiltonian:

\begin{prop}[Incoherent DR of classical Hamiltonians]
\label{prop:classical-deg-reduct}
Consider an $n$-qudit $k$-local \emph{classical} Hamiltonian $H = \sum_{S\subset \{1,\ldots, n\}} C_S$, where each $C_S:\{z_i:i\in S\} \to [0, 1]$ is a function of $d$-ary strings of length $|S|\le k$ representing states of qudits in $S$.
Let the number of terms in $H$ be $M_0=|\{S\}|=O(n^k)$.
Then there is a $k$-local $[3,O(kM_0),O(1)]$-degree-reducer of $H$
with $0$-unfaithfulness, no energy spread, and trivial encoding $V=\Id$.
\end{prop}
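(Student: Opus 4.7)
The plan is to adapt the classical Dinur-style degree-reduction gadget, drastically simplified by the fact that $H$ is diagonal in the computational basis, so the entire simulator $\tilde{H}$ can be built out of pairwise-commuting classical constraints. For each qudit $i$ appearing in $d_i$ terms of $H$, I would introduce $d_i-1$ ancilla qudits, forming $d_i$ copies $v_i^{(1)},\ldots,v_i^{(d_i)}$ with $v_i^{(1)}$ identified with the original qudit $i$. I then reassign each term $C_S$ to act on a distinct, previously unused copy of each qudit in $S$; this is always possible because qudit $i$ has exactly $d_i$ copies and appears in exactly $d_i$ terms, so the assignment is merely an edge coloring of the bipartite term-vs-copy incidence graph. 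Finally, I enforce consistency between the copies of qudit $i$ by placing equality terms of the form $E = \sum_{a\ne b}\ketbrat{a,b}{a,b}$ on the edges of a cycle through the $d_i$ copies.

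The parameter counts follow immediately: each copy sits in exactly one original term and at most two equality terms, so $r=3$; the handshake lemma gives $M = M_0 + \sum_i d_i \le (k+1)M_0 = O(kM_0)$; locality is still $k$; and every term has norm $\le 1$, so $J = O(1)$. For the groundspace analysis, I will observe that all terms of $\tilde{H}$ commute and are diagonal in the computational basis, hence its eigenstates are classical strings. Equality terms force all copies of qudit $i$ to share a common value $g_i$, and on such strings $\tilde{H}$ reproduces $H$ evaluated at $(g_1,\ldots,g_n)$. Thus $GS(\tilde{H})$ is spanned by states of the form $\ket{g}_{\textnormal{orig}}\otimes \ket{\phi(g)}_{\anc}$ for $g\in GS(H)$, where $\phi$ just duplicates coordinates. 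Since the original-qudit marginal of every such state lies in $P$, choosing $V=\Id$ will yield $(P\otimes \Id_\anc)\tilde{P} = \tilde{P}$ exactly, i.e.\ $\delta = 0$ in Eq.~\eqref{eq:unfaithfulness}. For the gap, any basis state violating an equality term pays $\ge 1$, and any consistent basis state outside $GS(H)$ pays $\ge \gamma(H)$; rescaling the equality terms by $\max(1,\gamma(H)) = O(1)$ if needed makes the gap of $\tilde{H}$ at least $\gamma(H)$ with zero energy spread.

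There is no genuine obstacle here---the argument is essentially accounting---but it is worth flagging the conceptual point that makes the statement work, which is exactly the gap between this proposition and the paper's impossibility theorems. The unfaithfulness condition with $V=\Id$ does \emph{not} demand $\tilde{P} = V(P\otimes P_\anc)V^\dag$, but only the weaker containment $\tilde{P}\subseteq P\otimes \Id_\anc$ as subspaces; this is precisely what allows me to identify each original qudit with one of its copies without any nontrivial encoder. The same consistency trick cannot preserve coherent superpositions of an entangled groundspace such as those of $H_\oneone$ or $H_\dicke$, because the ancilla state $\ket{\phi(g)}$ depends on $g$, so no analogous argument survives for non-classical $H$---consistent with Theorem~\ref{thm:main}.
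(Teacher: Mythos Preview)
Your proposal is correct and follows essentially the same route as the paper: the standard Dinur-style classical degree reduction, replacing each qudit by one copy per term it participates in, connecting copies by a ring of equality constraints, and identifying one copy with the original qudit to get trivial encoding and $\delta=0$. The only cosmetic differences are that the paper writes the equality penalty as $\tfrac{\gamma}{4}(\tilde z_{i,j}-\tilde z_{i,j+1})^2$ rather than a projector, and indexes copies directly by the term $S$ (so the edge-coloring step is implicit); your final paragraph on why this fails to give coherence is a nice addition the paper leaves to the reader.
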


This demonstrates a large difference between the quantum and classical settings in the context
of Hamiltonian sparsification.
Characterizing which quantum Hamiltonians can be degree-reduced (with bounded interaction strength),
either coherently or just faithfully, remains open.

The impossibility of DR by Theorem \ref{thm:main}, which heavily relies on the interaction strength $J$ being a constant, is essentially tight.
We prove this in a complementary result showing that degree-reduction is possible when $J$ is allowed to grow polynomially for any local Hamiltonian whose spectral gap closes slower than some polynomial (which is the case of interest for gap-simulation):

\begin{thm}[Coherent DR with polynomial interaction strength]
\label{thm:degree-reduction-poly}
Suppose $H$ is an $O(1)$-local Hamiltonian with a quasi-groundspace projector $P$, which has quasi-spectral gap $\gamma=\Omega(1/\poly(n))$ and energy spread $w$.
Also assume $\|H\|=O(\poly(n))$.
Then for every $\epsilon>0$, one can construct an $O(1)$-local $[O(1), O(\poly(n)/\epsilon^2), O(\poly(n,\epsilon^{-1}))]$-degree-reducer of $H$
with incoherence $\epsilon$, energy spread $w+O(1/\poly(n))$, and trivial encoding.
\end{thm}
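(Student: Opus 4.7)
The plan is to construct $\tilde H$ via the Feynman--Kitaev circuit-to-Hamiltonian construction applied to a quantum circuit that \emph{non-destructively} verifies membership of the data register in the quasi-groundspace $P$ of $H$. Since $\|H\|=O(\poly(n))$ and $\gamma=\Omega(1/\poly(n))$, I will first build a circuit $C$ on $n+m$ qubits (with $m=O(\log(\|H\|/\gamma)+\log(1/\epsilon))$) that Trotterizes (or LCU-simulates) $e^{-iH\tau}$, runs quantum phase estimation to $O(\gamma)$ precision, copies an ``in-window'' flag into one extra qubit, and uncomputes. The resulting $C$ has depth $T=O(\poly(n,\epsilon^{-1}))$ when compiled into $2$-qubit gates, and---crucially---acts as the identity on the data register for every eigenstate $|\lambda\rangle$ of $H$:
\[
C\bigl(|\lambda\rangle_{\text{data}}\otimes|0\rangle_{\text{anc}}\bigr) \;=\; |\lambda\rangle_{\text{data}} \otimes |f(\lambda)\rangle_{\text{anc}},
\]
where $|f(\lambda)\rangle$ is a computational basis state depending only on $\lambda$.

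I then apply the Feynman--Kitaev construction to $C$, with an initial penalty fixing the ancilla to $|0\rangle$ at clock time $0$ and an output penalty rewarding the flag to be $|1\rangle$ at clock time $T$. Because $C$ is identity-on-data for eigenstates, the history state for input $|g_i\rangle\in P$ factorizes as $|\eta_{g_i}\rangle=|g_i\rangle_{\text{data}}\otimes|\xi\rangle_{\text{clock+anc}}$, where $|\xi\rangle$ depends only on the shared ground energy $E_g$ and is therefore the \emph{same} vector for every choice of $|g_i\rangle$. By linearity, coherent superpositions $\sum_i c_i|g_i\rangle$ produce ground states of the same product form, so the ground projector of the Kitaev Hamiltonian equals $P\otimes\ketbra{\xi}$, delivering coherent gap-simulation with trivial encoding $V=\Id$, with incoherence $O(\epsilon)$ inherited from the Trotter and phase-estimation errors.

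The raw Kitaev Hamiltonian is already $O(1)$-local, but each data qubit participates in every propagation term whose gate touches it, giving degree up to $\Theta(T)$. To drive the degree down to $O(1)$, I will compile $C$ onto a bounded-degree nearest-neighbor architecture using SWAP networks, use a spacetime Kitaev construction (in the spirit of Aharonov--Gottesman--Irani--Kempe) for the non-data registers, and apply perturbative fork/subdivision gadgets of the style of Oliveira--Terhal and Cao--Nagaj on the remaining high-degree vertices. These cost $J=\poly(n,\epsilon^{-1})$ in interaction strength, add $O(\poly(n)/\epsilon^2)$ terms (the $\epsilon^{-2}$ absorbing both the phase-estimation error budget and the amplitude-amplification overhead), and shift the ground energy by at most $O(1/\poly(n))$, which is absorbed into the allowed energy spread. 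A final overall $\poly(n,\epsilon^{-1})$ rescaling lifts the raw clock gap $\Omega(1/T^3)$ up to the required $\gamma$.

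\textbf{Main obstacle.} The most delicate step is the last one: checking that the degree-reduction gadgets do not destroy the exact factorization $|g_i\rangle\otimes|\xi\rangle$ of the Kitaev ground state. Perturbative gadgets generically induce small entanglement between a logical qubit and its proxy ancillas, and if this entanglement depended on which $|g_i\rangle$ sits on the data register it would spoil the coherent ``product'' form demanded by trivial encoding. The plan for this is a Schrieffer--Wolff / BDLT-style effective-Hamiltonian analysis showing that the low-energy eigenvectors of the gadgeted Hamiltonian remain, up to $O(\epsilon)$, of the form (element of $P$ on data) $\otimes$ (fixed ancilla vector), leveraging again the identity-on-data property of $C$ to keep the ancilla state $|\xi\rangle$ independent of the specific groundstate chosen.
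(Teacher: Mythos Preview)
Your high-level architecture (phase estimation, uncompute, then Kitaev) matches the paper's, but the argument has a real gap at the factorization step. You infer from ``$C$ acts as identity on the data register for eigenstates'' that the \emph{history state} factorizes as $|g_i\rangle\otimes|\xi\rangle$. This does not follow: the history state is a uniform superposition over \emph{all} intermediate snapshots $U_t\cdots U_1|g_i\rangle|0\rangle$, and once you Trotterize the controlled-$e^{-iH\tau}$ blocks into $O(1)$-local gates, a single Trotter step $e^{-iH_a\tau/r}$ does \emph{not} preserve the eigenstate $|g_i\rangle$ on the data register. So most of the clock-time slices carry genuine data--ancilla entanglement, and the history state is not of product form. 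The paper's fix is an ``idling'' lemma: after the uncompute, pad the circuit with $L=O(D/\epsilon^2)$ identity gates, so that a $1-O(\epsilon^2)$ fraction of the history-state amplitude sits on time steps where the joint state is exactly $|g_i\rangle|0^m\rangle$; this is where the $1/\epsilon^2$ in the term count actually comes from (not from phase-estimation precision or amplitude amplification). Without this padding, the incoherence is $\Theta(1)$ regardless of how finely you Trotterize.

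Second, your degree-reduction plan is both unnecessary and dangerous. Once the circuit itself is compiled so that every qubit participates in $O(1)$ gates (your SWAP-network step already does this), the standard Kitaev Hamiltonian $H_{clock}+H_{prop}+H_{in}+H_{out}$ has $O(1)$ degree directly: each clock qubit touches $O(1)$ propagation terms, and each data/ancilla qubit touches only the propagation terms for the $O(1)$ gates it sits in. There are no ``remaining high-degree vertices.'' Invoking fork/subdivision gadgets on top of this is exactly what the paper's Theorem~\ref{thm:degree-reduction-exp} does, and iterated fork gadgets blow the interaction strength up to $(\poly(n)/\epsilon)^{\poly(n)}$, not $\poly(n,\epsilon^{-1})$. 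So your ``Main obstacle'' is self-inflicted: drop the gadgets, sparsify the circuit, and add the idling tail.
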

The proof is constructive:
we map any given Hamiltonian to the quantum phase-estimation circuit,
make the circuit sparse, and transform it back to a Hamiltonian using Kitaev's circuit-to-Hamiltonian construction~\cite{KSV02}.
Some innovations are required to ensure coherence within the groundspace isn't destroyed.
For the most general local Hamiltonian whose spectral gap may close exponentially, we can show that coherent DR is possible with exponential interaction strength:
\begin{thm}[Coherent DR with exponential interaction strength]
\label{thm:degree-reduction-exp}
Let $H$ be an $n$-qubit $O(1)$-local Hamiltonian with
$M_0$ terms, each with bounded norm.
Suppose $H$ has quasi-spectral gap $\gamma$ and energy spread $w$ according to Def.~\ref{defn:gap}.
For any $\epsilon>0$, one can construct a $2$-local
$[O(1), O(M_0), O ((\gamma\epsilon)^{-\poly (n)} )]$-degree-reducer of $H$ with incoherence $\epsilon$, energy spread $w+\O(\epsilon)$, and trivial encoding.
\end{thm}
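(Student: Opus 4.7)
The plan is to apply perturbative gadgets directly to each of the $M_0$ local terms of $H$, trading an exponential-in-$n$ interaction-strength cost for generality in the spectral gap $\gamma$. Unlike Theorem~\ref{thm:degree-reduction-poly}, which routes through a phase-estimation circuit and introduces $\poly(n)$ extra terms, we leave the term structure of $H$ intact so that the number of terms remains $O(M_0)$, at the cost of iterated gadget energy scales that compound multiplicatively.

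First, I would apply a locality-reduction gadget (e.g., Oliveira--Terhal or the coherence-preserving Cao--Nagaj gadgets) to each $k$-local term of $H$ to obtain a $2$-local Hamiltonian on a few dedicated ancillas. Then I would apply subdivision gadgets iteratively along each high-degree qubit, rerouting one incident interaction at a time through a fresh mediator ancilla until every qubit has constant degree. A single gadget with unperturbed energy scale $\Delta$ achieves effective-Hamiltonian error $\poly(\|V\|)/\Delta$ on the low-energy subspace, so to achieve gadget-level error $\epsilon_g$ while coherently preserving a groundspace with spectral gap $\gamma$, the gadget strength must satisfy $\Delta = \poly(\|V\|/(\gamma \epsilon_g))$. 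Since a single qubit may participate in up to $\poly(n)$ interactions, subdivision requires $\poly(n)$ successive rounds, and because each round's gadget must dominate the preceding round's interactions as a soft constraint, the strengths compound multiplicatively: $J_t \le \poly(J_{t-1}/(\gamma \epsilon_g))$. Solving this recursion over $\poly(n)$ rounds gives $J_{\text{final}} = (\gamma\epsilon)^{-\poly(n)}$, matching the claimed bound. Composing the gadget simulations via Lemma~\ref{lem:composition}, with $\epsilon_g = \epsilon/\poly(n)$ distributed across the layers, yields an overall gap-simulation of $H$ with incoherence $\epsilon$ and energy spread $w + O(\epsilon)$; the encoding is trivial because each gadget fixes its ancillas to a penalty ground state within the simulated low-energy subspace.

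The main obstacle is verifying that the iterated gadgets preserve \emph{coherence} within the groundspace (Def.~\ref{defn:hamsimul}) rather than merely faithfulness. Standard perturbation-theoretic analyses give operator-norm bounds on the effective Hamiltonian versus the target, from which bounds on $\|\tilde P - V(P \otimes P_\anc) V^\dag\|$ can be deduced via Davis--Kahan-type spectral-projector arguments. The delicate part is composing these bounds across $\poly(n)$ gadget layers so that incoherences add linearly rather than amplifying; this is precisely what forces the per-layer precision $\epsilon_g = \epsilon/\poly(n)$ and, through the soft-constraint requirement, the exponential interaction-strength scaling. A secondary subtlety is maintaining the nontrivial energy spread $w$ through gadgets originally designed for $w = 0$ Hamiltonians, which requires choosing the gadget unperturbed gap so that the entire quasi-groundspace window $[E^g, E^g + w\gamma]$ is faithfully lifted into the effective Hamiltonian.
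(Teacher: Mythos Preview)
Your high-level strategy---perturbative gadgets for locality reduction followed by degree reduction, composed across layers with per-layer precision $\epsilon_g = \epsilon/\poly(n)$---matches the paper's, and your remarks about trivial encoding and tracking the energy spread are on target. However, there are two related gaps in the degree-reduction step.

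First, subdivision gadgets do not reduce degree. A subdivision gadget replaces an interaction $c\,A\otimes B$ by a pair $\sqrt{|c|\Delta/2}\,(\sgn(c)A - B)\otimes X^{(a)}$ through a fresh mediator $a$; the qubit supporting $A$ still carries exactly one edge for that term, now to $a$ instead of to $B$. ``Rerouting one incident interaction at a time'' through subdivision therefore leaves the degree of the central qubit unchanged. The paper instead uses the Oliveira--Terhal \emph{fork} gadget: given two interactions $\lambda_1\sigma_\alpha^{(i)}\otimes X^{(\kappa_1)}$ and $\lambda_2\sigma_\alpha^{(i)}\otimes X^{(\kappa_2)}$ incident on qubit $i$, the fork replaces them by a single interaction $\sigma_\alpha^{(i)}\otimes X^{(a)}$ plus ancilla--ancilla terms, so each round \emph{halves} the Pauli-degree of $i$.

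Second, and more importantly, your recursion does not give the claimed bound. For these second-order gadgets the perturbation at round $s$ has $\|V^{(s)}\| = O\bigl(M_0\sqrt{\Delta_s\Delta_{s-1}}\bigr)$, and the dominant higher-order self-energy error is $\|V_{-+}\|^2\|V_{++}\|^2/\Delta_s^3 = O(M_0^4\Delta_{s-1}^2/\Delta_s)$. Forcing this below $\eps$ yields the \emph{quadratic} recursion $\Delta_s = \Theta(M_0^4\Delta_{s-1}^2/\eps^2)$, not a linear one. Iterating over $T$ rounds gives $\Delta_T = (M_0^4\Delta_0/\eps^2)^{\Theta(2^T)}$. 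If $T=\poly(n)$ as you propose, the interaction strength is doubly exponential in $n$, not $(\gamma\epsilon)^{-\poly(n)}$. The paper's bound holds precisely because the fork gadget halves the degree, so only $T=O(\log r_0)=O(\log n)$ rounds are needed, and $2^{O(\log n)}=\poly(n)$ in the exponent gives the stated $J = O\bigl((\poly(n)/(\gamma\epsilon))^{\poly(n)}\bigr)$. The exponential-in-$n$ strength is thus a tight consequence of a quadratic recursion composed over logarithmically many rounds, not an artifact of a linear recursion over polynomially many.
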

The proof uses a construction from perturbative gadgets, and is similar to other results in the Hamiltonian simulation
literature~\cite{OliveiraTerhal,UniversalHamiltonian}.
Due to significantly more resource required compared to Theorem~\ref{thm:degree-reduction-poly}, this construction is only useful in situations where we want to preserve some exponentially small spectral gap.

\vspace{-10pt}

\subsection{Limitations on Dilution}
For perfect or near-perfect dilution, we can prove a similar impossibility
result to Lemma~\ref{lem:imposs1-DR}:

\begin{thm}[Impossibility of generic $0$-incoherence dilution]
  \label{thm:imposs1-dilute}
There does not exist any $k$-local
Hamiltonian $\tilde{H}_\oneone$ that is an
$[r,o(n^2/k^2),J]$-diluter of the $n$-qubit Hamiltonian $H_\oneone$
with localized encoding, 0-incoherence, and energy spread $\tilde{w}<1/2$, 
regardless of degree $r$ or interaction strength $J$.
\end{thm}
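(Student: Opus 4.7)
The plan is to reduce Theorem~\ref{thm:imposs1-dilute} to the degree-reduction impossibility of Lemma~\ref{lem:imposs1-DR} by a simple counting argument. Although a diluter is allowed unbounded maximum degree $r$, the assumption $M=o(n^2/k^2)$ forces the total incidence $\sum_{\tilde j} d_{\tilde j}=kM$ to be $o(n^2/k)$, so by averaging at least one original qubit must be mapped into a low-degree region of $\tilde H_\oneone$, which is exactly the structural hypothesis driving Lemma~\ref{lem:imposs1-DR}.

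To make this precise, work under localized encoding (Def.~\ref{defn:localized-encoding}): assign each original qubit $i$ a set $S_i\subset\tilde\H$ of size $O(1)$ (the tensor-product factor $\tilde\H_i$ in option~1, or the forward lightcone of qubit $i$ in option~2), and observe that in either case each simulator qudit $\tilde j$ lies in only $O(1)$ of the $S_i$'s. This yields
\begin{equation}
\sum_{i=1}^n \sum_{\tilde j\in S_i} d_{\tilde j} \;\le\; O(1)\sum_{\tilde j} d_{\tilde j} \;=\; O(kM) \;=\; o(n^2/k),
\end{equation}
so averaging over the $n$ original qubits yields, by pigeonhole, some $i^*$ with $\sum_{\tilde j\in S_{i^*}} d_{\tilde j}= o(n/k)$. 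Hence the neighborhood $T_{i^*}:=S_{i^*}\cup N(S_{i^*})$ of simulator qudits sharing a local term with $S_{i^*}$ satisfies $|T_{i^*}| = O(k)\cdot o(n/k) = o(n)$, reproducing at a single qubit $i^*$ exactly the ``$o(n)$-sized local neighborhood'' that Lemma~\ref{lem:imposs1-DR} exploits.

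With this single good qubit secured, I would invoke Lemma~\ref{lem:imposs1-DR}'s argument essentially verbatim at $i^*$: the $0$-incoherence condition pins the simulator groundspace $\tilde P$ to $V(P\otimes P_\anc)V^\dag$, and the local structure on $T_{i^*}$ must then simultaneously accommodate the encoded $\ket{g_0}$ and the locally-distinct encoded $\ket{g_{i^*}}$ as exact zero-energy states while preserving spectral gap $\gamma=1$ and keeping $\tilde w<1/2$, which that proof rules out. The hard part will be verifying that Lemma~\ref{lem:imposs1-DR}'s contradiction is genuinely local at one chosen qubit and does not rely invisibly on a \emph{uniform} small-degree bound holding at all qubits simultaneously; if it does, Markov's inequality applied to the same averaging immediately delivers a constant fraction of original qubits $i$ with $|T_i|=o(n)$, which should suffice to drive a parallelized version of the original argument to the same contradiction.
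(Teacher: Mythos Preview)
Your reduction strategy has a genuine gap: you have misidentified what the core contradiction in Lemma~\ref{lem:imposs1-DR} actually is. The argument is not about a single qubit $i^*$ and the two groundstates $\ket{g_0}$, $\ket{g_{i^*}}$ being ``locally incompatible''; both are perfectly good groundstates and nothing prevents a simulator from hosting them regardless of how small $T_{i^*}$ is. The contradiction-by-energy in the paper is a \emph{pair} argument: if there exists a pair $(i,j)$ such that no term of $\tilde H_\oneone$ touches both $S_i$ and $S_j$, one decomposes $\tilde H_\oneone=\tilde H_{\oneone,i}+\tilde H_{\oneone,j}$, commutes $\tilde X_i,\tilde X_j$ through the respective pieces, and obtains the identity
\[
\braket{\bar e_{ij}|\tilde H_\oneone|\bar e_{ij}}=\braket{\bar g_i|\tilde H_\oneone|\bar g_i}+\braket{\bar g_j|\tilde H_\oneone|\bar g_j}-\braket{\bar g_0|\tilde H_\oneone|\bar g_0},
\]
which forces the two-excitation state $\ket{\bar e_{ij}}=\tilde X_i\tilde X_j\ket{\bar g_0}$ to have energy $\le 2\tilde w<1$ under $0$-incoherence, contradicting the gap. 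Your description never produces the second qubit $j$ or the excited state $\ket{e_{ij}}$, so as written the argument does not close.

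The paper's route is in fact simpler than your averaging plan. Via the pair argument above (packaged as Lemma~\ref{lem:TotalCoherentImpossible}), one concludes that \emph{every} pair $(i,j)$ of original qubits must be joined by some term of $\tilde H_\oneone$. The $M=\Omega(n^2/k^2)$ bound then follows from one line of counting: there are $\binom{n}{2}$ required pairs and each $k$-local term can cover at most $\binom{k}{2}$ of them. No averaging over degrees, no selection of a good $i^*$, and no worry about whether the argument is ``uniform'' is needed. Your averaging step could be salvaged---once you find $i^*$ with $|T_{i^*}|=o(n)$, the localized-encoding bound that each simulator qudit lies in $O(1)$ of the $S_j$ does yield a $j$ with $S_j\cap T_{i^*}=\emptyset$, and then the pair argument applies---but this is strictly more work than the paper's direct pair-counting, and the hard part you flagged (uniformity across qubits) turns out to be a non-issue once you see the mechanism correctly.
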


Similar to Lemma~\ref{lem:imposs1-DR}, this in fact holds even if we allow inverse polynomial
incoherence (see Lemma~\ref{lem:imposs1});
and like above, this seems to be a rather weak impossibility result since
requiring inverse polynomial incoherence may be too strong in many situations.
Can we strengthen this to rule out dilution with constant incoherence?
The proof technique in Theorem \ref{thm:main} does not apply for dilution, since it
relies on the decay of correlation between {\it distant} nodes in the
interaction graph of $\tilde{H}$
(see Sec.~\ref{sec:proof-sketch-main}).
On the other hand, a diluter $\tilde{H}$ can have unbounded degree,
and hence constant diameter, e.g. the star graph.
Nevertheless, under a computational hardness assumption,
no efficient classical {\it algorithm} for generic constant-unfaithfulness
dilution exists, even for all $k$-local {\it classical} Hamiltonians:

\begin{thm}[Impossibility of dilution algorithm for classical Hamiltonians]
\label{thm:imposs-dilute}
If $\coNP \not\subseteq \NPpoly$, then
for any $\xi>0$, $\delta < 1/\sqrt{2}$, $\tilde{w} \le 1/2$, there is no classical algorithm that
given a $k$-local $n$-qubit classical Hamiltonian $H$, runs in $O(\poly(n))$ time to
find an $[r,O(n^{k-\xi}),J]$-diluter of $H$ with $\delta$-unfaithfulness, energy spread $\tilde{w}$, and any encoding $V$ that has an $O(n^{k-\xi})$-bit description. This holds for any $r$ and $J$.
\end{thm}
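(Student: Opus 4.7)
The plan is to derive, from the hypothesized polynomial-time dilution algorithm $A$, a polynomial-time compression of $k$-CNF-SAT whose output size violates the Dell--van Melkebeek compression lower bound~\cite{DellvanMelkebeek}, which states that for every $d \ge 3$ and every $\xi' > 0$, $d$-CNF-SAT has no polynomial-time compression to $O(n^{d-\xi'})$ bits unless $\coNP \subseteq \NPpoly$. Given a $k$-CNF instance $\phi$ on $n$ variables with clauses $\{C_i\}$, we first build the classical $k$-local diagonal Hamiltonian $H_\phi = \sum_i P_{C_i}$ on $n$ qubits, where $P_{C_i}$ is the projector onto the unique assignment of $C_i$'s variables that violates it; so $\phi\in k\text{-SAT}$ iff $\lambda_1(H_\phi)=0$, with spectral gap always $\ge 1$. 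Since gap-simulation preserves the groundspace but not the absolute ground-energy---the latter being what distinguishes SAT from UNSAT here---we augment with a flag qubit $c$,
\begin{equation*}
H'_\phi = \sum_i P_{C_i}\otimes\ketbra{1}_c + \tfrac{1}{2}\,\Id\otimes\ketbra{0}_c,
\end{equation*}
a $(k{+}1)$-local classical Hamiltonian on $n{+}1$ qubits whose groundspace lies in the $\ket{1}_c$-sector iff $\phi$ is satisfiable (with energy $0$) and otherwise in the $\ket{0}_c$-sector (with energy $\tfrac12$). This converts SAT/UNSAT into a \emph{structural} feature of the groundspace.

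Applying $A$ to $H'_\phi$ yields in polynomial time a $\delta$-unfaithful diluter $(\tilde H, V)$ with $O(n^{(k+1)-\xi})$ bounded-strength terms (each described in $O(\log n)$ bits) and an $O(n^{(k+1)-\xi})$-bit encoding $V$, for a total output description length of $O(n^{(k+1)-\xi}\log n)$ bits. Let $P'_\phi$ be the groundspace projector of $H'_\phi$, let $\Pi_1 = V(\Id\otimes\ketbra{1}_c\otimes\Id_\anc)V^\dag$ denote the projector onto the $V$-image of the $\ket{1}_c$-sector, and define the target language
\begin{equation*}
L' = \bigl\{(\tilde H, V) \,:\, \exists\,\ket{\tilde g}\in \tilde P \text{ with } \|\Pi_1\ket{\tilde g}\|^2 > \tfrac{1}{2}\bigr\}.
\end{equation*}
We claim $\phi\in k\text{-SAT}$ iff $A(\phi)\in L'$. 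Both directions follow from the unfaithfulness bound $\|(\Id-Q)\ket{\tilde g}\|\le\delta$ for every unit $\ket{\tilde g}\in\tilde P$, with $Q = V(P'_\phi\otimes\Id_\anc)V^\dag$: for satisfiable $\phi$, $Q\le\Pi_1$ (since $P'_\phi$ is supported on $\ket{1}_c$) yields $\|\Pi_1\ket{\tilde g}\|^2 \ge \|Q\ket{\tilde g}\|^2 \ge 1-\delta^2 > \tfrac12$ using $\delta<1/\sqrt{2}$; for unsatisfiable $\phi$, $Q\le\Id-\Pi_1$ (since $P'_\phi$ is supported on $\ket{0}_c$) yields $\|\Pi_1\ket{\tilde g}\|^2\le\delta^2<\tfrac12$.

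Matching exponents with Dell--van Melkebeek then completes the contradiction: the compression of $k$-CNF-SAT to $O(n^{(k+1)-\xi+o(1)})$ bits violates the threshold once $\xi$ exceeds a small constant, forcing $\coNP\subseteq\NPpoly$. The main obstacle we anticipate is precisely this exponent-tracking: the flag-qubit trick raises the locality from $k$ to $k{+}1$, so a naive application gives the theorem only for $\xi$ above a constant, whereas the stated form requires every $\xi > 0$. Closing this gap calls for either a clause-count-parameterized variant of Dell--van Melkebeek (which absorbs the single unit of locality overhead), or a more refined reduction that encodes satisfiability as a detectable property of $\tilde H$ without a flag qubit---for instance, by exploiting the integer spectrum and unit gap of $H_\phi$ together with the preserved quasi-groundspace structure to force the absolute ground-energy of $\tilde H$ to align with that of $H_\phi$ in a decodable way.
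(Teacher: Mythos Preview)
Your approach is sound in structure but, as you yourself flag, has a genuine exponent gap: the flag qubit raises locality from $k$ to $k{+}1$, so applying the hypothesized dilution algorithm at locality $k{+}1$ yields $O(n^{(k+1)-\xi})$ terms, and this contradicts the Dell--van~Melkebeek bound for $k$-SAT only when $\xi>1$. The paper closes this gap not by sharpening Dell--van~Melkebeek but by switching the source problem to $k$-\textsc{VertexCover} (which is equally covered by~\cite{DellvanMelkebeek}). The point is that for Vertex Cover the yes/no answer is \emph{already} a structural property of the groundstates, so no flag qubit is needed: the Hamiltonian
\[
H \;=\; 2\sum_{e\in\mathcal E}\;\bigotimes_{i\in e}\ketbra{0}_i \;+\; \sum_{i}\ketbra{1}_i
\]
is genuinely $k$-local, its groundstates are exactly the indicator strings of the minimum vertex covers, and $(\mathcal G,m)\in k\text{-}\VC$ iff those groundstates have Hamming weight $\le m$. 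The oracle then decides by thresholding $\braket{\tilde g|V(Q_{\le m}^n\otimes\Id_\anc)V^\dag|\tilde g}$ at $1/2$, via precisely the inequality you already derived. No locality inflation, so the exponents match for every $\xi>0$. This is exactly the ``more refined reduction that encodes satisfiability as a detectable property without a flag qubit'' you were looking for---just with a different problem where that property comes for free.

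There is a second, smaller gap you gloss over: you assume the diluter's terms are ``bounded-strength'' and hence each describable in $O(\log n)$ bits, but the theorem explicitly allows arbitrary $J$. The paper handles this with a truncation step---round each term of $\tilde H$ to $s=O(\log n)$ bits of precision---and then invokes a groundspace-stability lemma (Lemma~\ref{lem:PPgroundspace}: if $\|\tilde H'-\tilde H\|\le\kappa\le(1-\tilde w)\gamma/8$ then the quasi-groundspace projector moves by $O(\kappa/\gamma)$) to show the oracle's decision on the truncated $\tilde H'$ agrees with that on $\tilde H$. This step is where the hypothesis $\tilde w\le 1/2$ is actually used, and without it your compressed instance need not fit in $O(n^{k-\xi}\log n)$ bits.
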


The above result rules out general (constructive)
dilution even when the Hamiltonians are classical.
For specific cases, however, dilution is possible.
Our $H_\oneone$ (which is also a classical Hamiltonian) provides
such an example,
for which we can achieve dilution
even with $0$-unfaithfulness, in the incoherent setting:

\begin{prop}[$0$-unfaithfulness incoherent dilution and DR for $H_\oneone$]
\label{prop:incoherent-tree}
There is a 3-local incoherent $[2,n-1,1]$-diluter of $H_\oneone$ with 0-unfaithfulness, energy spread $\tilde{w}=0$, and trivial encoding.
This is also an incoherent $[2,n-1,1]$-degree-reducer of $H_\oneone$.
\end{prop}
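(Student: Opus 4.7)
The plan is to exhibit an explicit sparsifier on a path graph. Arrange the $n$ qubits along a line and define
\begin{equation*}
\tilde{H}_\oneone \;=\; \sum_{i=1}^{n-1}\tilde{H}_i, \qquad \tilde{H}_i \;=\; \bigl(\Id - \ketbra{00}\bigr)^{(i,i+1)}.
\end{equation*}
Each $\tilde{H}_i$ is a rank-$3$ orthogonal projector of norm $1$ acting on two adjacent qubits, so $\tilde{H}_\oneone$ is $2$-local (hence $3$-local), its interaction graph is a path of maximum degree $2$, it uses exactly $n-1$ terms, and each term has $\|\tilde{H}_i\|=1$. Since $n-1 = o(n^2) = o(M_0(n))$ we obtain a diluter, and since the maximum degree $2$ is $O(1)$ the very same Hamiltonian is simultaneously a degree-reducer.

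Next I would read off the low-lying spectrum. Each $\tilde{H}_i$ is positive semidefinite and annihilates $\ket{00}^{(i,i+1)}$, so $\ket{0^n}$ is a zero-energy eigenstate. For any other computational basis vector $\ket{x}$, pick any position $j$ with $x_j=1$; at least one of the two adjacent projectors $\tilde{H}_{j-1},\tilde{H}_j$ evaluates to $1$ on $\ket{x}$, forcing $\braket{x|\tilde{H}_\oneone|x}\ge 1$. A short case analysis shows a basis state excites exactly one term only when it equals $\ket{10^{n-1}}$ or $\ket{0^{n-1}1}$, so every other non-zero basis state has energy $\ge 2$. As $\tilde{H}_\oneone$ is diagonal in the computational basis, $\ket{0^n}$ is the unique groundstate and the spectral gap is exactly $1 = \gamma$.

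Finally I would verify the incoherent gap-simulation conditions under the trivial encoding $V=\Id$ (with trivial ancilla). Take $\tilde{P}=\ketbra{0^n}$: the first condition of Definition~\ref{defn:hamsimul} holds with energy spread $\tilde{w}=0$, since $\tilde{P}$ projects onto a single eigenstate of $\tilde{H}_\oneone$ with spectral gap $\ge\gamma$ above it. Because $\ket{0^n}$ has Hamming weight $0$ it lies in the groundspace $P$ of $H_\oneone$, giving $P\tilde{P}=\tilde{P}$ and hence $\|\tilde{P}-P\tilde{P}\|=0$. This establishes both the $3$-local $[2,n-1,1]$-incoherent diluter and degree-reducer claims.

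There is no real obstacle here: the only conceptual point is that with the trivial encoding, $0$-unfaithfulness in Definition~\ref{defn:hamsimul-incoherent} amounts to the one-way inclusion $\tilde{P}\subseteq P$, not equality, so we are free to pick \emph{any} groundstate contained in $P$. The particularly convenient choice $\ket{0^n}$ can be enforced locally by a positive path Hamiltonian, which is what yields a bounded-degree, dilute simulator. The interest of this example is thus mostly conceptual: it illustrates how weak the incoherent notion can be on its own (in contrast with the coherent notion that effectively requires $\tilde{P}$ to carry the full groundspace information), foreshadowing why the coherent notion is usually the right one to study.
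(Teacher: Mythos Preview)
Your proof is correct but takes a genuinely different route from the paper. The paper adds $n-1$ ancilla qubits arranged on the internal nodes of a binary tree (with the original qubits on the leaves) and places a $3$-local commuting projector at each branching that checks whether the parent records the OR of its children; the resulting groundspace is $(n+1)$-dimensional and contains one state $\ket{\tilde{g}_i}$ for \emph{each} groundstate $\ket{g_i}$ of $H_A$, with $0$-unfaithfulness following from $P\tilde P=\tilde P$ on the nose. Your construction instead uses no ancilla at all and a $2$-local path Hamiltonian whose unique groundstate is $\ket{0^n}$, exploiting (as you correctly note) that $0$-unfaithfulness with trivial encoding only demands the one-sided inclusion $\tilde P\subseteq P$.

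What each buys: your argument is shorter, even $2$-local, and makes the conceptual point about the weakness of the incoherent notion more sharply---indeed, something as trivial as $\sum_{i\ge 2}\ketbra{1}^{(i)}$ would also work. The paper's tree construction, however, retains all $n+1$ groundstates and the excitation-counting structure; this is not needed for the proposition as stated, but the same counting-on-a-tree idea is then recycled as the circuit underlying Proposition~\ref{prop:circuit} (coherent dilution with polynomial strength), which your path Hamiltonian does not set up.
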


Furthermore, combining ideas from the construction in Proposition~\ref{prop:incoherent-tree} and Theorem~\ref{thm:degree-reduction-poly}, we can show that coherent dilution of $H_\oneone$ with polynomial interaction strength is also possible:

\begin{prop}[Constant-coherence dilution and DR for $H_A$ with polynomial interaction strength]
\label{prop:circuit}
There is a 6-local $[6, O(n/\epsilon^2),O(\poly(n,\epsilon^{-1}))]$-degree-reducer of $H_\oneone$ with $\epsilon$-incoherence, energy spread $\tilde{w}=0$, and trivial encoding. This is also a $[6, O(n/\epsilon^2),O(\poly(n,\epsilon^{-1}))]$-diluter of $H_\oneone$.
\end{prop}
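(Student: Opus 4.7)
The plan is to adapt the circuit-to-Hamiltonian machinery of Theorem~\ref{thm:degree-reduction-poly}, but to specialize the embedded circuit to $H_\oneone$ by using the tree-of-counters gadget from Proposition~\ref{prop:incoherent-tree} in place of generic phase estimation. Since $GS(H_\oneone)$ is just the $\le 1$-excitation subspace, the accept check reduces to a Hamming-weight test that needs only $O(n)$ constant-local gates, which is what drops the term count from the generic $O(\poly(n)/\epsilon^2)$ to $O(n/\epsilon^2)$.

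Concretely, I would build a reversible circuit $U$ on the $n$ data qubits together with an $O(n)$-qudit work register shaped as a balanced binary tree of saturated $\{0,1,\ge 2\}$ counters and a single output-flag qubit: sweep up the tree computing the saturated sum at each internal node, conditionally flip the flag if the root counter is $\ge 2$, then sweep back down to uncompute. Thus $U\ket{\psi,0,0}=\ket{\psi,0,0}$ for every $\ket{\psi}\in GS(H_\oneone)$ and $U\ket{\psi,0,0}=\ket{\psi,0,1}$ otherwise. I then pad with $K=O(n/\epsilon^2)$ identity time-steps that only advance the unary clock and apply Kitaev's circuit-to-Hamiltonian construction, obtaining $\tilde{H}=J(H_\text{in}+H_\text{prop}+H_\text{out})$ whose unscaled null space is spanned by the history states $\ket{\eta_\psi}=(T{+}K{+}1)^{-1/2}\sum_t U_t\cdots U_1\ket{\psi,0,0}\ket{t}$ for $\ket{\psi}\in GS(H_\oneone)$; here $H_\text{in}$ initializes only the work and flag ancillas, $H_\text{prop}$ enforces the $T+K$ propagation steps, and $H_\text{out}$ penalizes a rejecting flag at the terminal clock time. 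Each term is at most $6$-local and each physical qudit appears in $O(1)$ such terms, so locality and degree are both bounded by $6$.

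The key observation for coherence is that because $U$ never modifies the data register, and the $K$ padding slices all leave the work/flag ancilla in the common state $\ket{0,0}$ independently of $\ket{\psi}$, the ancilla factor $\ket{\Phi^\psi}=(T{+}K{+}1)^{-1/2}\sum_t U_t\cdots U_1\ket{0,0}\ket{t}$ in the factorization $\ket{\eta_\psi}=\ket{\psi}\otimes\ket{\Phi^\psi}$ satisfies $|\langle\Phi^\psi|\Phi^{\psi'}\rangle|\ge K/(T+K+1)=1-O(\epsilon^2)$ uniformly over $\psi,\psi'\in GS(H_\oneone)$. Consequently the set $\{\ket{\Phi^\psi}\}$ lies within a cap of angular radius $O(\epsilon)$ around some common direction $\ket{\phi_0}$, and taking the rank-$1$ projector $P_\anc=\ket{\phi_0}\!\bra{\phi_0}$ yields $\|\tilde{P}-V(P\otimes P_\anc)V^\dag\|=\max_\psi\sqrt{1-|\langle\Phi^\psi|\phi_0\rangle|^2}=O(\epsilon)$ for the trivial encoding $V=\Id$. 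Scaling $J=O(\poly(n,\epsilon^{-1}))$ opens Kitaev's intrinsic $\Omega(1/(T+K)^2)$ gap up to $\gamma=1$ in the style of Theorem~\ref{thm:degree-reduction-poly}; the final term count is $M=T+K+O(1)=O(n/\epsilon^2)=o(M_0(n))$, making the construction simultaneously a degree-reducer and a diluter.

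The main obstacle is the quantitative book-keeping needed to make the above overlap argument survive the Kitaev perturbation: one must check that the scaling $J$ simultaneously opens the gap to $\gamma=1$, preserves the energy spread $\tilde{w}=0$ on the relevant quasi-groundspace, and keeps the deviation of the true ground projector from the ideal history subspace below a constant fraction of the incoherence budget $\epsilon$, so that the $1-O(\epsilon^2)$ pairwise-overlap bound translates into genuine $\epsilon$-incoherence for $\tilde{H}$. This is handled by the nullspace-projection / Schrieffer--Wolff tools already invoked in Theorem~\ref{thm:degree-reduction-poly} applied to this padded circuit, and is the step where the polynomial blow-up in interaction strength becomes unavoidable.
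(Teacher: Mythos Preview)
Your proposal is correct and follows essentially the same route as the paper's proof: both build an $O(n)$-gate tree circuit of saturated $\{0,1,\ge 2\}$ counters on qutrit ancillas, uncompute the work register, append $O(n/\epsilon^2)$ idling identities, and convert via Kitaev's construction with an $H_{\text{out}}$ penalty on the output indicator. The only cosmetic differences are that the paper uses the root qutrit itself as the accept/reject flag (penalized at clock time $t=n-1$ rather than the terminal time) and invokes the idling Lemma~\ref{lem:circuit-idling} for the incoherence bound, which is the abstract form of your pairwise-overlap argument on the factorization $\ket{\eta_x}=\ket{x}\otimes\ket{\Phi^x}$; the paper also uses separate scalings $J_{clock},J_{prop},J_{in},J_{out}$ rather than a single $J$, which only changes the degree of the resulting polynomial in interaction strength.
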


Note since Theorem~\ref{thm:imposs-dilute} rules out constructive dilution regardless of interaction strength $J$, we cannot hope to prove an analogue of Theorem~\ref{thm:degree-reduction-poly} or \ref{thm:degree-reduction-exp} to build coherent diluters for generic Hamiltonians, even allowing arbitrarily large interaction strength.
Nevertheless, it remains an interesting open question to characterize Hamiltonians for which diluters exist,
whether coherent or incoherent, with constant or large interaction strengths.

\subsection{Connection to Quantum PCP\label{sec:other-results}}
It might appear that our results rule out quantum degree-reduction (DR) in the context of quantum PCP (which would add to existing results \cite{BravyiVyalyi, AharonovEldar2011, qPCPArad, BrandaoHarrow, qPCPHastings,  AharonovEldar2013} ruling out quantum generalizations of other parts of Dinur's PCP proof \cite{dinur}).
However, our results in this context (detailed in Appendix~\ref{sec:qPCP}) currently have rather weak implications towards such a statement.
The catch is that despite the apparent similarity, our gap-simulating DR is a very different
notion from DR transformations used in the context of
quantum and classical PCP.
Gap-simulation seeks the {\it existence} of a Hamiltonian $\tilde{H}$ that
reproduces the properties of
the groundstate(s)  and {\it spectral gap} of an input Hamiltonian $H$.
On the other hand, a qPCP reduction is an {\it algorithm} that given
$H$, it is merely required to output some $\tilde{H}$, such
that if the groundstate energy of $H$ is small (or large), then so is the
groundstate energy of $\tilde{H}$; in other words, qPCP preserves the {\it
promise gap}.
Notice that such a $\tilde{H}$ always {\it exists},
and the difficulty in qPCP reductions is to generate
$\tilde{H}$ efficiently, without knowing the groundstate energy
of $H$.
Thus, we cannot hope for an information-theoretical
impossibility result (as in Theorem \ref{thm:main} and \ref{thm:imposs1-dilute}) in the qPCP setting
without further restriction on the output.
To circumvent this,
we generalize to the quantum world
a natural requirement, which seems to hold in the classical world for all
known PCP reductions, that the reduction is {\it constructive}: roughly, it implies a mapping not only on the CSPs (Hamiltonians) but
also on individual assignments (states)~\cite{BenSassonPCP,dinurgoldreich} (see definition of qPCP-DR
in Appendix~\ref{sec:qPCP-implication}).
Under this restriction,
we prove the impossibility of qPCP-DR reductions with near-perfect coherence (see Theorem \ref{thm:imposs-qPCP} in Appendix \ref{sec:qPCP} for exact statement).
The proof of Theorem \ref{thm:imposs-qPCP} approximately follows that of impossibility results of Lemma~\ref{lem:imposs1-DR} and Theorem~\ref{thm:imposs1-dilute} for sparsification with close-to-perfect coherence.
Unfortunately, as we explain in Sec.~\ref{sec:proof-sketch-main}, strengthening these results to prove impossibility
for constant error (the regime of interest for qPCP), as is done in Theorem \ref{thm:main}, seems to require another new idea.

\section{Proofs Overview\label{sec:proofs-overview}}

\subsection{Proof Sketch for Main Theorem \ref{thm:main} (and related results: Theorem~\ref{thm:imposs1-dilute}, \ref{thm:imposs-qPCP} and Lemma~\ref{lem:imposs1-DR})\label{sec:proof-sketch-main}}

We start with the idea underlying the impossibility
of degree-reduction and dilution with (close to) perfect coherence
(Lemma~\ref{lem:imposs1-DR} and Theorem~\ref{thm:imposs1-dilute}), which we refer to as ``contradiction-by-energy''.
For simplicity, let's first examine the case of gap-simulation without encoding.
Consider all pairs of original qubits $(i,j)$.
The groundstates of $H_\oneone$ include basis states with zero or
one excitations (namely, 1's), but not 2-excitation states.
Importantly, the groundstates can be obtained from the 2-excitation state by
{\it local} operations $\sigma_x^{(i)}$ and $\sigma_x^{(j)}$.
Assuming the gap-simulator $\tilde{H}_\oneone$ of $H_\oneone$ does not interact the qubits $(i, j)$, we can express the energy of the 2-excitation state
as a linear combination of the energy of 0- and 1-excitation states,
up to an error of $\O(\tilde{w})$ and $\O(\epsilon\|\tilde{H}_\oneone\|)$,
using the fact that we can commute $\sigma_x^{(i)}$ and $\sigma_x^{(j)}$ through
independent parts of $\tilde{H}_\oneone$.
If we assume $\tilde{w}$ is small and $\epsilon=0$, the energy of the 2-excitation state cannot be distinguished from these groundstates.
Thus any gap-simulator $\tilde{H}_\oneone$ must directly interact all pairs
of qubits, which easily proves the impossibility without encoding.
We can also see that if $\epsilon>0$, then DR and dilution
remain impossible if $\|\tilde{H}_\oneone\| \le O(\epsilon^{-1})$, e.g. when $\epsilon$ is polynomially small.
This impossibility easily extends to localized encoding,
where each original qubit is encoded into $O(1)$ qudits in the gap-simulator Hamiltonian either independently or via some constant-depth circuit.
In both cases, the required $\Omega(n)$ degree and $\Omega(n^2)$ interaction terms implied for the non-encoded version translate to the same requirements for the encoded version up to a constant factor, proving Lemma~\ref{lem:imposs1-DR} and Theorem~\ref{thm:imposs1-dilute}.

We now explain the proof of Theorem \ref{thm:main} that rules out degree-reduction even with constant incoherence.
Let us first consider the statement
for $H_\oneone$ with constant $\epsilon$ incoherence.
The challenge is that
the contradiction-by-energy trick used in
the proof of Lemma~\ref{lem:imposs1-DR} and Theorem~\ref{thm:imposs1-dilute}
does not work for $\epsilon=\Theta(1)$ incoherence. The problem is that the error in
energy is of the order of $\O(\epsilon\|\tilde{H}_\oneone\|)$; this is too large for constant $\epsilon$,
and does not allow one to distinguish the energy of ground and excited states.
Instead of contradiction-by-energy, we derive a contradiction using the groundspace correlations
between qubits $(i,j)$, where $\epsilon$-incoherence only induces
an error of $\O(\epsilon)$.
Since $H_\oneone$ is gapped, then any degree-reducer Hamiltonian $\tilde{H}_\oneone$ of $H_\oneone$ must be gapped (while allowing some small energy spread $\tilde{w}$) by Def.~\ref{defn:hamsimul}. We can therefore apply a result (modified to accommodate non-vanishing energy spread, see Lemma~\ref{lem:HastingsKoma} in Appendix~\ref{sec:MHK}) of
Hastings-Koma~\cite{HastingsKoma} stating that groundspace
correlation decays exponentially with the distance on the graph where $\tilde{H}_\oneone$ is embedded.
Since we assume bounded degree, we can find a pair $(i,j)$ among
the original $n$ qubits such that their supports $(S_i, S_j)$ after a localized encoding are $\Omega(\log n)$ distance apart, with respect to the graph metric.
Hence, their correlation $\braket{V\sigma_x^{(i)}\sigma_x^{(j)}V^\dag}$ in the groundspace of $\tilde{H}_\oneone$ must decay as $e^{-\Omega(\log n)} = O(1/\poly(n))$.
Contradiction is achieved by the fact that for any pair of original
qubits $(i,j)$, the groundspace of $\tilde{H}_\oneone$ contains a state of the form $(\ket{0_i1_j}+\ket{1_i0_j})\ket{0^{n-2},\text{rest}}+\O(\epsilon)$,
which has correlation at least $\braket{V\sigma_x^{(i)}\sigma_x^{(j)}V^\dag }=1 - \O(\epsilon)$.
For sufficiently small $\epsilon$ and $\tilde{w}$, this constant correlation from the latter lower bound
contradicts the $O(1/\poly(n))$ upper bound from the Hastings-Koma result.

The second part of Theorem~\ref{thm:main} proves impossibility of incoherent
DR for $H_\dicke$ with $\delta$-unfaithfulness.
Since $H_\dicke$ has a unique groundstate that can be shown to have constant correlation between any pair of original qubits $(i,j)$, we can apply the same argument above
for $H_\oneone$ and show a contradiction with the Hastings-Koma's vanishing upper bound of $O(1/\poly(n))$ for small $\delta$ and $\tilde{w}$.

We now remark how these impossibility proofs can be extended to the context of quantum PCP.
The contradiction-by-energy idea in Lemma~\ref{lem:imposs1-DR} and Theorem~\ref{thm:imposs1-dilute} can indeed be generalized in this context.
In Appendix \ref{sec:qPCP}, we show that under a reasonable restriction on the reduction -- namely that the energy of non-satisfying assignments (frustrated or excited states) after the mapping is
lower bounded by the promise gap -- degree-reduction or dilution for quantum PCP is not generally possible with close-to-perfect (namely inverse polynomial) coherence (Theorem \ref{thm:imposs-qPCP}).
However, this impossibility proof would not work when constant incoherence is allowed.
To move to contradiction-by-correlation as in Theorem~\ref{thm:main}, we need to use some form
of Hastings-Koma, which requires a spectral gap in $\tilde{H}$.
Thus, more innovation is needed as it may be an unnecessarily strong requirement for quantum PCP to preserve the spectral gap.

\subsection{Overview of Remaining Proofs}

\paragraph{Proof sketch: Equivalence between coherent and incoherent gap-simulations for unique groundstates (Lemma~\ref{lem:equiv})}---
We want to show that incoherent gap-simulation implies coherent gap-simulation, in the case of unique groundstate of the original Hamiltonian $H$.
A naive approach using the small error per groundstate of the gap-simulator will not work due to possible degeneracy in the groundspace of the simulator $\tilde{H}$; this (possibly exponential) degeneracy could add an unwanted exponential factor.
Hence, we explicitly construct the subspace on which the ancilla qubits
should be projected by $P_\anc$.
The main observation is that since faithful gap-simulation implies that any state in the groundspace of $\tilde{H}$ must be close to the space spanned by $P_\anc$,
the dimensions of $P_\anc$ and the groundspace of $\tilde{H}$ must be the same.
A sequence of simple arguments then allows us to derive
a bound on the incoherence of any state
(i.e., its norm after the incoherence operator in Eq.~\eqref{eq:incoherence} is applied).

\vspace{-7pt}

\paragraph{Proof sketch: DR of any classical Hamiltonian (Proposition~\ref{prop:classical-deg-reduct})}---
Here we follow the standard classical DR (as in~\cite{dinur}) in which each variable
(of degree $d$) is replaced by $d$ variables, and a ring of equality constraints on these variables is added to ensure
they are the same.
The proof that this satisfies our gap-simulator definition is straightforward.

\vspace{-7pt}

\paragraph{Proof sketch: Coherent DR of any Hamiltonian with $\Omega(1/\poly(n))$ spectral gap using polynomial interaction strength (Theorem~\ref{thm:degree-reduction-poly})}---
The construction is based on mapping the quantum phase estimation (PE) circuit\cite{NielsenChuang} to a Hamiltonian, using a modified version of Kitaev's circuit-to-Hamiltonian construction\cite{KSV02}.
The PE circuit can write the energy of any eigenstate of a given $H$ in an ancilla register, up to polynomial precision using polynomial overhead.
The degree of the Hamiltonian is reduced by ``sparsifying'' the circuit before converting to the Hamiltonian.
To repair the incoherence due to different histories, we run the circuit backwards, removing entanglement
between the ancilla and the original register.
To achieve $\epsilon$-incoherence, we add $O(\poly(n)/\epsilon^2)$ identity gates to the end of the circuit.
The eigenvalue structure of the original Hamiltonian $H$ is restored by imposing energy penalties on the energy bit-string written on the ancilla by the PE circuit.
This yields a full-spectrum simulation of $H$, which also implies a gap-simulation of $H$.

\vspace{-7pt}

\paragraph{Proof sketch: Impossibility of generic dilution algorithm (Theorem~\ref{thm:imposs-dilute})}---
Ref.~\cite{DellvanMelkebeek} shows that under the
assumption $\coNP \not\subseteq \NPpoly$, there is no poly-time algorithm to ``compress'' vertex-cover problems on $n$-vertex $k$-uniform hypergraphs
and decide the problem by communicating $O(n^{k-\xi})$ bits for any $\xi>0$ to a computationally unbounded oracle.
Suppose towards a contradiction that $\A$ is a poly-time algorithm
to dilute any $k$-local classical Hamiltonian;
we use it to derive a compression algorithm for vertex cover.
To this end, $\A$ is given a classical
$k$-local Hamiltonian $H$ encoding a vertex cover problem;
$\A$ produces the diluter $\tilde{H}$ with $O(n^{k-2\xi})$ terms and some encoding $V$ described by $O(n^{k-2\xi})$ bits.
Using Green's function perturbation theory (Lemma \ref{lem:PPgroundspace}),
we show that
$\tilde{H}$ can be written using only $\log(n)$-bit precision as $\tilde{H}'$
with $O(1)$ error in the quasi-groundspace (even accounting for degeneracy).
We then communicate $(\tilde{H}',V)$ to the oracle by sending
$O(n^{k-2\xi}\log n)=O(n^{k-\xi})$ bits.
The oracle then uses any groundstate of $\tilde{H}'$, which has large overlap with groundstates of $H$ for small $\delta$ and high precision, to
decide the vertex cover problem and transmit back the answer.

\vspace{-7pt}

\paragraph{Proof sketch: Incoherent dilution and DR of $H_A$ (Proposition~\ref{prop:incoherent-tree})}---
We use here the usual translation of a classical circuit to a CSP:
$n-1$ qubits in a tree structure
(see Figure \ref{fig:tree}) are used to
simulate counting of the number of $1$s among the original qubits,
and the CSP checks the correctness of this (local) computation.
The ``history'' of the computation
is written on the ancilla qubits, and since
different strings have different such histories, the construction is
incoherent (see Figure \ref{fig:treeincoherent}).

\vspace{-7pt}

\paragraph{Proof sketch: Coherent dilution and DR of $H_\oneone$ with polynomial interaction strength (Proposition~\ref{prop:circuit})}---
We improve upon the construction in Prop.~\ref{prop:incoherent-tree} and Theorem~\ref{thm:degree-reduction-poly}
to obtain a coherent diluter of $H_\oneone$ with polynomial interaction strength.
The key is an $O(n)$-length circuit similar to that of Prop.~\ref{prop:incoherent-tree} with a circuit that counts
the number of $1$s in the same tree geometry.
Using the same tricks in Theorem~\ref{thm:degree-reduction-poly} to uncompute computational histories and idling at the end, we show that this leads to a coherent gap-simulator of $H_\oneone$ with $\epsilon$-incoherence and $O(n/\epsilon^2)$ terms.

\vspace{-7pt}

\paragraph{Proof sketch: Coherent DR for any Hamiltonian
using exponential interaction strength (Theorem~\ref{thm:degree-reduction-exp})}---
In order to provide generic coherent degree-reduction for any local
Hamiltonian, using exponential interaction strength,
we first show that perturbative gadgets\cite{KKR06,OliveiraTerhal,CaoImprovedOTGadget}
can be used for gap-simulation. The proofs make use of Green's function
machinery to bound incoherence.
This allows us to construct a degree-reducer for any $k$-local Hamiltonian by a sequence of perturbative gadget applications.
In the first part of the sequence, we reduce the locality of individual Hamiltonian terms to 3-local via $O(\log k)$ serial applications of subdivision gadgets \cite{OliveiraTerhal}, and each 3-local term is further reduced to 2-local via ``3-to-2-local'' gadgets \cite{OliveiraTerhal}.
Then, each original qubit is isolated from each other by subdivision gadgets so that they only interact with $O(n^{k-1})$ ancilla qubits that mediate interactions.
Finally, applying fork gadgets \cite{OliveiraTerhal} in $O(\log n)$ iterations allows us to reduce maximum degree of these original qubits to 6, generating our desired degree-reducer.
It is this last part that causes the exponential blow-up in the interaction strength required to maintain the gap-simulation.

\vspace{-7pt}

\paragraph{Proof sketch: Generalized Hastings-Koma (Lemma~\ref{lem:HastingsKoma})} ---
In Ref.~\cite{HastingsKoma}, Hastings and Koma
proved the exponential decay of correlations in the quasi-groundspace of a Hamiltonian $H$ consisting of finite-range (or exponentially decaying) interactions between particles embedded on a lattice (or more generally on some graph).
They assume that the system is
spectrally gapped, and has vanishing energy spread as the system size $n\to\infty$.
Their proof is based on the relationship between the correlation $\braket{\sigma^{(i)} \sigma^{(j)}}$ they want to upper bound,
and the commutator $\braket{[e^{-iHt}\sigma^{(i)} e^{iHt}, \sigma^{(j)}]}$.
By applying the Lieb-Robinson bound\cite{LiebRobinson} on the latter, and integrating out the time $t$, they show that under the above conditions, the correlations between operators acting on particles $i$ and $j$ decay exponentially with the graph-theoretic distance between the particles.
For application to the gap-simulation framework, we need to generalize their result to cases where the energy spread is not assumed to vanish with the system size.
This is done by a careful modification of their proofs where we optimize the bounds and integration parameters so that errors due to the non-zero energy spread are suppressed.

\vspace{-5pt}

\section{Discussion and outlook\label{sec:discussion}}

We have initiated the rigorous research of resources required for analog simulation of Hamiltonians, and proved unexpected impossibility results for Hamiltonian sparsification.
Instead of working with full-spectrum simulations~\cite{BravyiHastingsSim,UniversalHamiltonian},
we use a new, relaxed definition of gap-simulation that is motivated by minimal requirements in physics.
We note that impossibility results proven in a relaxed framework are of course stronger.

It would be very interesting to improve our understanding of the new framework of
gap-simulations presented here, and clarify its applicability. 
For a start, it will be illuminating
to find example applications of gap-simulations in cases where
full-spectrum simulations as in Ref.~\cite{BravyiHastingsSim,UniversalHamiltonian}
are unknown or difficult to achieve.
Such simulations can enable experimental studies of these physical systems, by reducing resources required for analog simulations. 
Moreover, in many-body quantum physics, tools to construct ``equivalent''  Hamiltonians that preserve  groundstate properties are of great utility.
In this context, the study of gap-simulations can potentially lead to better understanding of universal behaviors in quantum phases of matter, which are characterized only by groundstate physics~\cite{SachdevQPT}. 
Another possible application of gap-simulators may be in the design of Hamiltonian-based quantum algorithms.
In adiabatic algorithms~\cite{FarhiAdiabatic2000}, it is well known that the higher parts of the spectrum
of the final and initial Hamiltonians can significantly affect the adiabatic
gap~\cite{FarhiQAAFail,DicksonAmin,AharonovAtia};
gap-simulating these final and initial Hamiltonians by others will not
affect the final groundstate, and can sometimes 
dramatically improve on the gap along the adiabatic path.
Gap-simulations may also be a useful tool for tailoring the Hamiltonians used
in other Hamiltonian-based algorithms such as QAOA~\cite{QAOA}. 

We note that incoherent but faithful gap-simulations can be very interesting despite
the apparent violation of the quantum requirement for coherence.
For example, in adiabatic algorithms~\cite{FarhiAdiabatic2000}, we only want to arrive
at one of the solutions (groundstates) to a quantum constraint satisfaction problem.
In addition, in quantum NP~\cite{quantumNPsurvey}, one is interested only in whether a certain eigenvalue {\it exists}, and not in the preservation of the entire groundspace.
However, in the context of quantum simulation and many-body physics,
maintaining coherence seems to be crucial for transporting all the physical
properties of the groundspace.
One would also expect maintaining coherence to be important when gap-simulating a subsystem
(perhaps in an unknown state) of a larger system.

We remark that our framework deliberately avoids requiring that the eigenvalue structure
of the spectrum be maintained even in its low-lying part, so as
to provide a minimal but still interesting definition.
Indeed, when simulating the groundspace,
or a quasi-groundspace with small energy spread, this structure is not important.
Nevertheless, one can imagine an intermediate definition, in which full-spectrum simulation
is too strong, but the structure of a significant portion of the lower part of the spectrum matters.
It might be interesting to extend the framework of gap-simulations to allow for such intermediate
cases in which, for example, Gibbs states at low (but not extremely low)
temperatures are faithfully simulated.

A plethora of open questions arise in the context of sparsification.
First, it will be very interesting to find more examples
where degree-reduction and/or dilution
are possible, or are helpful from the perspective of physical implementations.
Assuming bounded interaction strength, which is generally a limitation of physical systems,
can we rigorously characterize which Hamiltonians can be coherently (or incoherently) degree-reduced?
Of course, similar questions can be asked about dilution.
It will also be interesting to consider saving other resources such as the dimensionality of the particles,
which would be a generalization of alphabet-reductions from the context of PCP to Hamiltonian sparsification.

Our results on the impossibility of dilution are weaker than those
for DR. Can we strengthen these to stronger information-theoretical results,
by finding a quantum Hamiltonian for whom a
diluter does not {\it exist} with constant incoherence, or even constant unfaithfulness?

We mention here that the classical
graph sparsification results of Ref.~\cite{BSS12,BSST13}
can be viewed as dilution of a graph while approximately maintaining
its spectrum.
These results have been generalized to
the matrix setting in Ref.~\cite{SilvaSparsification}; however,
this generalization does not seem to be useful in the context of
diluting the interaction graph of a local Hamiltonian.
The result of Ref.~\cite{SilvaSparsification} shows that for sums of
$d\times d$ positive Hermitian matrices, $O(d)$ matrices are sufficient to
reproduce the spectral properties to good approximation, improving over Chernoff-like bounds~\cite{AhlswedeWinter}.
While this in principle allows one to approximate a sum of terms by a sum of fewer terms,
the required number of terms grows as $d=2^{\Omega(n)}$ for quantum Hamiltonians on $n$ qubits,
and is thus irrelevant in our context.

Improving the {\it geometry} of the simulators is another important task that is relevant
for applications of Hamiltonian sparsification to physical implementations. 
Ref.~\cite{LechnerHaukeZoller} has devised a method of converting the
NP-complete Ising model
Hamiltonian ($H=\sum_{ij} J_{ij} \sigma_z^{(i)}\sigma_z^{(j)} + \sum_i h_i \sigma_z^{(i)}$) on $n$ qubits to a new Hamiltonian on $O(n^2)$ qubits with interactions embedded on a 2D lattice, and sharing the same low-energy spectrum.
Their construction encodes each edge $\sigma_z^{(i)}\sigma_z^{(j)}$ as a new qubit, and
corresponds to an incoherent degree-reducer,
where the new groundstates are
non-locally encoded version of the original states.
Our Proposition~\ref{prop:classical-deg-reduct} also provides
incoherent DR of these Hamiltonians, and without
encoding, but the geometry is not in 2D; it will be interesting to improve our   Proposition~\ref{prop:classical-deg-reduct} as well 
as our other positive Theorems \ref{thm:degree-reduction-poly} and \ref{thm:degree-reduction-exp} to hold using a spatially local $\tilde{H}$. 
We note that if we allow the overhead of polynomial interaction strength, then it should be straightforward to extend the circuit-to-Hamiltonian construction in Theorem~\ref{thm:degree-reduction-poly} for analog simulation of local Hamiltonians on a 2D lattice, by ordering the gates in a snake-like fashion on the lattice similar to Ref.~\cite{OliveiraTerhal, AharonovAQCUniversal}.
Identifying situations where DR in 2D with bounded interaction strength is possible remains an open question.

A different take on the geometry
question is to seek gap-simulators which use a single (or few) ancilla qubits that strongly interact with the rest. This may be relevant for physical systems such as neutral atoms with
Rydberg blockade~\cite{RydbergBlockade}, where an atom in a highly excited level
may have a much larger interaction radius, while no two atoms can be excited in each other's vicinity.

Can we improve our results about quantum PCP, and show impossibility of qPCP-DR with constant incoherence?
This would make our impossibility results interesting also in the qPCP context, as they would imply  impossibility of DR in the qPCP regime of constant error, under a rather natural restriction on the qPCP reduction (see discussion in Appendix~\ref{sec:qPCP}).
This would complement existing impossibility results on various avenues towards qPCP~\cite{BravyiVyalyi, AharonovEldar2011, qPCPArad, BrandaoHarrow, qPCPHastings,  AharonovEldar2013,qPCPsurvey}.
Neverthless, it seems that proving such a result might require a significantly further extension of Hastings-Koma beyond our Lemma~\ref{lem:HastingsKoma}, which may be of interest on its own.

Finally, we mention a possibly interesting variant of gap-simulation,
which we call {\it weak} gap-simulation
(see Appendix~\ref{sec:weak-sparsifier}
and Fig.~\ref{fig:weakgapsimul}).
Here, the groundspace is simulated in an {\it excited}
eigenspace of the simulating Hamiltonian, spectrally gapped from above and below,
rather than in its groundspace.
This can be useful in the context of
Floquet Hamiltonian engineering, where
eigenvalues are meaningful only up to a period, and thus a spectral gap in the middle of the spectrum is analogous to a spectral gap above the groundspace~\cite{floquet}.
Proposition~\ref{prop:star} in Appendix~\ref{sec:weak-sparsifier}
shows how to weakly gap-simulate $H_\oneone$
to provide dilution with {\it constant} incoherence and {\it bounded} interaction strength --
a task which we currently do not know how to do using ``standard'' gap-simulation.
It remains open whether one can show stronger possibility results under weak gap-simulation.
If not, can the impossibility results presented here
be extended to the weak-gap-simulation setting?
This might require
an even stronger extension of Hastings-Koma's theorem.

Overall, we hope that the framework, tools, and
results presented here will lead to progress in understanding the possibilities and limitations in simulating Hamiltonians by other
Hamiltonians -- an idea that brings the notion of {\it reduction} from
classical computer science into the quantum realm,
and constitutes one of the most important contributions of the field of
quantum computational complexity to physics.

\section{Acknowledgements}
We are grateful to Oded Kenneth for  suggesting the construction of $H_\dicke$,
and for fruitful discussions;  to Itai Arad for
insightful remarks about the connection to quantum PCP;
to Eli Ben-Sasson for discussions about PCP;
to Ashley Montanaro and Toby Cubitt for clarifications about Hamiltonian simulation.
D.A. is grateful to the generous support of ERC grant 280157 for its support
during the completion of most of this project.
L.Z. is thankful for the same
ERC grant for financing his visits to the research group of D.A.
D.A. also thanks the ISF grant No.~039-9494 for supporting this
work at its final stages.

\newpage

\begin{appendices}

\section{Properties of Gap-Simulation\label{sec:gap-simulation-properties}}

\subsection{Relationship between Coherent and Incoherent Gap-Simulation\label{sec:uniqueGS}}
Here we show the relationship between coherent and incoherent gap-simulations.
We first prove the easy direction: incoherence provides an upper bound on unfaithfulness.
Pick $\delta$ to be the exact value of
unfaithfulness (and not just an upper bound on it), then
\begin{eqnarray}
\delta &=&  \|\tilde{P}-V(P\otimes \Id_\anc )V^\dag \tilde{P}\| \nonumber\\
&=& \|\tilde{P}-V(P\otimes \Id_\anc )V^\dag V(P\otimes P_\anc )V^\dag - V(P\otimes \Id_\anc )V^\dag[\tilde{P}-V(P\otimes P_\anc )V^\dag]\| \nonumber \\
&=& \|\tilde{P}-V(P\otimes P_\anc )V^\dag - V(P\otimes \Id_\anc )V^\dag[\tilde{P}-V(P\otimes P_\anc )V^\dag]\| \nonumber \\
&\le& \|\tilde{P}-V(P\otimes P_\anc )V^\dag\| +  \|V(P\otimes \Id_\anc )V^\dag[\tilde{P}-V(P\otimes P_\anc )V^\dag]\| \nonumber \\
&\le& 2\epsilon
\end{eqnarray}
The above uses the fact that $V^\dag V=\Id$ for isometries.
When $V$ is unitary ($VV^\dag=V^\dag V=\Id$), we can obtain an even better bound of $\delta\le\epsilon$:
\begin{eqnarray}
\delta &=&  \|\tilde{P}-V(P\otimes \Id_\anc )V^\dag \tilde{P}\| = \|VV^\dag\tilde{P}-V(P\otimes \Id_\anc )V^\dag \tilde{P}\| = \|V(P^\perp \otimes \Id_\anc) V^\dag \tilde{P} \|\nonumber \\
&=&   \| V(P^\perp \otimes \Id_\anc)V^\dag [\tilde{P} - V(P\otimes P_\anc) V^\dag ]\| \le \|\tilde{P} - V(P\otimes P_\anc) V^\dag \|\nonumber \\
&\le& \epsilon.
\end{eqnarray}

We now prove Lemma~\ref{lem:equiv}, which shows that in the case of a unique groundstate, the opposite direction holds as well.
We reproduce this lemma below:

{
\renewcommand{\thelemma}{\ref{lem:equiv}}
\begin{lemma}[Equivalence of coherent and incoherent gap-simulation when groundstate is unique]
Suppose the Hamiltonian $H$ has a unique groundstate, i.e. its groundspace projector $P=\ketbra{g}$. If $\tilde{H}$ gap-simulates $H$ with unfaithfulness
$\delta <1$, then it also gap-simulates $H$ with incoherence
$\epsilon \le \sqrt{2}\delta/\sqrt{1-\delta^2}$.
\end{lemma}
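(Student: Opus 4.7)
The goal is to construct an explicit ancilla projector $P_\anc$ so that the two-sided bound $\|\tilde P - V(P\otimes P_\anc)V^\dag\|\le \epsilon$ of coherent gap-simulation follows from the one-sided bound $\|\tilde P - V(P\otimes \Id_\anc)V^\dag\tilde P\|\le \delta$ given by unfaithfulness. Writing $P=\ketbra{g}$, I would take $P_\anc$ to be the orthogonal projector onto the image in $\H_\anc$ of the linear map $\phi:\ket{\tilde g}\mapsto (\bra g\otimes \Id_\anc)V^\dag \ket{\tilde g}$ restricted to the range of $\tilde P$. Intuitively $P_\anc$ collects exactly the ancilla ``decorations'' that appear when the unique original groundstate $\ket g$ is extracted from vectors in the quasi-groundspace of $\tilde H$.

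\textbf{Rank matching and one-sided bound.} The crucial observation is that $\phi|_{\tilde P}$ is \emph{injective}: if $\phi(\ket{\tilde g})=0$ for $\ket{\tilde g}\in \tilde P$, then $V(P\otimes \Id_\anc)V^\dag \ket{\tilde g}=V(\ket g\otimes 0)=0$, and unfaithfulness gives $\|\ket{\tilde g}\|\le \delta\|\ket{\tilde g}\|$, which forces $\ket{\tilde g}=0$ since $\delta<1$. Consequently $\rank(P_\anc)=\rank(\tilde P)$, and since $V$ is an isometry the operator $Q:=V(P\otimes P_\anc)V^\dag$ is itself a projector of the same rank as $\tilde P$. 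For any $\ket{\tilde g}\in \tilde P$, I would then write $V^\dag \ket{\tilde g}=\ket g\otimes \ket{a_{\tilde g}}+\ket\bot$ with $\ket\bot$ orthogonal to $\ket g\otimes \H_\anc$; by construction $\ket{a_{\tilde g}}=\phi(\ket{\tilde g})\in \operatorname{range}(P_\anc)$, and hence $Q\ket{\tilde g}=V(\ket g\otimes \ket{a_{\tilde g}})=V(P\otimes \Id_\anc)V^\dag \ket{\tilde g}$. The unfaithfulness bound therefore directly gives $\|Q^\perp \tilde P\|\le \delta$.

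\textbf{From one-sided to two-sided.} Next I would control $\|\tilde P-Q\|$ via the orthogonal-range decomposition $\tilde P-Q = \tilde P Q^\perp - \tilde P^\perp Q$, whose two summands have ranges in the mutually orthogonal subspaces $\tilde P$ and $\tilde P^\perp$; consequently $\|\tilde P-Q\|^2 \le \|\tilde P Q^\perp\|^2 + \|\tilde P^\perp Q\|^2$. The first summand is $\le \delta^2$ by the previous step. For the second, I would exploit that $Q|_{\tilde P}:\tilde P\to \operatorname{range}(Q)$ is a bijection (again by $\delta<1$ and the matched ranks): every unit vector $\ket\phi\in \operatorname{range}(Q)$ equals $Q\ket{\tilde g}$ for some $\ket{\tilde g}\in \tilde P$ with $\|\ket{\tilde g}\|\le 1/\sqrt{1-\delta^2}$ (via Pythagoras $\|\ket{\tilde g}\|^2=\|Q\ket{\tilde g}\|^2+\|(\Id-Q)\ket{\tilde g}\|^2$ combined with $\|(\Id-Q)\ket{\tilde g}\|\le \delta\|\ket{\tilde g}\|$); then $\tilde P^\perp \ket{\tilde g}=0$ yields $\|\tilde P^\perp \ket\phi\|=\|\tilde P^\perp(\Id-Q)\ket{\tilde g}\|\le \delta/\sqrt{1-\delta^2}$. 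Combining the two summands produces $\|\tilde P-Q\|\le \sqrt{\delta^2+\delta^2/(1-\delta^2)}\le \sqrt 2\,\delta/\sqrt{1-\delta^2}$, which is the stated bound on $\epsilon$.

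\textbf{Main obstacle.} The most delicate ingredient is rank matching: without $\delta<1$, $\phi|_{\tilde P}$ could have a kernel, $P_\anc$ could be strictly smaller than $\tilde P$, and coherence would be irrecoverably lost -- precisely the pitfall the authors flag about a naive pointwise construction, which could pick up an exponential factor from degeneracy in the simulator's quasi-groundspace. Once the explicit choice $P_\anc=\operatorname{range}(\phi|_{\tilde P})$ matches dimensions exactly, the remaining steps reduce to routine projector and isometry manipulations combined with one application of Pythagoras.
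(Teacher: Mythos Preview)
Your proposal is correct and follows essentially the same approach as the paper: the construction of $P_\anc$ as the span of the ``$g$-vectors'' $\phi(\ket{\tilde g})=(\bra g\otimes\Id)V^\dag\ket{\tilde g}$ is identical, and the rank-matching observation plays the same role. The only cosmetic difference is that the paper packages the one-sided-to-two-sided upgrade as a standalone ``Projector Difference Lemma'' (proved via a dimension-counting vector construction), whereas you argue it directly from the orthogonal-range decomposition $\tilde P-Q=\tilde P Q^\perp-\tilde P^\perp Q$ and the bijection $Q|_{\tilde P}$; both routes yield the same bound $\sqrt{2}\delta/\sqrt{1-\delta^2}$.
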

\addtocounter{lemma}{-1}
}

It turns out that it'll be helpful to first prove the following technical lemma (which will also be used in Appendix~\ref{sec:degree-reduction-poly} to prove Theorem~\ref{thm:degree-reduction-poly}).

\begin{lemma}[Projector Difference Lemma]
\label{lem:proj-diff}
Consider two Hermitian projectors $\Pi_A$ and $\Pi_B$, such that $\rank(\Pi_A)\le \rank(\Pi_B)$.
Suppose that for all normalized $\ket{\phi}\in \tilde{\Pi_B}$, $\|(\Pi_B-\Pi_A)\ket{\phi}\| \le \delta$.
Then $\|\Pi_B-\Pi_A\| \le \sqrt{2}\delta/\sqrt{1-\delta^2}$.
\end{lemma}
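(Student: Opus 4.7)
The plan is to reduce bounding the operator norm $\|\Pi_B - \Pi_A\|$ to controlling the action of $\Pi_B - \Pi_A$ separately on the subspaces $\text{range}(\Pi_B)$ and $\text{range}(\Pi_B)^\perp$. For any unit vector $\ket{\psi}$, decompose $\ket{\psi} = \ket{\phi_1} + \ket{\phi_2}$ with $\ket{\phi_1} \equiv \Pi_B\ket{\psi}$ and $\ket{\phi_2} \equiv \Pi_B^\perp\ket{\psi}$. Using $\Pi_B\ket{\phi_1} = \ket{\phi_1}$ and $\Pi_B\ket{\phi_2} = 0$, one immediately obtains $(\Pi_B - \Pi_A)\ket{\psi} = \Pi_A^\perp\ket{\phi_1} - \Pi_A\ket{\phi_2}$. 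The first term is controlled directly by the hypothesis, rescaled to arbitrary norm: $\|\Pi_A^\perp\ket{\phi_1}\| \le \delta\|\ket{\phi_1}\|$.

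The second term requires a dual bound $\|\Pi_A\ket{\phi_2}\| \le (\delta/\sqrt{1-\delta^2})\|\ket{\phi_2}\|$, and deriving it is the main obstacle of the proof; this is where the rank condition enters essentially. First, the hypothesis together with $\|\Pi_A\ket{\phi}\|^2 = \|\ket{\phi}\|^2 - \|\Pi_A^\perp\ket{\phi}\|^2$ yields the lower bound $\|\Pi_A\ket{\phi}\| \ge \sqrt{1-\delta^2}\|\ket{\phi}\|$ for every $\ket{\phi} \in \text{range}(\Pi_B)$, so $\Pi_A$ is injective on $\text{range}(\Pi_B)$. Combined with $\rank\Pi_A \le \rank\Pi_B$, the ranks must be equal, making the restriction $\Pi_A|_{\text{range}(\Pi_B)} \to \text{range}(\Pi_A)$ a bijection. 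For any $\ket{\phi_2} \in \text{range}(\Pi_B)^\perp$, write $\Pi_A\ket{\phi_2} = r\ket{\alpha}$ with unit $\ket{\alpha}\in\text{range}(\Pi_A)$, and take the unique preimage $\ket{\beta} \in \text{range}(\Pi_B)$ with $\Pi_A\ket{\beta} = \ket{\alpha}$; the lower bound forces $\|\ket{\beta}\| \le 1/\sqrt{1-\delta^2}$. Now exploit the orthogonality $\braket{\beta|\phi_2} = 0$: expanding as $0 = \braket{\beta|\Pi_A|\phi_2} + \braket{\beta|\Pi_A^\perp|\phi_2}$, the first term computes to $r$ (via $\braket{\beta|\Pi_A|\phi_2} = \braket{\alpha|\phi_2} = r$), while the second is bounded by Cauchy-Schwarz combined with $\|\Pi_A^\perp\ket{\beta}\| \le \delta\|\ket{\beta}\|$. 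This yields $|r| \le \delta\|\ket{\beta}\|\,\|\ket{\phi_2}\| \le (\delta/\sqrt{1-\delta^2})\|\ket{\phi_2}\|$, as required.

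Finally, combine the two bounds by the triangle inequality,
\[
\|(\Pi_B - \Pi_A)\ket{\psi}\| \le \delta\|\ket{\phi_1}\| + \frac{\delta}{\sqrt{1-\delta^2}}\|\ket{\phi_2}\| \le \frac{\delta}{\sqrt{1-\delta^2}}\bigl(\|\ket{\phi_1}\| + \|\ket{\phi_2}\|\bigr),
\]
and apply $(a+b)^2 \le 2(a^2+b^2)$ with the Pythagorean identity $\|\ket{\phi_1}\|^2 + \|\ket{\phi_2}\|^2 = 1$ to conclude $\|(\Pi_B - \Pi_A)\ket{\psi}\| \le \sqrt{2}\delta/\sqrt{1-\delta^2}$. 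Taking the supremum over unit $\ket{\psi}$ gives the lemma.

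The hardest step is the derivation of the dual bound: the hypothesis constrains $\Pi_B - \Pi_A$ only on vectors inside $\text{range}(\Pi_B)$, whereas this bound concerns the orthogonal complement. The rank condition is precisely what allows one to transfer information across this asymmetry via the bijection induced by $\Pi_A$; without either the rank condition or the gap $\delta < 1$, the construction of the preimage $\ket{\beta}$ and the control on its norm would break down.
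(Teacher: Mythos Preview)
Your proof is correct and follows essentially the same approach as the paper: decompose an arbitrary vector into its $\Pi_B$ and $\Pi_B^\perp$ components, derive a dual bound $\|\Pi_A\ket{\phi_2}\|\le(\delta/\sqrt{1-\delta^2})\|\ket{\phi_2}\|$ on the orthogonal complement via the rank condition and an orthogonality relation, and combine by the triangle inequality and Cauchy--Schwarz. The only stylistic difference is in the dual-bound step: you construct a preimage under the bijection $\Pi_A|_{\mathrm{range}(\Pi_B)}\to\mathrm{range}(\Pi_A)$, whereas the paper finds a vector in $\spn\{\mathrm{range}(\Pi_B),\ket{\phi^\perp}\}$ orthogonal to $\Pi_A$ by dimension counting---these are equivalent linear-algebraic maneuvers leading to the same estimate.
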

\begin{proof}
Let us denote $\rank(\Pi_A) =k$ and $\rank(\Pi_B)=\ell$.
First, we observe that for all normalized $\ket{\phi}\in \Pi_B$, $\ket{\phi} = \Pi_A\ket{\phi} + \Pi_A^\perp\ket{\phi}$, and thus
\begin{equation}
\|\Pi_A\ket{\phi}\|^2 = 1-\|\Pi_A^\perp\ket{\phi}\|^2 = 1-\|(\Pi_B-\Pi_A)\ket{\phi}\|^2 \ge 1-\delta^2.
\label{eq:bound-BinA}
\end{equation}

Now consider any normalized $\ket{\phi^\perp} \perp \Pi_B$. We want to bound $\|(\Pi_B-\Pi_A)\ket{\phi^\perp}\|$.
To this end, consider the space $\V=\spn\{\Pi_B, \ket{\phi^\perp}\}$, which has dimension $\dim(\V)=\ell+1$.

We first argue that there exists $\ket{v}\in \V$ such that $\ket{v}\perp \Pi_A$. To see this, pick an orthonormal basis for $\Pi_A$: $\{\ket{\beta_1},\ldots,\ket{\beta_k}\}$.
We then write $\ket{\beta_i} = \ket{\beta_i^\V} + \ket{\beta_i^\perp}$,
with $\ket{\beta^\V_i}\in \V$
and $\ket{\beta^\perp_i}\perp \V$.
Let $W\subseteq \V$ be the subspace of $\V$
spanned by the $\ket{\beta^\V_i}$.
Note $\dim(W)\le k < \ell+ 1= \dim (\V)$.
Hence, there exists a unit vector $\ket{v}\in \V$,
$\ket{v}\perp W$,
which means that $\ket{v} \perp  \ket{\beta^\V_i}$ for all $1\le i\le k$.
But since $\ket{v}\in \V$, $\ket{v}$ is also orthogonal to all $\ket{\beta^\perp_i}\perp \V$.
Therefore,
$\ket{v}$ is orthogonal to all $\{\ket{\beta_i}\}_{i=1}^k$, and so
$\ket{v}\perp \Pi_A$.

Write
$\ket{v}=a\ket{\phi}+b\ket{\phi^\perp}$ for some complex numbers $a,b$ and unit vector $\ket{\phi}\in \Pi_B$.
Note
by our assumption, $\ket{\phi^\perp}\perp\Pi_B$.
Since $\ket{v}\perp \Pi_A$, we have:
$a\Pi_A\ket{\phi}=-b\Pi_A \ket{\phi^\perp}$.
Hence there exists a unit vector
$\ket{\gamma}\in \Pi_A$ and complex numbers $x,y$ such that
\begin{align}
\Pi_A\ket{\phi} &= y\ket{\gamma}, \\
\Pi_A \ket{\phi^\perp}&= x\ket{\gamma},\\
\text{and} \quad ay&=-bx.
\end{align}
We can thus write:
\begin{align}
\ket{\phi} &= y\ket{\gamma}+\ket{\gamma'}, \quad \ket{\gamma'}\perp \Pi_A \\
\ket{\phi^\perp}&= x\ket{\gamma}+\ket{\gamma''}, \quad \ket{\gamma''}\perp \Pi_A
\end{align}
From $\ket{\phi}\in \Pi_B$ we have $\ket{\phi} \perp \ket{\phi^\perp}$ which implies
\begin{equation} \label{eq:bound-xy}
xy+\langle\gamma'|\gamma''\rangle=0.
\end{equation}
By Eq.~\eqref{eq:bound-BinA}, we have
$|y|=\|\Pi_A\ket{\phi}\|\ge \sqrt{1-\delta^2}$ and therefore
\begin{equation}
\|\ket{\gamma'}\|=\sqrt{1-|y|^2}\le \delta.
\end{equation}
Due to Eq.~\eqref{eq:bound-xy}
we have $|xy|\le \delta$.
Since $|y|\ge \sqrt{1-\delta^2}$, we have for any $\ket{\phi^\perp}\perp \Pi_B$,
\begin{equation}
\|(\Pi_B-\Pi_A)\ket{\phi^\perp}\| = \|\Pi_A\ket{\phi^\perp}\| = |x|\le \delta/\sqrt{1-\delta^2}.
\label{eq:bound-Bout}
\end{equation}

Given any unit vector $\ket{\psi}$
in the Hilbert space,
write
$\ket{\psi}=\sqrt{1-z}\ket{\phi}+\sqrt{z}\ket{\phi^\perp}$
for some unit vectors $\ket{\phi}\in \Pi_B$ and $\ket{\psi^\perp} \perp \Pi_B$,
along with some number $z$ where $0\le z\le 1$.
Using the triangle inequality, it follows from Eq.~\eqref{eq:bound-Bout} and the given fact of $\|(\Pi_B-\Pi_A)\ket{\phi}\|\le \delta$ that
\begin{eqnarray}
\|(\Pi_B-\Pi_A)\ket{\psi}\| &\le& \sqrt{1-z} \delta + \sqrt{z}\frac{\delta}{\sqrt{1-\delta^2}} \le \sqrt{\delta^2+\frac{\delta^2}{1-\delta^2}} \nonumber \\
&\le& \sqrt{2}\delta/\sqrt{1-\delta^2}
\end{eqnarray}
where in last part of the first line we used the fact that $\sqrt{1-z}a+\sqrt{z}b\le\sqrt{a^2+b^2}$ for any real numbers $a,b$ and $0\le z\le 1$.

\end{proof}

\begin{proof}[\textbf{Proof of Lemma~\ref{lem:equiv}}]
We know that $\ket{g}$ is the unique groundstate of $H$.
We can always represent in a unique way any state in the Hilbert space of
$\tilde{H}$ by what we call the {\it $g$-representation}:
\begin{equation}
\ket{\alpha}=V(\ket{g}\ket{\alpha_g})+\ket{\alpha^\perp}.
\end{equation}
such that the reduced matrix of $V^\dag \ket{\alpha^\perp}$ to the left register,
has zero support on $\ket{g}$.
We call $\ket{\alpha_g}$ the $g$-vector of $\ket{\alpha}$.
Keep in mind that the two components are orthogonal since
$\big(\bra{g}\braket{\alpha_g|\big)V^\dag \,|\alpha^\perp} = 0$.

We now construct the projector $P_\anc$
and show that it satisfies the requirement of the Lemma.
We define:
\begin{defn}
$P_\anc$ is defined to be the projector onto the span of all $g$-vectors $\ket{\alpha_g}$
of all vectors $\ket{\alpha}\in \tilde{P}$.
\end{defn}

Consider the space on which $\Pi_A = V(P\otimes P_\anc)V^\dag = V(\ketbra{g}\otimes P_\anc)V^\dag $
projects on.
We first show that $\rank(\Pi_A) \le \rank(\tilde{P})$.

Let $\ell=\rank(\tilde{P})$, $k=\rank(\Pi_A)$.
We first show $k\le \ell$.
Let $\ket{\alpha_1},...,\ket{\alpha_\ell}$ be an orthonormal
basis for $\tilde{P}$,
and let $\ket{\alpha_g^i}$ be their $g$-vectors, respectively.
Note that $\{\ket{\alpha_g^i}\}_{i=1}^\ell$ span $P_\anc$.
This follows from the definition of $P_\anc$ as the span of the $g$-vectors
for all $\ket{\alpha}\in \tilde{P}$,
since $\ket{\alpha}$ can be written as
a linear combination of $\ket{\alpha_i}$, which implies that its
$g$-vector is also a linear combination of their $g$-vectors.
Hence, $k=\rank(\Pi_A)\le \ell$.

Consider any unit vector $\ket{\alpha}\in \tilde{P}$.
We have $\tilde{P}\ket{\alpha}=\ket{\alpha}$,
$V (P\otimes \Id_\anc) V^\dag \tilde{P}\ket{\alpha}= V \ket{g}\ket{\alpha_g}$.
By the unfaithfulness condition we have
$\|\ket{\alpha}- V \ket{g}\ket{\alpha_g}\| = \|\ket{\alpha^\perp}\| \le \delta$.
Since
$\Pi_A\ket{\alpha}= V \ket{g}\ket{\alpha_g}$, we also have
$\|(\tilde{P} - \Pi_A)\ket{\alpha}\| = \|\ket{\alpha^\perp}\| \le \delta$.
Hence, we can apply Lemma~\ref{lem:proj-diff} by identifying $\Pi_B \equiv \tilde{P}$, and
obtain the desired bound:
\begin{equation}
\|\tilde{P}-V(P\otimes P_\anc)V^\dag\| = \|\tilde{P} - \Pi_A\| \le \sqrt{2}\delta/\sqrt{1-\delta^2}.
\end{equation}
\end{proof}

\subsection{Composition of Gap-Simulations}

We now prove Lemma~\ref{lem:composition}, which demonstrates that composition of gap-simulations behaves as expected:

{
\renewcommand{\thelemma}{\ref{lem:composition}}
\begin{lemma}[Composition]
Suppose $H_1$ (incoherently)  gap-simulates $(H_0,P_0)$ with encoding $V_1$, incoherence $\epsilon_1$ (or unfaithfulness $\delta_1$), energy spread $\tilde{w}_1$, and a corresponding quasi-groundspace projector $P_1$.
Also suppose $H_2$ (incoherently) gap-simulates $(H_1, P_1)$ with encoding $V_2$, incoherence $\epsilon_2$ (or unfaithfulness $\delta_2$), and energy spread $\tilde{w}_2$.
Then $H_2$ (incoherently) gap-simulates $(H_0,P_0)$ with encoding $V_2 (V_1\otimes \Id_{\anc,1})$, incoherence $\le\epsilon_2+\epsilon_1$ (or unfaithfulness $\le 2\delta_2+\delta_1$), and energy spread $\tilde{w}_2$.
\end{lemma}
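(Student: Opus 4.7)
The plan is to verify the two requirements of Definition~\ref{defn:hamsimul} (resp.\ Definition~\ref{defn:hamsimul-incoherent}) for $(H_0,P_0)$ being gap-simulated by $H_2$ with composite encoding $V\equiv V_2(V_1\otimes\Id_{\anc,1})$. The witness quasi-groundspace projector of $H_2$ will be the same $P_2$ that already witnesses $H_2$'s simulation of $(H_1,P_1)$; in the coherent case, the composite ancilla projector will be $P_\anc \equiv P_{\anc,1}\otimes P_{\anc,2}$. Writing $\gamma_i$ for the quasi-spectral gap of $P_i$ in $H_i$, the hypotheses give $\gamma_2\ge\gamma_1\ge\gamma_0$, so the first (spectral-structure) condition transfers immediately from the $(H_1,P_1)$-simulation to the $(H_0,P_0)$-simulation: the excited-side inequality in Definition~\ref{defn:gap} is monotone in $\gamma$ and hence remains valid with $\gamma_0$ once it holds with $\gamma_1$, and the absolute energy-spread bound $\|P_2(H_2-\lambda_1(H_2))P_2\|\le \tilde{w}_2\gamma_1$ carries over unchanged.

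\emph{Coherent case.} Unfolding $V(P_0\otimes P_\anc)V^\dag = V_2\big(V_1(P_0\otimes P_{\anc,1})V_1^\dag\otimes P_{\anc,2}\big)V_2^\dag$ and inserting $\pm V_2(P_1\otimes P_{\anc,2})V_2^\dag$, the triangle inequality splits the incoherence into two pieces: the first, $\|P_2 - V_2(P_1\otimes P_{\anc,2})V_2^\dag\|\le \epsilon_2$, is exactly the $H_2$-level hypothesis; the second is
\[
\big\|V_2\big([P_1-V_1(P_0\otimes P_{\anc,1})V_1^\dag]\otimes P_{\anc,2}\big)V_2^\dag\big\| \le \big\|P_1-V_1(P_0\otimes P_{\anc,1})V_1^\dag\big\| \le \epsilon_1,
\]
using that $V_2$ is an isometry (hence norm-preserving) and $\|P_{\anc,2}\|\le 1$. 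This yields the claimed $\le \epsilon_1+\epsilon_2$.

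\emph{Incoherent case (the main obstacle).} Here the naive strategy fails because the $H_1$-unfaithfulness hypothesis bounds only $\|P_1 - V_1(P_0\otimes\Id_{\anc,1})V_1^\dag\,P_1\|$, not the raw difference $\|P_1 - V_1(P_0\otimes\Id_{\anc,1})V_1^\dag\|$, so one cannot simply pull the error through an isometry sandwich as in the coherent case. The fix is to introduce the Hermitian projectors
\[
\tilde{Q}\equiv V_2(P_1\otimes\Id_{\anc,2})V_2^\dag, \qquad \tilde{Q}'\equiv V_2\big(V_1(P_0\otimes\Id_{\anc,1})V_1^\dag\otimes\Id_{\anc,2}\big)V_2^\dag,
\]
and to exploit the telescoping identity
\[
P_2-\tilde{Q}'P_2 = (P_2-\tilde{Q}P_2) \;+\; (\tilde{Q}-\tilde{Q}'\tilde{Q})P_2 \;+\; \tilde{Q}'(\tilde{Q}P_2-P_2).
\]
The first and third summands are each bounded by $\delta_2$ via the $H_2$-unfaithfulness hypothesis (using $\|\tilde{Q}'\|\le 1$ for the third); the middle summand simplifies to $V_2\big([P_1-V_1(P_0\otimes\Id_{\anc,1})V_1^\dag\,P_1]\otimes\Id_{\anc,2}\big)V_2^\dag$, which has norm $\le \delta_1$ by the $H_1$-unfaithfulness hypothesis and the isometry of $V_2$. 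The triangle inequality then delivers the advertised bound $2\delta_2+\delta_1$. The factor of $2$ on $\delta_2$ is essential and structural: one must traverse the intermediate projector $\tilde{Q}$ both before and after invoking the $H_1$-level hypothesis, because $\tilde{Q}'$ need not be close to $\tilde{Q}$ as a projector---it is close only when right-multiplied by $P_1$ (equivalently, by $\tilde{Q}$).
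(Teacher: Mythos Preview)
Your proof is correct and follows essentially the same approach as the paper: the coherent case is the same insert-and-triangulate argument, and in the incoherent case your telescoping identity $P_2-\tilde{Q}'P_2 = (P_2-\tilde{Q}P_2) + (\tilde{Q}-\tilde{Q}'\tilde{Q})P_2 + \tilde{Q}'(\tilde{Q}P_2-P_2)$ is exactly the paper's decomposition, with your $\tilde{Q}$ and $\tilde{Q}'$ corresponding to their $P_1'$ and $V_2 P_0' V_2^\dag$ respectively. Your exposition is in fact a bit more explicit about why the extra $\delta_2$ is unavoidable, and your treatment of the spectral-structure condition (noting monotonicity in $\gamma$) is slightly more careful than the paper's one-line assertion.
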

\addtocounter{lemma}{-1}
}

\begin{proof}
Below, we denote $P_2$ as the corresponding quasi-groundspace projector of $H_2$.

Let us first prove the case of coherent gap-simulation.
Note by definition, the quasi-spectral gap of $H_2$ is the same as $H_1$, which is the same as $H_0$.
As the energy spread of $H_2$ is already given as $\tilde{w}_2$, condition 1 of Def.~\ref{defn:hamsimul} is satisfied.
Let us denote $P_{\anc,i}$ as the ancilla projector for gap-simulating $(H_i,P_i)$, $i\in\{0,1\}$.
It remains to satisfy the condition 2 of bounded incoherence, i.e. Eq.~\eqref{eq:incoherence}.
Let us denote $\E_1 = P_1-V_1(P_0\otimes P_{\anc,0})V_1^\dag$, which satisfies $ \|\E_1\|  \le \epsilon_1$.
Then
\begin{eqnarray}
&&P_2-V_2(P_1\otimes P_{\anc,1})V_2^\dag \nonumber \\
&=& P_2 - V_2 \left[V_1(P_0\otimes P_{\anc,0})V_1^\dag \otimes P_{\anc,1}\right]V_2^\dag- V_2\left[\big(P_1-V_1(P_0\otimes P_{\anc,0})V_1^\dag\big) \otimes P_{\anc,1}\right]V_2^\dag  \nonumber \\
&=& P_2 - V_2(V_1\otimes \Id_{\anc,1})(P_0\otimes P_{\anc,0}\otimes P_{\anc,1}) (V_1\otimes \Id_{\anc,1})^\dag V_2^\dag - V_2(\E_1\otimes P_{\anc,1})V_2^\dag
\end{eqnarray}
By defining an isometry $V_{21}\equiv V_2 (V_1\otimes \Id_{\anc,1})$ as in the statement of the Lemma, we see that
\begin{align}
P_2 - V_{21}(P_0\otimes P_{\anc,0}\otimes P_{\anc,1})V_{21}^\dag &= P_2-V_2(P_1\otimes P_{\anc,1})V_2^\dag + V_2(\E_1\otimes P_{\anc,1}) V_2^\dag \nonumber \\
\|P_2 - V_{21}(P_0\otimes P_{\anc,0}\otimes P_{\anc,1})V_{21}^\dag \| & \le \|P_2-V_2(P_1\otimes P_{\anc,1})V_2^\dag\| + \|\E_1\| \nonumber \\
&\le \epsilon_2 + \epsilon_1.
\end{align}

Now let's consider the case of incoherent gap-simulation with bounded unfaithfulness.
Again, $H_2$ satisfies condition 1 of Def.~\ref{defn:hamsimul-incoherent} for incoherently gap-simulating $(H_0,P_0)$ as given, with energy spread $\tilde{w}_2$.
It remains to satisfy the condition 2 of bounded unfaithfulness Eq.~\eqref{eq:unfaithfulness}.
Let us denote $P_0' = V_1(P_0\otimes \Id_{\anc,0} )V_1^\dag$, and $P_1' = V_2(P_1\otimes\Id_{\anc,1})V_2^\dag$.By assumption, $\|P_1-P_0' P_1\|\le \delta_1$ and $\|P_2-P_1' P_2\|\le \delta_2$.
In the following, we will omit the $\Id_{\anc,i}$ for readability.
Observe
\begin{eqnarray}
&& P_2 - V_2V_1 P_0 V_1^\dag V_2^\dag P_2 = P_2 - V_2 P_0' V_2^\dag P_2 \nonumber \\
&=& P_2-P_1'P_2 + P_1'P_2 - V_2 P_0'V_2^\dag P_1' P_2 + V_2 P_0'V_2^\dag P_1' P_2 - V_2P_0'V_2^\dag P_2 \nonumber \\
&=& (P_2-P_1'P_2) + (P_1'- V_2 P_0'V_2^\dag P_1')P_2 +  V_2 P_0'V_2^\dag ( P_1' P_2 -P_2) \nonumber \\
&=& (P_2-P_1'P_2) + (V_2 P_1V_2^\dag - V_2 P_0'V_2^\dag V_2 P_1 V_2^\dag)P_2 +  V_2 P_0'V_2^\dag ( P_1' P_2 -P_2) \nonumber \\
&=& (P_2-P_1'P_2) + V_2 (P_1 - P_0' P_1) V_2^\dag P_2 +  V_2 P_0'V_2^\dag ( P_1' P_2 -P_2) .
\end{eqnarray}
Hence, by identifying $V_{21} = V_2V_1$, we can bound the RHS above as
\begin{equation}
\|P_2 - V_{21} P_0 V_{21}^\dag\| \le \|P_2-P_1'P_2\| + \|(P_1 - P_0' P_1)\| + \|P_1' P_2 -P_2\| \le 2\delta_2+\delta_1,
\end{equation}
as stated.
\end{proof}

\subsection{Comparison of Gap-Simulation to Full-Spectrum Simulation \label{sec:comp-defns}}

Generally, analog Hamiltonian simulators are designed to reproduce the spectral properties (both eigenvalues and eigenvectors) of a given Hamiltonian.
In Ref.~\cite{BravyiHastingsSim}, Bravyi and Hastings introduced a definition that quantifies how well Hamiltonian $\tilde{H}$ simulates a given Hamiltonian $H$, while allowing some encoding by a ``sufficiently simple'' isometry $V$, which can be summarized roughly as $\|H - V^\dag \tilde{H} V\|\le \xi$.
Ref.~\cite{UniversalHamiltonian} refines this definition by allowing for the more general case of simulating complex Hamiltonians by a real ones, but imposes a more explicit constraint that the isometries to be local, i.e. $V=\bigotimes_i V_i$.
We reproduce that definition below:

\begin{defn}[Full-spectrum simulation, adapted
    from Def.~1 of \cite{UniversalHamiltonian}]
\label{defn:CMPsimul}
A many-body Hamiltonian $\tilde{H}$ \emph{full-spectrum-simulates} a Hamiltonian $H$ to precision $(\eta,\xi)$ below an energy cut-off $\Delta$ if there exists a local encoding $\E(H)=V(H\otimes P + \bar{H}\otimes Q)V^\dag$, where $V=\bigotimes_i V_i$ for some isometries $V_i$ acting on 0 or 1 qubits of the original system each, and $P$ and $Q$ are locally orthogonal projectors, such that
\begin{enumerate}
\item There exists an encoding $\tilde{\E}(H) = \tilde{V}(H\otimes P + \bar{H}\otimes Q)\tilde{V}^\dag$ such that  $\|\tilde{V}-V\|\le \eta$ and $\tilde{\eps}(\Id) = P_{\le\Delta(\tilde{H})}$, where $P_{\le\Delta(\tilde{H})}$ is the projector onto eigenstates of $\tilde{H}$ with eigenvalue $\le \Delta$.
\item $\|\tilde{H}P_{\le\Delta(\tilde{H})} - \tilde{\E}(H)\| \le \xi$.
\end{enumerate}
\end{defn}

The condition of local orthogonality of $P$ and $Q$ means that there exist orthogonal projectors $P_{i}$ and $Q_i$ acting on the same qubits as $V_i$, such that $P_i P=P$ and $Q_i Q = Q$.
The appearance of $\bar{H}$, which is the complex-conjugate of $H$, is necessary to allow for encoding of complex Hamiltonians into real ones.
Note that for any real-valued Hamiltonian $H$,  we can simply write $\E(H) = V(H\otimes P_\anc) V^\dag$, where $P_\anc = P+Q$ is a projector since $P$ are $Q$ are orthogonal.

Note the definition of Ref.~\cite{BravyiHastingsSim} can be considered as a special case of the one above by setting $P=\Id$ and $Q=0$, while allowing more general isometry $V$ for encoding.
Hence, we focus our comparison to the above Definition~\ref{defn:CMPsimul} from Ref.~\cite{UniversalHamiltonian}.

We also note that our restriction to localized encodings per Definition~\ref{defn:localized-encoding} is somewhat different than the notion of ``local encoding" $V=\bigotimes_i V_i$ in Ref.~\cite{UniversalHamiltonian}.
For example, constant-depth circuit qualifies as a localized encoding but not a ``local encoding'', due to the possibility of overlaps between supports of encoded qubits (and hence cannot be written in tensor-product form).
On the other hand, Ref.~\cite{UniversalHamiltonian} does not appear to place any explicit restriction on the size of the support of each encoded qubit, other than the fact that each qubit is encoded independently.

Now, we show that full-spectrum simulation by Def.~\ref{defn:CMPsimul} with an encoding of the form $\eps(H)=V(H\otimes P_\anc)V^\dag$ and sufficiently small precision ($\xi\ll(1-w)\gamma$) implies a coherent gap-simulation by our Def.~\ref{defn:hamsimul}.
The restriction of the encoding format simplifies the comparison, and has no loss of generality when considering real-valued Hamiltonians.

\begin{lemma}[Full-spectrum simulation implies coherent gap-simulation]
\label{lem:relate-CMP}
Let $H$ be a Hamiltonian that has a quasi-groundspace projector $P$ with quasi-spectral gap $\gamma$ and energy spread $w \le 1/2$.
Suppose $\tilde{H}$ full-spectrum-simulates $H$ to precision $(\eta,\xi)$ according to Def.~\ref{defn:CMPsimul} with encoding $\eps(H)=V(H\otimes P_\anc)V^\dag$, such that $\xi \le (1-w)\gamma/8$.
Then $\tilde{H}'= \frac{4}{3} \tilde{H}$ gap-simulates $(H,P)$ with encoding $V$, incoherence $\epsilon\le 32\xi/\gamma + 2\eta$, and energy spread $\tilde{w}\le (w + 2\xi/\gamma)/(1-2\xi/\gamma)$, per our Def.~\ref{defn:hamsimul}.
\end{lemma}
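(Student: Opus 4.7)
The plan has three main ingredients: (i) extract the low-energy spectrum of $\tilde H$ from the defining condition of full-spectrum simulation, (ii) verify the spectral-gap and energy-spread conditions of Def.~\ref{defn:hamsimul} after the $4/3$ rescaling, and (iii) bound the incoherence via a resolvent/Green's function argument relating $\tilde P$ first to $\tilde V(P\otimes P_\anc)\tilde V^\dag$ and then to $V(P\otimes P_\anc)V^\dag$.

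For (i) and (ii), I would start from the two pieces of information supplied by Def.~\ref{defn:CMPsimul}: first, $\tilde{\eps}(\Id)=P_{\le\Delta(\tilde H)}$ means the low-energy subspace of $\tilde H$ has rank $\dim(\H)\cdot\rank(P_\anc)$; second, $\|\tilde H P_{\le\Delta(\tilde H)}-\tilde V(H\otimes P_\anc)\tilde V^\dag\|\le\xi$. Applying Weyl's inequality inside the low-energy subspace gives that the $j$-th smallest eigenvalue of $\tilde H$ there is within $\xi$ of that of $H\otimes P_\anc$ (which is $\lambda_{\lceil j/\rank P_\anc\rceil}(H)$). I would then define $\tilde P$ as the spectral projector of $\tilde H$ onto its $K\equiv\rank(P)\cdot\rank(P_\anc)$ smallest eigenvalues (implicitly assuming $\Delta$ is large enough that these lie at the bottom of the spectrum). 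By Weyl, the eigenvalues captured by $\tilde P$ lie in $[E_1^g-\xi,\,E_1^g+w\gamma+\xi]$ while those orthogonal to $\tilde P$ inside the low-energy subspace are at least $E_1^g+\gamma-\xi$. Hence $\tilde H$ has a quasi-spectral gap of at least $(1-w)\gamma-2\xi$ and an energy spread bounded by $w\gamma+2\xi$. Rescaling by $\tfrac{4}{3}$ then makes the quasi-spectral gap of $\tilde H'$ at least $\gamma$ under $\xi\le(1-w)\gamma/8$, and the stated energy-spread bound follows (here $\tfrac{4}{3}$ serves as an upper bound on the minimal rescaling $1/(1-2\xi/\gamma)$ required to restore the gap).

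For (iii), I would use the triangle inequality
\begin{equation*}
\|\tilde P-V(P\otimes P_\anc)V^\dag\|\le\|\tilde P-\tilde V(P\otimes P_\anc)\tilde V^\dag\|+\|\tilde V(P\otimes P_\anc)\tilde V^\dag-V(P\otimes P_\anc)V^\dag\|,
\end{equation*}
where the second term is at most $2\eta$ via $\|\tilde V-V\|\le\eta$ and two applications of the triangle inequality. The first term is the technical heart: both $\tilde P$ and $\tilde V(P\otimes P_\anc)\tilde V^\dag$ are spectral projectors onto the quasi-groundspace interval of, respectively, $\tilde H$ and $\tilde V(H\otimes P_\anc)\tilde V^\dag$. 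I would express them via the Riesz formula $(2\pi i)^{-1}\oint_\Gamma (z-A)^{-1}\,dz$, with $\Gamma$ a closed contour encircling the quasi-groundspace region and staying at distance at least $g/2\ge(1-w)\gamma/4$ from all other eigenvalues (made possible by $\xi\le(1-w)\gamma/8$). The resolvent identity $(z-A)^{-1}-(z-B)^{-1}=(z-A)^{-1}(A-B)(z-B)^{-1}$ together with the operator inequality of norm $\le\xi$ then bounds the integrand by $\xi\cdot(2/g)^2$; combining with a contour length of order $\gamma$ gives a bound of order $\xi/((1-w)^2\gamma)$, and using $w\le 1/2$ absorbs $(1-w)^2$ into a constant to yield the target $32\xi/\gamma$. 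The main obstacle is exactly pinning down the constants in this resolvent estimate: a careful choice of $\Gamma$ (e.g., a rectangle or circle sized so contour-length times distance-squared balances out) and tight bookkeeping are needed to land on the stated $32$, rather than some larger constant. A minor secondary subtlety is justifying that eigenstates of $\tilde H$ above the cut-off $\Delta$ do not violate the quasi-spectral gap condition in Def.~\ref{defn:hamsimul}, which requires a mild implicit assumption on $\Delta$ standard in the full-spectrum simulation literature.
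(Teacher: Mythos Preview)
Your proposal is correct and follows essentially the same route as the paper. The paper's proof packages step (iii) into a separately stated perturbation lemma (Lemma~\ref{lem:PPgroundspace}, proved via the same Green's function/contour-integral machinery you describe) and applies it to the restricted operators $\tilde H_P=\tilde H\,P_{\le\Delta(\tilde H)}$ and $H_P=\tilde V(H\otimes P_\anc)\tilde V^\dag$ on the low-energy subspace; working on that subspace cleanly sidesteps the two subtleties you flag at the end (eigenvalues above $\Delta$, and the large kernel of $H\otimes P_\anc$), and is also how the paper obtains the specific constant $32$.
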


To show this, we first need to state a Lemma that bounds error of groundspace due to perturbations:

\begin{lemma}[Error bound on perturbed groundspace]
\label{lem:PPgroundspace}
Let $\tilde{H}$ and $\tilde{H}'$ be two Hamiltonians.
Per Def.~\ref{defn:gap}, let $\tilde{P}$ project onto a quasi-groundspace of $\tilde{H}$ with energy spread $\tilde{w}$ and quasi-spectral gap $\gamma$.
Assume $\tilde{w}\le 1/2$ and $\|\tilde{H}' - \tilde{H}\| \le \kappa$, where $\kappa \le (1-\tilde{w})\gamma/8$.
Then there is a quasi-groundspace projector $\tilde{P'}$ of $\tilde{H}'$ with quasi-spectral gap at least $\gamma'$, comprised of eigenstates of $\tilde{H}'$ up to energy at most $\lambda_1(\tilde{H}') + \tilde{w}'\gamma'$, where
\begin{equation}
\gamma' > \gamma-2\kappa, \quad
\tilde{w}'\gamma' \le \tilde{w}\gamma + 2\kappa,
\quad \text{and} \quad
\|\tilde{P}'-\tilde{P}\| < \frac{32\kappa}{\gamma}.
\end{equation}
\end{lemma}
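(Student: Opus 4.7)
The plan is to prove the three conclusions in two stages: first the spectral bounds on $\gamma'$ and $\tilde{w}'\gamma'$ by direct eigenvalue comparison, and then the projector bound $\|\tilde{P}'-\tilde{P}\|<32\kappa/\gamma$ by a resolvent contour integral.

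For the spectral bounds, I will invoke Weyl's inequality for Hermitian perturbations: $|\lambda_j(\tilde{H}') - \lambda_j(\tilde{H})| \le \|\tilde{H}'-\tilde{H}\|\le \kappa$ for every index $j$. Let $E_0 = \lambda_1(\tilde{H})$ and $m = \rank(\tilde{P})$. The eigenvalues of $\tilde{H}$ in the quasi-groundspace lie in $[E_0, E_0+\tilde{w}\gamma]$, while the remaining ones lie in $[E_0+\gamma, \infty)$. After the perturbation, the first $m$ eigenvalues of $\tilde{H}'$ lie in $[E_0-\kappa, E_0+\tilde{w}\gamma+\kappa]$ and the rest in $[E_0+\gamma-\kappa,\infty)$. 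Define $\tilde{P}'$ as the projector onto the first $m$ eigenstates of $\tilde{H}'$. Then the quasi-spectral gap is at least $(E_0+\gamma-\kappa) - (E_0+\kappa) = \gamma - 2\kappa =: \gamma'$, and the energy spread satisfies $\tilde{w}'\gamma' \le (E_0+\tilde{w}\gamma+\kappa)-(E_0-\kappa) = \tilde{w}\gamma+2\kappa$, as claimed. The condition $\kappa \le (1-\tilde w)\gamma/8$ guarantees $\gamma'>0$, so $\tilde{P}'$ is well defined.

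For the projector bound, the plan is to use the Riesz projection formula
\begin{equation}
\tilde{P} = \frac{1}{2\pi i}\oint_\Gamma (z-\tilde{H})^{-1}\,dz,\qquad
\tilde{P}' = \frac{1}{2\pi i}\oint_\Gamma (z-\tilde{H}')^{-1}\,dz,
\end{equation}
for any contour $\Gamma$ enclosing the quasi-groundspace eigenvalues of both $\tilde H$ and $\tilde H'$ but no others. Subtracting and using the resolvent identity $R'(z)-R(z)=R'(z)(\tilde H'-\tilde H)R(z)$ gives
\begin{equation}
\tilde{P}'-\tilde{P} = \frac{1}{2\pi i}\oint_\Gamma (z-\tilde{H}')^{-1}(\tilde{H}'-\tilde{H})(z-\tilde{H})^{-1}\,dz,
\end{equation}
which is bounded in norm by $\frac{1}{2\pi}\,\ell(\Gamma)\,\kappa\,\max_{z\in\Gamma}\|R(z)\|\,\|R'(z)\|$, with each resolvent norm controlled by the reciprocal distance from $z$ to the corresponding spectrum.

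The key quantitative step is choosing $\Gamma$ to be the circle of radius $r=\gamma/2$ centered at $E_0+\tilde{w}\gamma/2$. A short calculation shows that the minimum distance from any point on $\Gamma$ to any eigenvalue of $\tilde H$ (or $\tilde H'$) is at least $r-(\tilde{w}\gamma/2+\kappa) = (1-\tilde{w})\gamma/2-\kappa$, and under the hypotheses $\tilde{w}\le 1/2$ and $\kappa\le(1-\tilde{w})\gamma/8$, this is at least $3(1-\tilde{w})\gamma/8$. Substituting these estimates yields $\|\tilde{P}'-\tilde{P}\| \le \frac{r\kappa}{(3(1-\tilde w)\gamma/8)^2} = \frac{32\kappa}{9(1-\tilde{w})^2\gamma}$, which for $\tilde w\le 1/2$ is strictly less than $32\kappa/\gamma$, as desired. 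The main delicate point is verifying that the contour genuinely separates the two parts of the spectrum (for both $\tilde H$ and $\tilde H'$) under the given hypothesis on $\kappa$; the subsequent norm estimate is then routine.
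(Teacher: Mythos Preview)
Your proof is correct and takes a genuinely different route from the paper's. Both arguments obtain the eigenvalue bounds via Weyl's inequality, but for the projector bound the paper works through the self-energy formalism: it decomposes the Hilbert space as $\L_+\oplus\L_-$ relative to the unperturbed quasi-groundspace, bounds $\|\tilde P'-\Pi_-\tilde P'\Pi_-\|$ by an operator inequality on $\|\Pi_+\tilde P'\|$, and then bounds $\|\Pi_-\tilde P'\Pi_- - \tilde P\|$ via a contour integral comparing $(z-\Sigma_-(z))^{-1}$ with $(z-H_-)^{-1}$ on a disk of radius $(1+\tilde w)\gamma/2$ centered at the ground energy. Your argument bypasses the self-energy entirely: you apply the Riesz projection formula on the full Hilbert space, use the resolvent identity $R'(z)-R(z)=R'(z)(\tilde H'-\tilde H)R(z)$, and pick a circle of radius $\gamma/2$ centered at the midpoint of the quasi-groundspace band. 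This is the classical Kato-style perturbation argument and is more elementary; it even yields the sharper constant $\tfrac{32}{9(1-\tilde w)^2}\le \tfrac{128}{9}$ in place of $32$. The paper's route, while heavier here, has the advantage that it reuses exactly the Green's function machinery needed later for the perturbative gadget constructions (Claims~\ref{claim:subdiv}--\ref{claim:fork}), so it is not wasted effort in context.
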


While this may be simple to understand in the case of unique groundstates (see e.g. Lemma 2 of Ref.~\cite{BravyiHastingsSim}), it is not obvious when there are degenerate groundstates.
The proof of the above Lemma~\ref{lem:PPgroundspace} makes use of the Green's function machinery seen in Ref.~\cite{KKR06, OliveiraTerhal},
which we describe in a self-contained manner in
Appendix~\ref{sec:PPgroundspace-proof}.

\begin{proof}[\textbf{Proof of Lemma~\ref{lem:relate-CMP}}]
Given encoding of the form $\E(H) = V(H\otimes P_\anc) V^\dag$,
we write the corresponding encoding $\tilde{\E}(H)=\tilde{V}(H\otimes P_\anc) \tilde{V}^\dag$ such that $\|\tilde{V}-V\|\le\eta$.
Let $H_P \equiv \tilde{\E}(H)$ and $\tilde{H}_P \equiv \tilde{H} P_{\le{\Delta(\tilde{H})}}$; note both are Hermitian and hence (non-local) Hamiltonians.
Note $V (P\otimes P_\anc) V^\dag$ is a quasi-groundspace projector of $H_P$ with quasi-spectral gap $\gamma$ and energy spread $w$.
Since $\|\tilde{H}_P - H_P\| \le \xi$ by Def.~\ref{defn:CMPsimul}, then due to Lemma~\ref{lem:PPgroundspace}, there is a quasi-ground space projector $\tilde{P}$ of $\tilde{H}_P$ (and thus also $\tilde{H}$) with quasi-spectral gap at least $\gamma_P\ge \gamma - 2\xi$ with energy spread $\tilde{w}_P$, where $\tilde{w}_P\gamma_P\le w\gamma+ 2\xi$, and
\begin{equation}
\|\tilde{P} - \tilde{V} (P\otimes P_\anc) \tilde{V}^\dag \| \le 32\xi/\gamma.
\end{equation}

Note for any constant $\alpha >0 $, $\tilde{P}$ is also a quasi-groundspace projector of $\tilde{H}'=\alpha \tilde{H}$ with quasi-spectral gap $\ge\alpha \gamma_P $ and energy spread
\begin{equation}
\tilde{w} = \tilde{w}_P \le \frac{w \gamma+2\xi}{\gamma-2\xi} 
\end{equation}
To satisfy condition 1 of Def.~\ref{defn:hamsimul}, i.e. $\alpha\gamma_P \ge \gamma$, it suffices to choose $\alpha=\gamma/(\gamma-2\xi) \le 4/3$. For simplicity, we choose $\alpha=4/3$, as stated in the Lemma.

We note that since $\|V-\tilde{V}\|\le \eta$, we have
\begin{equation}
\|\tilde{P} - V (P\otimes P_\anc) V^\dag \| \le \|\tilde{P} - \tilde{V} (P\otimes P_\anc) \tilde{V}^\dag \| + 2\|\tilde{V}-V\| \le 32\xi/\gamma + 2\eta,
\end{equation}
satisfying condition 2 of Def.~\ref{defn:hamsimul} with $\epsilon \le 32\xi/\gamma + 2\eta$. Hence, $\tilde{H}'=8\tilde{H}$ gap-simulates $H$.
\end{proof}

We remark the constraints on $w$ and $\xi$ in the Lemma~\ref{lem:relate-CMP}  can be relaxed, since the Lemma~\ref{lem:PPgroundspace} used is a more restricted (but simpler) version of the more general Lemma~\ref{lem:PPgsv2} that we prove in Appendix.~\ref{sec:PPgroundspace-proof}.

The above Lemma~\ref{lem:relate-CMP} implies that our Definition~\ref{defn:hamsimul} is indeed a more relaxed version of the
simulation definitions from Ref.~\cite{BravyiHastingsSim,UniversalHamiltonian}, at least for real-valued Hamiltonians and sufficiently small simulation error $\xi\ll \gamma$.
In fact, our Definition~\ref{defn:hamsimul-incoherent} provides
an even more relaxed notion of simulation, where it is not required to preserve groundspace coherence or even all the groundstates.

\section{Properties of our Example Hamiltonian $H_\dicke$\label{sec:HamProperties}}
Here we prove the properties of $H_\dicke$ required for the impossibility proofs in this paper.
We start by reintroducing this Hamiltonian (first given in Eq.~\eqref{eq:Hdicke}).
Let us denote collective angular momentum operator on $n$ qubits as
\begin{equation}\label{eq:J}
\J_\alpha = \sum_{i=1}^n \sigma_\alpha^{(i)}/2,
\end{equation}
for $\alpha\in\{x,y,z\}$.
Our example family of 2-local $n$-qubit Hamiltonian $H_\dicke$ is the following Hamiltonian, restricted to even system size $n=2s$:
\begin{equation}
H_\dicke =  \J_z^2 - \frac12 \J^2 + b_n = \frac12(\J_z^2-\J_x^2-\J_y^2) +b_n = \frac14 \sum_{i<j}^{n} (\sigma_z^{(i)}\sigma_z^{(j)} - \sigma_x^{(i)}\sigma_x^{(j)} - \sigma_y^{(i)}\sigma_y^{(j)}) -\frac{n}{8} + b_n
\end{equation}
where $b_n\equiv \frac12 s(s+1) = 	\frac18 n (n + 2)$ is a constant chosen so the ground state energy is zero.
After expansion into sum of 2-local operators, this Hamiltonian has $M_0(n)=n(n-1)/2=\Omega(n^2)$ terms, and each qubit has degree $n-1$.
Since $[\vec{\J}^2,\J_z]=0$, the eigenstates of $H_B$ can be written in eigenbasis of both $\vec{\J}^2$ and $\J_z$.
Observe that $\J_z^2$ has eigenvalues $\{0,1,2^2,\ldots,s^2\}$ and $\vec{\J}^2$ has eigenvalues $\{s(s+1), (s-1)s,\ldots,6,2,0\}$.
The ground state is thus a state that has minimal $\J_z^2=0$ and maximal total angular momentum $J=s=\frac{n}{2}$.
Such a state is well-known in atomic physics as a Dicke state~\cite{Dicke}, and it is uniquely defined as
\begin{equation}
\ket{g_\dicke} = \ket{\J=\frac{n}{2}; \J_z = 0} =
\binom{n}{n/2}^{-1/2} \sum_{|\{i\,:\,x_i=1\}| = n/2} \ket{x_1\cdots x_n}.
\end{equation}
where the state can be explicitly written as a symmetric superposition of all strings $x$ with Hamming weight $h(x)=|\{i:x_i=1\}|=n/2$.
This ground state $\ket{g_\dicke}$ has energy $0$.
Meanwhile, all other eigenstates must have energy at least 1.
In particular, any eigenstate with $\vec{\J}^2<s(s+1)$ must have energy $\ge -\frac12 (s-1)s+b_n = s = \frac{n}{2} \ge 1$.
Thus, the system is spectrally gapped with energy spread $w_n=0$ and $\gamma_n = 1$.

\section{Information-Theoretical Impossibility Results}
In what follows, we will denote $X_i\equiv \sigma_x^{(i)}$ for simplicity and clarity.

\subsection{Impossibility of DR and Dilution with Close-to-Perfect Coherence (Lemma~\ref{lem:imposs1-DR} and Theorem~\ref{thm:imposs1-dilute})\label{sec:imposs1}}

In this section, we prove Lemma~\ref{lem:imposs1-DR} and Theorem~\ref{thm:imposs1-dilute} together,
essentially showing impossibility of DR and dilution for perfect coherence.
The proof of these results contains the idea of {\it contradiction-by-energy}, which is the seed to the idea for the proof of our main Theorem~\ref{thm:main} in the next section; in that proof, contradiction-by-energy is too weak, and instead we use the related idea of {\it contradiction-by-correlation}.

Towards proving Lemma~\ref{lem:imposs1-DR} and Theorem~\ref{thm:imposs1-dilute}, we prove a more general result
in the following Lemma~\ref{lem:imposs1},
of which Lemma~\ref{lem:imposs1-DR} and Theorem~\ref{thm:imposs1-dilute} are special cases obtained by setting $\epsilon=0$.
To this end, let us recall the definition of $H_\oneone$:
\begin{equation}
H_\oneone  = \left(\J_z+\frac{n}{2} \right)\left(\J_z+\frac{n}{2}-1\right),
\end{equation}
with $\J_z$ defined in Eq.~\eqref{eq:J}. The $n+1$ groundstates of $H_\oneone$ are
\begin{equation}
\ket{00\cdots00}, \ket{00\cdots01}, \ket{00\cdots10}, \ldots, \ket{10\cdots00}.
\end{equation}

\begin{lemma}[Limitation on $\epsilon$-incoherent degree-reduction and dilution of $H_\oneone$]
\label{lem:imposs1}
Suppose we require $\epsilon$-incoherence and energy spread $\tilde{w}$, then any $k$-local $[r,M,J]$-gap-simulator $\tilde{H}_\oneone$ of the $n$-qubit Hamiltonian $H_\oneone$ with localized encoding must satisfy at least one of the following conditions:
\begin{enumerate}
\item $\epsilon = 0$ and $\tilde{w} \ge 1/2$, or
\item $\epsilon > 0$ and $\|\tilde{H}_\oneone\| \ge  [1-2\tilde{w}(1+2\epsilon)]/(4\epsilon+6\epsilon^2)$, or
\item $\tilde{H}_\oneone$ contains qubits with degree $r=\Omega({n/k})$ and has a total number of terms $M=\Omega(n^2/k^2)$.
\end{enumerate}
\end{lemma}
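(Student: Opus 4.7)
The plan is the contradiction-by-energy strategy previewed in Section~\ref{sec:proof-sketch-main}. The four states
$\ket{\psi_\emptyset}=\ket{0^n}$, $\ket{\psi_i}=X_i\ket{0^n}$, $\ket{\psi_j}=X_j\ket{0^n}$, and $\ket{\psi_{ij}}=X_iX_j\ket{0^n}$
are related by local flips, the first three being groundstates of $H_\oneone$ while the last sits a full gap above. I will show that whenever condition~3 fails there must exist a pair $(i,j)$ of original qubits that do not share a term of $\tilde H_\oneone$, and that for any such pair the gap-simulation hypotheses force either $\tilde w$ or $\|\tilde H_\oneone\|$ to be large, giving condition~1 or~2.

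To produce such a pair, assume for concreteness that the localized encoding is a partition $V=\bigotimes_\ell V_\ell$ with encoded supports $S_\ell$ of size $O(1)$ (the constant-depth circuit case reduces to this after replacing $S_\ell$ by the $O(1)$-qubit lightcone of $\tilde X_\ell\equiv V X_\ell V^\dag$). Build the induced graph $G$ on $[n]$ in which $(i,j)$ is an edge iff some hyperedge of $\tilde H_\oneone$ meets both $S_i$ and $S_j$. Each $k$-local hyperedge contributes at most $\binom{k}{2}$ edges to $G$, giving $|E(G)|=O(k^2 M)$, and each vertex of $G$ has degree $O(rk)$ since each of its $O(1)$ encoded qubits participates in at most $r$ terms, each of size $\le k$. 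Consequently, negation of condition~3---namely $r=o(n/k)$ or $M=o(n^2/k^2)$---forces the existence of a non-edge $(i,j)$ in $G$. For such a pair, I split $\tilde H_\oneone=H^i_{\mathrm{nbr}}+H^j_{\mathrm{nbr}}+H_{\mathrm{rest}}$ according to which of $S_i,S_j$ each term touches. The commutations $[\tilde X_j,H^i_{\mathrm{nbr}}]=[\tilde X_i,H^j_{\mathrm{nbr}}]=[\tilde X_i,H_{\mathrm{rest}}]=[\tilde X_j,H_{\mathrm{rest}}]=0$, together with $\tilde X_\ell^2=VV^\dag$ acting as the identity on the image of $V$, yield the energy identity $E_{ij}=E_i+E_j-E_\emptyset$, where $\ket{g_S}\equiv V\ket{\psi_S}\ket{a}$ for any fixed $\ket{a}\in P_\anc$ and $E_S\equiv\bra{g_S}(\tilde H_\oneone-\tilde E^g)\ket{g_S}\ge 0$.

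I then bound these four energies using the gap-simulation hypotheses. For $S\in\{\emptyset,i,j\}$, $V(P\otimes P_\anc)V^\dag\ket{g_S}=\ket{g_S}$, while for $S=\{i,j\}$, $V(P\otimes P_\anc)V^\dag\ket{g_{ij}}=0$ since $P\ket{\psi_{ij}}=0$. The incoherence bound $\|\tilde P-V(P\otimes P_\anc)V^\dag\|\le\epsilon$ then yields $\|\tilde P^\perp\ket{g_S}\|\le\epsilon$ for the groundstate labels and $\|\tilde P\ket{g_{ij}}\|\le\epsilon$ for the excited label. Combining with the operator inequalities $0\le\tilde P(\tilde H_\oneone-\tilde E^g)\tilde P\le\tilde w\gamma\,\tilde P$ and $\tilde P^\perp(\tilde H_\oneone-\tilde E^g)\tilde P^\perp\ge\gamma\,\tilde P^\perp$ from Definition~\ref{defn:hamsimul}, plus the crude $\|\tilde H_\oneone-\tilde E^g\|\le 2\|\tilde H_\oneone\|$, gives $E_S\le\tilde w\gamma+O(\|\tilde H_\oneone\|\epsilon^2)$ for groundstate labels and $E_{ij}\ge\gamma(1-\epsilon^2)$. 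Substituting into $E_{ij}=E_i+E_j-E_\emptyset$, using $\gamma=1$ and $E_\emptyset\ge 0$, produces an inequality of the form $1-O(\epsilon^2)\le 2\tilde w+O(\|\tilde H_\oneone\|\epsilon^2)$; setting $\epsilon=0$ recovers condition~1 ($\tilde w\ge 1/2$), and solving for $\|\tilde H_\oneone\|$ when $\epsilon>0$ recovers condition~2.

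The main obstacle is a bookkeeping one: matching the precise coefficients advertised in condition~2 requires a careful separation of the incoherence error into the commutator-vanishing cross-term contribution $\bra{\tilde P g_S}(\tilde H_\oneone-\tilde E^g)\ket{\tilde P^\perp g_S}$ and the $\tilde P^\perp$-leakage contribution $\|\tilde P^\perp\ket{g_S}\|^2$, and tracking exactly where $\tilde w\gamma$ versus $2\|\tilde H_\oneone\|$ may be used as an operator bound in each place. A secondary technical wrinkle is cleanly promoting the argument from partition encodings to constant-depth circuit encodings, where the lightcones $L_i=\mathrm{supp}(\tilde X_i)$ may overlap; the commutation identity still holds as long as no term of $\tilde H_\oneone$ simultaneously touches $L_i$ and $L_j$, and the pigeonhole step for finding such a pair survives with only constant-factor losses.
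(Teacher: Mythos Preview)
Your proposal is essentially the paper's own proof: the same energy identity $E_{ij}=E_i+E_j-E_\emptyset$ obtained by commuting $\tilde X_i,\tilde X_j$ through the two halves of $\tilde H_\oneone$, the same pigeonhole argument on the induced graph to locate a non-interacting pair, and the same reduction of constant-depth encodings to the tensor-product case (the paper conjugates $\tilde H\mapsto V^\dag\tilde H V$, which is equivalent to your lightcone argument).

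One point worth flagging: your energy estimates are actually \emph{tighter} than the paper's, not merely a bookkeeping variant of them. Because $[\tilde H,\tilde P]=0$, the cross terms between $\tilde P$ and $\tilde P^\perp$ vanish exactly, so your route gives $E_S\le\tilde w+\epsilon^2\|\tilde H'\|$ for groundstate labels and $E_{ij}\ge(1-\epsilon^2)$, yielding a denominator of order $\epsilon^2$ in condition~2. The paper instead bounds these cross terms crudely by $2\epsilon\|\tilde H'\|$ (without using block-diagonality), and that is precisely where the $4\epsilon+6\epsilon^2$ denominator comes from. So ``careful separation'' will not reproduce the stated constants; to match them you would have to deliberately use the looser estimate $|\langle\epsilon_{ij}|\tilde H'|\bar e_{ij}\rangle|\le\epsilon\|\tilde H'\|$ rather than observing it is $O(\tilde w\epsilon^2)$. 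Your version proves a stronger trichotomy in the small-$\epsilon$ regime, which of course still implies the lemma as stated there.
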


In other words, the above Lemma shows that if we require inverse-polynomially small incoherence and some corresponding polynomial bound on the resources of gap-simulation, then it is impossible to degree-reduce or dilute $H_\oneone$.
In particular, if $\tilde{w}<1/2$, then for any $\xi>0$ and $p,q\ge 0$, there does not exists any $[O(1),O(n^p),O(n^q)]$-degree-reducer of $H_\oneone$ with $O(1/n^{p+q+\xi})$-incoherence, nor any $[r,o(n^2),O(n^q)]$-diluter of $H_\oneone$ with $O(1/n^{2+q})$-incoherence regardless of degree $r$.
To prove the above results, we first prove the following Lemma:

\begin{lemma}\label{lem:TotalCoherentImpossible}
Suppose $\tilde{H}_\oneone$ gap-simulates $H_\oneone$ with any encoding $V$, $\epsilon$-incoherence, such that either (a) $\epsilon=0$ and $\tilde{w} < 1/2$, or (b) $\|\tilde{H}_\oneone\| <  [1-2\tilde{w}(1+2\epsilon)]/(4\epsilon+6\epsilon^2)$.
For every original qubit $i$, let $S_i$ be the support of $V \sigma_x^{(i)} V^\dag$ on the interaction graph of $\tilde{H}_\oneone$.
Then for every pair of original qubits $(i,j)$, $\tilde{H}_\oneone$ must contain a term that acts nontrivially on both a qudit in $S_i$ and a qudit in $S_j$.
\end{lemma}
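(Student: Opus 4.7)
The plan is to derive a ``contradiction-by-energy''. Suppose for contradiction that there is a pair of original qubits $(i,j)$ such that no term of $\tilde{H}_\oneone$ acts nontrivially on both a qudit of $S_i$ and a qudit of $S_j$. Then I partition $\tilde{H}_\oneone = T^{\bar{i}} + T^{\bar{j}}$, where $T^{\bar{i}}$ collects the terms whose support is disjoint from $S_i$ and $T^{\bar{j}}$ collects the remaining terms (all of whose supports are therefore disjoint from $S_j$). Writing $\tilde{X}_k \equiv V\sigma_x^{(k)}V^\dag$ (supported on $S_k$ by definition, and Hermitian), disjointness of supports immediately gives the key commutations $[T^{\bar{i}}, \tilde{X}_i] = 0$ and $[T^{\bar{j}}, \tilde{X}_j] = 0$.

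Next, I fix any ancilla state $\ket{a} \in P_\anc$ and define the four reference unit vectors
\begin{equation*}
\ket{\psi_\emptyset} = V(\ket{0^n}\otimes\ket{a}), \;\; \ket{\psi_i} = \tilde{X}_i\ket{\psi_\emptyset}, \;\; \ket{\psi_j} = \tilde{X}_j\ket{\psi_\emptyset}, \;\; \ket{\psi_{ij}} = \tilde{X}_i\tilde{X}_j\ket{\psi_\emptyset},
\end{equation*}
each lying in the image of $V$ and encoding (via $V$) the computational basis state of $H_\oneone$ with excitations at the indicated sites. The first three encode Hamming-weight $\le 1$ groundstates of $H_\oneone$, so by condition~2 of Def.~\ref{defn:hamsimul} each satisfies $\|\tilde{P}^\perp\ket{\psi_k}\|\le\epsilon$; dually, $\ket{\psi_{ij}}$ encodes a two-excitation state in $P^\perp$, so $\|\tilde{P}\ket{\psi_{ij}}\|\le\epsilon$. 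The technical heart of the proof is the exact identity
\begin{equation*}
\braket{\psi_{ij}|\tilde{H}_\oneone|\psi_{ij}} = \braket{\psi_i|\tilde{H}_\oneone|\psi_i} + \braket{\psi_j|\tilde{H}_\oneone|\psi_j} - \braket{\psi_\emptyset|\tilde{H}_\oneone|\psi_\emptyset},
\end{equation*}
which I will establish by expanding $\tilde{H}_\oneone = T^{\bar{i}} + T^{\bar{j}}$, commuting $\tilde{X}_i$ through $T^{\bar{i}}$ and $\tilde{X}_j$ through $T^{\bar{j}}$, and then collapsing adjacent $\tilde{X}_k^2 = V\sigma_x^{(k)}\sigma_x^{(k)}V^\dag = VV^\dag$ factors. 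This algebraic step is where I expect the main obstacle: because $V$ is only an isometry, I must verify that each resulting $VV^\dag$ sits adjacent to a vector in the image of $V$ (where $VV^\dag$ acts as the identity) before it can be discarded.

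Once the identity is in hand, the finish is short. Shifting so that $\tilde{E}^g = 0$ (the identity is preserved since the right-hand coefficients sum to $1$), I use $[\tilde{H}_\oneone, \tilde{P}]=0$ to split each expectation into orthogonal $\tilde{P}$- and $\tilde{P}^\perp$-components with no cross terms. This yields the upper bounds $\braket{\psi_k|\tilde{H}_\oneone|\psi_k} \le \tilde{w}\gamma + \|\tilde{H}_\oneone\|\epsilon^2$ for $k\in\{i,j\}$, the lower bound $\braket{\psi_\emptyset|\tilde{H}_\oneone|\psi_\emptyset}\ge 0$, and $\braket{\psi_{ij}|\tilde{H}_\oneone|\psi_{ij}} \ge \gamma(1-\epsilon^2)$. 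Substituting into the identity gives $\gamma(1-\epsilon^2) \le 2\tilde{w}\gamma + 2\|\tilde{H}_\oneone\|\epsilon^2$. In case (a) with $\epsilon = 0$, this forces $\tilde{w}\ge 1/2$, contradicting $\tilde{w}<1/2$; in case (b), solving for $\|\tilde{H}_\oneone\|$ produces a lower bound that contradicts the hypothesized upper bound stated in the lemma.
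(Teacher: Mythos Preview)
Your approach is essentially the paper's: the same partition $\tilde H_\oneone=T^{\bar i}+T^{\bar j}$, the same four reference vectors, the same exact energy identity (the paper's Eq.~\eqref{eq:lemma-2-energy-trick}), and the same observation that $\tilde X_k^2=VV^\dag$ acts as the identity on vectors in the range of $V$.

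The one substantive difference is your bounding step. The paper writes $\ket{\bar g_k}=\tilde P\ket{\bar g_k}-\ket{\epsilon_k}$ and expands, picking up cross terms of size $2\epsilon\tilde w$ on the groundstate side and $2\epsilon\|\tilde H_\oneone'\|$ on the excited-state side; this is the origin of the $4\epsilon+6\epsilon^2$ denominator. Your orthogonal split via $[\tilde H_\oneone,\tilde P]=0$ is cleaner --- it kills the cross terms and yields $\gamma(1-\epsilon^2)\le 2\tilde w\gamma+2\|\tilde H'\|\epsilon^2$, with denominator $\sim\epsilon^2$ rather than $\sim\epsilon$. One caveat: after your shift, the norm appearing in your bound is that of the \emph{shifted} operator $\tilde H'$, while hypothesis~(b) bounds the \emph{unshifted} $\|\tilde H_\oneone\|$. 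Using $\|\tilde H'\|\le 2\|\tilde H_\oneone\|$ your lower bound becomes $(1-\epsilon^2-2\tilde w)/(4\epsilon^2)$, which dominates the lemma's threshold for small $\epsilon$ but is marginally weaker for $\epsilon$ close to~$1$ (e.g.\ $\tilde w=0$, $\epsilon\approx 0.8$). So your argument proves case~(a) verbatim and a variant of~(b) with a different (mostly tighter) constant; to recover~(b) exactly as stated, retain the paper's cross-term bookkeeping on the excited-state side.
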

\begin{proof}
For the sake of contradiction, suppose $\tilde{H}_\oneone$ contains no term that interacts $S_i$ and $S_j$.
This means we can decompose $\tilde{H}_\oneone$ into two parts: $\tilde{H}_\oneone=\tilde{H}_{\oneone,i}+\tilde{H}_{\oneone,j}$, where $\tilde{H}_{\oneone,i}$ acts trivially on $S_i$.
In other words, $[\tilde{H}_{\oneone,i},\tilde{O}_i]=0$ for any operator $\tilde{O}_i$ whose support is contained in $S_i$.
Let us denote $\tilde{P}$ as the projector onto groundspace of $\tilde{H}_\oneone$, and $P$ the projector onto groundspace of $H_\oneone$.
Since we assume that $\tilde{H}_\oneone$ gap-simulates $H_\oneone$ with $\epsilon$-incoherence according to Def.~\ref{defn:hamsimul}, then for some projector $P_\anc$, we must have $\|\tilde{P}-Q\| \le \epsilon$, where $Q=V(P\otimes P_\anc)V^\dag$.

We write $P=\sum_{i=0}^n \ketbra{g_i}$, where states $\ket{g_0}=\ket{0\cdots0}$ and $\ket{g_i}=X_i\ket{g_0}=\ket{0\cdots01_i0\cdots0}$ are ground states of $H_\oneone$.
Let $\ket{\alpha}\in P_\anc$, and denote $\ket{\bar{g}_i} \equiv V\ket{g_i}\ket{\alpha}$ for $0\le i \le n$.
Observe that $Q\ket{\bar{g}_i}=V(P\otimes P_\anc)V^\dag V\ket{g_i}\ket{\alpha} = V\ket{g_i}\ket{\alpha}  = \ket{\bar{g_i}}$, and so
\begin{equation}
\tilde{P} \ket{\bar{g}_i}= (\tilde{P} - Q + Q)\ket{g_i}\ket{\alpha} =  (\tilde{P} - Q)\ket{\bar{g}_i} + \ket{\bar{g}_i} \equiv \ket{\epsilon_i} + \ket{\bar{g}_i}
\end{equation}
where we denoted $\ket{\epsilon_i} \equiv (\tilde{P}-Q)\ket{\bar{g}_i}$ satisfying $\|\ket{\epsilon_i}\|\le \epsilon$.
Now consider the state $\ket{e_{ij}}=X_iX_j\ket{g_0}$, which is an excited state of $H_\oneone$ outside groundspace $P$, and thus satisfies $P\ket{e_{ij}}=0$.
Consider correspondingly the state $\ket{\bar{e}_{ij}} = V\ket{e_{ij}}\ket{\alpha}$.
Observe that $Q\ket{\bar{e}_{ij}} = V(P\otimes P_\anc) V^\dag V\ket{e_{ij}}\ket{\alpha} = V(P\ket{e_{ij}}\ket{\alpha}) = 0$, and so
\begin{equation}
\tilde{P}^\perp \ket{\bar{e}_{ij}} = \ket{\bar{e}_{ij}} - \tilde{P} \ket{\bar{e}_{ij}} =\ket{\bar{e}_{ij}} - (\tilde{P}-Q+Q) \ket{\bar{e}_{ij}} = \ket{\bar{e}_{ij}} - (\tilde{P}-Q ) \ket{\bar{e}_{ij}} \equiv \ket{\bar{e}_{ij}} - \ket{\epsilon_{ij}},
\end{equation}
where we denoted $\ket{\epsilon_{ij}} \equiv (\tilde{P}- Q)\ket{\bar{e}_{ij}}$ satisfying $\|\ket{\epsilon_{ij}}\|\le \epsilon$.

Now, let $\tilde{X}_i = V X_i V^\dag$ be the encoded Pauli spin flip operator, which satisfies $[\tilde{H}_{\oneone,i},\tilde{X}_i]=0$.
Observe that $\tilde{X}_i\tilde{X}_j = VX_i X_j V^\dag = VX_j X_i V^\dag = \tilde{X}_j \tilde{X}_i$.
Additionally, $\tilde{X}_i^2=VV^\dag$, which acts like identity since $\tilde{X}_j^2\ket{\bar{g}_i} = V V^\dag V \ket{g_i}\ket{\alpha} = V\ket{g_i}\ket{\alpha} = \ket{\bar{g}_i}$, and similarly $\tilde{X}_k^2\ket{\bar{e}_{ij}}=\ket{e_{ij}}$.
Note that $\ket{\bar{g}_i} = \tilde{X}_i\ket{\bar{g}_0}$ and $\ket{\bar{e}_{ij}} = \tilde{X}_i\tilde{X}_j\ket{\bar{g}_0}$, for any $1\le i < j \le n$.
Then from the assumption that no term in $\tilde{H}_\oneone$ would interact supports $S_i$ and $S_j$, we can derive the following identity:
\begin{eqnarray}
\braket{\bar{e}_{ij}|\tilde{H}_\oneone|\bar{e}_{ij}} &=& \braket{\bar{g}_0|\tilde{X}_i\tilde{X}_j (\tilde{H}_{\oneone,i}+\tilde{H}_{\oneone,j})\tilde{X}_i\tilde{X}_j|\bar{g}_0}
	= \braket{\bar{g}_0|\tilde{X}_i\tilde{H}_{\oneone,j}\tilde{X}_i\tilde{X}_j^2|\bar{g}_0} + \braket{\bar{g}_0|\tilde{X}_j\tilde{H}_{\oneone,i}\tilde{X}_j\tilde{X}_i^2|\bar{g}_0} \nonumber \\
&=& \braket{\bar{g}_0|\tilde{X}_i\tilde{H}_{\oneone,j}\tilde{X}_i|\bar{g}_0} + \braket{\bar{g}_0|\tilde{X}_j\tilde{H}_{\oneone,i}\tilde{X}_j|\bar{g}_0} \nonumber \\
&=& \braket{\bar{g}_i|\tilde{H}_{\oneone,j}|\bar{g}_i} + \braket{\bar{g}_j|\tilde{H}_{\oneone,i}|\bar{g}_j} \nonumber\\
&=& \braket{\bar{g}_i|\tilde{H}_{\oneone}|\bar{g}_i} + \braket{\bar{g}_j|\tilde{H}_{\oneone}|\bar{g}_j} - \braket{\bar{g}_i|\tilde{H}_{\oneone,i}|\bar{g}_i} - \braket{\bar{g}_j|\tilde{H}_{\oneone,j}|\bar{g}_j} \nonumber \\
&=& \braket{\bar{g}_i|\tilde{H}_{\oneone}|\bar{g}_i} + \braket{\bar{g}_j|\tilde{H}_{\oneone}|\bar{g}_j} - \braket{\bar{g}_0|\tilde{X}_i\tilde{H}_{\oneone,i}\tilde{X}_i|\bar{g}_0} - \braket{\bar{g}_0|\tilde{X}_j\tilde{H}_{\oneone,j}\tilde{X}_j|\bar{g}_0} \nonumber \\
\braket{\bar{e}_{ij}|\tilde{H}_\oneone|\bar{e}_{ij}} &=& \braket{\bar{g}_i|\tilde{H}_{\oneone}|\bar{g}_i} + \braket{\bar{g}_j|\tilde{H}_{\oneone}|\bar{g}_j} - \braket{\bar{g}_0|\tilde{H}_{\oneone}|\bar{g}_0}.
\label{eq:lemma-2-energy-trick}
\end{eqnarray}

To simplify expressions, let us denote $\tilde{H}_\oneone' \equiv \tilde{H}_\oneone - \tilde{E}^g \Id$, where $\tilde{E}^g \equiv \lambda_1(\tilde{H}_\oneone)$ is groundstate energy of $\tilde{H}_\oneone$.
We note the above identity of Eq.~\eqref{eq:lemma-2-energy-trick} remains true if we replace $\tilde{H}_\oneone$ with $\tilde{H}_\oneone'$, since the constant offsets cancel.
Now let us consider the energy of states $\ket{\bar{g}_i}$ and $\ket{\bar{e}_{ij}}$ with respect to $\tilde{H}_\oneone'$.
Since we allow energy spread $\|\tilde{P}\tilde{H}_\oneone'\tilde{P}\|\le \tilde{w}$ for the gap-simulation, and the spectral gap of $H_\oneone$ is $\gamma=1$, we must have
\begin{equation}
\label{eq:lemma-2-energy-bound}
0 \le \braket{\bar{g}_i|\tilde{P}\tilde{H}_\oneone'\tilde{P}|\bar{g}_i}\le\tilde{w}.
\end{equation}
And keeping in mind that $\tilde{H}_\oneone' \tilde{P} = \tilde{P}\tilde{H}_\oneone' \tilde{P}$, and $\braket{\psi|\tilde{H}_\oneone'|\psi}\ge 0$ for any state $\ket{\psi}$ because $\tilde{H}_\oneone'$ is positive semi-definite, we have
\begin{eqnarray}
0 \le \braket{\bar{g}_i|\tilde{H}_{\oneone}'|\bar{g}_i} &=& (\bra{\bar{g}_i}\tilde{P} - \bra{\epsilon_i}))\tilde{H}_\oneone '(\tilde{P}\ket{\bar{g}_i} - \ket{\epsilon_i}) = \braket{\bar{g}_i|\tilde{P}\tilde{H}_\oneone'\tilde{P}|\bar{g}_i} - 2\Re \braket{\epsilon_i|\tilde{H}_\oneone'\tilde{P}|\bar{g}_i} + \braket{\epsilon_i | \tilde{H}_\oneone' | \epsilon_i} \nonumber \\
&\le & \tilde{w} + 2\epsilon\tilde{w} + \epsilon^2 \|\tilde{H}_\oneone'\|= \tilde{w}(1+2\epsilon) + \epsilon^2 \|\tilde{H}_\oneone'\|.
\end{eqnarray}
Furthermore, we observe that
\begin{eqnarray}
\braket{\bar{e}_{ij} | \tilde{P}^\perp \tilde{H}_\oneone' \tilde{P}^\perp | \bar{e}_{ij}} &=& (\bra{\bar{e}_{ij}}-\bra{\epsilon_{ij}}) \tilde{H}_\oneone' (\ket{\bar{e}_{ij}}-\ket{\epsilon_{ij}}) = \braket{\bar{e}_{ij}|\tilde{H}_\oneone' |\bar{e}_{ij}} - 2\Re\braket{\epsilon_{ij}|\tilde{H}_\oneone'|\bar{e}_{ij}} + \braket{\epsilon_{ij}|\tilde{H}_\oneone'|\epsilon_{ij}}\nonumber \\
&\le& \braket{\bar{e}_{ij}|\tilde{H}_\oneone' |\bar{e}_{ij}} + 2\epsilon\|\tilde{H}_\oneone' \| + \epsilon^2 \|\tilde{H}_\oneone'\| \\
&=& \braket{\bar{g}_i|\tilde{H}_{\oneone}' |\bar{g}_i} + \braket{\bar{g}_j|\tilde{H}_\oneone' |\bar{g}_j} - \braket{\bar{g}_0|\tilde{H}_{\oneone}' |\bar{g}_0} + (2\epsilon + \epsilon^2)\|\tilde{H}_\oneone' \|  \nonumber \\
&\le& 2\tilde{w}(1+2\epsilon) + (2\epsilon+3\epsilon^2)\|\tilde{H}_\oneone' \| \le 2\tilde{w}(1+2\epsilon) + (4\epsilon+6\epsilon^2)\|\tilde{H}_\oneone\|.
\label{eq:lemma2-excited-energy-bound}
\end{eqnarray}
where we used the identity \eqref{eq:lemma-2-energy-trick} and the fact that $\|\tilde{H}_\oneone'\| = \|\tilde{H}_\oneone-\tilde{E}_g\| \le 2\|\tilde{H}_\oneone\|$.
This implies $\tilde{P}^\perp\tilde{H}_\oneone'\tilde{P}^\perp$ has an eigenvalue $\le 2\tilde{w}(1+2\epsilon) + (4\epsilon+6\epsilon^2)\|\tilde{H}_\oneone\| $.
This contradicts the gap-simulation assumption $\lambda_j(\tilde{P}^\perp \tilde{H}_\oneone'\tilde{P}^\perp+\gamma \tilde{P})\ge \gamma=1$ if
\begin{equation}
2\tilde{w}(1+2\epsilon) + (4\epsilon+6\epsilon^2)\|\tilde{H}_\oneone\|  < 1
\quad \Longleftrightarrow \quad
\begin{dcases}
\tilde{w} < 1/2, &\text{ if } \epsilon = 0\\
\|\tilde{H}_\oneone\| < \frac{1-2\tilde{w}(1+2\epsilon)}{4\epsilon+6\epsilon^2}, &\text{ if } \epsilon > 0
\end{dcases}.
\end{equation}
Hence, if either (a) $\epsilon=0$ and $\tilde{w}<1/2$, or (b) $\|\tilde{H}_\oneone\| < [1-2\tilde{w}(1+2\epsilon)]/(4\epsilon+6\epsilon^2)$, then $\tilde{H}_\oneone$ must contain a term that acts nontrivially on both qubit $i$ and $j$.
\end{proof}

\begin{proof}[\textbf{Proof of Lemma~\ref{lem:imposs1}}]
Suppose a gap-simulator of $\tilde{H}_\oneone$ of $H_\oneone$ does not satisfy any of the first two conditions enumerated in Lemma~\ref{lem:imposs1}, then
it must either (a) has $0$-incoherence and energy spread $\tilde{w}<1/2$,
or (b) $\|\tilde{H}_\oneone\| < [1-2\tilde{w}(1+2\epsilon)]/(4\epsilon+6\epsilon^2)$.
Thus, by Lemma~\ref{lem:TotalCoherentImpossible} above,
there must be at least $n-1$ terms, each interact a qudit in $S_1$ with a qudit in $S_2,S_3,\ldots S_{n}$.

Let us consider the first variant of localized encoding $V=\bigotimes_i V_i$, where the range of $V_i$ is supported by $O(1)$ qudits in $\tilde{H}$.
Here, the supports $S_i$ are mutually disjoint, with bounded maximum size $\max_i |S_i|\le a = O(1)$.
Since each $k$-local term can couple a qubit to up to $k-1$ other qubits, the average degree of qudits in $S_1$ is $r\ge (\tilde{n}-1)/[a(k-1)]=\Omega(n/k)$.
Furthermore, note that there are $\binom{n}{2}$ required pairwise interactions between supports $(S_i, S_j)$.
Since each $k$-local term can act on up to $k$ qubits, it can cover up to $\binom{k}{2}$ such pairwise interactions.
Thus, the minimum number of terms in $\tilde{H}_\oneone$ to account for all the pairwise interactions of $H_\oneone$ is
\begin{equation}
 M \ge \frac{\binom{n}{2}}{\binom{k}{2}} = \frac{n(n-1)}{k(k-1)} = \Omega(n^2/k^2).
\end{equation}

To prove the Lemma for the second variant of localized encoding where $V$ is a constant-depth quantum circuit, we modify the above argument by considering $\tilde{H}_\oneone' = V^\dag \tilde{H}_\oneone V$.
Note $\tilde{H}_\oneone$ gap-simulates $H_\oneone$ with trivial encoding.
Since $V^\dag$ is also a constant-depth quantum circuit, one can see that each term in $\tilde{H}_\oneone$ is
mapped into a term in $\tilde{H}_\oneone'$ whose locality blows up by a constant factor.
Hence, if $\tilde{H}_\oneone$ has maximum degree $r$ and $M$ Hamiltonian terms that are $k$-local,
then $\tilde{H}_\oneone'$ has maximum degree $r'=\Theta(r)$, and $M'=M$ terms that are $k'=\Theta(k)$-local.
Since the encoding is trivial, then $\tilde{H}$ must interact every pairs of qubit $(i,j)$.
This would imply $r'=\Omega(n/k')$, and $M'=\Omega(n^2/(k')^2)$.
Consequently, $r=\Omega(n/k)$, and $M=\Omega(n^2/k^2)$, proving our Lemma.
\end{proof}

\paragraph{Remark}---
Note that there is a difficulty to extend the proof of Lemma~\ref{lem:imposs1} (and thus Lemma~\ref{lem:imposs1-DR} and Theorem~\ref{thm:imposs1-dilute}) to the case where we allow $\epsilon=\Theta(1)$-incoherence, even if we require bounded interaction strength $J=O(1)$.
This difficulty is apparent in Eq.~\eqref{eq:lemma2-excited-energy-bound}, where the bound on the excited state's energy has an energy uncertainty on the order of $\O(\epsilon\|\tilde{H}_A\|)$, which would grow as system size due to the dependence on $\|\tilde{H}_\oneone\|$.
Hence, in order to extend this impossibility result to $\epsilon=\Theta(1)$-incoherence, more innovation is required -- this is done in the next section.

\subsection{Impossibility of DR with Constant Coherence or Faithfulness (Theorem~\ref{thm:main})\label{sec:imposs2}}

From Lemma~\ref{lem:imposs1-DR}, it appears that if perfect coherence is required, any meaningful reduction in the maximum degree or the total number of terms cannot in general be possible due to our first counterexample $H_\oneone$.
Here, we strengthen this to $\epsilon$-incoherence for constant $\epsilon$:
We show that reduction in the maximum degree remains impossible for $H_\oneone$, by arriving at a contradiction via a correlation-based argument,
rather than one relying on the energy.
Furthermore, impossibility of incoherent degree-reduction can also be shown by applying the same idea, now to our second counterexample $H_\dicke$ (see Appendix~\ref{sec:HamProperties} for its properties),
which has a unique groundstate (so incoherent and coherent degree-reduction are equivalent due to Lemma~\ref{lem:equiv}).
This is our main impossibility result:

{
\renewcommand{\thethm}{\ref{thm:main}}
\begin{thm}[Main: Impossibility of constant coherence (faithfulness) DR for
$H_\oneone$ ($H_\dicke$)]
For sufficiently small constants $\epsilon\ge0$ $(\delta\ge0)$ and $\tilde{w}\ge0$,
there exists system size $n_0$ where for any $n\ge n_0$, there is no
$O(1)$-local $[O(1),M,O(1)]$-degree-reducer of the $n$-qubit Hamiltonian $H_\oneone$ $(H_\dicke)$
with localized encoding,  $\epsilon$-incoherence ($\delta$-unfaithfulness),
and energy spread $\tilde{w}$, for any number of Hamiltonian terms $M$.
\end{thm}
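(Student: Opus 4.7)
The plan is to follow the contradiction-by-correlation strategy sketched in Sec.~\ref{sec:proof-sketch-main}: assuming a degree-reducer $\tilde{H}$ exists, I will derive both a polynomially small upper bound and a constant lower bound on the same connected two-point correlation in the quasi-groundspace of $\tilde{H}$, yielding a contradiction once $n$ is large enough. The overall structure is identical for $H_\oneone$ and $H_\dicke$; the only pieces that differ are how the constant lower bound on the correlation is established, and whether we invoke Lemma~\ref{lem:equiv} to upgrade unfaithfulness to incoherence (needed for $H_\dicke$ but not for $H_\oneone$).

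First, I would set up the Hastings-Koma side of the argument. Let $V=\bigotimes_i V_i$ be the (possibly constant-depth) localized encoding, and for each original qubit $i$ let $S_i$ denote the support of $\tilde{X}_i\equiv V\sigma_x^{(i)}V^\dag$ on the qudits of $\tilde{H}$. Because $k$, the maximum degree $r$, and the per-term norm bound are all $O(1)$, the interaction graph of $\tilde{H}$ has bounded degree; at the same time the condition $\lambda_j(\tilde{P}^\perp(\tilde{H}-\tilde{E}^g)\tilde{P}^\perp+\gamma\tilde{P})\ge\gamma$ with $\gamma=1$ and small constant $\tilde{w}$ makes $\tilde{H}$ spectrally gapped in the sense required by Lemma~\ref{lem:HastingsKoma}. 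That Lemma then yields, for every state $\ket{\tilde\psi}\in\tilde{P}$ and every pair $(i,j)$,
\begin{equation}
\bigl|\bra{\tilde\psi}\tilde{X}_i\tilde{X}_j\ket{\tilde\psi}-\bra{\tilde\psi}\tilde{X}_i\ket{\tilde\psi}\bra{\tilde\psi}\tilde{X}_j\ket{\tilde\psi}\bigr|\le C\,e^{-\mu\,\dist(S_i,S_j)},
\end{equation}
for constants $C,\mu>0$ that depend only on $r$, $k$, $\gamma$ and the constants controlling $\tilde{w}$. A standard volume-counting argument on the bounded-degree interaction graph (which has only $O(n)$ vertices since the encoding is localized) shows that among the $\binom{n}{2}$ pairs $(i,j)$ there is at least one with $\dist(S_i,S_j)=\Omega(\log n)$, so the bound above is $O(1/\poly(n))$ for that pair.

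For the constant lower bound I treat the two examples separately. In the $H_\oneone$ case, take the groundstate $\ket{g_{ij}}=\tfrac{1}{\sqrt 2}(\ket{g_i}+\ket{g_j})\in P$ (the equal superposition of the two single-excitation ground states on sites $i,j$). By permutation-symmetry calculations, $\bra{g_{ij}}\sigma_x^{(i)}\ket{g_{ij}}=\bra{g_{ij}}\sigma_x^{(j)}\ket{g_{ij}}=0$ while $\bra{g_{ij}}\sigma_x^{(i)}\sigma_x^{(j)}\ket{g_{ij}}=1$. Using the coherence assumption $\|\tilde{P}-V(P\otimes P_\anc)V^\dag\|\le\epsilon$, any $\ket{a}\in P_\anc$ gives a state $\ket{\tilde g_{ij}}=V(\ket{g_{ij}}\otimes\ket{a})$ whose projection onto $\tilde{P}$ is within $\epsilon$ in norm of itself, so the corresponding expectation values of $\tilde{X}_i,\tilde{X}_j$ and $\tilde{X}_i\tilde{X}_j$ in some state of $\tilde{P}$ differ from the ideal values by $O(\epsilon)$. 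For sufficiently small constant $\epsilon$ the connected correlation in that state is at least $1-O(\epsilon)$, which clashes with the $O(1/\poly(n))$ upper bound once $n\ge n_0$. For the $H_\dicke$ case, the unique groundstate is the Dicke state $\ket{g_\dicke}=\ket{n/2;0}$; a direct symmetry computation yields $\bra{g_\dicke}\sigma_x^{(i)}\ket{g_\dicke}=0$ and $\bra{g_\dicke}\sigma_x^{(i)}\sigma_x^{(j)}\ket{g_\dicke}=\tfrac{n}{2(n-1)}\to\tfrac12$, giving a constant connected correlation independent of $(i,j)$. Since $H_\dicke$ has a unique groundstate, Lemma~\ref{lem:equiv} converts $\delta$-unfaithfulness into $O(\delta)$-incoherence, and the same transfer argument shows this constant correlation must appear in $\tilde{P}$ up to $O(\delta)$, contradicting Hastings-Koma.

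The main obstacle I expect is the Hastings-Koma step, for two reasons. First, the original theorem of Ref.~\cite{HastingsKoma} assumes vanishing energy spread as $n\to\infty$, but our framework allows a nonzero constant $\tilde{w}$; this is exactly what Lemma~\ref{lem:HastingsKoma} is designed to handle, and I will have to keep the constants $C,\mu$ in its conclusion uniform in $n$ while choosing $\tilde{w}$ small enough that the exponent $\mu$ stays positive. Second, to transfer expectation values from $V(\ket{g}\otimes\ket{a})$ to an element of $\tilde{P}$ I need to be careful that the normalization error and the $O(\epsilon)$ approximation do not combine into something that kills the constant gap between the upper and lower bounds; the cleanest route is to write $\tilde{P}\ket{\tilde g_{ij}}=\ket{\tilde g_{ij}}+O(\epsilon)$ and expand each of the three expectations separately, bounding cross terms by $O(\epsilon)\|\tilde{X}_i\|\|\tilde{X}_j\|=O(\epsilon)$. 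Once these two technical steps are executed, the contradiction between a $\Theta(1)$ lower bound and a $1/\poly(n)$ upper bound closes the argument for all sufficiently small constants $\epsilon,\delta,\tilde{w}$ and all $n\ge n_0$.
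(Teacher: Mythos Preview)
Your overall strategy matches the paper's, but you have misstated the conclusion of Lemma~\ref{lem:HastingsKoma}. It does \emph{not} bound the connected correlation $\langle\tilde\psi|\tilde X_i\tilde X_j|\tilde\psi\rangle-\langle\tilde\psi|\tilde X_i|\tilde\psi\rangle\langle\tilde\psi|\tilde X_j|\tilde\psi\rangle$; what it bounds is
\[
\Bigl|\langle\tilde\psi|\tilde X_i\tilde X_j|\tilde\psi\rangle-\tfrac12\bigl(\langle\tilde\psi|\tilde X_i\,\tilde P\,\tilde X_j|\tilde\psi\rangle+\langle\tilde\psi|\tilde X_j\,\tilde P\,\tilde X_i|\tilde\psi\rangle\bigr)\Bigr|+O\bigl(\tilde w_\infty\log^{1/2}(1/\tilde w_\infty)\bigr),
\]
with $\tilde P$ the \emph{full} quasi-groundspace projector of $\tilde H$. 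The two quantities coincide only when $\tilde P$ has rank one, but here $\tilde P$ is degenerate: for $H_\oneone$ its rank is $\approx(n{+}1)\cdot\rank P_\anc$, and even for $H_\dicke$ the ancilla projector $P_\anc$ need not have rank one. In a degenerate groundspace the connected correlation of a particular state can be $\Theta(1)$ while the Hastings--Koma quantity vanishes (take orthogonal groundstates $\ket{g_1},\ket{g_2}$, $\ket\psi=\tfrac{1}{\sqrt2}(\ket{g_1}+\ket{g_2})$, and $A=B=\ketbra{g_1}-\ketbra{g_2}$: then $\langle AB\rangle=1$, $\langle A\rangle=\langle B\rangle=0$, yet $\langle A P_0 B\rangle=1$). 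So your displayed inequality does not follow from the lemma, and the ``constant lower bound'' you produce is for the wrong quantity.

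The fix, which is what the paper actually does, is to bound $\langle\tilde\psi|\tilde X_L\tilde P\tilde X_R|\tilde\psi\rangle$ directly using the incoherence condition. For $H_\oneone$ with $\ket\phi\propto V(\ket{g_L}+\ket{g_R})\ket a$, observe that $\tilde X_R\ket\phi$ is (up to encoding) a superposition of $\ket{0^n}$ and the two-excitation state $\ket{e_{LR}}$; since $Q\equiv V(P\otimes P_\anc)V^\dag$ annihilates the latter and $\|\tilde P-Q\|\le\epsilon$, one gets $\langle\phi|\tilde X_L\tilde P\tilde X_R|\phi\rangle\le\tfrac12+O(\epsilon)$, so the Hastings--Koma quantity is $\ge\tfrac12-O(\epsilon)$ (not $1-O(\epsilon)$). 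For $H_\dicke$, $X_i\ket{g_\dicke}$ changes the Hamming weight and is orthogonal to $\ket{g_\dicke}$, hence $Q_\dicke\tilde X_i\ket{\phi_\dicke}=0$ and $|\langle\phi_\dicke|\tilde X_L\tilde P_\dicke\tilde X_R|\phi_\dicke\rangle|\le\epsilon$. With this correction the rest of your plan goes through. Two smaller points: the $\tilde w$-dependence in Lemma~\ref{lem:HastingsKoma} is the additive $O(\tilde w_\infty\log^{1/2}(1/\tilde w_\infty))$ term, not something that threatens the exponent $\tilde\mu$; and the interaction graph of $\tilde H$ need not have only $O(n)$ vertices (there can be many ancillas), but the distance argument only needs bounded degree together with $n$ encoded supports $S_i$ to be covered (cf.\ Lemma~\ref{lem:encoded-support}).
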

\addtocounter{thm}{-1}
}

To prove  Theorem~\ref{thm:main},
we rely on the Hastings-Koma result\cite{HastingsKoma} demonstrating exponential decay of correlation in a spectrally gapped groundspace of Hamiltonians with {\it exponentially decaying interaction}, which we define below:
\begin{defn}[Exponentially decaying interaction, adapted from~\cite{HastingsKoma}]
\label{defn:expdecayint}
Consider a graph given by $G=(\V, \E)$, where $\V$ is a set of vertices and $\E=\{(i,j):i,j\in \V\}$ is a set of edges.
A Hamiltonian $H=\sum_{X\subset V} h_X$ defined on such a graph $G$ has \emph{exponentially decaying interaction} if $h_X$ satisfies
\begin{equation}
\sup_x \sum_{X\ni x} \|h_X\| |X| \exp[\mu \diam(X)] \le s_1 < \infty
\end{equation}
for positive constants $\mu$ and $s_1$. Here $\diam(X)=\max_{x,y\in X} \dist(x,y)$, and $\dist(x,y)$ is the graph-theoretic distance.
\end{defn}

It can be seen that any local Hamiltonian with constant degree and bounded interaction strength satisfies the criterion in the above definition:
\begin{lemma}
\label{lem:HamExpDecayInt}
Any $n$-qudit $[k=O(1)]$-local Hamiltonian with maximum degree $r=O(1)$ and bounded interaction strength $J=O(1)$ has exponentially decaying interaction per Definition~\ref{defn:expdecayint}.
\end{lemma}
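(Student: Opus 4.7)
The plan is to verify the defining inequality of Definition~\ref{defn:expdecayint} directly, after making a natural choice of underlying graph. First, I take $G=(\V,\E)$ to be the interaction graph of $H$: the vertex set $\V$ consists of the $n$ qudits, and $(i,j)\in \E$ whenever qudits $i$ and $j$ both appear in the support of some term $h_X$ of $H$. With this choice, $H=\sum_X h_X$ is indeed a Hamiltonian defined on $G$ in the sense of Definition~\ref{defn:expdecayint}, and the assumption that $H$ has maximum degree $r=O(1)$ translates directly into the statement that every vertex of $G$ has graph-degree at most $r$.

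Next I would control the three factors in the summand. For the diameter: any term $h_X$ acts on at most $k=O(1)$ qudits, and by construction any two qudits in $X$ are connected by an edge in $G$ (they lie in the support of the same term), so $\diam(X)\le 1$ and hence $\exp[\mu\,\diam(X)]\le e^\mu$ for any fixed $\mu>0$. For the norm: $\|h_X\|\le J=O(1)$ by hypothesis, and $|X|\le k=O(1)$. For the count: the number of terms $X$ containing a fixed vertex $x$ is at most $\binom{r}{k-1}=O(r^{k-1})=O(1)$, since any such $X$ is determined by a choice of at most $k-1$ additional qudits drawn from the at most $r$ graph-neighbors of $x$.

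Combining these bounds gives, for every vertex $x$,
\begin{equation}
\sum_{X\ni x}\|h_X\|\,|X|\,\exp[\mu\,\diam(X)] \;\le\; \binom{r}{k-1}\cdot J\cdot k\cdot e^\mu \;=\; O(1),
\end{equation}
which is a finite constant $s_1$ independent of $n$ and $x$. Taking the supremum over $x$ yields the desired inequality for any chosen $\mu>0$.

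I do not expect any serious obstacle here; the only points requiring care are (i) fixing the convention that the ``graph'' on which $H$ lives is its own interaction graph, so that ``maximum degree $r$'' in the sense used throughout the paper lines up with the graph-theoretic degree appearing in Definition~\ref{defn:expdecayint}, and (ii) noting that the exponential factor is harmless because each hyperedge collapses to diameter $1$ in this graph. Both are standard and should be remarked but not belabored in the formal write-up.
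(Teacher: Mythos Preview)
Your approach is essentially the same as the paper's: build the interaction graph by putting a clique on each term's support, note that $\diam(X)\le 1$, and bound the sum directly. One small correction: in the paper's convention, ``maximum degree $r$'' means each qudit lies in at most $r$ \emph{Hamiltonian terms} (hyperedges), not that it has at most $r$ graph-neighbors. These coincide for $k=2$ but not in general; with $k$-local terms the graph degree can be as large as $r(k-1)$. Consequently your counting step is both unnecessary and slightly off: the number of terms $X\ni x$ is at most $r$ directly by definition, so the paper simply bounds the sum by $r\cdot J\cdot k\cdot e^\mu$. Your detour through $\binom{r}{k-1}$ would need the actual graph degree in place of $r$ (and should also account for terms of size strictly less than $k$), but since all quantities involved are $O(1)$ the conclusion survives.
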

\begin{proof}
Let us construct the graph on which we embed the Hamiltonian.
The set of vertices $V$ corresponds to the set of qudits.
We can then write the Hamiltonian as $H = \sum_{X\subset \V} h_X$, where $|X|\le k$ since the Hamiltonian is $k$-local.
We then choose the set of edges as $\E=\{(x,y):  x,y\in X \text{ for some } h_X\}$.
In other words, for any set of qubits that is directly interacting through a term in the Hamiltonian, we assign a clique to their vertices on the graph.
Then, $H$ has exponential decaying interaction on this graph $G=(\V,\E)$ per Definition~\ref{defn:expdecayint} since
\begin{equation}
\sup_x \sum_{X\ni x} \|h_X\| |X| \exp[\mu\diam(X)] \le \sum_{i=1}^r J k e^\mu = r J k e^{\mu} = O(1) < \infty,
\end{equation}
where we used the fact that each qudit is contained in at most $r$ terms by definition of Hamiltonian degree, and that each term $h_X$ has norm $\|h_X\|\le J$, acts on at most $|X|\le k$ qudits with diameter $\diam(X)=1$.
\end{proof}

We now give a strengthened version of the Hastings-Koma result, that we will use in the proof of Theorem~\ref{thm:main}.

\begin{lemma}[Hastings-Koma theorem for non-zero energy spread, generalized from Ref.~\cite{HastingsKoma}]
\label{lem:HastingsKoma}
Suppose we have a $n$-qudit Hamiltonian defined on a graph $G=(\V,\E)$ with exponential decaying interactions (Def.~\ref{defn:expdecayint}).
Also suppose for some constants $0\le w_\infty < 1$ and $\gamma_\infty >0$ independent of system size $n$, the Hamiltonian is quasi-spectrally gapped (Def.~\ref{defn:gap}) with energy spread $w_n \le w_\infty$ and quasi-spectral gap $\gamma_n \ge \gamma_\infty$.
Let $P_0$ be the projector onto the corresponding quasi-groundspace.
Let $A_X$ and $B_Y$ be observables with bounded norm $\|A_X\|,\|B_Y\|=O(1)$ and compact support $X,Y\subset V$, where $[A_X,B_Y]=0$ and $X\cap Y=\emptyset$. Then there exists some constants $C,\tilde{\mu}>0$, independent of $n$, such that for any normalized quasi-groundstate $\ket{\psi}\in P_0$, we have
\begin{equation}
Ce^{-\tilde{\mu}\dist(X,Y)} \ge \left| \braket{\psi|A_X B_Y|\psi} - \frac12 (\braket{\psi|A_XP_0B_Y|\psi} + \braket{\psi|B_YP_0A_X|\psi}) + \O(w_\infty\log^{\frac12}\frac{1}{w_\infty})\right|
\label{eq:MHK-inequality}
\end{equation}
In the case when $w_\infty=0$, we can ignore the $\O( w_\infty \log^{\frac12} (1/w_\infty))$ term.
\end{lemma}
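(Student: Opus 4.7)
The plan is to follow the strategy of Hastings and Koma's original proof closely, modifying it to carefully track how the nonzero energy spread $w_\infty$ propagates through the key estimates. The core Hastings-Koma idea is to express the ``off-diagonal'' part of the correlator $\braket{\psi|A_X(\Id-P_0)B_Y|\psi}$ as a time-integrated commutator $\braket{\psi|[A_X(t),B_Y]|\psi}$ weighted by a well-chosen filter function $f(t)$, and then invoke the Lieb-Robinson bound to control the commutator by an exponential in $\dist(X,Y)$. The appearance of the symmetrized groundspace term with the factor $1/2$ in Eq.~\eqref{eq:MHK-inequality} arises naturally by averaging this identity with its Hermitian conjugate, which is what is needed to accommodate the multi-dimensional quasi-groundspace projector $P_0$ and the fact that $A_X,B_Y$ do not commute with $P_0$.

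First I would fix a Gaussian filter $f(t)=\frac{1}{\sqrt{2\pi}\sigma}e^{-t^2/(2\sigma^2)}$ (suitably combined with a step in frequency as in \cite{HastingsKoma}), so that $\hat f(\omega)$ approximately equals the indicator of $\{\omega\ge\gamma_\infty\}$. Applying this filter under the spectral decomposition of $H$ lets me write, up to controlled errors,
\begin{equation*}
\braket{\psi|A_X(\Id-P_0)B_Y|\psi}\;\approx\;\int dt\,f(t)\,\braket{\psi|A_X\,e^{iHt}B_Y e^{-iHt}|\psi}.
\end{equation*}
Next I would apply the Lieb-Robinson bound, which for a Hamiltonian with exponentially decaying interactions gives $\|[A_X(t),B_Y]\|\lesssim \|A_X\|\|B_Y\|\,|X||Y|\,e^{-\mu(\dist(X,Y)-v|t|)}$ for some Lieb-Robinson velocity $v$ and decay rate $\mu$ depending only on the constants in Definition~\ref{defn:expdecayint}. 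Splitting the time integral at $|t|\sim\dist(X,Y)/(2v)$, the short-time region is suppressed by the Lieb-Robinson factor, and the long-time region by the Gaussian weight $e^{-t^2/(2\sigma^2)}$. After choosing $\sigma\sim \gamma_\infty^{-1}\log^{1/2}(1/w_\infty)$, these two pieces combine to produce an exponential-in-$\dist(X,Y)$ bound with some effective decay constant $\tilde\mu>0$, recovering the shape $Ce^{-\tilde\mu\dist(X,Y)}$ of the right-hand side.

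The heart of the generalization, and the main obstacle, is correctly propagating the energy-spread $w_\infty \gamma_\infty$ through two places where the original proof uses exact invariance of the groundstate. First, since $\ket{\psi}\in P_0$ satisfies only $\|(H-E_g)\ket{\psi}\|\le w_\infty\gamma_\infty$, the replacement $e^{-iHt}\ket{\psi}\approx e^{-iE_g t}\ket{\psi}$ incurs an error of order $w_\infty\gamma_\infty\,|t|$, which after integration against the Gaussian of width $\sigma$ contributes $\O(w_\infty\gamma_\infty\sigma)$. Second, any filter $\hat f$ cannot simultaneously vanish on $[0,w_\infty\gamma_\infty]$ and equal one on $[\gamma_\infty,\infty)$; the mismatch inside the quasi-groundspace leaks into the symmetrized correction term, producing an error of comparable order. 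Balancing these two effects against the exponential suppression of excited states (which forces $\sigma\gamma_\infty\gtrsim\log^{1/2}(1/w_\infty)$) gives precisely the additive $\O(w_\infty\log^{1/2}(1/w_\infty))$ correction in Eq.~\eqref{eq:MHK-inequality}, and in the limit $w_\infty=0$ this correction vanishes, recovering the original statement. The delicate part of the argument will be verifying that the optimized $\sigma$ keeps $\tilde\mu$ bounded below by a positive constant independent of $n$, and that the constant $C$ absorbs only factors of $|X|,|Y|,\|A_X\|,\|B_Y\|$ and the Lieb-Robinson data, not the system size.
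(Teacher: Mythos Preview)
Your proposal is correct and follows essentially the same approach as the paper: apply a Gaussian-regularized step-function filter (the paper's $\mathcal I_\alpha$, with $\alpha\sim 1/\sigma^2$) to the time-dependent commutator identity, bound the filtered commutator via Lieb-Robinson, and optimize the Gaussian width to balance the filter tail $e^{-\Delta^2/(4\alpha)}$ against the energy-spread error $O(w_\infty/\sqrt{\alpha})$, yielding exactly your choice $\sigma\sim\gamma_\infty^{-1}\log^{1/2}(1/w_\infty)$ and the additive $O(w_\infty\log^{1/2}(1/w_\infty))$ correction. The only cosmetic difference is that the paper obtains the factor $1/2$ directly from the filter's value $\mathcal I_\alpha[e^{-iEt}]\to\tfrac12$ at $E=0$ (rather than by averaging with the Hermitian conjugate), and tracks the spread error through the filter's deviation from $1/2$ on $|E|\le w_\infty$ rather than through $\|e^{-iHt}\ket\psi-e^{-iE_gt}\ket\psi\|$; both bookkeeping choices give the same order of error.
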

Our proof of this Theorem, which is modified from the proof of Theorem 2.8 in Ref.~\cite{HastingsKoma}, can be found in Appendix~\ref{sec:MHK}.
Note the apparent singularity of $1/w_\infty$ in the last term of Eq.~\eqref{eq:MHK-inequality} is somewhat artificial, since
$w\log^{1/2}(1/w) = \O(w^{1-\epsilon})\to 0$ as $w\to 0$.
Its appearance is
due to our decision to consider the case where the energy spread is non-zero, even when the system size $n\to\infty$.

Lastly, we prove the following property for constant-degree gap-simulation with localized encoding:

\begin{lemma}\label{lem:encoded-support}
Let $\tilde{H}$ gap-simulates $H$ with a some encoding $V$.
Let $S_i$ be the support of $V \sigma_i V^\dag$ on the interaction graph of $\tilde{H}$, where $\sigma_i$ is any operator acting on the $i$-th original qudit.
Suppose $\tilde{H}$ has maximum degree $r$, and $\max_i |S_i| = a$.
Then there exist two qudits $L$ and $R$ where the distance between the sets $S_L$ and $S_R$ (in the graph metric)
satisfies $\dist(S_L, S_R)\ge \log(n/a)/\log(1+r+ra)$.
Specifically, for constant degree $r=O(1)$ and localized encoding $a=O(1)$, $\dist(S_L,S_R)=\Omega(n)$.
\end{lemma}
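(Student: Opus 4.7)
The plan is to pick an arbitrary original qudit $L\in\{1,\ldots,n\}$ (say $L=1$) and argue, by a breadth-first-search (BFS) on the interaction graph $\tilde{G}$ of $\tilde{H}$, that some other index $R$ must satisfy $\dist_{\tilde{G}}(S_L,S_R)\ge d^\star:=\lceil\log(n/a)/\log(1+r+ra)\rceil$. First, I would bound the $d$-neighborhood $N_d(S_L):=\{v\in\tilde{G}:\dist_{\tilde{G}}(v,S_L)\le d\}$. Since $|S_L|\le a$ and $\tilde{G}$ has maximum degree $r$, a standard induction on $d$ gives $|N_d(S_L)|\le a(1+r)^d$: the new boundary $N_{d+1}\setminus N_d$ has size at most $r$ times the previous layer.

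Next, I would bound the number of indices $j$ with $\dist_{\tilde{G}}(S_L,S_j)\le d$. Any such $j$ has $S_j$ meeting $N_d(S_L)$, so it is enough to control the multiplicity $c:=\max_v|\{j:v\in S_j\}|$ with which a single vertex of $\tilde{G}$ lies in distinct supports. Under localized encoding per Definition~\ref{defn:localized-encoding}, this multiplicity is $O(1)$: it is exactly $1$ for the tensor-product variant $V=\bigotimes_i V_i$, since the supports $S_j\subseteq\tilde{\H}_j$ lie in disjoint blocks; and it is bounded by the backward light-cone size $\le a$ for the constant-depth-circuit variant. A careful BFS on the supports---accounting at each growth step for vertices already in the currently reached supports (the ``$1$''), the up to $r$ external neighbors of each such vertex (the ``$r$''), and the $\le a$ further supports each such neighbor can open up (the ``$ra$'')---yields the recurrence $|I_{d+1}|\le(1+r+ra)\,|I_d|$ for $I_d:=\{j:\dist_{\tilde{G}}(S_L,S_j)\le d\}$, and hence $|I_d|\le a(1+r+ra)^d$ starting from $|I_0|\le a$.

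Finally, the choice $d=d^\star-1$ makes $a(1+r+ra)^{d^\star-1}<n$ by the definition of $d^\star$, so $|I_{d^\star-1}|<n$ and some index $R\notin I_{d^\star-1}$ must exist, satisfying $\dist_{\tilde{G}}(S_L,S_R)\ge d^\star$, which is the claimed bound. Plugging in $r,a=O(1)$ gives $\dist_{\tilde{G}}(S_L,S_R)=\Omega(\log n)$ as stated. The main technical subtlety is the counting in the second step, where the localized-encoding hypothesis is essential to bound the support multiplicity: without it, pathological examples (e.g.\ all $S_j$ sharing a common vertex) would force every pairwise support-distance to zero and the lemma would fail.
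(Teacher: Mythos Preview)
Your overall strategy matches the paper's: a BFS-style growth from $S_L$ controlled by the factor $(1+r+ra)$. However, the recurrence $|I_{d+1}|\le(1+r+ra)\,|I_d|$ on \emph{index} counts is false, not merely underjustified. Take $a=1$ with disjoint singleton supports, and place $S_L=\{v_0\}$ at the root of a degree-$r$ tree in which no vertex at distance $1$ from $v_0$ lies in any other support, while every vertex at distance $2$ is a distinct support $S_j$. Then $|I_1|=1$ but $|I_2|=1+r(r-1)$, which exceeds $(1+r+ra)\,|I_1|=1+2r$ as soon as $r\ge4$. The heuristic you give (``$1$'' for current vertices, ``$r$'' for their neighbors, ``$ra$'' for supports those neighbors open) is really about \emph{vertex} counts: a vertex at distance $d$ from $S_L$ need not lie in any $S_{j'}$ with $j'\in I_d$ (it may lie in no support at all), so $|I_d|$ alone cannot control the next layer.

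The paper fixes this by counting vertices rather than indices. It defines a growing vertex set $R_\ell$ by adjoining to $R_{\ell-1}$ all its graph-neighbors \emph{and} every support $S_i$ meeting that closed neighborhood; then $|R_\ell|\le(1+r+ra)\,|R_{\ell-1}|$ is precisely the recurrence your heuristic describes, now applied to the right quantity. One checks that $\dist(S_1,S_j)\le\ell$ forces $S_j\subset R_\ell$, so once $a(1+r+ra)^\ell<n\le|\bigcup_i S_i|$ some $S_j$ must lie outside $R_\ell$ and hence satisfy $\dist(S_1,S_j)>\ell$. In short, the set your BFS is actually tracking is $R_\ell$; switching the bookkeeping from indices to vertices closes the gap, and the multiplicity discussion becomes unnecessary since the final step uses $|\bigcup_i S_i|\ge n$ rather than a bound on how many supports a single vertex can meet.
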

\begin{proof}
We define a sequence of subset of qudits, $R_\ell$, and let $R_0=S_1$.
For $\ell=1,2,\ldots$, we form $R_\ell$ by joining to $R_{\ell-1}$ both (1) all qudits $\le 1$ distance to any qudit in $R_{\ell-1}$, and (2) any $S_{i}$ containing qudit(s) with $\le 1$ distance to any qudit in $R_{\ell-1}$.
In other words
\begin{equation}
R_\ell = R_{\ell-1} \cup \{v: \exists w\in R_{\ell-1}\text{ s.t. } \dist(v,w)\le 1\} \cup \{ S_i: \exists v \in S_i, w \in R_{\ell-1} \text{ s.t. } \dist(v,w)\le 1\}
\end{equation}
By construction, if there exists $S_i\not\subset R_{\ell-1}$, then the set difference $R_{\ell} \backslash R_{\ell-1}$ must contain a support $S_j$ where $\dist(S_1,S_j)\ge \ell$.

Note that $|R_\ell| \le |R_{\ell-1}|(1+r+ra)$. Since $|R_0|=|S_1|\le a$, we have
\begin{equation}
|R_\ell| \le a (1+r+ra)^\ell.
\end{equation}
Since we have $\left|\bigcup_{i=1}^n S_i\right| \ge n $ qudits, then in order to cover all the supports, we must have
\begin{equation}
|R_\ell|\ge n  \quad \Longrightarrow  \quad \ell \ge \frac{\log(n/a)}{\log(1+r+ra)}
\end{equation}
For $r,a=O(1)$, this shows that there exists a support $S_j$ such that $\dist(S_1, S_j) \ge \Omega(\log(n))$.
We note that more generally,
for $r=O(\log^c n)$ and $a=O(\log^c n)$, the above also shows that there exists a support $S_j$ such that $\dist(S_1, S_j) \ge \Omega(\log(n)/\log(\log n))$.
\end{proof}

We are now ready to prove our main theorem.

\begin{proof}[\textbf{Proof of Theorem~\ref{thm:main}}]

\textbf{Part I}---
We first show impossibility of coherent degree-reduction for $H_\oneone$.
For the sake of contradiction, suppose there exists an $O(1)$-local $[O(1),M,O(1)]$-degree-reducer $\tilde{H}_\oneone$ of $H_\oneone$ with localized encoding $V$, $\epsilon$-incoherence and energy spread $\tilde{w}$, but without restriction on the number of terms $M$.
Then it has exponential decaying interaction due to Lemma~\ref{lem:HamExpDecayInt}.
Additionally, since the original Hamiltonian $H_\oneone$ is spectrally gapped with gap $\gamma_n=\gamma_\infty =1$, the gap-simulator should also be quasi-spectrally gapped with gap $\gamma_n=\gamma_\infty =1$ in order to gap-simulate its groundspace.
Nevertheless, we may allow some small and possibly non-zero energy spread $\tilde{w} = \tilde{w}_n\le \tilde{w}_\infty$ for the gap-simulator.
Since we assumed in the premise of the Theorem that $\tilde{w}$ is sufficiently small, it follows that $\tilde{w}_\infty = \sup_n \tilde{w}_n$ should also be a small constant $<1$.
Hence, $\tilde{H}_\oneone$ satisfy the requirements for applying Lemma~\ref{lem:HastingsKoma}.

Let us denote $Q = V(P\otimes P_\anc)V^\dag$ as the encoded groundspace projector.
Since we also require $\epsilon$-incoherence, the groundspace projector $\tilde{P}$ of $\tilde{H}_\oneone$ satisfies $\|\tilde{P}-Q\| = \|\tilde{P} - V(P\otimes P_\anc)V^\dag \|\le \epsilon$, where $P$ is the groundspace projector of $H_\oneone$, and $P_\anc$ is some projector on the ancilla.
Consider the unencoded $X_i$ operator on the original qubit $i$, which corresponds to $\tilde{X}_i = V X_i V^\dag$ in the encoded Hamiltonian. Let the support of the observable $\tilde{X}_i$ be $S_i$.
Because of the assumption of constant degree and the fact that the encoding is localized, there exists two qubit $L$ and $R$ where $\dist(S_L,S_R)\ge K \log n$ for some constant $K>0$ by Lemma~\ref{lem:encoded-support}.
Consider the following \emph{approximate} groundstates of $\tilde{H}$
\begin{equation}
\ket{g_{00}} = V \ket{0_L0_R}\ket{0\cdots}\ket{a}_\anc, \quad
\ket{g_{01}} = V \ket{0_L1_R}\ket{0\cdots}\ket{a}_\anc, \quad
\ket{g_{10}} = V \ket{1_L0_R}\ket{0\cdots}\ket{a}_\anc
\end{equation}
where $P_\anc\ket{a}=\ket{a}$, so $Q\ket{g_{ij}} = V(P\otimes P_\anc)V^\dag \ket{g_{ij}}=\ket{g_{ij}}$. Also let us denote
\begin{equation}
\ket{e_{11}} = V \ket{1_L1_R}\ket{0\cdots}\ket{a}
\end{equation}
which satisfies $Q\ket{e_{11}}=0$.
Now consider an approximate groundstate of $\tilde{H}_\oneone$
\begin{equation}
\ket{\phi} = \frac{1}{\sqrt{2}}(\ket{g_{01}}+\ket{g_{10}}).
\end{equation}
Let $\tilde{X}_L = V X_L V^\dag$ and $\tilde{X}_R = V X_R V^\dag$.
It's easy to see that $\tilde{X}_R\ket{g_{01}} = \ket{g_{00}}$, etc.
Observe
\begin{eqnarray}
\braket{\phi|\tilde{X}_L \tilde{X}_R|\phi} &=& 1 \\
\braket{\phi|\tilde{X}_R\tilde{P}\tilde{X}_L|\phi} = \braket{\phi|\tilde{X}_L\tilde{P}\tilde{X}_R|\phi} &=& \frac{1}{2} (\bra{e_{11}}+\bra{g_{00}})\tilde{P}(\ket{g_{00}}+\ket{e_{11}})
\end{eqnarray}
Due to the assumption of $\epsilon$-incoherence, we require
\begin{eqnarray*}
\braket{g_{00}|\tilde{P}|g_{00}} &=& \|\tilde{P}\ket{g_{00}}\|^2
\le \left(\|Q\ket{g_{00}}\| + \|(\tilde{P}-Q\ket{g_{00}}\| \right)^2 \le (1+\epsilon)^2, \\
\braket{e_{11}|\tilde{P}|e_{11}} &=& \|\tilde{P}\ket{e_{11}}\|^2 = \|(\tilde{P}- Q)\ket{e_{11}}\|^2 \le \epsilon^2, \\
\left|\braket{e_{11}|\tilde{P}|g_{00}}\right| &=&
\left|\braket{e_{11}|\tilde{P} - Q|g_{00}}\right| \le \|\tilde{P}-Q\| \le \epsilon.
\end{eqnarray*}
Hence
\begin{equation}
\braket{\phi|\tilde{X}_L\tilde{P}\tilde{X}_R|\phi} = \frac{1}{2} \left(\braket{g_{00}|\tilde{P}|g_{00}} + \braket{e_{11}|\tilde{P}|e_{11}} + 2\Re[\braket{e_{11}|\tilde{P}|g_{00}}]\right) \le \frac{1}{2}+\O(\epsilon)
\end{equation}
Note in order to apply the Hastings-Koma theorem, we need to convert $\ket{\phi}$ into an actual groundstate of $\tilde{H}$, i.e. a state fixed by $\tilde{P}$.
Again using the $\epsilon$-incoherence condition, we have
\begin{eqnarray*}
\epsilon \ge \|(\tilde{P}-Q)\ket{\phi}\| = \|\tilde{P}\ket{\phi}-\ket{\phi}\| \quad \Longrightarrow \quad \tilde{P}\ket{\phi} = \ket{\phi} + \ket{\epsilon}
\end{eqnarray*}
where $\|\ket{\epsilon}\| \le \epsilon$. Now let
\begin{equation}
\ket{\psi} = \frac{\tilde{P}\ket{\phi} }{\|\tilde{P}\ket{\phi} \|} = \mathcal{N}(\ket{\phi}+\ket{\epsilon})
\end{equation}
be a normalized state in the groundspace, where $\mathcal{N} \equiv \|\ket{\phi}+\ket{\epsilon}\|^{-1}$ is a normalization constant satisfying $(1+\epsilon)^{-1} \le \mathcal{N} \le  (1-\epsilon)^{-1}$.
Thus
\begin{eqnarray}
\braket{\psi|\tilde{X}_L \tilde{X}_R|\psi} &=& \mathcal{N}^2\braket{\phi|\tilde{X}_L\tilde{X}_R|\phi} + \O(\epsilon) \ge 1 + \O(\epsilon)\\
\braket{\psi|\tilde{X}_L \tilde{P} \tilde{X}_R|\psi} &=& \mathcal{N}^2\braket{\phi|\tilde{X}_L\tilde{P}\tilde{X}_R|\phi} + \O(\epsilon) \le \frac{1}{2} + \O(\epsilon)
\end{eqnarray}
For any given small constant $\epsilon\ge0$ and $\tilde{w}_\infty\ge 0$,
the existence of $\tilde{H}_\oneone$ contradicts Lemma~\ref{lem:HastingsKoma} since
\begin{eqnarray}
&& \left|\braket{\psi|\tilde{X}_L\tilde{X}_R|\psi} - \frac{1}{2}(\braket{\psi|\tilde{X}_L\tilde{P}\tilde{X}_R|\psi} + c.c.) + \O(\tilde{w}_\infty\log^{\frac12}\frac{1}{\tilde{w}_\infty})  \right| \ge \left|1-\frac{1}{2} + \O(\epsilon) +\O(\tilde{w}_\infty\log^{\frac12}\frac{1}{\tilde{w}_\infty})  \right| \nonumber \\
&\ge& \left| \frac{1}{2} + \O(\epsilon) + \O(\tilde{w}_\infty\log^{\frac12}\frac{1}{\tilde{w}_\infty}) \right| \not\le C\exp(-\tilde\mu K \log n) = Cn^{-\tilde{\mu} K}.
\label{eq:contradict-correlation-A}
\end{eqnarray}
This contradiction arises because for sufficiently small $\epsilon$ and $\tilde{w}_\infty$, the LHS of the final inequality is a constant of roughly $1/2$, which is not less than the RHS $Cn^{-\tilde{\mu} K}$ when $n> n_0$, for some cutoff system size $n_0 \approx (C/2)^{1/\tilde{\mu} K}$.

\textbf{Part II}---
It remains to show the impossibility of incoherent degree-reduction for $H_\dicke$.
Suppose FSOC there exists an $O(1)$-local $[O(1),M,O(1)]$-degree-reducer $\tilde{H}_\dicke$ of $H_\dicke$ with localized encoding $V$, $\delta$-unfaithfulness and energy spread $\tilde{w}$.
Since there $H_\dicke$ has a unique groundstate $\ket{g_\dicke}$, $\tilde{H}_\dicke$ must also have $\epsilon$-incoherence for some $\epsilon\le\sqrt{2}\delta/\sqrt{1-\delta^2} = \O(\delta)$ due to Lemma~\ref{lem:equiv}.
In other words, there must exists a projector $P_\anc^\dicke$ onto the ancilla Hilbert space such that $\|\tilde{P}_\dicke - Q_\dicke\|\le\epsilon$, where $Q_\dicke = V(P_\dicke \otimes P_\anc^\dicke)V^\dag$, $P_\dicke = \ketbra{g_\dicke}$, and $\tilde{P}_\dicke$ is groundspace projector of $\tilde{H}_\dicke$.
Then consider the approximate groundstate $\ket{\phi_\dicke}= V \ket{g_\dicke}\ket{b}$ for some ancilla state $\ket{b}=P_\anc^\dicke \ket{b}$.
By Lemma~\ref{lem:encoded-support}, there exists two original qubits $L,R$ such that the support $S_i$ of the encoded observable $\tilde{X}_i = V X_i V^\dag$ satisfies $\dist(S_L,S_R)\ge K\log n$.
Let us denote $h(x)$ as the Hamming weight (number of 1s) of the binary string $x$.
Observe that
\begin{eqnarray}
&& \braket{\phi_\dicke|\tilde{X}_L \tilde{X}_R|\phi_\dicke} = \braket{g_\dicke|X_L X_R|g_\dicke} = \binom{n}{n/2}^{-1} \sum_{x, y: h(x)=h(y) = \frac{n}{2}} \bra{x} X_L X_R\ket{y}  \nonumber \\
&=&
\text{\scalebox{0.9}{
$\binom{n}{n/2}^{-1}  \sum_{x: h(x) = n/2} \bra{x} X_L X_R \left[\sum_{h(\bar{y})=\frac{n}{2}-1} (\ket{0_L1_R} + \ket{1_L 0_R})  \ket{\bar{y}} + \sum_{h(\bar{y}')=\frac{n}{2}}\ket{0_L 0_R}\ket{\bar{y}'} + \sum_{h(\bar{y})=\frac{n}{2}-2}\ket{1_L 1_R} \ket{\bar{y}''} \right]$
}}
\nonumber \\
&=& \binom{n}{n/2}^{-1}  \sum_{x: h(x) = n/2} \bra{x} \sum_{\bar{y} : h(\bar{y})=\frac{n}{2}-1} (\ket{1_L0_R} + \ket{0_L 1_R})  \ket{\bar{y}} = \binom{n}{n/2}^{-1} \times 2\binom{n-1}{n/2-1} \nonumber \\
&=& \frac{n}{2(n-1)} \ge \frac12
\end{eqnarray}
Additionally, note $P_\dicke X_i\ket{g_\dicke}=0$ since $X_i$ changes the Hamming weight of all strings in $\ket{g_\dicke}$.
Consequently, $Q_\dicke \tilde{X}_i\ket{\phi_\dicke} = V [(P_\dicke X_i\ket{g_\dicke})\otimes\ket{b}]=0$, and
\begin{equation}
\left|\braket{\phi_\dicke |\tilde{X}_L \tilde{P}_\dicke \tilde{X}_R | \phi_\dicke}\right| = \left|\braket{\phi_\dicke |\tilde{X}_L (\tilde{P}_\dicke - Q_\dicke ) \tilde{X}_R | \phi_\dicke}\right| \le \|\tilde{P}_\dicke - Q_\dicke \| \le \epsilon = \O(\delta)
\end{equation}
The rest of the proof for impossibility of $\tilde{H}_\dicke$ follows the same argument as in the previous part for $\tilde{H}_\oneone$, where we convert $\ket{\phi_\dicke}$ to a true groundstate $\ket{\psi_\dicke} \propto \tilde{P}_\dicke \ket{\phi_\dicke}$ with up to $\O(\epsilon)=\O(\delta)$ error.
Then
the existence of $\tilde{H}_\dicke$ contradicts Lemma~\ref{lem:HastingsKoma} since
\begin{eqnarray}
&&\left| \braket{\psi_\dicke|\tilde{X}_L \tilde{X}_R|\psi_\dicke} - \frac{1}{2}(\braket{\psi_\dicke|\tilde{X}_L\tilde{P}_\dicke \tilde{X}_R|\psi_\dicke} + c.c.) + \O(\tilde{w}_\infty \log^{\frac12}\frac{1}{\tilde{w}_\infty}) \right| \nonumber \\
&\ge& \frac12+ \O(\delta) + \O(\tilde{w}_\infty \log^{\frac12}\frac{1}{\tilde{w}_\infty}) \not\le C\exp(-\tilde\mu K \log n) = Cn^{-\tilde{\mu} K}
\label{eq:contradict-correlation-B}
\end{eqnarray}
for sufficiently large $n>n_0$, where $n_0\approx (C/2)^{1/\tilde{\mu} K}$.
\end{proof}

We remark that Theorem~\ref{thm:main} in fact holds for encodings which is \emph{quasi}-localized, i.e. when the support of encoded local variable $|S_i|=O(\poly\log n)$.
This is because by Lemma~\ref{lem:encoded-support}, there exists two original qubits $L,R$ whose encoded support are distance $\Omega(\log n/\log (\log n)))$ apart for quasi-localized encodings, which would lead to a contradiction by correlation decay since $e^{-K\log n/\log (\log n))}\to0$ as $n\to\infty$.

\subsection{Impossibility of Full-Spectrum Degree-Reduction\label{sec:imposs-full-spec-DR}}
The above result of Theorem~\ref{thm:main} can be used to also rule out degree-reduction of $H_\oneone$ or $H_\dicke$ with up to constant error $(\eta,\xi)$ in the framework of full-spectrum simulation of Hamiltonian~\cite{BravyiHastingsSim, UniversalHamiltonian}, as per Definition~\ref{defn:CMPsimul} (assuming the encoding is localized).
This is true since $H_\oneone$ and $H_\dicke$ are real-valued and spectrally gapped, and thus by Lemma \ref{lem:relate-CMP} such a simulation would imply a gap-simulation by our Definition~\ref{defn:hamsimul}, which
was already found to be impossible by Theorem~\ref{thm:main} with localized encoding.
Hence, it's a simple deduction that:

\begin{coro}[Impossibility of full-spectrum degree-reduction, corollary to Theorem~\ref{thm:main}]
\label{coro:imposs-CMP}
It is impossible for an $O(1)$ degree $\tilde{H}_\oneone$ (or $\tilde{H}_\dicke$) with bounded interaction strength to full-spectrum-simulate $H_\oneone$ (or $H_\dicke$) to precision $(\eta,\xi)$ per Def.~\ref{defn:CMPsimul}, where the isometry in the encoding $V$ is an localized encoding per Def.~\ref{defn:localized-encoding}, for sufficiently small constants $\eta$ and $\xi$ and large enough system size.
\end{coro}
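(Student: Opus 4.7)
The plan is a straightforward reduction to Theorem~\ref{thm:main} via the bridge provided by Lemma~\ref{lem:relate-CMP}. Suppose, for contradiction, that there exists an $O(1)$-local Hamiltonian $\tilde{H}$ of constant degree and bounded interaction strength that full-spectrum-simulates $H$ (where $H\in\{H_\oneone, H_\dicke\}$) to precision $(\eta,\xi)$ per Def.~\ref{defn:CMPsimul}, using some localized encoding $V$. First I would verify the hypotheses of Lemma~\ref{lem:relate-CMP}: since both $H_\oneone$ and $H_\dicke$ are real-valued, the encoding $\eps(H)=V(H\otimes P + \bar{H}\otimes Q)V^\dag$ collapses to the form $V(H\otimes P_\anc)V^\dag$ with $P_\anc = P+Q$, as required. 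Both example Hamiltonians are spectrally gapped with $\gamma=1$ and energy spread $w=0$ (Section~\ref{sec:results} and Appendix~\ref{sec:HamProperties}), so the condition $\xi \le (1-w)\gamma/8 = 1/8$ is trivially satisfied for sufficiently small constant $\xi$.

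Applying Lemma~\ref{lem:relate-CMP} then produces $\tilde{H}' = (4/3)\tilde{H}$ which gap-simulates $H$ with the same encoding $V$, incoherence $\epsilon \le 32\xi + 2\eta$, and energy spread $\tilde{w} \le 2\xi/(1-2\xi)$. The next step is to observe that $\tilde{H}'$ inherits the structural properties of $\tilde{H}$: scaling by the constant $4/3$ preserves $O(1)$-locality, keeps the maximum degree $O(1)$, and leaves interaction strengths bounded; the encoding $V$ is unchanged and therefore remains localized per Def.~\ref{defn:localized-encoding}. Thus $\tilde{H}'$ is an $O(1)$-local $[O(1),M,O(1)]$-degree-reducer of $H$ with localized encoding. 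For the $H_\dicke$ case I would then invoke the easy bound $\delta\le 2\epsilon$ (shown at the opening of Appendix~\ref{sec:uniqueGS}) to turn bounded incoherence into the bounded unfaithfulness that appears in the statement of Theorem~\ref{thm:main} for $H_\dicke$.

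Finally, I would pick $\eta$ and $\xi$ small enough that $32\xi+2\eta$ and $2\xi/(1-2\xi)$ both fall below the thresholds on incoherence/unfaithfulness and energy spread called "sufficiently small" in Theorem~\ref{thm:main}; since both expressions are continuous in $(\eta,\xi)$ and vanish at the origin, this is always achievable. Theorem~\ref{thm:main} then supplies a system size $n_0$ past which no such degree-reducer of $H$ can exist, contradicting our construction of $\tilde{H}'$ and hence the assumed $\tilde{H}$. There is no essential obstacle in this argument; the only care needed is in chaining the quantitative thresholds correctly — from the full-spectrum precision parameters $(\eta,\xi)$, through the gap-simulation parameters $(\epsilon,\tilde{w})$, down to the hypotheses of Theorem~\ref{thm:main} — and in noting that the localized-encoding hypothesis of the corollary is exactly what is needed to invoke both Lemma~\ref{lem:relate-CMP} and Theorem~\ref{thm:main} in the combined form above.
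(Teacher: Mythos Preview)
Your proposal is correct and follows essentially the same route as the paper: use the real-valuedness and spectral gap of $H_\oneone$ and $H_\dicke$ to invoke Lemma~\ref{lem:relate-CMP}, obtain a gap-simulator (rescaled by $4/3$) with the same structural properties, and then contradict Theorem~\ref{thm:main}. You are in fact more careful than the paper about the quantitative chaining of thresholds and the $\epsilon\to\delta$ conversion for $H_\dicke$, both of which the paper leaves implicit.
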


\section{Generalized Hastings-Koma Theorem for Decay of Correlation with Non-Zero Energy Spread (Lemma~\ref{lem:HastingsKoma})\label{sec:MHK}}

In this Appendix, we prove a stronger version of Hastings-Koma's result\cite{HastingsKoma}, which we need in order to prove
our main result of Theorem~\ref{thm:main}.
To this end, we relax the assumption in Hastings and Koma's theorem so that the energy spread of the groundspace is non-zero even in the limit of system size $n\to\infty$; this enables us to handle cases where $w_n>0$ for all $n$,
namely small but non-vanishing perturbation to the groundspace.
We show that this still allows us to obtain decay of correlation.
The proof requires a modification of Hastings and Koma's proof, where we calculate better bounds and choose optimal integration parameters, allowing us to mitigate the errors due to the non-zero energy spread.

{
\renewcommand{\thelemma}{\ref{lem:HastingsKoma}}
\begin{lemma}[Hastings-Koma theorem for non-zero energy spread, generalized from Ref.~\cite{HastingsKoma}]
Suppose we have a $n$-qudit Hamiltonian defined on a graph $G=(V,E)$ with exponential decaying interactions (Def.~\ref{defn:expdecayint}).
Also suppose for some constants $0\le w_\infty < 1$ and $\gamma_\infty >0$ independent of system size $n$, the Hamiltonian is quasi-spectrally gapped (Def.~\ref{defn:gap}) with energy spread $w_n \le w_\infty$ and quasi-spectral gap $\gamma_n \ge \gamma_\infty$.
Let $P_0$ be the projector onto the corresponding quasi-groundspace.
Let $A_X$ and $B_Y$ be observables with bounded norm $\|A_X\|,\|B_Y\|=O(1)$ and compact support $X,Y\subset V$, where $[A_X,B_Y]=0$ and $X\cap Y=\emptyset$. Then there exists some constants $C,\tilde{\mu}>0$, independent of $n$, such that for any normalized quasi-groundstate $\ket{\psi}\in P_0$, we have
\begin{equation}
Ce^{-\tilde{\mu}\dist(X,Y)} \ge \left| \braket{\psi|A_X B_Y|\psi} - \frac12 (\braket{\psi|A_XP_0B_Y|\psi} + \braket{\psi|B_YP_0A_X|\psi}) + \O(w_\infty\log^{\frac12}\frac{1}{w_\infty})\right|
\label{eq:MHKineq}
\end{equation}
In the case when $w_\infty=0$, we can ignore the $\O( w_\infty \log^{\frac12} (1/w_\infty))$ term.
\end{lemma}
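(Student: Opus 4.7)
The plan is to adapt the Hastings-Koma strategy~\cite{HastingsKoma} by choosing the filter function and integration parameters so as to absorb the non-zero energy spread into an explicit $O(w_\infty\log^{1/2}(1/w_\infty))$ correction. The three standard ingredients I need are: (i) an auxiliary real odd function $g(t)$, parametrized by a width $\alpha$, whose Fourier transform $\tilde{g}(\omega)$ approximates $-\operatorname{sgn}(\omega)$ for $|\omega|\ge \gamma_\infty$ up to errors of order $e^{-\alpha\gamma_\infty^2/2}$, with $\tilde{g}(0)=0$ and Gaussian tails $|g(t)|\lesssim e^{-t^2/(2\alpha)}/|t|$; (ii) a Lieb-Robinson bound~\cite{LiebRobinson} on $\|[A_X(t),B_Y]\|$ with $A_X(t)\equiv e^{iHt}A_X e^{-iHt}$, valid thanks to the exponentially-decaying-interaction hypothesis in Definition~\ref{defn:expdecayint}; and (iii) a spectral identity expressing the connected correlator as an integral of $g(t)\braket{\psi|[A_X(t),B_Y]|\psi}$.

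First I would establish the spectral identity. Inserting resolutions of the identity in the $H$-eigenbasis, and using that $\tilde{g}(E_m-E_n)\approx -\operatorname{sgn}(E_m-E_n)$ whenever $|E_m-E_n|\ge \gamma_\infty$, one finds that for any $\ket{\psi}\in P_0$,
\begin{equation}
\braket{\psi|A_XB_Y|\psi} - \tfrac12\big(\braket{\psi|A_XP_0B_Y|\psi}+\braket{\psi|B_YP_0A_X|\psi}\big) = \tfrac{1}{4\pi}\int dt\, g(t)\braket{\psi|[A_X(t),B_Y]|\psi} + R,
\end{equation}
where $R$ collects matrix elements of $A_X$ and $B_Y$ between pairs of eigenstates inside $P_0$ (so with energy difference $\le w_\infty\gamma_\infty$), together with corrections of order $e^{-\alpha\gamma_\infty^2/2}$ from the imperfect $\operatorname{sgn}$-approximation. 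I then split the $t$-integral at $|t|=T$: for $|t|\le T<\dist(X,Y)/v_{LR}$, Lieb-Robinson gives $\|[A_X(t),B_Y]\|\le c_1 e^{-\mu(\dist(X,Y)-v_{LR}|t|)}$; for $|t|>T$, $|g(t)|$ is Gaussian-suppressed by $e^{-T^2/(2\alpha)}$. Balancing these as in Hastings-Koma with $T\sim \dist(X,Y)/(2v_{LR})$ and $\alpha\sim T$ produces the exponential-decay term $Ce^{-\tilde{\mu}\dist(X,Y)}$.

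The new technical difficulty, and the main obstacle, is controlling the residue $R$. Near the origin, $\tilde{g}(\omega)$ interpolates through zero on the scale of the Gaussian width $\sim 1/\sqrt{\alpha}$, so a crude bound gives $|\tilde{g}(\omega)|\le c_2|\omega|\sqrt{\alpha}$ and hence $|R|=O(w_\infty\gamma_\infty\sqrt{\alpha})$, which would diverge with $\dist(X,Y)$ under the above choice of $\alpha$. I would resolve this by establishing a sharpened estimate $|\tilde{g}(\omega)|\le c_3|\omega|\sqrt{\alpha\,\log\!\big(1/(\alpha\omega^2)\big)}$ for $|\omega|\sqrt{\alpha}\ll 1$, obtained by direct saddle-point analysis of the Gaussian-regularized sign function. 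Applied at $|\omega|\le w_\infty\gamma_\infty$ with the optimal $\alpha\sim\dist(X,Y)$, this yields $|R|=O(w_\infty\log^{1/2}(1/w_\infty))$, matching the correction in Eq.~\eqref{eq:MHKineq}; when $w_\infty=0$ the residue vanishes exactly, recovering the original Hastings-Koma bound. The hardest step will be the joint optimization of $T$ and $\alpha$ so that the Lieb-Robinson tail, the Gaussian tail of $g(t)$, and the energy-spread residue $R$ all decay at compatible rates, and in particular that $\tilde{\mu}>0$ emerges as a constant depending only on $\mu$, $v_{LR}$, and the constant $s_1$ from Definition~\ref{defn:expdecayint}, independent of $n$. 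The symmetrization $A_X\leftrightarrow B_Y$ in the stated inequality arises naturally because only the symmetric real part of $\tilde{g}$ feeds into the groundspace-projected term, while the antisymmetric imaginary part is absorbed into the commutator integral.
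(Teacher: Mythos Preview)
Your overall architecture (filter the commutator identity, apply Lieb--Robinson, control the residue coming from the $P_0$--$P_0$ block) is exactly right, but the resolution you propose for the energy-spread residue does not work, and this is where the argument breaks.

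The claimed ``sharpened estimate'' $|\tilde{g}(\omega)|\le c_3|\omega|\sqrt{\alpha\,\log\!\big(1/(\alpha\omega^2)\big)}$ is not an improvement: it still carries the $\sqrt{\alpha}$ prefactor (indeed it is \emph{larger} than your ``crude'' bound in the regime $|\omega|\sqrt{\alpha}\ll 1$ where you invoke it). More fundamentally, any smooth odd filter with frequency-transition width $\sim 1/\sqrt{\alpha}$ must satisfy $|\tilde{g}(\omega)|\sim c|\omega|\sqrt{\alpha}$ near the origin --- this is just its Taylor slope, and no saddle-point analysis can beat it. Consequently, if you keep $\alpha\sim \dist(X,Y)$ as in the $w_\infty=0$ Hastings--Koma optimization, the residue is genuinely $\Theta(w_\infty\sqrt{\dist(X,Y)})$ and cannot be absorbed into an $O(1)$ correction term. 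Your final claim that this combination yields $|R|=O(w_\infty\log^{1/2}(1/w_\infty))$ is therefore incorrect.

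The actual fix is to \emph{decouple} the filter width from the distance when $w_\infty>0$: choose $\alpha$ to be a constant depending only on $w_\infty$ and $\gamma_\infty$, specifically $\alpha\sim (2/\gamma_\infty^2)\log(1/w_\infty)$, and do not split the $t$-integral at all. With $\alpha$ fixed, the full Lieb--Robinson integral $\int dt\,|g(t)|\,\|[A_X(t),B_Y]\|$ is bounded by $C(\alpha)\,e^{-\mu\,\dist(X,Y)}$ where $C(\alpha)\sim v\sqrt{\alpha}\,e^{v^2\alpha/2}$ is a (possibly large) constant independent of $n$ and $\dist(X,Y)$. The residue from the $P_0$ block is then $O(w_\infty\sqrt{\alpha})=O\big(w_\infty\sqrt{\log(1/w_\infty)}\big)$, and the error from the imperfect $\operatorname{sgn}$-approximation at $|\omega|\ge\gamma_\infty$ is $O(e^{-\alpha\gamma_\infty^2/2})=O(w_\infty)$, which is subdominant. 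This is precisely the balancing the paper performs; the case $w_\infty=0$ is handled separately with $\alpha\sim\dist(X,Y)$ as you described, and there the residue vanishes exactly so no tension arises.
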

\addtocounter{lemma}{-1}
}

\begin{proof}
Let us call the given $n$-qudit Hamiltonian $H_{(n)}$.
To avoid dealing with variations in the quasi-spectral gap $\gamma_n$, we rescale each Hamiltonian to  $H_{(n)}' = [H_{(n)}-\lambda_1(H_{(n)})]/\gamma_n$, where $H_{(n)}'$ has the same quasi-groundspace projector $P_0$ with energy spread $w_n'=w_n$ and $\gamma_n' = 1$.
Since $\gamma_n \ge \gamma_\infty > 0$, $H_{(n)}'$ also has exponential decaying interactions since the interaction strength increase by a factor of at most $1/\gamma_\infty$, a constant.
Note that the new Hamiltonian $H_{(n)}'$ has identical eigenstates as $H_{(n)}$, whose eigenvalues are simply rescaled.

For $H_{(n)}'$, let us denote
\begin{itemize}
\item $\{\ket{\phi_{0,\nu}}\}$  as its quasi-groundstates with energy $E_{0,\nu} \le w_n'\gamma_n'$, corresponding to Hermitian projector $P_0$;
\item $\{\ket{\phi_j}\}_{j>0}$  as the rest of its eigenstates with energy $\gamma_n'$ or higher.
\end{itemize}

Due to the condition that $H_{(n)}'$ be quasi-spectrally gapped with energy spread $w_n'=w_n$ and quasi-spectral gap $\gamma_n'=1$, we have that the eigenvalues $\{E_{0,\nu}\} \cup \{E_j\}_{j>0}$ of $H_{(n)}'$ must satisfy:
\begin{equation} \label{eq:QDcond}
\max_{\nu,\nu'} |E_{0,\nu}-E_{0,\nu'}| \le w_n' \gamma_n' = w_n \le w_\infty,
\end{equation}
\begin{equation} \label{eq:gapcond}
\min_{E_j\not\in\{E_{0,\nu}\}} E_j-E_{0,\nu} \ge (1-w_n')\gamma_n' = (1-w_n) \ge (1-w_\infty) \equiv \Delta.
\end{equation}

Now consider any groundstate $\ket{\psi}\in P_0$, which can be written as
\begin{equation}
\ket{\psi} = \sum_\nu c_\nu \ket{\phi_{0,\nu}}.
\end{equation}
Let us denote $\braket{\cdots}=\braket{\psi|\cdots|\psi}$.
To obtain the inequality Eq.~\eqref{eq:MHKineq}, we begin with the equation
\begin{eqnarray}
\braket{[A_X(t),B_Y]} &=& \braket{A_X(t)(\Id-P_0)B_Y} - \braket{B_Y(\Id-P_0)A_X(t)} \nonumber \\
&&+ \braket{A_X(t)P_0B_Y} - \braket{B_Y P_0 A_X(t)}, \label{eq:commexpand}
\end{eqnarray}
where $A_X(t)=e^{iH_{(n)}'t}A_X e^{-iH_{(n)}'t}$ is the Heisenberg picture of observable $A_X$.
The proof of the inequality proceeds by (1) applying a filter function to the RHS of Eq.~\eqref{eq:commexpand} to get rid of time-dependence, and show that it is equivalent to the RHS of Eq.~\eqref{eq:MHKineq},
and then (2) bound the quantity on the LHS through Lieb-Robinson bound.

Note each term on the RHS of Eq.~\eqref{eq:commexpand} can be expanded as
\begin{eqnarray}
\braket{A_X(t)(\Id-P_0)B_Y} &=& \sum_{\nu,\nu'} \sum_{j\neq 0} c_\nu^* c_{\nu'} \braket{\phi_{0,\nu}|A_X|\phi_j}\braket{\phi_j|B_Y|\phi_{0,\nu'}} e^{-it(E_j-E_{0,\nu})}, \\
\braket{B_Y(\Id-P_0)A_X(t)} &=& \sum_{\nu,\nu'} \sum_{j\neq 0} c_\nu^* c_{\nu'} \braket{\phi_{0,\nu}|B_Y|\phi_j}\braket{\phi_j|A_X|\phi_{0,\nu'}} e^{it(E_j-E_{0,\nu'})}, \\
\braket{A_X(t)P_0B_Y} &=& \sum_{\nu,\nu'} \sum_{\mu} c_\nu^* c_{\nu'} \braket{\phi_{0,\nu}|A_X|\phi_{0,\mu}}\braket{\phi_{0,\mu}|B_Y|\phi_{0,\nu'}} e^{-it(E_{0,\mu}-E_{0,\nu})}, \\
\braket{B_YP_0A_X(t)} &=& \sum_{\nu,\nu'} \sum_{\mu} c_\nu^* c_{\nu'} \braket{\phi_{0,\nu}|B_Y|\phi_{0,\mu}}\braket{\phi_{0,\mu}|A_X|\phi_{0,\nu'}} e^{it(E_{0,\mu}-E_{0,\nu'})} .
\end{eqnarray}
Using the gap condition Eq.~\eqref{eq:gapcond}, we can extract the positive frequency part of the first two terms.
And due to the quasi-degeneracy condition Eq.~\eqref{eq:QDcond}, we should be able to eliminate the time-dependence of the last two terms. We rely on the following Lemma, which is an extension of Lemma 3.1 in Hastings-Koma\cite{HastingsKoma}:

\begin{lemma}[Filtering]
\label{lem:filter}
For any $\alpha > 0$, consider the linear operator $\I_\alpha$ on space of function $\{f(t):\mathds{R}\to\C\}$:
\begin{equation}
\I_\alpha[f(t)] \equiv \lim_{T\to\infty} \lim_{\epsilon\to0^+} \frac{i}{2\pi} \int_{-T}^T \frac{f(t)e^{-\alpha t^2}}{t+i\epsilon} dt.
\end{equation}
Then
\begin{eqnarray}
\I_\alpha[e^{-iEt}] &=& \frac{1}{2\pi}\sqrt{\frac{\pi}{\alpha}} \int_{-\infty}^0 d\omega e^{-\frac{(\omega+E)^2}{4\alpha}} = \frac12 \left(1+\erf(\frac{E}{\sqrt{4\alpha}}) \right) \nonumber \\
&=&
\begin{dcases}
1 - \O((\sqrt{\alpha}/\Delta)e^{-\Delta^2/(4\alpha)}) &  \text{ if } E\ge \Delta \\
\O((\sqrt{\alpha}/\Delta)e^{-\Delta^2/(4\alpha)}) &  \text{ if } E \le -\Delta \\
\frac{1}{2} + \O(w_\infty/\sqrt{4\alpha}) & \text{ if } |E|\le w_\infty \\
\end{dcases}.
\label{eq:filterfunction}
\end{eqnarray}
\end{lemma}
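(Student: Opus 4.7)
\textbf{Proof proposal for Lemma~\ref{lem:filter}.}
The plan is to evaluate $\I_\alpha[e^{-iEt}]$ explicitly by rewriting the singular factor $(t+i\epsilon)^{-1}$ as an exponential integral, reducing the double integral to a Gaussian Fourier transform, and then reading off the asymptotics from standard bounds on $\erf$.

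First I would use the identity
$$\frac{i}{t+i\epsilon} = \int_0^\infty e^{i\omega t - \epsilon\omega}\,d\omega \qquad (\epsilon>0),$$
which is just a direct computation. Since the Gaussian factor $e^{-\alpha t^2}$ makes the $t$-integral absolutely convergent uniformly in $T$, the $T\to\infty$ limit in the definition of $\I_\alpha$ can be taken inside; the remaining integrand is jointly integrable in $(t,\omega)$ for $\epsilon>0$, so Fubini lets me swap the integration order to obtain
$$\I_\alpha[e^{-iEt}] = \frac{1}{2\pi}\lim_{\epsilon\to 0^+}\int_0^\infty e^{-\epsilon\omega}\!\!\int_{-\infty}^{\infty} e^{-\alpha t^2 + i(\omega - E)t}\,dt\,d\omega.$$
The inner integral is the standard Gaussian Fourier transform, equal to $\sqrt{\pi/\alpha}\,e^{-(\omega-E)^2/(4\alpha)}$. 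Because this remaining integrand in $\omega$ is positive and integrable uniformly in $\epsilon\in[0,1]$, dominated convergence allows the $\epsilon\to 0^+$ limit to pass inside, leaving
$$\I_\alpha[e^{-iEt}] = \frac{1}{2\pi}\sqrt{\frac{\pi}{\alpha}}\int_0^\infty e^{-(\omega-E)^2/(4\alpha)}\,d\omega.$$
The substitution $\omega \mapsto -\omega$ converts this into the first form stated in the lemma, and a further substitution $u=(\omega+E)/\sqrt{4\alpha}$ produces $\tfrac{1}{\sqrt{\pi}}\int_{-\infty}^{E/\sqrt{4\alpha}}e^{-u^2}\,du = \tfrac{1}{2}\bigl[1 + \erf(E/\sqrt{4\alpha})\bigr]$.

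With the main identity established, the three asymptotic regimes follow from standard tail/Taylor bounds on $\erf$. For $E\ge\Delta$, I would use the Mills-ratio bound $1-\erf(x) \le \frac{e^{-x^2}}{x\sqrt{\pi}}$ applied at $x=E/\sqrt{4\alpha}\ge \Delta/\sqrt{4\alpha}$, which yields $\I_\alpha[e^{-iEt}] = 1 - \O((\sqrt{\alpha}/\Delta)e^{-\Delta^2/(4\alpha)})$; by oddness of $\erf$ the case $E\le-\Delta$ is symmetric and gives $\O((\sqrt{\alpha}/\Delta)e^{-\Delta^2/(4\alpha)})$. For $|E|\le w_\infty$, the Taylor bound $|\erf(x)|\le 2|x|/\sqrt{\pi}$ gives $\I_\alpha[e^{-iEt}] = \tfrac{1}{2} + \O(w_\infty/\sqrt{4\alpha})$.

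The only delicate point I anticipate is justifying the exchange of the iterated limits $T\to\infty$, $\epsilon\to 0^+$ with the $\omega$-integration; this is the main ``obstacle'', though a mild one. The $T\to\infty$ step is trivial because the Gaussian dominates the principal-value singularity. For $\epsilon\to 0^+$, once integration is exchanged, the $\omega$-integrand $e^{-\epsilon\omega}e^{-(\omega-E)^2/(4\alpha)}$ is monotonic in $\epsilon$ and bounded by its $\epsilon=0$ value, so monotone/dominated convergence applies without issue. Everything else is a direct calculation, and no deeper analytic machinery is required.
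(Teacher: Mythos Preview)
Your proposal is correct and essentially follows the same route as the paper, which simply cites Hastings--Koma for the integral identity $\I_\alpha[e^{-iEt}]=\frac{1}{2\pi}\sqrt{\pi/\alpha}\int_{-\infty}^0 e^{-(\omega+E)^2/(4\alpha)}\,d\omega$ and then says ``the rest follows from evaluating the integral.'' You have supplied the explicit derivation (via the exponential representation of $i/(t+i\epsilon)$ and the Gaussian Fourier transform) together with the standard $\erf$ tail and Taylor bounds that the paper leaves implicit; nothing more is needed.
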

{
\renewcommand{\qedsymbol}{$\blacklozenge$}
\begin{proof}[Proof of Lemma~\ref{lem:filter}]
The first  equality of $\I_\alpha[e^{-iEt}] =\int d\omega(\cdots)$ is found in the proof of Lemma 3.1 in Ref.~\cite{HastingsKoma}.
Then the rest follows from evaluating the integral.
\end{proof}
}

It is then clear that by applying the linear operator $\I_\alpha$ (for some $\alpha$ that we will choose later) on the first term of the RHS of Eq.~\eqref{eq:commexpand}, we obtain
\begin{eqnarray}
\I_\alpha[\braket{A_X(t)(\Id-P_0)B_Y}] &=& \sum_{\nu,\nu'} \sum_{j> 0} c_\nu^* c_{\nu'} \braket{\phi_{0,\nu}|A_X|\phi_j}\braket{\phi_j|B_Y|\phi_{0,\nu'}} + \O(e^{-\Delta^2/(4\alpha)}) \nonumber \\
&=& \braket{A_X (\Id-P_0) B_Y} + \O((\sqrt{\alpha}/\Delta) e^{-\Delta^2/(4\alpha)}).
\end{eqnarray}
And similarly on the second term, we obtain
\begin{equation}
\I_\alpha[\braket{B_Y(\Id-P_0)A_X(t)}] = \O((\sqrt{\alpha}/\Delta)e^{-\Delta^2/(4\alpha)}).
\end{equation}
Lastly, on the third and fourth term, we obtain
\begin{eqnarray}
\I_\alpha[\braket{A_X(t)P_0B_Y}] &=& \frac{1}{2} \braket{A_XP_0B_Y} + \O(w_\infty/\sqrt{4\alpha}), \\
\I_\alpha[\braket{B_YP_0A_X(t)}] &=& \frac{1}{2} \braket{B_YP_0A_X} + \O(w_\infty/\sqrt{4\alpha}).
\end{eqnarray}

Now, we make use of the Lieb-Robinson bound for Hamiltonians with exponentially decaying interaction
(see Theorem A.2 in Ref.~\cite{HastingsKoma})
which states that:
\begin{equation}
\left\|[A_X(t),B_Y]\right\| \le K\|A_X\|\|B_Y\||X||Y| (e^{v|t|}-1) e^{-\mu D}, \quad \text{where} \quad D \equiv \dist(X,Y),
\end{equation}
and $K$, $v$ are positive constants that depend only on the structure of $H$ and the graph.
This allows us to bound the LHS of Eq.~\eqref{eq:commexpand}, by applying the linear operator $\I_\alpha$.
Since $(e^{v|t|}-1)/|t| \le v e^{v|t|}$, we have
\begin{eqnarray}
\left| \int_{-\infty}^{\infty} \frac{e^{v|t|}-1}{|t|} e^{-\alpha t^2} \right| \le
\int_{-\infty}^\infty  v e^{v|t|-\alpha t^2}
=
\frac{v\sqrt{\pi}}{\sqrt{\alpha}} e^{v^2/4\alpha}(1+\erf(\frac{v}{\sqrt{4\alpha}}))
\le 2\sqrt{\pi} \frac{v}{\sqrt{\alpha}} e^{v^2/4\alpha}
\end{eqnarray}
where we used the fact that $|\erf(x)|\le 1$. Thus, for some $C_1(\alpha) = K'\frac{v}{\sqrt{\alpha}} e^{v^2/4\alpha}$, where $K'$ is a constant, we have
\begin{equation}
\left| \int_{-\infty}^\infty \frac{\braket{[A_X(t),B_Y]} e^{-\alpha t^2}}{t+i\epsilon} dt\right|
\le C_1(\alpha) e^{-\mu D}
\end{equation}
Let us choose $\alpha = \Delta/ (4\tau)$.
After applying $\I_\alpha$ to both sides of Eq.~\eqref{eq:commexpand}, we obtain
\begin{eqnarray}
C e^{- \mu D}
&\ge& \left|\braket{A_X (\Id-P_0) B_Y} + \frac12\braket{A_XP_0B_Y} - \frac12\braket{B_YP_0A_X} + \O\left(\frac{w_\infty}{\sqrt{\alpha}}\right) + \O\left(\frac{\sqrt{\alpha}}{\Delta}e^{-\Delta^2/(4\alpha)}\right)\right| \nonumber
\\ &=&\left|\braket{A_X B_Y} - \frac12 (\braket{A_XP_0B_Y} + \braket{B_YP_0A_X} ) + \O\left(\frac{w_\infty\sqrt{\tau}}{\sqrt{\Delta}}\right) + \O\left(\frac{1}{\sqrt{\Delta\tau}}e^{-\Delta \tau}\right) \right|.
\label{eq:MHK-non-zero-spread}
\end{eqnarray}
Note $w_\infty/\sqrt{\Delta} = w_\infty/\sqrt{1-w_\infty}=\O(w_\infty)$.

In the case where $w_\infty > 0$, let us choose constants $\tau=(1/\Delta) \log(1/w_\infty)$, $\tilde{\mu}=\mu$, and $C=C_1(\alpha=\Delta/(4\tau))$, we have
\begin{equation}
C e^{-\tilde{\mu} D} \ge \left|\braket{A_X B_Y} - \frac12(\braket{A_XP_0B_Y} + \braket{B_YP_0A_X}) + \O(w_\infty\log^{\frac12}\frac{1}{w_\infty}) \right|,
\end{equation}
which is the desired inequality.

In the case where $w_\infty=0$, let $\xi = \mu \Delta^2/(\Delta^2+v^2)$ be a constant, and we choose $\tau = (\xi/\Delta) D$.
Then
\begin{eqnarray}
&& \left|\braket{A_X B_Y} - \frac12(\braket{A_XP_0B_Y} + \braket{B_YP_0A_X})\right| \le C_1(\alpha) e^{-\mu D} + \O \left(\frac{1}{\sqrt{\Delta \tau}} e^{-\Delta \tau}\right) \nonumber \\
&= & K'\frac{2v\sqrt{\tau}}{\sqrt{\Delta}} e^{-\mu D -v^2\tau/\Delta } + \O\left( \frac{1}{\sqrt{\xi D}}e^{-\xi D}\right) = K'\frac{2v\sqrt{D}}{\Delta} e^{-\xi D} +  \O\left( \frac{1}{\sqrt{\xi D}}e^{-\xi D}\right)
\end{eqnarray}
Now observe that for any $\kappa>0$, if we set $\tilde{\mu}=\xi/(1+\kappa)$, then $e^{(\xi-\tilde{\mu})D} = e^{\kappa\xi D/(1+\kappa)} \ge \kappa\xi \sqrt{D}/(1+\kappa)$ for any $D\ge 0$.
Hence, there exists some positive constant $C$ and $\tilde{\mu}=\xi/(1+\kappa)=\mu \Delta^2/[(\Delta^2+v^2)(1+\kappa)]$ such that
\begin{equation}
C e^{-\tilde{\mu} D} \ge \left|\braket{A_X B_Y} - \frac12(\braket{A_XP_0B_Y} + \braket{B_YP_0A_X})\right|
\end{equation}
which is the desired inequality in this special case of $w_\infty=0$.
\end{proof}

\paragraph{Remark}---
We note that our proof diverges from Hastings and Koma's \cite{HastingsKoma} most nontrivially at Eq.~\eqref{eq:MHK-non-zero-spread},
which is the reason we have an extra term $\O(w_\infty \log(1/w_\infty))$.
Note that unlike our approach, Hastings and Koma worked under the assumption that $w_n\to 0$ as $n\to\infty$;
hence, in the $n\to\infty$ limit,
they can neglect the error term
$\O(w_\infty \sqrt{\tau/\Delta})$ in Eq.~\eqref{eq:MHK-non-zero-spread} since they can choose $w_\infty$ arbitrarily close to 0,
and make the second error term $\O((1/\sqrt{\Delta \tau})e^{-\Delta \tau})$ small by choosing $\tau \sim \dist(X,Y)$.
With this assumption, Hastings and Koma arrived at an alternative bound on the decay or correlation with $\tilde{\mu}=\mu/(1+2v/\Delta)$.
In contrast to their approach, we are interested also in situations where $w_n$ is bounded from below by a non-zero constant, even as $n\to\infty$.
We thus have to optimize the parameter $\tau$ so that the terms $\O(w_\infty\sqrt{\tau})$ and $\O((1/\sqrt{\tau})e^{-\Delta \tau})$ are balanced and do not grow unboundedly.

\section{Impossibility of Dilution Algorithm for Classical Hamiltonians (Theorem~\ref{thm:imposs-dilute}) \label{sec:imposs-dilute}}

In this Appendix, we prove Theorem~\ref{thm:imposs-dilute}, namely the impossibility to dilute classical Hamiltonians with an efficient classical algorithm.

{
\renewcommand{\thethm}{\ref{thm:imposs-dilute}}
\begin{thm}[Impossibility of dilution algorithm for classical Hamiltonians]
If $\coNP \not\subseteq \NPpoly$, then
for any $\xi>0$, $\delta < 1/\sqrt{2}$, $\tilde{w} \le 1/2$, there is no classical algorithm that
given a $k$-local $n$-qubit classical Hamiltonian $H$, runs in $O(\poly(n))$ time to
find an $[r,O(n^{k-\xi}),J]$-diluter of $H$ with $\delta$-unfaithfulness, energy spread $\tilde{w}$, and any encoding $V$ that has an $O(n^{k-\xi})$-bit description. This holds for any $r$ and $J$.
\end{thm}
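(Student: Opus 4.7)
The plan is to contradict the hypothesis $\coNP \not\subseteq \NPpoly$ by turning any purported poly-time dilution algorithm $\A$ into a polynomial-time compression scheme for \textsc{VertexCover} on $k$-uniform hypergraphs of the kind ruled out by Ref.~\cite{DellvanMelkebeek}. First, I would encode an instance $(G,s)$ on an $n$-vertex $k$-uniform hypergraph $G=(V,E)$ as a classical $k$-local $n$-qubit Hamiltonian $H_G$, using one penalty term $\prod_{i\in e}\ketbra{0}^{(i)}$ per hyperedge together with a standard $O(1)$-local counter penalty (on $O(\log n)$ ancilla qubits and $O(n)$ extra terms) that enforces Hamming weight $\le s$. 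One can arrange that $H_G$ has spectral gap $\Omega(1)$ and a nonempty zero-energy groundspace if and only if $G$ admits a cover of size $\le s$, so the decision reduces to testing $\lambda_1(H_G)=0$.

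Second, I would feed $H_G$ to $\A$ with target parameter $2\xi$ in place of $\xi$, obtaining in $O(\poly(n))$ time an $[r,O(n^{k-2\xi}),J]$-diluter $\tilde H$ with the specified $\delta$-unfaithfulness and $\tilde w\le 1/2$, together with an encoding $V$ described by $O(n^{k-2\xi})$ bits. Since the coefficients of $\tilde H$ may be arbitrary reals, the next step is to round each of its $M=O(n^{k-2\xi})$ local-term coefficients to $b=\Theta(\log n)$ bits of precision, producing $\tilde H'$ with $\|\tilde H'-\tilde H\|\le MJ\cdot 2^{-b}=O(1/\poly(n))$. Invoking Lemma~\ref{lem:PPgroundspace} then yields a quasi-groundspace projector $\tilde P'$ of $\tilde H'$ with $\|\tilde P'-\tilde P\|\le \kappa$, where $\kappa$ can be driven to any polynomially small value by increasing $b$; in particular $\kappa$ can be chosen so that the effective unfaithfulness $\delta+2\kappa$ stays strictly below $1/\sqrt{2}$ and the effective energy spread stays $\le 1/2$.

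Third, I would transmit $(\tilde H',V)$ to a computationally unbounded oracle. The total message length is $M\cdot b+O(n^{k-2\xi})=O(n^{k-\xi})$ bits, which fits within the forbidden compression budget. The oracle computes any groundstate $\ket{\tilde g}$ of $\tilde H'$, applies $V^\dag$, traces out the ancilla, and measures in the computational basis; by the composed $(\delta+2\kappa)$-unfaithfulness bound, the outcome lies in the zero-energy groundspace of $H_G$ with probability at least $1-(\delta+2\kappa)^2>1/2$. This produces a candidate vertex cover that the oracle can verify in polynomial time, yielding a one-bit decision to transmit back. The whole pipeline therefore compresses hypergraph \textsc{VertexCover} into $O(n^{k-\xi})$ bits of oracle communication, contradicting Ref.~\cite{DellvanMelkebeek}.

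The step I expect to be the main obstacle, and where I would allocate the most technical care, is the rounding of $\tilde H$ in the presence of a potentially highly degenerate groundspace: since $H_G$'s zero-energy subspace is spanned by many classical covers, $\tilde H$ need not have a unique ground state, and an $O(1/\poly(n))$ perturbation can freely reshuffle eigenstates within a near-degenerate cluster. A naive single-eigenvector perturbation estimate does not suffice. This is exactly the scenario handled by the projector-level Green's-function analysis of Lemma~\ref{lem:PPgroundspace}, which bounds $\|\tilde P'-\tilde P\|$ as a whole; calibrating $b$ against both the quasi-gap inherited from the $\Omega(1)$ gap of $H_G$ and the slack afforded by $\tilde w\le 1/2$ keeps $\kappa$ small enough that all of the unfaithfulness and energy-spread budgets above line up, and the compression contradiction goes through.
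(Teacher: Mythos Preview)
Your overall strategy—compress an instance of hypergraph vertex cover via the purported dilution algorithm and invoke the Dell--van~Melkebeek lower bound—matches the paper's. The crucial difference, and the place where your argument breaks, is the encoding of the instance and the oracle's decision rule.

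You have the oracle sample a candidate string from a ground state of $\tilde H'$ and then ``verify in polynomial time'' that it is a valid cover of size $\le s$. But the oracle is never given $G$: by construction it receives only $(\tilde H',V)$, totalling $O(n^{k-\xi})$ bits, which cannot encode the $\Theta(n^k)$ hyperedges. Without the edge set the oracle has no way to test cover validity. Sending the candidate back to Player~1 for verification does not repair this either: in the YES case the groundspace of $H_G$ may be spanned by exponentially many valid covers, each carrying only exponentially small amplitude, while a single invalid string can carry weight arbitrarily close to (but below) $1/2$; the maximum-amplitude basis state of $V^\dag|\tilde g\rangle$ can then be invalid, and a deterministic Player~1 would wrongly reject. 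The paper sidesteps this entirely by using the encoding $H=2\sum_{e\in\mathcal E}\bigotimes_{i\in e}\ketbra{0}^{(i)}+\sum_i\ketbra{1}^{(i)}$, whose ground states are always \emph{minimum} vertex covers regardless of the threshold $m$, so that YES versus NO becomes purely a question of Hamming weight. The oracle then deterministically computes $\langle\tilde g|V(Q^n_{\le m}\otimes\Id_\anc)V^\dag|\tilde g\rangle$ for any ground state $|\tilde g\rangle$ of $\tilde H'$ and compares to $1/2$---a test requiring only $(n,m)$, never $G$.

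A secondary issue: your ``standard $O(1)$-local counter on $O(\log n)$ ancillas'' is not obviously realizable with an $\Omega(1)$ gap, since incrementing a binary counter touches all $\Theta(\log n)$ counter bits and hence is not $O(1)$-local. A unary or tree-style construction works but needs $\Theta(n)$ ancillas. This is fixable, but the paper's encoding makes any counter unnecessary.
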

\addtocounter{thm}{-1}
}

For this, we use a result from Ref.~\cite{DellvanMelkebeek}:

\begin{defn}[Vertex Cover]
Consider any $n$-vertex $k$-uniform hypergraph $\mathcal{G}=(\V,\mathcal{E})$, where $\V$ is the set of vertices and $\mathcal{E}\subseteq \V^k$ is the set of hyperedges.
A vertex cover on $\G$ is a subset of vertices $S\subseteq \V$ such that $\forall e\in \mathcal{E}$, $S\cap e \neq \emptyset$.
The language $k$-$\VC$ is a set consisting of tuples $(\G,m)$, where $\G$ is a $k$-uniform hypergraph with a vertex cover of $\le m$ vertices.
To \emph{decide} $k$-$\VC$ means to output ``yes'' if a given input tuple $(\G,m)\in k$-$\VC$, and ``no'' if $(\G,m)\not \in k$-$\VC$.
\end{defn}

\begin{defn}[oracle communication protocol, adapted from\cite{DellvanMelkebeek}]
An \emph{oracle communication protocol} for a decision problem is a communication protocol between two players.
Player 1 is given the input $x$ and has to run in time polynomial in the length of the input.
Player 2 (the oracle) is computationally
unbounded but is not given any part of $x$.
At the end of the protocol the first player should be able
to decide whether $x$ is accepted (i.e. the answer to the decision problem is yes).
The cost of the protocol is the number of bits of communication from the
Player 1 to Player 2.
\end{defn}

\begin{lemma}[No compressed communication protocol to decide vertex cover, Theorem 2 of \cite{DellvanMelkebeek}]
\label{lem:nocompression}
If $\coNP \not\subseteq \NPpoly$, for any $k\ge 2$ and $\xi>0$, there is no oracle communication protocol of cost $O(n^{k-\xi})$ to decide $k$-$\VC$.
This is true even when Player 1 is co-non-deterministic.
\end{lemma}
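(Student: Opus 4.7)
The plan is to prove Lemma~\ref{lem:nocompression} by contradiction: assume that a cost-$O(n^{k-\xi})$ oracle communication protocol $\Pi$ for $k$-$\VC$ exists (possibly with a co-nondeterministic Player~1), and derive $\coNP \subseteq \NPpoly$, contradicting the hypothesis. Following the kernelization-lower-bound framework of Dell and van Melkebeek, this combines two ingredients: (i) an efficiently computable OR-cross-composition that packs many instances of $k$-$\VC$ into a single instance whose vertex count grows slowly with the number of inputs, and (ii) a distillation theorem (in the style of Fortnow--Santhanam) stating that an OR-distillation of an NP-hard language with sublinear per-instance output size forces $\coNP \subseteq \NPpoly$.

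First I would construct the cross-composition. Given $t$ instances $(\G_1,m),\dots,(\G_t,m)$ of $k$-$\VC$ on $n$ vertices (after a standard padding argument equalizing $n$ and $m$), identify each instance with a grid point $(i_1,\dots,i_k)\in [t^{1/k}]^k$. The composed hypergraph $\G^\star$ has $k$ disjoint ``axis blocks'' $B_1,\dots,B_k$, each partitioned into $t^{1/k}$ slices of $n$ vertices, for a total of $N = O(n\, t^{1/k})$ vertices. The hyperedges of each $\G_j$ are placed across the slices of $B_1,\dots,B_k$ corresponding to $j$'s grid coordinates, and an appropriate collection of ``selector'' gadget hyperedges is added so that any vertex cover of size $m^\star = f(n,m,t)$ in $\G^\star$ must concentrate on a single grid point, yielding a cover of the corresponding $\G_j$. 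This delivers a polynomial-time reduction such that $(\G^\star,m^\star)\in k$-$\VC$ iff at least one $(\G_j,m)\in k$-$\VC$, with the number of vertices of $\G^\star$ bounded polynomially by $n$ and $t^{1/k}$.

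Next I would feed $(\G^\star,m^\star)$ into the hypothesized protocol $\Pi$, using Player~1 to simulate both the cross-composition (deterministic polynomial time) and $\Pi$ itself. The total communication is $O(N^{k-\xi}) = O(n^{k-\xi}\,t^{(k-\xi)/k})$ bits. Each original $k$-$\VC$ instance has bit-length $\Theta(n^k)$, so the total input bit-length is $\Theta(t\,n^k)$, and the ratio of communication to input bit-length is $O(t^{-\xi/k} n^{-\xi})$, which tends to $0$ for any fixed $\xi>0$ once $t$ is a sufficiently large polynomial in $n$ (e.g.\ $t=n^{k}$). This yields a co-nondeterministic polynomial-time OR-distillation of $k$-$\VC$ whose per-instance output is strictly sublinear in the input size. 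Since $k$-$\VC$ is NP-hard for every $k\ge 2$, invoking the Fortnow--Santhanam distillation theorem (in its co-nondeterministic version, as strengthened by Dell--van Melkebeek themselves) then gives $\coNP \subseteq \NPpoly$, contradicting the standing hypothesis.

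The main obstacle is the design of the cross-composition so that $N$ really scales as $n\,t^{1/k}$ rather than $n\,t^{1/(k-1)}$ or $n\,t$; this $t^{1/k}$ dependence is what uses $k$-uniformity crucially and is what matches the $n^{k-\xi}$ protocol cost to produce a nontrivial distillation. Concretely, the delicate step is engineering the selector gadgets so that (a) the mandatory part of an optimal cover ``selects'' exactly one grid point $(i_1,\dots,i_k)$, and (b) no spurious covers arise by mixing vertices from slices across distinct grid points along different axes. A secondary subtlety is carefully tracking the co-nondeterminism: the cross-composition must itself be deterministic, while Player~1's nondeterministic guesses are used only inside $\Pi$, so that the resulting distillation fits the co-nondeterministic Fortnow--Santhanam framework. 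If the $t^{1/k}$-bound on vertex blowup cannot be achieved, the argument degrades to a weaker per-instance bound that fails to contradict $\coNP\not\subseteq\NPpoly$.
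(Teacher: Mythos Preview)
The paper does not prove this lemma at all: it is quoted verbatim as Theorem~2 of Dell and van Melkebeek and invoked as a black box in the proof of Theorem~\ref{thm:imposs-dilute}. There is therefore no proof in the paper to compare your proposal against.

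For what it is worth, your sketch is a faithful outline of the actual Dell--van Melkebeek argument: the two ingredients are exactly (i) an OR-composition that embeds $t$ instances of $k$-$\VC$ into one instance on $O(n\,t^{1/k})$ vertices via a $k$-dimensional grid layout, and (ii) the complementary witness lemma (the co-nondeterministic strengthening of Fortnow--Santhanam distillation) that converts sublinear-per-instance compression of an NP-hard OR into $\coNP\subseteq\NPpoly$. You have also correctly identified the crux---achieving the $t^{1/k}$ rather than $t^{1/(k-1)}$ vertex blowup is precisely what makes the bound tight at $n^{k-\xi}$, and the selector-gadget design is indeed the nontrivial combinatorial step. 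But none of this appears in the present paper; if you wish to include a proof you are reproducing external work, and should cite it as such rather than present it as original.
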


We also use Lemma~\ref{lem:PPgroundspace} that bounds the error of groundspace projectors due to perturbations.
This Lemma is first stated in Sec.~\ref{sec:comp-defns} and proved in Sec.~\ref{sec:PPgroundspace-proof}.
We restate it here for convenience.

{
\renewcommand{\thelemma}{\ref{lem:PPgroundspace}}
\begin{lemma}[Error bound on perturbed groundspace (restatement)]
Let $\tilde{H}$ and $\tilde{H}'$ be two Hamiltonians.
Per Def.~\ref{defn:gap}, let $\tilde{P}$ project onto a quasi-groundspace of $\tilde{H}$ with energy spread $\tilde{w}$ and quasi-spectral gap $\gamma$.
Assume $\tilde{w}\le 1/2$ and $\|\tilde{H}' - \tilde{H}\| \le \kappa$, where $\kappa \le (1-\tilde{w})\gamma/8$.
Then there is a quasi-groundspace projector $\tilde{P'}$ of $\tilde{H}'$ with quasi-spectral gap at least $\gamma'$, comprised of eigenstates of $\tilde{H}'$ up to energy at most $\lambda_1(\tilde{H}') + \tilde{w}'\gamma'$, where
\begin{equation}
\gamma' > \gamma-2\kappa, \quad
\tilde{w}'\gamma' \le \tilde{w}\gamma + 2\kappa,
\quad \text{and} \quad
\|\tilde{P}'-\tilde{P}\| < \frac{32\kappa}{\gamma}.
\end{equation}
\end{lemma}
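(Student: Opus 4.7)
\textbf{Proof proposal for Lemma \ref{lem:PPgroundspace}.} My plan is to take $\tilde{P}'$ to be the spectral projector of $\tilde{H}'$ onto its lowest $k := \rank(\tilde{P})$ eigenstates, and then verify the three claimed bounds separately. The two eigenvalue-level statements (the bounds on $\gamma'$ and $\tilde{w}'\gamma'$) will follow almost immediately from Weyl's inequality, while the operator-norm bound $\|\tilde{P}' - \tilde{P}\| < 32\kappa/\gamma$ will be handled via the standard Green's-function contour-integral representation of spectral projectors used in the perturbative-gadget analyses of \cite{KKR06,OliveiraTerhal}; the Davis--Kahan $\sin\Theta$ theorem would be an equally viable alternative.

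First, for the spectral structure of $\tilde{P}'$: applying Weyl's theorem to $\tilde{H}' = \tilde{H} + (\tilde{H}'-\tilde{H})$ gives $|\lambda_i(\tilde{H}') - \lambda_i(\tilde{H})| \le \kappa$ for every $i$. Combining this with the hypotheses $\lambda_k(\tilde{H}) \le \lambda_1(\tilde{H}) + \tilde{w}\gamma$ and $\lambda_{k+1}(\tilde{H}) \ge \lambda_1(\tilde{H}) + \gamma$ immediately yields $\lambda_{k+1}(\tilde{H}') - \lambda_1(\tilde{H}') \ge \gamma - 2\kappa$ and $\lambda_k(\tilde{H}') - \lambda_1(\tilde{H}') \le \tilde{w}\gamma + 2\kappa$, which are exactly the required inequalities. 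To certify that $\tilde{P}'$ is a legitimate quasi-groundspace projector per Def.~\ref{defn:gap}, I will check that $\tilde{w}' < 1$: the hypotheses $\tilde{w}\le 1/2$ and $\kappa \le (1-\tilde{w})\gamma/8$ together give $\tilde{w}' \le (\tilde{w}\gamma + 2\kappa)/(\gamma - 2\kappa) \le (3\tilde{w}+1)/(3+\tilde{w}) \le 5/7$, so $\tilde{P}'$ is well-defined as a spectral projector of the same rank as $\tilde{P}$ (the assumption also rules out any eigenvalue collision between the low and high clusters under the perturbation).

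For the operator-norm bound, I will write both projectors as contour integrals of resolvents, $\tilde{P} = \tfrac{1}{2\pi i}\oint_\Gamma (z-\tilde{H})^{-1} dz$ and likewise for $\tilde{P}'$, where $\Gamma$ is a closed rectifiable contour enclosing precisely the $k$ lowest eigenvalues of both Hamiltonians and none of the higher ones (the upper bound on $\kappa$ leaves a gap of at least $3(1-\tilde{w})\gamma/4$ between the two clusters in $\tilde{H}'$, so such a $\Gamma$ exists). Applying the resolvent identity under the integral yields
\begin{equation}
\tilde{P}' - \tilde{P} = \frac{1}{2\pi i}\oint_\Gamma (z-\tilde{H}')^{-1}(\tilde{H}'-\tilde{H})(z-\tilde{H})^{-1}\, dz.
\end{equation}
A convenient choice of $\Gamma$ is a vertical line at $\Re(z) = \lambda_1(\tilde{H}) + (1+\tilde{w})\gamma/2$, closed by a semicircle at infinity on the left (whose contribution vanishes because the integrand decays as $|z|^{-2}$). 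On this vertical line, since $\tilde{H}$ and $\tilde{H}'$ are Hermitian, $\|(z-\tilde{H})^{-1}\|$ and $\|(z-\tilde{H}')^{-1}\|$ are dictated by the distance from $z=\Re(z)+is$ to the spectrum, and both are bounded by $1/\sqrt{\rho^2 + s^2}$ for $\rho = \Theta((1-\tilde{w})\gamma)$. Plugging in $\|\tilde{H}'-\tilde{H}\|\le \kappa$ and evaluating the resulting one-dimensional integral $\int_{-\infty}^\infty ds/(\rho^2+s^2) = \pi/\rho$ produces a bound of the form $(\text{const})\cdot \kappa/\gamma$; using $\tilde{w}\le 1/2$ and $\kappa \le (1-\tilde{w})\gamma/8$ to lower-bound $\rho$ gives the claimed $32\kappa/\gamma$ with substantial slack (a quick estimate gives closer to $3\kappa/\gamma$, consistent with what Davis--Kahan would deliver).

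The main obstacle I anticipate is not conceptual but bookkeeping: keeping track of which cluster each resolvent is separated from, ensuring the semicircle-at-infinity contribution really does vanish at the right rate, and pinning down the constant on the vertical-line integral so it matches the stated $32$ rather than something weaker. A secondary care point is that, strictly speaking, the lemma asserts a strict inequality $\gamma' > \gamma - 2\kappa$, but Weyl only gives the weak inequality; I would absorb this by allowing $\gamma'$ to denote the actual spectral gap of $\tilde{P}'$ (which exceeds $\gamma - 2\kappa$ except in a measure-zero family of degenerate cases) rather than the lower bound.
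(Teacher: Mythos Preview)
Your approach is correct and genuinely different from the paper's. The paper proves this as a special case of a more general Lemma~\ref{lem:PPgsv2}, whose proof uses the self-energy machinery of \cite{KKR06,OliveiraTerhal}: it splits $\|\tilde{P}'-\tilde{P}\|$ into two pieces, $\|\tilde{P}' - \Pi_-\tilde{P}'\Pi_-\|$ (bounded via an energy argument giving $\le 4\kappa/((1-\tilde{w})\gamma)$) and $\|\Pi_-\tilde{P}'\Pi_- - \tilde{P}\|$ (bounded via a circular contour of radius $(1+\tilde{w})\gamma/2$ and the series expansion of $\Sigma_-(z)$). Your route---applying the resolvent identity once and integrating over a vertical line closed at infinity---is more direct, avoids the self-energy apparatus entirely, and indeed yields a substantially sharper constant (your back-of-envelope $\sim 3\kappa/\gamma$ is correct). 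The paper's approach has the advantage of recycling machinery it needs anyway for the perturbative-gadget claims in Appendix~\ref{sec:degree-reduction-unbounded}; yours is the cleaner standalone argument. On the strict-vs-weak inequality for $\gamma'$: you are right that Weyl only gives $\ge$, and in fact the paper's own more general Lemma~\ref{lem:PPgsv2} states $\tilde{\gamma}\ge\gamma-2\kappa$, so the strict inequality in the restated Lemma~\ref{lem:PPgroundspace} appears to be a minor inconsistency in the paper rather than a defect in your argument.
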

\addtocounter{lemma}{-1}
}

\begin{proof}[\textbf{Proof of Theorem~\ref{thm:imposs-dilute}}]
Suppose for the sake of contradiction that there is a polynomial-time classical algorithm that
given a $k$-local $n$-qubit classical Hamiltonian $H$, runs in $O(\poly(n))$ time to
find an $[r,O(n^{k-\xi}),J]$-diluter of $H$ with unfaithfulness $\delta < 1/\sqrt{2}$, energy spread $\tilde{w}\le 1/2$, and some encoding $V$ described by $O(n^{k-\xi})$ classical bits.
We can then construct an oracle communication protocol to decide $k$-$\VC$:
\begin{enumerate}
\item Player 1 takes the input $(\G,m)$ for $k$-$\VC$, where $\mathcal{G}=(\V,\E)$  is a $k$-uniform  $n$-vertex hypergraph, and encodes $\G$ as a  $k$-local  $n$-qubit classical Hamiltonian $H$.
Specifically, we can encode the problem as the following Hamiltonian
\begin{gather}
H = 2H_\text{con} + H_\text{count}, \nonumber \\
\text{where} \quad H_\text{con} =  \sum_{(i_1,\ldots,i_k) \in \mathcal{E}} \ketbra{0}^{(i_1)}\otimes \ketbra{0}^{(i_2)}\otimes \cdots \otimes \ketbra{0}^{(i_k)}
\quad \text{and}
\quad
H_\text{count} = \sum_{i=1}^n \ketbra{1}_i.
\end{gather}
The eigenstates of $H$ are classical strings $\ket{z_1z_2\cdots z_n}$,
where $z_i=1$ means that vertex $i$ is chosen to be in the vertex cover.
Note $H_\text{con}$ ensures that every hyperedge in $\mathcal{E}$ is covered by some vertex, and $H_\text{count}$ penalizes any extra vertices used to cover the hypergraph.
Note that for any computational basis state with energy $E$ not representing a vertex cover (i.e. for some hyperedge $e\in \mathcal{E}$, $z_i=0$ for all $i\in e$), there is a state with energy $E-1$ (by changing one of the $z_i=1$ for some $i\in e$ to cover the hyperedge $e$).
Hence, the groundstates of $H$ represent minimum vertex covers on $\mathcal{G}$.
Let $P=\sum_\mu \ketbra{g_\mu}$ be the projector onto groundstates $\ket{g_\mu}$ of $H$, where each $\ket{g_\mu}$ is written in the computational basis and represents a minimum vertex cover on $\mathcal{G}$.
Observe that $H$ is spectrally gapped with energy spread $w=0$ and spectral gap $\gamma=1$, since any state not representing a minimum vertex cover would incur an energy penalty of at least 1.
\item Player 1 then uses the supposed polynomial-time classical algorithm to generate a diluter $\tilde{H}$ of $H$ with encoding $V$, unfaithfulness  $\delta$ and energy spread $\tilde{w}$, such that $\tilde{H}=\sum_{i=1}^M \tilde{H}_i$, where $\tilde{H}_i$ are $O(1)$-local and $M=O(n^{k-\xi})$.
Furthermore, Player 1 takes each term $\tilde{H}_i$ and rewrites it as $\tilde{H}_i'$ in $s$-bit precision, producing $\tilde{H}'=\sum_i \tilde{H}_i'$.
Noting that $\|\tilde{H}'-\tilde{H}\| \le O(M/2^s)$, we can simply choose $s=O(\log_2 (M/\kappa)) = O(\log_2 (n/\kappa))$ to ensure $\|\tilde{H}'-\tilde{H}\|\le \kappa$, for some
\begin{equation}
\kappa < \min\{(1-\tilde{w})/8, (1/\sqrt{2} - \delta)/64\}.
\end{equation}
Player 1 then communicates $(\tilde{H}', V, n, m)$ to Player 2, incurring a protocol cost of $O(Ms) = O(n^{k-\xi}\log n)$.

\item Player 2 uses their unbounded computational resource to diagonalize $\tilde{H}'$ and find all its eigenstates.
Let us first bound the distance between the groundspace of $\tilde{H}'$ and that of $\tilde{H}$.
To that end, let us denote $\tilde{P}$ as the quasi-groundspace projector of $\tilde{H}$ with quasi-spectral gap $\gamma=1$ and energy spread $\tilde{w}$.
According to Lemma~\ref{lem:PPgroundspace}, since $\tilde{w}\le 1/2$ and $\|\tilde{H}'-\tilde{H}\|\le \kappa \le (1-\tilde{w})/8$,
there is a quasi-groundspace projector $\tilde{P}'$ of $\tilde{H}'$ comprised of eigenstates of $\tilde{H}'$ with energy at most $\lambda_1(\tilde{H}')+ \tilde{w}+2 \kappa$, satisfying $\|\tilde{P}'-\tilde{P}\| < 32\kappa$.
Furthermore, since $\tilde{H}$ is a diluter of $H$ with unfaithfulness $\delta$, we have $\|\tilde{P}-V P V^\dag \tilde{P}\|\le \delta$.
Denoting $P'=V P V^\dag$, this implies $\|\tilde{P}'-P'\tilde{P}'\| \le \|\tilde{P}'-\tilde{P} + \tilde{P} - P'\tilde{P}+  P'\tilde{P} - P'\tilde{P}'\| \le \delta + 64\kappa$.
Note since Player 1 had chosen $\kappa < (1/\sqrt{2}-\delta)/64$, we have $\delta+64\kappa < 1/\sqrt{2}$, which implies $\|\tilde{P}'-VPV^\dag \tilde{P}'\| = \|\tilde{P}'- P' \tilde{P}'\|^2 < 1/2$.

We now show how Player 2 can decide whether $(\G,m) \in k$-$\VC$ by using any groundstate $\ket{\tilde{g}}$ of $\tilde{H}'$.
Let us denote $Q_{\le m}^n$ as the projector onto all computational basis states on $n$ qubits with $\le m$ qubits in the state $\ket{1}$.
Then we claim
\begin{equation}
(\G,m) \in k\text{-}\VC
~ \Longleftrightarrow ~
\braket{\tilde{g}|V(Q_{\le m}^n \otimes \Id_\anc)V^\dag |\tilde{g}} \ge \frac12
\text{ for any groundstate } \ket{\tilde{g}} \text{ of } \tilde{H}'.
\label{eq:Player2-decision}
\end{equation}
This is because:
\begin{itemize}
\item[(a)] If $(\G,m)\in k$-$\VC$, the original groundspace projector $P$ of $H$ contains only states with $\le m$ 1's on the original $n$ qubits.
Then $Q_{\le m}^n - P$ is positive semi-definite, and
$\braket{\psi|V Q_{\le m}^n V^\dag |\psi} \ge \braket{\psi|V P V^\dag|\psi}$ for any state $\ket{\psi}$. Note for any groundstate $\ket{\tilde{g}}$ of $\tilde{H}'$, we have $\tilde{P}'\ket{\tilde{g}}=\ket{\tilde{g}}$, which means
\begin{gather}
\frac12 > \|(\tilde{P}'-VPV^\dag \tilde{P}')\ket{\tilde{g}}\|^2 = 1 - \braket{\tilde{g}|VPV^\dag |\tilde{g}}
\nonumber \\
 \Longrightarrow \quad
\braket{\tilde{g}|V Q_{\le m}^n V^\dag|\tilde{g}} \ge \braket{\tilde{g}|V P V^\dag|\tilde{g}} > \frac12.
\end{gather}
\item[(b)] If $(\G,m)\not \in k$-$\VC$, the original groundspace projector $P$ contains only states with $> m$ 1's on the original $n$ qubits.
Then $\Id - Q_{\le m}^n - P$ is positive semi-definite, and $\braket{\psi|Q_{\le m}^n|\psi} \le 1-\braket{\psi|P|\psi}$ for any state $\ket{\psi}$.
For any groundstate $\ket{\tilde{g}}$ of $\tilde{H}'$, we have similarly
\begin{gather}
\frac12 > \|(\tilde{P}'-V P V^\dag\tilde{P}')\ket{\tilde{g}}\|^2 = 1-\braket{\tilde{g}|V P V^\dag|\tilde{g}} \ge \braket{\tilde{g}|V Q_{\le m}^n V^\dag|\tilde{g}} \nonumber \\
\Longrightarrow \quad
\braket{\tilde{g}|V Q_{\le m}^n V^\dag|\tilde{g}} \not\ge \frac12.
\end{gather}
\end{itemize}
In other words, Player 2 may take any groundstate $\ket{\tilde{g}}$ of $\tilde{H}'$, compute its expectation value of $V Q_{\le m}^n V^\dag$, and transmit the decision yes to Player 1 if and only if $\braket{\tilde{g}|V Q_{\le m}^n V^\dag|\tilde{g}} \ge 1/2$.
\item Player 1 receives the decision from Player 2, which decides whether $(\G,m) \in k$-$\VC$.
\end{enumerate}
Since this oracle communication protocol can decide the vertex cover problem with $O(n^{k-\xi}\log n) = O(n^{k-\xi/2})$ cost for any $\xi > 0$, it directly contradicts Lemma~\ref{lem:nocompression}.
Hence, for any $\delta < 1/\sqrt{2}$ and $\tilde{w}\le 1/2$, no polynomial-time classical algorithm exists to find diluters of classical Hamiltonians with $\delta$-unfaithfulness,  energy spread $\tilde{w}$, and encoding $V$ that is described by $O(n^{\xi-k})$ classical bits.
\end{proof}

\section{Incoherent Degree-Reduction and Dilution}

\subsection{Degree-Reduction of Classical Hamiltonians (Proposition~\ref{prop:classical-deg-reduct})\label{sec:classical-degree-reduction}}

In the classical world, degree-reduction of Constraint Satisfaction Problems is famously used for proving the important result of PCP theorem~\cite{dinur}.
Here we show that the same construction can be used to degree-reduce $k$-local ``classical" Hamiltonians, where all terms are diagonal in the computational basis, albeit \emph{incoherently}.

{
\renewcommand{\theprop}{\ref{prop:classical-deg-reduct}}
\begin{prop}[Incoherent degree-reduction of classical Hamiltonian]
Consider an $n$-qudit $k$-local \emph{classical} Hamiltonian $H = \sum_{S\subset \{1,\ldots, n\}} C_S$, where each $C_S:\{z_i:i\in S\} \to [0, 1]$ is a function of $d$-ary strings of length $|S|\le k$ representing states of qudits in $S$.
Let the number of terms in $H$ be $M_0=|\{S\}|=O(n^k)$.
Then there is a $k$-local $[3,O(kM_0),O(1)]$-degree-reducer of $H$
with $0$-unfaithfulness, no energy spread, and trivial encoding $V=\Id$.
\end{prop}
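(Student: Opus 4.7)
The plan is to mimic Dinur's classical degree-reduction: for each qudit $i$, replicate it once per constraint that touches it, reroute each original constraint to act on the designated copies, and wire the copies of each $i$ together with local equality checks so the unique low-energy configurations are exactly the ``consistent'' ones.

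Concretely, for each qudit $i$ let $d_i = |\{S : i \in S\}|$ and introduce $d_i$ copies $i^{(S)}$, one for each $S \ni i$. The total number of copies is $N = \sum_i d_i \le k M_0$. For each original term define $\tilde{C}_S$ to act on $\{i^{(S)} : i \in S\}$ in exactly the same way $C_S$ acted on $\{i : i \in S\}$. For each $i$, pick an arbitrary cyclic ordering of its copies $i^{(S_1)}, i^{(S_2)}, \ldots, i^{(S_{d_i})}$ and add 2-local equality-check terms $E_{i^{(S_a)}, i^{(S_{a+1}\bmod d_i)}} = \Id - \sum_{z\in\{0,\ldots,d-1\}} \ketbrat{zz}{zz}$ scaled by the quasi-spectral gap $\gamma$ of $H$ (a constant, since $\|C_S\|\le 1$). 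The resulting $\tilde{H}$ has each copy participating in at most one inherited original term and the two equality checks of its cycle, giving degree $3$; locality is $\max(k,2)=k$; the number of terms is $M_0 + N = O(kM_0)$; and each term has norm $O(1)$.

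Next I will identify the encoding. Pick for each $i$ one designated copy $i^{(S^\ast(i))}$; declare the Hilbert space of the designated copies to be the ``original'' space $\H$, and that of the remaining $N-n$ copies the ancilla space $\H_\anc$. Then $V = \Id$ under the natural identification of tensor factors. This makes the statement of $0$-unfaithfulness meaningful and matches Definition~\ref{defn:hamsimul-incoherent}.

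Finally I would analyze the spectrum of $\tilde{H}$. Any violation of an equality check costs at least $\gamma$, so among equality-satisfying configurations the spectrum of $\tilde{H}$ is isomorphic to that of $H$ via the map $z \mapsto \bigotimes_{i,S\ni i}\ket{z_i}_{i^{(S)}}$. Its groundspace projector $\tilde{P}$ is therefore $\tilde{P} = \sum_{z \in GS(H)} \ketbra{z \text{ replicated}}$, with quasi-spectral gap at least $\gamma$ and no additional energy spread. Now $V(P\otimes \Id_\anc) V^\dag$ projects onto configurations whose designated copies lie in the original groundspace, regardless of the remaining copies. Since every state in $\tilde{P}$ satisfies all equality checks and has its designated copies in a groundstate of $H$, we have $V(P\otimes \Id_\anc)V^\dag \tilde{P} = \tilde{P}$, so $\|\tilde{P} - V(P\otimes \Id_\anc)V^\dag \tilde{P}\| = 0$, giving unfaithfulness $\delta=0$.

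The only mildly subtle step is the spectral-gap accounting: I must pick the equality-check strength to be at least $\gamma$ (not merely unit) so that condition 1 of Definition~\ref{defn:hamsimul-incoherent} is met while keeping $J=O(1)$; since $\gamma\le\|H\|$ is bounded by the per-term norms times the degree of the original Hamiltonian and in the intended setting $\gamma$ is a constant, this is unproblematic. Verifying coherence is \emph{not} preserved is immediate: a superposition of two distinct groundstates of $H$ is mapped to a superposition of two macroscopically distinct replicated configurations, which cannot be written as $V(\ket{g}\otimes \ket{a})$ for any ancilla state, confirming the construction is genuinely incoherent, in line with Prop.~\ref{prop:incoherent-tree}.
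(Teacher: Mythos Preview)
Your proposal is correct and follows essentially the same approach as the paper: replicate each qudit once per incident constraint, reroute the constraints to the copies, tie copies together with local equality checks scaled by $\gamma$, and designate one copy per cluster as the ``original'' to get trivial encoding and $0$-unfaithfulness. The only cosmetic differences are that the paper writes the equality penalty as $\tfrac{\gamma}{4}(\tilde z_{i,j}-\tilde z_{i,j+1})^2$ rather than a projector, and links copies along a path rather than a cycle; both yield degree at most $3$ and the same spectral-gap analysis.
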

\addtocounter{prop}{-1}
}

\begin{proof}
Since the original Hamiltonian $H$ is diagonalizable in the computational $z$-basis, from now on we will slightly abuse notation to use $z_i$ to denote both the operator $z_i=\sum_{z=0}^{d-1}\ketbra{z}_i$ and (the state of) the $i$-th qudit interchangeably.
To construct a degree-reducer of $H$, we first replace each original qudit $z_i$ with a cluster of $r_i=|\{S\ni i: S\}|$ qudits, each of which we denote $\tilde{z}_{i,S\ni i}$, corresponding to the term $C_S$ that qudit $i$ participates in the original Hamiltonian.
We then add a ring of equality constraints to make sure that all  $r_i$ qudits in each cluster $i$ have the same value in the computational basis.
Subsequently, by substituting each $k$-local term in the original Hamiltonian with a $k$-local term that acts on one qudit per the cluster corresponding to an original qudit, we are able to produce an equivalent Hamiltonian with degree of at most 3.
More concretely,
if we denote $\gamma$ as the spectral gap of the original Hamiltonian $H$, this construction yields the following sparsifier Hamiltonian:
\begin{gather}
\tilde{H} = H_\text{con} + H_\text{eq},\nonumber \\
\text{where} \quad H_\text{con} = \sum_S C_S(\{ \tilde{z}_{i,S} : i\in S\}) \quad \text{and} \quad  H_\text{eq} = \frac{\gamma}{4} \sum_{i=1}^n \sum_{j=1}^{r_i-1} (\tilde{z}_{i,j}-\tilde{z}_{i, j+1})^2.
\end{gather}
Note that in $H_\text{eq}$, we denote $j$ as some index labeling the $r_i$ qudits in the $i$-th cluster corresponding to different $S\ni i$.
It can then be seen that every qudit $\tilde{z}_{i,S}$ has at most degree 3, appearing in the term $C_S$ as well as up to two 2-local terms in a ring of equality constraints.
Since for each $k$-local term $C_S$, we introduce $k$ qudits, then the number of terms (as well as the number of qudits) in the sparsifier is $M=O(kM_0)$.
Additionally, since the constraint functions $\|C_S\|\le 1$ are bounded in the original Hamiltonian by assumption, then both $\gamma$ and thus strength of individual terms in sparsifier $\tilde{H}$ is also bounded by $J=O(1)$.

It remains to show that $\tilde{H}$ gap-simulates $H$.
Let us denote $\ket{\vect{z}} = \bigotimes_{i=1}^n \ket{z_i}$, for every $\vect{z}\in \mathds{Z}_d^n$, which are computational basis states of the original Hilbert space $\H$ and are also eigenstates of $H$.
Then let us denote a corresponding state $\ket{\tilde{\vect{z}}} = \bigotimes_{i=1}^n \bigotimes_{S\ni i} \ket{\tilde{z}_{i,S\ni i} = z_i}$ in the extended Hilbert space.
We can then show that for any groundstate  $\ket{\vect{z}}$ of the original Hamiltonian, where $H\ket{\vect{z}}=E_{\vect{z}}\ket{\vect{z}}$ for $E_{\vect{z}} \le w\gamma$, the corresponding state $\ket{\tilde{\vect{z}}}$ is an eigenstate of $\tilde{H}$ with the same energy $\ket{\tilde{\vect{z}}} = E_{\vect{z}}\ket{\tilde{\vect{z}}}$.
Any other state will have energy at least $\gamma$ since it must either correspond to an excited state of $H$ or violate a constraint in $H_\text{eq}$.
Hence, $\tilde{H}$ reproduces the spectrum of $H$ up to $\gamma$ (we can easily increase the range of spectrum it reproduces by increasing the strength of $H_\text{eq}$).
Lastly, by identifying one qudit in $i$-th cluster as the original qudit $i$ (and hence yielding a trivial encoding), we can easily see that $\delta \equiv \|\tilde{P}-P\tilde{P}\|=0$ where $P=\sum_{E_{\vect{z}}\le w\gamma} \ketbra{\vect{z}}$ and $\tilde{P} = \sum_{E_{\vect{z}}\le w\gamma} \ketbra{\tilde{\vect{z}}}$.
Therefore, $\tilde{H}$ is a $k$-local $[3,O(km),O(1)]$-degree-reducer of $H$ with zero unfaithfulness and identical low-energy spectrum.
\end{proof}

\subsection{Incoherent Tree-Graph Diluter and Degree-Reducer of $H_\oneone$ (Proposition~\ref{prop:incoherent-tree})\label{sec:incoherenttree}}

While Proposition~\ref{prop:classical-deg-reduct} shows that it is always possible to \emph{constructively} degree-reduce classical Hamiltonians incoherently,
we know this is not possible for dilution due to Theorem~\ref{thm:imposs-dilute}.
Nevertheless, in some cases, such as our example Hamiltonian $H_\oneone$,
we show that a non-trivial incoherent diluter exists.
The key idea behind our construction of this sparsifier of $H_\oneone$ is to use additional ancilla and constraints to simulate a counting operation, so that only states with less than two excitations on the original set of qubits do not lead to violation of any constraints.
By using ancilla qubits as memory and arranging them in a recursive, tree-like geometry, we are able to limit the maximum degree and the number of interactions required in the sparsified Hamiltonian to perform this counting operation.

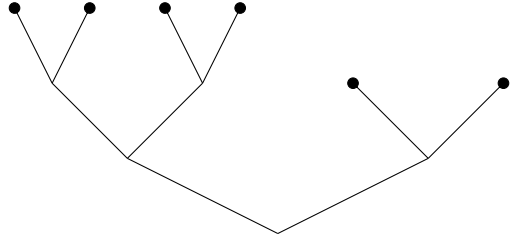
\begin{figure}[htb]
\centering
\begin{tikzpicture}[level distance=10mm,grow'=up]
  \tikzstyle{level 1}=[sibling distance=40mm]
  \tikzstyle{level 2}=[sibling distance=20mm]
  \tikzstyle{level 3}=[sibling distance=10mm]
 \coordinate
  child {
    child {
      child {[fill] circle (2pt)}
      child {[fill] circle (2pt)}
    }
    child {
      child {[fill] circle (2pt)}
      child {[fill] circle (2pt)}
    }
  }
  child {
    child {[fill] circle (2pt)}
    child {[fill] circle (2pt)}
  };
\end{tikzpicture}
\caption{\label{fig:tree}Example incoherent degree-reducer/diluter geometry for $H_\oneone$ on 6 qubits, where $\bullet$ on the leaf nodes denote the original qubits, and the ancilla qubits are located at the internal nodes.}
\end{figure}

{
\renewcommand{\theprop}{\ref{prop:incoherent-tree}}
\begin{prop}[$0$-unfaithfulness incoherent dilution and DR for $H_\oneone$]
There is a 3-local incoherent $[2,n-1,1]$-diluter of $H_\oneone$ with 0-unfaithfulness, energy spread $\tilde{w}=0$, and trivial encoding.
This is also an incoherent $[2,n-1,1]$-degree-reducer of $H_\oneone$.
\end{prop}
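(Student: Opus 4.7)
The plan is to construct $\tilde{H}_\oneone$ by embedding the $n$ original qubits as the leaves of a full binary tree $T$ with $n-1$ internal nodes, placing one fresh ancilla qubit $a_v$ at each internal node $v$. The intended semantics is that $a_v$ records whether the subtree rooted at $v$ contains exactly one excitation; configurations in which two or more leaves of the subtree are excited are meant to be forbidden. For each internal node $v$ with children $v_L,v_R$ (either leaves or other internal ancillas), I would take the local term $C_v$ to be the 3-local projector onto the invalid subspace of $(a_v,a_{v_L},a_{v_R})$, namely $C_v=\Id-\Pi_v$, where $\Pi_v=\ketbra{000}+\ketbra{110}+\ketbra{101}$ selects the three configurations satisfying $a_v=a_{v_L}+a_{v_R}\leq 1$. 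The diluter is then $\tilde{H}_\oneone=\sum_{v\text{ internal}} C_v$.

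The parameter counts fall out immediately: a full binary tree with $n$ leaves has exactly $n-1$ internal nodes, giving $M=n-1$ terms, each 3-local with $\|C_v\|=1$. Each qubit belongs to at most two $C_v$'s---a non-root internal node $v$ appears only in $C_v$ and $C_{\mathrm{parent}(v)}$, while the root and each leaf appear in exactly one term---so the maximum degree is $r=2$. Since the original qubits are identified with the leaves and no encoding is applied to them, the encoding is trivial, $V=\Id$.

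Next, I would characterize the spectrum. Any zero-energy state must be annihilated by each projector $C_v$ and hence lies in the computational basis of the joint system. For any leaf assignment $x\in\{0,1\}^n$ of Hamming weight at most $1$, the bottom-up recursion $a_v=a_{v_L}+a_{v_R}$ stays in $\{0,1\}$ everywhere and yields a unique ancilla assignment $\mathrm{anc}(x)$ satisfying every $C_v$. Conversely, for any $x$ of Hamming weight $\geq 2$, I would pick two excited leaves $\ell_1,\ell_2$ and let $v^*$ be their lowest common ancestor; a short induction along the paths $\ell_1\to v^*$ and $\ell_2\to v^*$ shows that if no $C_v$ is violated strictly below $v^*$, then both children of $v^*$ must carry ancilla value $1$, which in turn violates $C_{v^*}$. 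Hence the groundspace is the $(n+1)$-dimensional space with orthonormal basis $\{\ket{x}\otimes\ket{\mathrm{anc}(x)} : |x|\leq 1\}$, all other eigenvalues are at least $1$ by positivity of the $C_v$, so the spectral gap is $\gamma=1$ and the energy spread is $\tilde{w}=0$. Writing $\tilde{P}$ for this groundspace projector, the identity $(P\otimes\Id_\anc)\tilde{P}=\tilde{P}$ is read off the explicit basis, proving $0$-unfaithfulness.

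The step I expect to deserve the most care is articulating why this construction is necessarily only \emph{incoherent}. The effective isometry $\sum_{|x|\leq 1}c_x\ket{x}\mapsto\sum_x c_x\ket{x}\otimes\ket{\mathrm{anc}(x)}$ tags each basis state of the original groundspace with a distinct computational history $\ket{\mathrm{anc}(x)}$, and these histories are mutually orthogonal for distinct $x$; consequently $\tilde{P}$ cannot be written as $P\otimes P_\anc$ for any single ancilla projector, and the incoherence is bounded below by a positive constant. This is consistent with Lemma~\ref{lem:equiv} being inapplicable here, since the groundspace of $H_\oneone$ is degenerate. Finally, since $H_\oneone$ has $\binom{n}{2}=\Omega(n^2)$ terms of maximum degree $n-1$, the same $\tilde{H}_\oneone$ simultaneously witnesses the incoherent $[2,n-1,1]$-diluter and $[2,n-1,1]$-degree-reducer claims.
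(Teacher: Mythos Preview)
Your proposal is correct and takes essentially the same approach as the paper: the same binary-tree arrangement with ancilla qubits at internal nodes, the same 3-local projector constraints enforcing $a_v = a_{v_L} + a_{v_R}$ with the value capped at $1$, and the same parameter count and groundspace analysis. One small imprecision: your claim that the ancilla histories $\ket{\mathrm{anc}(x)}$ are ``mutually orthogonal for distinct $x$'' is not quite right---two sibling leaves share the same ancilla configuration (the paper's remark after the proof notes exactly this)---but since this appears only in your informal incoherence discussion and not in the core $0$-unfaithfulness argument, it does not affect the validity of the proof.
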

\addtocounter{prop}{-1}
}

\begin{proof}
Let us now describe our construction.
For the original Hamiltonian $H_\oneone$ with $n$ qubits, the incoherent sparsifier we propose involves an additional $n-1$ ancilla qubits.
We arrange them on a binary tree of height $\lceil\log_2 n\rceil$, with the original qubits placed on the leaf nodes, as shown in Fig.~\ref{fig:tree}.
The sparsifier we propose consists of a sum of 3-local, commuting, and positive semi-definite terms, one per branching at each internal node:
\begin{equation}
\tilde{H}_\oneone^\text{tree} = \sum_{i=1}^{n-1} \tilde{H}_{\oneone,i}^\text{tree} \quad \text{with} \quad \tilde{H}_{\oneone,i}^\text{tree} = \sum_{l_i,r_i,b_i=0}^1 h(l_i,r_i,b_i)\ketbra{l_ir_ib_i}
\end{equation}
where $\ket{b_i}$ is the qubit state at $i$-th internal node, and $\ket{l_i}$ ($\ket{r_i}$) is the qubit state of its left (right) child node.
The energy cost function $h(l,r,b)$ is
\begin{equation}
h\Bigg(
\begin{tikzpicture}[baseline=1.6ex, level distance = 10mm, sibling distance = 8mm]
\node {$b$}[grow'=up] {child {node {$l$}} child {node {$r$}}};
\end{tikzpicture}
\Bigg) =
\begin{cases}
0, & \begin{tikzpicture}[baseline=1.6ex, level distance = 7mm, sibling distance = 4mm]
    \node {$b$}[grow'=up] {child {node {$l$}} child {node {$r$}}};
    \end{tikzpicture}
    \in \Big\{
     \begin{tikzpicture}[baseline=1.6ex, level distance = 7mm, sibling distance = 4mm]
     \node {0}[grow'=up] {child {node {0}} child {node {0}}};
     \end{tikzpicture},
     \begin{tikzpicture}[baseline=1.6ex, level distance = 7mm, sibling distance = 4mm]
     \node {1}[grow'=up] {child {node {0}} child {node {1}}};
     \end{tikzpicture},
     \begin{tikzpicture}[baseline=1.6ex, level distance = 7mm, sibling distance = 4mm]
     \node {1}[grow'=up] {child {node {1}} child {node {0}}};
     \end{tikzpicture}
    \Big\} \\
1, & \text{otherwise}
\end{cases}
\end{equation}
which imposes a constraint that forces the parent node of any branching to be excited whenever any of its children nodes contains an excitation.
This information is passed down towards the root node at the bottom, and it's easy to see that we are effectively counts the number of excitations among the original $n$ qubits.
No constraint is violated if and only if there are no more than one excitations.
In other words, the zero-energy groundspace of $\tilde{H}_\oneone^\text{tree}$ consists of the $n+1$ states corresponding to zero or one excitations in the original $n$ qubits.
Since the $\tilde{H}_\oneone^\text{tree}$ is commuting, it is easy to solve for all the eigenstates.
By inspection, $\tilde{H}_\oneone^\text{tree}$ is spectrally gapped with energy spread 0 and gap 1, where the excited manifold consists of both states with more than one excitations among the original qubits as well as ``illegal'' states that violate the constraints unnecessarily.
The groundspace consists of $n+1$ states $\{\ket{\tilde{g}_i}\}_{i=0}^n$, where $\ket{\tilde{g}_0}$ has zero excitations among the original qubits, and $\ket{\tilde{g}_i}$ has the $i$-th original qubits excited.
Some example groundstate configurations are shown below in Fig.~\ref{fig:treeincoherent}.

Let us now analyze the performance of this sparsifier construction.
Note that this construction contains $2n-1$ qubits, each involved in at most 2 Hamiltonian terms -- a maximum degree of $r=2$.
Furthermore, there are only $M=n-1$ terms in the sparsifier Hamiltonian, compared to the original Hamiltonian of $n(n-1)/2$ terms.
Each term has bounded strength $J=\|\tilde{H}_{\oneone,i}^\text{tree}\|= 1$.
Additionally, $\tilde{H}_{\oneone}^\text{tree}$ yields the same energy spread of $\tilde{w}=w=0$ and gap of $\gamma=1$ as the original.
Furthermore, since the groundspace of $\tilde{H}_\oneone$ faithfully reproduces the original groundspace configuration on the $n$ original qubits, we have $P\tilde{P}=\tilde{P}$ and $\delta=\|\tilde{P}-P\tilde{P}\|=0$.
Hence, $\tilde{H}_\oneone^\text{tree}$ gap-simulates $H_\oneone$ with 0-unfaithfulness.
Therefore, our construction of $\tilde{H}_\oneone^\text{tree}$ is a 3-local $[2,n-1,1]$-gap-simulator of $H_\oneone$, with energy spread $\tilde{w}=0$ and zero unfaithfulness.
\end{proof}

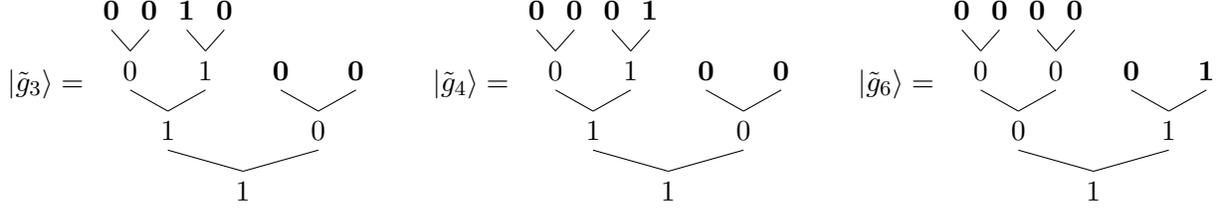
\begin{figure}[h]
\centering $\ket{\tilde{g}_3}=$
\begin{tikzpicture}[baseline=8ex,level distance=8mm,grow'=up]
  \tikzstyle{level 1}=[sibling distance=20mm]
  \tikzstyle{level 2}=[sibling distance=10mm]
  \tikzstyle{level 3}=[sibling distance=5mm]
 \node {1}
  child { node {1}
    child { node {0}
      child {node {\textbf0}}
      child {node {\textbf0}}
    }
    child { node {1}
      child {node {\textbf1}}
      child {node {\textbf0}}
    }
  }
  child { node {0}
    child {node {\textbf0}}
    child {node {\textbf0}}
  };
\end{tikzpicture}
\hspace{15pt} $\ket{\tilde{g}_4}=$
\begin{tikzpicture}[baseline=8ex,level distance=8mm,grow'=up]
  \tikzstyle{level 1}=[sibling distance=20mm]
  \tikzstyle{level 2}=[sibling distance=10mm]
  \tikzstyle{level 3}=[sibling distance=5mm]
 \node {1}
  child { node {1}
    child { node {0}
      child {node {\textbf0}}
      child {node {\textbf0}}
    }
    child { node {1}
      child {node {\textbf0}}
      child {node {\textbf1}}
    }
  }
  child { node {0}
    child {node {\textbf0}}
    child {node {\textbf0}}
  };
\end{tikzpicture}
\hspace{15pt} $\ket{\tilde{g}_6}=$
\begin{tikzpicture}[baseline=8ex,level distance=8mm,grow'=up]
  \tikzstyle{level 1}=[sibling distance=20mm]
  \tikzstyle{level 2}=[sibling distance=10mm]
  \tikzstyle{level 3}=[sibling distance=5mm]
 \node {1}
  child { node {0}
    child { node {0}
      child {node {\textbf0}}
      child {node {\textbf0}}
    }
    child { node {0}
      child {node {\textbf0}}
      child {node {\textbf0}}
    }
  }
  child { node {1}
    child {node {\textbf0}}
    child {node {\textbf1}}
  };
\end{tikzpicture}
\caption{\label{fig:treeincoherent}Three example groundstates of $\tilde{H}_\oneone^\text{tree}$ for 6 original qubits. Note $\ket{\tilde{g}_3}$ and $\ket{\tilde{g}_4}$ have the same ancilla states, but $\ket{\tilde{g}_6}$ has a different ancilla state, so the sparsifier is incoherent.}
\end{figure}

\paragraph{Remark}---
Unfortunately, the incoherence of this sparsifier construction is unavoidably high, since different groundstates on the original $n$ qubits can be strongly correlated with different ancilla states, as seen in Fig.~\ref{fig:treeincoherent}.
To lower bound the incoherence, we note that among the groundstates $\ket{\tilde{g}_i}$, only $\ket{\tilde{g}_{2\nu-1}}$ and $\ket{\tilde{g}_{2\nu}}$ share the same ancilla state, which we denote $\ket{a_\nu}^\anc$, for $\nu=1,\ldots,\lceil n/2 \rceil$.
(Note when $n$ is odd, we simply set $\ket{g_{2\nu}}=0$ when $\nu=\lceil n/2 \rceil$.)
Let us also denote $\ket{a_0}^\anc = \ket{0^{n-1}}^\anc$ as the ancilla configuration in the groundstate $\ket{\tilde{g}_0}$.
Then, we write $P_\nu = \ketbra{g_{2\nu-1}} + \ketbra{g_{2\nu}}$ for $\nu=1,\ldots,\lceil n/2 \rceil$, and $P_0 = \ketbra{g_0}$ as projectors onto disjoint subsets of groundstates of the original Hamiltonian.
From this, we can write $\tilde{P}=\sum_{\nu=0}^{\lceil n/2 \rceil} P_\nu\otimes \ketbra{a_\nu}$.
Since $P_\nu P_\nu' = \delta_{\nu\nu'} P_\nu$, we note the sum over $\sum_\nu P_\nu\otimes M_\nu$ has a block diagonal structure, where each block is $M_\nu$ supported by some state in $P_\nu$.
Hence,
\begin{eqnarray*}
\tilde{P}-P\otimes P_\anc &=& \sum_{\nu=0}^{\lceil n/2 \rceil} P_\nu\otimes(\ketbra{a_\nu} - P_\anc) = \bigoplus_\nu (\ketbra{a_\nu} - P_\anc) \\
\epsilon \ge \| \tilde{P}-P\otimes P_\anc \| &=& \max_\nu \|\ketbra{a_\nu} - P_\anc\|.
\end{eqnarray*}
Let us denote $\ket{a}$ as some state that $P_\anc$ projects onto ($P_\anc\ket{a}=\ket{a}$). Noting that $\braket{a_\nu|a_{\nu'}}=\delta_{\nu\nu'}$, we have
\begin{eqnarray}
\epsilon^2 &\ge& \max_\nu \|\ketbra{a_\nu} - P_\anc\|^2 \ge \min_{\ket{a}} \max_\nu \braket{a|(\ketbra{a_\nu}-1)^2|a} = \min_{\ket{a}} \max_\nu 1-\left|\braket{a|a_\nu}\right|^2  \nonumber \\
&\ge& \min_{\ket{a}} \frac{1}{1+\lceil n/2\rceil}\sum_{\nu=0}^{\lceil n/2 \rceil}  (1-\left|\braket{a|a_\nu}\right|^2 = 1-\frac{\max_{\ket{a}} \sum_{\nu} \left|\braket{a|a_\nu}\right|^2}{1+\lceil n/2 \rceil} \ge 1 - \frac{1}{1+\lceil n/2\rceil}
\end{eqnarray}
where in the beginning of the last line, we used the fact that the maximum is at least as large as the average of a set of numbers.
Hence, we have necessarily large (close-to-1) incoherence $\epsilon \ge \sqrt{1-1/(1+\lceil n/2 \rceil)} \approx 1 - O(1/n)$.

\section{Coherent DR with Polynomial Interaction Strength}

In this Appendix, we provide constructions that perform degree-reduction using interaction strength $J=\poly(n,\epsilon^{-1})$ for a large class of Hamiltonian.
In Sec.~\ref{sec:degree-reduction-poly}, we describe how to perform coherent degree-reduction for any Hamiltonians whose spectral gap decays at most inverse-polynomially with system size;
in fact, the resultant degree-reducer Hamiltonian not only gap-simulates, but also full-spectrum-simulates the original Hamiltonian.
Later in Sec.~\ref{sec:prop-circuit}, we show that for our example $H_\oneone$, coherent dilution is also possible with polynomial strength.

\subsection{Constructive Coherent DR with Polynomial Interaction Strength (Theorem~\ref{thm:degree-reduction-poly})\label{sec:degree-reduction-poly}}

In this section, we show how to perform coherent degree-reduction for Hamiltonians whose quasi-spectral gap scales as $\gamma=\Omega(1/\poly(n))$.

{
\renewcommand{\thethm}{\ref{thm:degree-reduction-poly}}
\begin{thm}[Coherent DR with polynomial interaction strength]
Suppose $H$ is an $O(1)$-local Hamiltonian with a quasi-groundspace projector $P$, which has quasi-spectral gap $\gamma=\Omega(1/\poly(n))$ and energy spread $w$.
Also assume $\|H\|=O(\poly(n))$.
Then for every $\epsilon>0$, one can construct an $O(1)$-local $[O(1), O(\poly(n)/\epsilon^2), O(\poly(n,\epsilon^{-1}))]$-degree-reducer of $H$
with incoherence $\epsilon$, energy spread $w+O(1/\poly(n))$, and trivial encoding.
\end{thm}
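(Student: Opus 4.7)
The plan is to construct $\tilde{H}$ via a modified Kitaev circuit-to-Hamiltonian construction applied to a quantum phase estimation (PE) circuit for $H$, and then verify that the result gap-simulates $H$ with polynomial interaction strength. First, I would design a quantum circuit $U_\circuit$ that acts on three registers: the system register of $n$ qudits, an ancilla \emph{energy} register of $O(\log(n/\epsilon))$ qubits storing a polynomially precise estimate of an eigenvalue of $H$, and an ancilla clock register. The circuit implements (i) quantum phase estimation of $e^{-iHt_0}$ with $t_0$ chosen so the relevant eigenvalues are well-separated and writable in the energy register with precision $\ll \gamma$, and then (ii) runs this phase-estimation procedure in reverse in order to disentangle the energy register from the system register. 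Crucially, because $H$ is $O(1)$-local with $\|H\|=O(\poly(n))$, $U_\circuit$ can be realized by a circuit of $T_c=O(\poly(n))$ two-qudit gates, and its interaction graph is inherently sparse (each gate touches $O(1)$ qudits at a single time step), which is what buys us a constant-degree gap-simulator.

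Next, I would apply Kitaev's circuit-to-Hamiltonian construction with a standard ``unary clock'' (so that the clock register has constant-degree interactions) to convert $U_\circuit$ into a $5$-local Hamiltonian $H_\circuit=H_\textnormal{prop}+H_\textnormal{clockinit}+H_\textnormal{clock}$ whose groundspace encodes the history state of $U_\circuit$ applied to any input. The standard spectral gap analysis for Kitaev's Hamiltonian gives a quasi-spectral gap of $\Omega(1/T_c^2)=\Omega(1/\poly(n))$. To encode the spectral structure of $H$ into this construction, I would add a penalty term $H_\textnormal{penalty}$ that reads the energy register at the final clock step and assigns an energy proportional to the value written there, and a weight appropriate to preserve the quasi-spectral gap of $H$. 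Because $U_\circuit$ ends with uncomputation, in the history state at the final clock step the energy register is returned to $\ket{0}$ while the system register holds the original eigenstate; this is what will allow us to inherit coherent superpositions within $P$.

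To control incoherence, I would append $T_I=O(\poly(n)/\epsilon^2)$ identity gates at the end of $U_\circuit$, so that on the history state the total-weight overlap of the clock on ``idle'' steps (where the system/ancilla factorizes as $V\ket{g}\otimes\ket{0}_\anc$) is at least $1-O(1/T_I)$; setting $T_I$ accordingly and using the fact that the clock register on an idle step is a well-defined product state gives a history state whose reduced overlap with $V(P\otimes P_\anc)V^\dag$ differs from identity on the quasi-groundspace by $O(\epsilon)$. The full quasi-groundspace projector $\tilde P$ then satisfies the bounded-incoherence condition of Def.~\ref{defn:hamsimul}. Since penalty magnitudes and clock penalties must be chosen large enough to preserve the $\Omega(1/\poly(n))$ gap, $\|\tilde H_i\|$ will grow as $\poly(n,\epsilon^{-1})$, consistent with the statement. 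Finally, in this construction the encoding is trivial: the original $n$ qudits appear literally as a subregister, and everything else is ancilla. Counting terms gives $M=O(T_c+T_I)=O(\poly(n)/\epsilon^2)$, each $O(1)$-local with $O(1)$ degree.

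The main obstacle I expect is maintaining \emph{coherence within} $P$ rather than merely faithfully reproducing each eigenstate. The naive history state $\frac{1}{\sqrt T}\sum_t U_t\cdots U_1 \ket{g}\ket{0}_\anc\ket{t}_\clock$ entangles $\ket{g}$ with the clock and ancilla registers in a way that depends on the history of PE, which would destroy superpositions across the groundspace of $H$. The forward-then-backward PE structure together with a large idling tail is designed to make this entanglement nearly trivial at the end of the circuit, but checking that the incoherence bound $\epsilon$ really holds uniformly across the full quasi-groundspace (and not only on individual eigenstates) requires a careful argument. In particular, I would invoke the projector-comparison Lemma~\ref{lem:proj-diff} to pass from ``each groundstate of $\tilde H$ lies within $O(\epsilon)$ of $V(P\otimes P_\anc)V^\dag$'' to the operator-norm bound in Eq.~\eqref{eq:incoherence}, and I would use Lemma~\ref{lem:PPgroundspace} to control perturbative shifts to the quasi-spectral gap and energy spread caused by $H_\textnormal{penalty}$ and the truncation of the PE precision, giving the stated $w+O(1/\poly(n))$ energy spread.
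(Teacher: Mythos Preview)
Your overall strategy matches the paper's: phase estimation circuit, uncompute to restore coherence, pad with identity gates to push incoherence below $\epsilon$, convert via Kitaev's construction, and add an energy-reading penalty. You also correctly anticipate that the delicate point is coherence across the whole quasi-groundspace, and you invoke the right auxiliary tools (Lemma~\ref{lem:proj-diff} and Lemma~\ref{lem:PPgroundspace}).

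There is, however, a genuine inconsistency in your construction. You place $H_\textnormal{penalty}$ at the \emph{final} clock step, but you also (correctly) note that after uncomputation the energy register has been returned to $\ket{0}$. These two statements are incompatible: at the final step there is nothing for the penalty to read, so $H_\textnormal{penalty}$ would assign zero energy to every history state regardless of whether the input was a groundstate or an excited state of $H$, and the spectral structure of $H$ would not be reproduced at all. The paper resolves this by inserting a short block of $s$ identity gates \emph{between} the forward PE and its inverse, and having $H_{out}$ act on the energy register at those intermediate clock times $t_0+1,\ldots,t_0+s$, where the register genuinely holds the $s$-bit estimate $\ket{\tilde E_\mu}$. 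The long idling tail of $L=O(\poly(n)/\epsilon^2)$ identity gates is then appended \emph{after} the uncomputation, so that the bulk of the history state lives at times where the system is disentangled from the ancilla. In short: the penalty must be read in the middle of the circuit (after PE, before its inverse), not at the end; otherwise your $\tilde H$ has a completely degenerate low-energy sector and fails to gap-simulate $H$.
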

\addtocounter{thm}{-1}
}

To prove the above Theorem, we first prove two smaller Lemmas~\ref{lem:circuit-ham-simul} and \ref{lem:circuit-idling} about different aspects of using Kitaev's circuit-to-Hamiltonian construction~\cite{KSV02} for Hamiltonian simulation.
The following two concepts will be useful in the discussion:

\begin{defn}[history states]
Let $U=U_T \cdots U_2 U_1$ be a quantum circuit acting on $n+m$ qudits.
Then for any input state $\ket{\psi_\mu}\in \C^{d^n}$, the \emph{history state} with respect to $U$ and $\ket{\psi_\mu}$ is the following
\begin{equation}
\ket{\eta_\mu} = \frac{1}{\sqrt{T+1}}\sum_{t=0}^T \Big( U_{t}\cdots U_2 U_1 \ket{\psi_\mu}\ket{0^m}^\anc \Big)\ket{1^t 0^{T-t}}^\clock
\end{equation}
\end{defn}
\begin{defn}[circuit degree]
The degree of a quantum circuit $U=\prod_{t=1}^T U_t$ is $\deg(U) = \max_i |\{U_t: U_t \textnormal{ acts nontrivially on qudit } i\}|$.
\end{defn}

We now prove the first of the two Lemmas, which describes a circuit-to-Hamiltonian transformation that can be used for full-spectrum-simulation, assuming an appropriate energy penalty Hamiltonian $H_{out}$ can be constructed.

\begin{lemma}[Circuit-Hamiltonian simulation]
\label{lem:circuit-ham-simul}
Consider an orthonormal basis of states $\{\ket{\psi_\mu}\}_{\mu=1}^{d^n}$ on $n$ qudits.
Let $U=\prod_{t=1}^T U_t$ be a quantum circuit where $U_t$ is a $k$-local gate.
Let $\L = \spn\{\ket{\eta_\mu}\}_{\mu=1}^{d^n}$ be the space of history states with respect to some and $\{\ket{\psi_\mu}\}$, and let $H_\eff = \sum_\mu \lambda_\mu \ketbra{\eta_\mu}$.
Suppose there exists a Hamiltonian $H_{out}$ such that
$\|H_\eff - H_{out}|_{\L}\| \le \xi/2$.
Then for any $\eta>0$, we can construct a Hamiltonian $\tilde{H}_\circuit$ from the description of $U$ such
that $\tilde{H}_\circuit$ full-spectrum-simulates $H_\eff$ to precision $(\eta,\xi)$ below some energy cut-off $\Delta\ge O (\xi^{-1}\|H_{out}\|^2 + \eta^{-1}\|H_{out}\|)$, per Def.~\ref{defn:CMPsimul}.
The constructed $\tilde{H}_\circuit$ is $(k+3)$-local, has $O(\deg(U))$ degree, $O(T)$ number of terms, and $O(\poly(n,T,\xi^{-1},\eta^{-1}, \|H_{out}\|))$ interaction strength.
\end{lemma}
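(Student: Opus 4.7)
My plan is to use a standard Kitaev-style circuit-to-Hamiltonian construction, with the additional ingredient $H_{out}$ as a perturbation to realize $H_\eff$ inside the history-state subspace. Concretely, I will define
\begin{equation}
\tilde{H}_\circuit \;=\; \Delta_1 \bigl( H_\textnormal{prop} + H_\textnormal{clock} \bigr) \;+\; H_{out},
\end{equation}
where $H_\textnormal{prop}=\sum_{t=1}^T H_{\textnormal{prop},t}$ is the usual propagation Hamiltonian enforcing the step $U_t$ between clock values $t-1$ and $t$ (each $H_{\textnormal{prop},t}$ is $(k+3)$-local, with the $+3$ for the two neighboring clock bits plus an extra clock-check qubit), $H_\textnormal{clock}=\sum_t\ketbra{01}^{\clock_{t,t+1}}$ penalizes non-unary clock strings, and $\Delta_1$ is a polynomially large penalty to be chosen. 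The encoding isometry $V$ is the natural map sending $\ket{\psi_\mu}\otimes\ket{0^m}^\anc\otimes\ket{0^T}^\clock$ to the history state $\ket{\eta_\mu}$ and extended to an isometry on the required domain; its range lies inside $\L$.

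Next I will invoke the standard spectral analysis of $H_\textnormal{prop}+H_\textnormal{clock}$ (the ``clock + propagation trick'', as in \cite{KSV02}) to conclude that on the legal clock subspace its zero-energy eigenspace is exactly $\L=\spn\{\ket{\eta_\mu}\}$, and that the gap above zero is $\Omega(1/T^2)$. This is because the map $W:\ket{\psi}\ket{\vec{t}}\mapsto U_t\cdots U_1\ket{\psi}\ket{\vec{t}}$ conjugates $H_\textnormal{prop}$ into a discrete Laplacian on the clock line, whose lowest eigenspace is the uniform superposition over the legal clock states. Combined with $H_\textnormal{clock}$, this gives a gapped unperturbed Hamiltonian with groundspace $\L$. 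I then upgrade this to $\Delta_1(H_\textnormal{prop}+H_\textnormal{clock})$ to enforce gap $\Omega(\Delta_1/T^2)$.

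Then I will treat $H_{out}$ as a perturbation. Using Lemma~\ref{lem:PPgroundspace} (or more refined Green's function / Schrieffer--Wolff analysis, as in \cite{KKR06,OliveiraTerhal}), I will show that the low-energy subspace of $\tilde{H}_\circuit$ below some cutoff $\Delta$ is $\eta$-close (in operator norm between projectors, hence in the sense of an isometry $\tilde V$ with $\|\tilde V-V\|\le \eta$) to $\L$, and that the restriction of $\tilde{H}_\circuit$ to this subspace agrees with $H_{out}|_\L$ up to norm error $\xi/2$ from the second-order corrections of size $O(\|H_{out}\|^2/\Delta_1)$ plus, by hypothesis, $\|H_\eff - H_{out}|_\L\|\le\xi/2$, giving total spectral error $\le \xi$. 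Choosing $\Delta_1=\Theta\bigl(\max\{T^2 \|H_{out}\|^2/\xi,\; T^2\|H_{out}\|/\eta\}\bigr)$ and $\Delta=\Theta(\Delta_1/T^2)$ yields the claimed scaling. The required form $\tilde{\eps}(H_\eff)=\tilde V(H_\eff\otimes P+\bar H_\eff\otimes Q)\tilde V^\dag$ for a local orthogonality pair $(P,Q)$ is recovered by taking $P=\Id$, $Q=0$ since $H_\eff$ is real in the history-state basis (we may choose $\ket{\psi_\mu}$ as the eigenbasis in which $H_\eff$ is diagonal and thus real).

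Finally, a straightforward resource count finishes the proof: $\tilde{H}_\circuit$ is $(k+3)$-local since each $H_{\textnormal{prop},t}$ acts on the $k$ qudits touched by $U_t$ and on $3$ clock bits; the number of terms is $O(T)$ from propagation and clock plus the terms of $H_{out}$; the maximum degree is $O(\deg(U))$ because each data qudit appears in exactly as many propagation terms as gates touching it (clock bits have $O(1)$ degree in $H_\textnormal{clock}$ and appear in at most $O(1)$ propagation terms by the unary encoding); and the interaction strength is $O(\Delta_1)=O(\poly(n,T,\xi^{-1},\eta^{-1},\|H_{out}\|))$. The main obstacle I expect is the perturbation-theoretic step: I need to control both the eigenvector perturbation (the $\eta$ part, certifying that the projector onto $P_{\le\Delta(\tilde H)}$ is close to $V(\Id\otimes P_\anc)V^\dag$ for a concrete $\tilde V$ near $V$) and the eigenvalue perturbation (the $\xi$ part) uniformly across the whole low-energy subspace, not just the ground state, which is why the naive first-order argument is insufficient and the Green's function approach is needed.
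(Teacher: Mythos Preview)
Your approach is close to the paper's, but there is a genuine gap: your construction omits the input-initialization Hamiltonian $H_{in}$ that forces the ancilla register to start in $\ket{0^m}$. You assert that the zero-energy eigenspace of $H_\textnormal{prop}+H_\textnormal{clock}$ on the legal clock subspace is exactly $\L=\spn\{\ket{\eta_\mu}\}_{\mu=1}^{d^n}$, but this is false. The propagation Hamiltonian annihilates every history state of the form
\[
\frac{1}{\sqrt{T+1}}\sum_{t=0}^T \bigl(U_t\cdots U_1\,\ket{\psi_\mu}\ket{y}^\anc\bigr)\ket{1^t 0^{T-t}}^\clock
\]
for \emph{arbitrary} ancilla string $y$, so its groundspace is the $d^{n+m}$-dimensional space the paper calls $\S_{prop}$, not the $d^n$-dimensional $\L$. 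Consequently your perturbative step produces an effective Hamiltonian on $\S_{prop}$ rather than on $\L$, the hypothesis $\|H_\eff - H_{out}|_\L\|\le\xi/2$ no longer controls the restriction you need, and the condition $\tilde\eps(\Id)=P_{\le\Delta(\tilde H)}$ in Definition~\ref{defn:CMPsimul} fails because the low-energy subspace has the wrong dimension.

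The paper fixes this by taking $H_0 = J_{clock}H_{clock}+J_{prop}H_{prop}+J_{in}H_{in}$, where $H_{in}=\sum_i(\Id-\ketbra{0})_i^\anc\otimes\ketbra{0}^\clock_{t_{\min}(i)}$ penalizes any ancilla qudit not in $\ket{0}$ before its first use. It then applies the Projection Lemma from \cite{KKR06} in three successive layers (first $H_{clock}$, then $H_{prop}$, then $H_{in}$) to certify that $H_0$ has groundspace exactly $\L$ with gap $\ge\Delta$, choosing the three penalties $J_{clock}\gg J_{prop}\gg J_{in}$ appropriately. Once this is established, the paper invokes the first-order reduction lemma of \cite{BravyiHastingsSim} (essentially the perturbative step you sketch) with $V=\Id$ to obtain the $(\eta,\xi)$-simulation. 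Apart from the missing $H_{in}$, your outline matches the paper's proof; add that term, adjust your gap analysis accordingly, and the rest of your plan goes through.
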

\begin{proof}
For a given circuit $U = U_T \cdots U_2 U_1$, where $T=O(\poly(n))$,
the corresponding circuit-Hamiltonian is
\begin{align}
\tilde{H}_{\circuit} &= H_0 +  H_{out}\\
\text{where} \quad H_0 &= J_{clock} H_{clock} + J_{prop} H_{prop} + J_{in} H_{in}
\end{align}
The role of $H_0$ is to isolate $\L=\spn\{\ket{\eta_\mu}\}$ as the zero-energy groundspace separated by a large spectral gap.
The first part of $H_0$ is
\begin{equation}
\label{eq:Hclock}
H_{clock} = \sum_{t=1}^{T-1}\ketbra{01}^\clock_{t,t+1},
\end{equation}
which sets the legal state configurations in the clock register to be of the form $\ket{t}^\clock \equiv \ket{1^t 0^{T-t}}^\clock$.
Then, we simulate the state propagation under the circuit using
\begin{eqnarray}\label{eq:Hprop}
H_{prop} &=& \sum_{t=1}^T H_{prop,t},\\
\text{where} \quad H_{prop,t} &=& \Id \otimes \ketbra{100}^\clock_{t-1,t,t+1} - U_t\otimes\ketbrat{110}{100}^\clock_{t-1,t,t+1} \nonumber\\
&& - U_t^\dag\otimes \ketbrat{100}{110}^\clock_{t-1,t,t+1} + \Id\otimes\ketbra{110}^\clock_{t-1,t,t+1}
\quad \text{for} \quad 1<t<T, \nonumber\\
H_{prop,1} &=& \Id\otimes \ketbra{00}_{12}^\clock - U_1 \otimes\ketbrat{10}{00}_{12}^\clock - U_1^\dag \otimes \ketbrat{00}{10}_{12}^\clock + \Id\otimes\ketbra{10}_{12}^\clock, \nonumber\\
\text{and} \quad H_{prop,T} &=& \Id\otimes \ketbra{10}_{T-1,T}^\clock - U_T \otimes\ketbrat{11}{10}_{T-1,T}^\clock - U_T^\dag\otimes\ketbrat{10}{11}_{T-1,T}^\clock + \Id\otimes\ketbra{11}_{T-1,T}^\clock.\nonumber
\end{eqnarray}
These terms check the propagation of states from time $t-1$ to $t$ is correct.
Now, we also need to ensure that the ancilla qudits are in the state $\ket{0^{m}}^\anc$ when the clock register is $\ket{0^T}^\clock$, using
\begin{gather}
H_{in} = \sum_{i=1}^{m} (\Id - \ketbra{0})_i^\anc\otimes\ketbra{0}^\clock_{t_{\min}(i)},
\\
\text{where}
\quad
t_{\min}(i) = \min \{t: U_t \text{ acts nontrivially on ancilla qudit } i\}.
\nonumber
\end{gather}
In other words, for each ancilla qudit $i$, $H_{in}$ penalizes the ancilla if it's not in the state $\ket{0}$ before it is first used by the $t_{\min}(i)$-th gate.
Note that $\tilde{H}_\circuit$ has $O(T)$ terms, each of which is most $(k+3)$-local when $U_t$ are $k$-local.
It also has $O(\deg(U))$ degree, since each computational qubit is involved in at most $O(\deg(U))$ terms in $H_{prop}$, and each clock qubit is involved in $O(1)$ terms.

It is easy to see that $H_0\L=0$.
But we also need to lower bound the spectral gap of $H_0$, i.e. $\lambda_1(H_0|_{\L^\perp} )$.
To that end, let us denote the following subspaces:
\begin{align}
\S_{clock} &= \spn\{ \ket{\psi}\ket{y}\ket{1^t 0^{T-t}}:\ket{\psi}\in \C^{d^n} \text{ and } \ket{y} \in \C^{d^m}, 0\le t\le T\},
\\
\S_{prop} &= \spn\{\ket{\eta_\mu, y} \equiv \frac{1}{\sqrt{T+1}}\sum_{t=0}^T \Big(U_t  \cdots U_2 U_1 \ket{\psi}\ket{y}  \Big) \ket{1^t 0^{T-t}}: 1\le \mu \le d^n, 0\le y \le d^{n-1}\}.
\end{align}
Note that $ \L \subset \S_{prop} \subset \S_{clock}$.
Let us denote $\tilde{A}=\A\cap \L^\perp$ for any subspace $\A$.
Note $H_{clock} \S_{clock} = 0$, $H_{prop} \S_{prop} =0$, $H_{in}\L=0$.
We will use the following Projection Lemma~\ref{lem:projection}:
\begin{lemma}[Projection Lemma, adapted from \cite{KKR06}]
 \label{lem:projection}
 Let $H=H_1+H_2$ be sum of two Hamiltonians operating on some Hilbert space $\S_0=\S\oplus\S^\perp$.
 Assuming that $H_2$ has a zero-energy eigenspace $\S\subseteq \S_0$ so that $H_2\S=0$, and that the minimum eigenvalue $\lambda_1(H_2|_{\S^\perp})\ge J > 2\|H_1\|$, then
 \begin{equation}
 \lambda_1(H_1|_\S) - \frac{\|H_1\|^2}{J-2\|H_1\|} \le \lambda_1 (H) \le \lambda_1 (H_1|_\S).
 \end{equation}
 In particular, if $J \ge K\|H_1\|^2 + 2\|H_1\|=O(K\|H_1\|^2)$, we have
 $
 \lambda_1(H_1|_\S) - \frac{1}{K} \le \lambda_1 (H) \le \lambda_1 (H_1|_\S).
 $
\qedextra
\end{lemma}
Applying the above Lemma successively to $H_0$, we obtain
\begin{align}
\lambda_1(H_0|_{\L^\perp} ) &\ge \lambda_1\left[(J_{prop} H_{prop} + J_{in} H_{in} )|_{\tilde\S_{clock}} \right] - \frac{1}{K}
	\quad \text{if} \quad J_{clock} = O(K\|J_{prop} H_{prop} + J_{in} H_{in}\|^2) \\
&\ge \lambda_1\left[( J_{in} H_{in})|_{\tilde\S_{prop}} \right] - \frac{2}{K}
	\quad \text{if} \quad J_{prop}/T^2 = O(K\| J_{in} H_{in}\|^2)
\end{align}
where we used the fact that $\lambda_1(H_{clock}|_{\S_{clock}^\perp})\ge 1$,
and $\lambda_1(H_{prop}|_{\S_{prop}^\perp}) \ge c/T^2$ for some constant $c$.
We now lower bound the last term.
Let us denote $\hat{n}=\Id - \ketbra{0}$.
Then within $\S_{clock}$, we can rewrite
\begin{align}
H_{in}|_{\S_{clock}} &= \sum_{i=1}^m \hat{n}_{i}^\anc \otimes \sum_{0\le t \le t_{\min}(i)} \ketbra{t}^\clock = \sum_{t=0}^{\max_i t_{\min}(i)} H_{in,t} \\
\text{where} \quad H_{in,t} &= \sum_{\{i:~t \le t_{\min}(i)\}} \hat{n}_i^\anc \otimes \ketbra{t}^\clock. \nonumber
\end{align}
In particular, $H_{in,t=0}=\sum_{i=1}^m \hat{n}_i^\anc \otimes \ketbra{t=0}$.
Thus, for any $\ket{\eta_\mu,y},\ket{\eta_\nu,y'}\in \tilde{\S}_{prop}$, where necessarily $y,y'>0$, we have
\begin{eqnarray}
\braket{\eta_\nu,y'|H_{in,t=0}|\eta_\mu,y} &=&  \frac{1}{T+1} \bra{\psi_\nu}\bra{y'} H_{in,t=0} \ket{\psi_\mu}\ket{y} \nonumber \\
&=& \frac{1}{T+1} \delta_{\mu\nu} \braket{y'|\sum_{i=1}^{m}\hat{n}_i^\anc |y}=  \frac{1}{T+1} \delta_{\mu\nu}\delta_{y,y'} \times w(y),
\end{eqnarray}
where $w(y)$ is the Hamming weight of $y$ in $d$-ary representation, which is at least 1 for any $y>0$.
Hence, the minimum eigenvalue of $H_{in,t=0}|_{\L^\perp}$ is $1/(T+1)$.
Since $H_{in}$ consists of only positive semi-definite terms, we have $\lambda_1(H_{in}|_{\tilde{\S}_{prop}}) \ge \lambda_1( H_{in,t=0}|_{\tilde{\S}_{prop}}) \ge 1/(T+1)$.
Thus, to ensure that $H_0$ has spectral gap $\lambda_1(H_0|_{\L^\perp}) \ge \Delta$, we simply choose $J_{in} = O(\Delta(T+1))$, $J_{prop} = O(K T^2 J_{in}^2 m^2)$,
and $J_{clock} = O(KJ_{prop}^2 T^2)=O(\poly(n, T, \Delta))$.

We now show that $\tilde{H}_\circuit$ full-spectrum-simulates $H_\eff$ with only polynomial overhead in energy.
To this end, we use the following result regarding perturbative reductions adapted from Lemma 4 of \cite{BravyiHastingsSim} (also Lemma 35 of \cite{UniversalHamiltonian}):
\begin{lemma}[First-order reduction, adapted from \cite{BravyiHastingsSim}]
Suppose $\tilde{H}=H_0+H_1$, defined on Hilbert space $\tilde\H=\L \oplus \L^\perp$ such that $H_0\L=0$ and $\lambda_1(H_0|_{\L^\perp})\ge \Delta$.
Suppose $H_\eff$ is a Hermitian operator and $V$ is an isometry such that $\| V H_\eff V^\dag - H_1|_{\L} \| \le \xi/2$, then
$\tilde{H}$ full-spectrum-simulates $H_\eff$ to precision $(\eta,\xi)$ below energy cut-off $\Delta/2$, as long as $\Delta \ge O(\xi^{-1}\|H_1\|^2 + \eta^{-1}\|H_1\|)$, per Def.~\ref{defn:CMPsimul}.
In other words, $\|\tilde{H}_{\le\Delta/2}  - \tilde{V} H_\eff \tilde{V}^\dag\| \le \xi$ for some isometry $\tilde{V}$ where $\|\tilde{V}-V\|\le \eta$.
\qedextra
\end{lemma}

We apply the above Lemma  with $H_1=H_{out}$ and $V=\Id$. Note that
we are given in the premise of our Lemma~\ref{lem:circuit-ham-simul} that
\begin{equation}
\|H_\eff - H_1|_{\L}\| = \|H_\eff - H_{out}|_{\L}\| \le \xi/2.
\end{equation}
Hence, $\tilde{H}_\circuit = H_0+H_{out}$ full-spectrum-simulates $H_{\eff}$ to precision $(\eta,\xi)$ below energy cut-off $\Delta/2\ge  O(\xi^{-1}\|H_{out}\|^2 + \eta^{-1}\|H_{out}\|)$.
The maximum interaction strength in $\tilde{H}_\circuit$ is $J_{clock} = O(\poly(n, T, \Delta))= O(\poly(n,T,\xi^{-1},\eta^{-1}, \|H_{out}\|))$.
This concludes the proof of our Lemma~\ref{lem:circuit-ham-simul}.
\end{proof}

We now prove the second Lemma, which shows that in order to ensure the circuit-Hamiltonian simulates a given quasi-groundspace coherently, one only need to add $O(\poly(n)/\epsilon^2)$ identity gates to the end of a polynomial-sized circuit before transforming the circuit to a Hamiltonian.

\begin{lemma}[Idling to enhance coherence]
\label{lem:circuit-idling}
Consider an uncomputed quantum circuit $U_{D}\cdots U_1=\Id$.
Suppose we add $L$ identity gates to the end of the circuit, so that we obtain a new circuit $U=\Id^L U_D\cdots U_1$ with length $T=D+L$.
Let $P=\sum_{\mu=1}^q \ketbra{\psi_\mu}$ and $Q=\sum_{\mu=1}^q \ketbra{\eta_\mu}$, where $\ket{\eta_\mu}$ are history states with respect to $U$ and $\ket{\psi_\mu}$.
If we choose $L=O(D/\epsilon^2)$, then
$\|Q-P\otimes P_\anc\| \le \epsilon$ for some ancilla projector $P_\anc$, regardless of $q$.
\end{lemma}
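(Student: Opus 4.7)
The plan is to construct $P_\anc$ explicitly as a rank-one projector onto a carefully chosen superposition of clock-idle states, and then compute $\|Q - P\otimes P_\anc\|$ exactly by exploiting the block structure that the idling creates. Since $U_D\cdots U_1 = \Id$ and the appended gates $U_{D+1},\ldots,U_T$ are identities, we have $U_t\cdots U_1\ket{\psi_\mu}\ket{0^m}^\anc = \ket{\psi_\mu}\ket{0^m}^\anc$ for every $t\ge D$. Thus each history state splits as
\begin{equation}
\ket{\eta_\mu} \;=\; \alpha\,\ket{\psi_\mu}\otimes\ket{a} \;+\; \ket{r_\mu},
\end{equation}
where $\ket{a} \equiv \tfrac{1}{\sqrt{L+1}}\ket{0^m}^\anc\sum_{t=D}^{T}\ket{1^t 0^{T-t}}^\clock$, $\alpha=\sqrt{(L+1)/(T+1)}$, and $\ket{r_\mu}\equiv\tfrac{1}{\sqrt{T+1}}\sum_{t=0}^{D-1}(U_t\cdots U_1\ket{\psi_\mu}\ket{0^m}^\anc)\ket{1^t 0^{T-t}}^\clock$ has norm $\beta \equiv \sqrt{D/(T+1)}$. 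I would declare $P_\anc = \ketbra{a}$.

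Next, I would verify two orthogonality statements that make the analysis essentially two-dimensional. The remainders $\ket{r_\mu}$ are supported entirely on clock times $t<D$, while $\ket{\psi_\nu}\otimes\ket{a}$ is supported on clock times $t\ge D$, so $\braket{r_\mu\,|\,\psi_\nu\,a}=0$ for all $\mu,\nu$. Moreover, unitarity of each partial circuit $U_t\cdots U_1$ together with $\braket{\psi_\mu|\psi_\nu}=\delta_{\mu\nu}$ immediately yields $\braket{r_\mu|r_\nu}=\delta_{\mu\nu}\,\beta^2$. Consequently the operator
\begin{equation}
Q - P\otimes P_\anc
\;=\; -\beta^{2}\sum_\mu \ketbra{\psi_\mu}\otimes\ketbra{a}
\;+\;\alpha\sum_\mu\bigl(\ket{\psi_\mu}\ket{a}\!\bra{r_\mu}+\ket{r_\mu}\!\bra{\psi_\mu}\!\bra{a}\bigr)
\;+\;\sum_\mu\ketbra{r_\mu}
\end{equation}
leaves invariant each $2$-dimensional subspace $\mathrm{span}\{\ket{\psi_\mu}\ket{a},\,\ket{r_\mu}/\beta\}$ and acts on every such subspace as the matrix $\bigl(\begin{smallmatrix}-\beta^{2}&\alpha\beta\\\alpha\beta&\beta^{2}\end{smallmatrix}\bigr)$, whose trace is $0$ and determinant is $-\beta^{2}$, hence has eigenvalues $\pm\beta$. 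The complementary subspace is annihilated, so
\begin{equation}
\|Q - P\otimes P_\anc\| \;=\; \beta \;=\; \sqrt{\tfrac{D}{T+1}},
\end{equation}
independent of $q$.

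Finally, choosing $L = \lceil D/\epsilon^{2}\rceil$ gives $T+1=D+L+1\ge D/\epsilon^{2}$ and therefore $\|Q-P\otimes P_\anc\|\le \epsilon$, completing the argument. I do not expect a genuine obstacle: once the correct $\ket{a}$ is identified, everything reduces to a $2{\times}2$ eigenvalue computation, and the only care needed is to ensure that the orthogonality of the $\ket{r_\mu}$'s survives the possible degeneracies among input states $\ket{\psi_\mu}$, which it does by unitarity. The slightly subtle point worth highlighting is that independence from $q$ is automatic precisely because the $q$ two-dimensional blocks decouple, so no union bound or dimension-dependent penalty appears.
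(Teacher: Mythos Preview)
Your argument is correct and actually yields a sharper result than the paper's. The paper takes the same decomposition $\ket{\eta_\mu}=\sqrt{1-\chi^2}\ket{\psi_\mu}\ket{\alpha}+\chi\ket{\beta_\mu}$ (your $\ket{a}$, $\ket{r_\mu}/\beta$, and $\beta$ correspond to the paper's $\ket{\alpha}$, $\ket{\beta_\mu}$, and $\chi$) and the same choice $P_\anc=\ketbra{\alpha}$, but then only observes that $\|(Q-P\otimes P_\anc)\ket{\phi}\|\le\chi$ for every unit $\ket{\phi}\in Q$ and invokes a general Projector Difference Lemma (Lemma~\ref{lem:proj-diff}) to conclude $\|Q-P\otimes P_\anc\|\le\sqrt{2}\chi/\sqrt{1-\chi^2}$. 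Your direct block-diagonalization computes the norm \emph{exactly} as $\chi=\sqrt{D/(T+1)}$, avoiding both the factor $\sqrt{2}$ and the $\sqrt{1-\chi^2}$ in the denominator. The paper's route has the advantage of reusing an abstract lemma needed elsewhere (for Lemma~\ref{lem:equiv}); your route is more self-contained and shows cleanly why the bound is independent of $q$ via the decoupling of the $q$ identical $2\times 2$ blocks.
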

\begin{proof}
Note that we can write
\begin{align}
\ket{\eta_\mu} &= \sqrt{1-\chi^2}\ket{\psi_\mu}\otimes\ket{\alpha} + \chi \ket{\beta_\mu} \\
\text{where} \quad
\ket{\alpha} &= \frac{1}{\sqrt{L+1}} \ket{0^m}^\anc \otimes \sum_{t=D}^{D+L} \ket{1^t 0^{T-t}}^\clock \\
\ket{\beta_\mu} &= \frac{1}{\sqrt{D}}\sum_{t=0}^{D-1}\Big( U_{t}\cdots U_2 U_1 \ket{\psi_\mu}\ket{0^{m}}^\anc \Big) \ket{1^t 0^{T-t}}^\clock\\
\chi &= \sqrt{D/(D+L+1)}.
\end{align}
Observe that $\bra{\beta_\mu}(\ket{\psi_\nu}\ket{\alpha}) = 0$ since the clock register are at different times, and
\begin{equation}
\braket{\beta_\mu|\beta_\nu} = \frac{1}{D}\sum_{t=0}^{D-1}\braket{\psi_\mu|\psi_\nu} = \delta_{\mu\nu}.
\end{equation}

Hence, for any normalized state $\ket{\phi}\in Q$, we write $\ket{\phi}=\sum_\mu c_\mu\ket{\eta_\mu}$, and find that
\begin{equation}
\left\|(Q-P\otimes\ketbra{\alpha})\ket{\phi}\right\| = \left\|\chi\sum_{\mu} c_\mu \ket{\beta_\mu} \right\| = \chi \left\|\sum_{\mu} c_\mu \ket{\beta_\mu}\right\| = \chi.
\label{eq:bound-in}
\end{equation}
We now use the technical Lemma proved in Appendix~\ref{sec:uniqueGS}, which we restate here:
{
\renewcommand{\thelemma}{\ref{lem:proj-diff}}
\begin{lemma}[Projector Difference Lemma (restatement)]
Consider two Hermitian projectors $\Pi_A$ and $\Pi_B$, such that $\rank(\Pi_A)\le \rank(\Pi_B)$.
Suppose that for all normalized $\ket{\phi}\in \tilde{\Pi_B}$, $\|(\Pi_B-\Pi_A)\ket{\phi}\| \le \delta$.
Then $\|\Pi_B-\Pi_A\| \le \sqrt{2}\delta/\sqrt{1-\delta^2}$.
\qedextra
\end{lemma}
\addtocounter{lemma}{-1}
}
Since $\rank(Q) = \rank(P\otimes \ketbra{\alpha})=q$, then by Lemma~\ref{lem:proj-diff} (identifying $\Pi_B \to Q$, $\Pi_A\to P\otimes\ketbra{\alpha}$, and $\delta\to\chi$), we have
\begin{equation}
\|Q-P\otimes \ketbra{\alpha}\| \le \sqrt{2}\chi/\sqrt{1-\chi^2}
\end{equation}
To make sure $\sqrt{2}\chi/\sqrt{1-\chi^2} \le \epsilon$, it is sufficient to choose $L=O(D/\epsilon^2)$.
\end{proof}

We are now ready to prove our theorem:

\begin{proof}[\textbf{Proof of Theorem~\ref{thm:degree-reduction-poly}}]

Let us denote the normalized eigenstates of $H$ as $\ket{\psi_\mu}$, with corresponding eigenvalues $E_\mu$.
We assume they are ordered such that $E_1  \le E_2 \le E_3 \le \cdots \le E_{d^n}$.
Since the energy spread is $w$ and quasi-spectral gap is $\gamma$, we have
$E_1\le E_\mu\le E_1 + w\gamma$ for $1\le \mu \le q$, and $E_\mu \ge E_1 + \gamma$ for $\mu \ge q+1$, where $q=\rank(P)$ is the quasi-groundspace degeneracy.

\vspace{2pt}
\noindent
\textbf{Part I - Energy measurement via Phase Estimation Circuit}---
Let us first consider the idealized version of the quantum phase estimation algorithm $U_{\rm PE}^\ideal$ for measuring energy with respect to Hamiltonian $H$.
Here, the circuit uses the evolution operator $u_j=e^{-iH\tau 2^{j-1}}$ under $H$, and writes phase of the eigenvalues of $u_1=e^{-iH\tau}$ on some ancilla qubits.
The eigenvalues of $u_1$ are $e^{i2\pi\varphi_\mu}$, where $\varphi_\mu=E_\mu \tau/(2\pi)$;
we choose $\tau$ to satisfy $\tau \le 2\pi/\|H\|$, so that $0 \le \varphi_\mu\le 1$ and we can write $\varphi_\mu = 0.\varphi_{\mu,1}\varphi_{\mu,2}\varphi_{\mu,3}\cdots$.
Ideally, the action of the phase-estimation circuit on input states $\{\ket{\psi_\mu}\ket{0^m}\}_{\mu=1}^{d^n}$ is
\begin{align}
\label{eq:ideal-PE}
U_{\rm PE}^\ideal \ket{\psi_\mu}\ket{0^m} = \ket{\psi_\mu}\ket{E_{\mu}} \ket{\rest_\mu},
\end{align}
where $\ket{\tilde{E}_\mu}=\ket{\varphi_{\mu,1}\varphi_{\mu,2}\varphi_{\mu,3}\cdots\varphi_{\mu,s}}$ is the $s$-bit string representation of the eigenvalue phase $\varphi_\mu$.
Correspondingly, let us denote $\tilde{E}_\mu=2\pi\tilde{\varphi}_\mu/\tau$, where $\tilde{\varphi}_\mu = 0.\varphi_{\mu,1}\varphi_{\mu,2}\varphi_{\mu,3}\cdots\varphi_{\mu,s}$, as approximate values of the energy $E_\mu$.
In the ideal case, $E_\mu=\tilde{E}_\mu$ for some sufficiently large $s$.

In reality, there are two sources of errors that cause the phase-estimation circuit to deviate from $U_{\rm PE}^\ideal$.
The first is due to the fact that the energy don't generally have finite-bit-precision representation,
i.e., $|E_\mu-\tilde{E}_\mu|=O(2^{-s})$ is non-zero.
In other words, since $\varphi_\mu \neq \tilde{\varphi}_\mu$, there's additional error from imprecise phase estimation.
Let us consider a phase estimation circuit $U_{\rm PE}$ implemented to $p$-bit precision, where $p > s$.
Let $b_\mu$ be the integer in the range $[0,2^{p}-1]$ such that $0\le \varphi_\mu - b_\mu/2^p \le 2^{-p}$.
It is well-known~\cite{NielsenChuang} that the action of $U_{\rm PE}$ on any input state $\ket{\psi_\mu}\ket{0}$ result in the following state
\begin{align}
U_{\rm PE} \ket{\psi_\mu}\ket{0^m} &= \ket{\psi_\mu}\ket{\rest_\mu'} \otimes \frac{1}{2^{p}}\sum_{k,\ell=0}^{2^p-1} e^{-i2\pi  k \ell/2^p} e^{i 2\pi \varphi_\mu k} \ket{\ell} = \ket{\psi_\mu}\ket{\rest_\mu'}\sum_{\ell=0}^{2^p-1}\alpha_\ell^\mu \ket{\ell}
\end{align}
where $\ket{\ell}=\ket{\ell_1\cdots \ell_p}$ is the binary representation of $\ell$, and
\begin{align}
\alpha_\ell^\mu &= \frac{1}{2^p}\sum_{k=0}^{2^p-1}[e^{i 2\pi (\varphi_\mu-\ell/2^p)}]^k = \frac{1}{2^p} \left[\frac{1-e^{i 2\pi (2^{s}\varphi_\mu-\ell)}}{1-e^{i 2\pi (\varphi_\mu-\ell/2^p)}} \right]
\end{align}
The analysis from Sec.~5.2.1 in \cite{NielsenChuang} shows that the probability of getting a state that is a distance of $e$ integer away is
\begin{equation}
p_\mu^\text{error}(e) \equiv \sum_{|\ell-b_\mu| > e} |\alpha_\ell^\mu|^2 \le \frac{1}{2(e-1)}
\end{equation}
Note that we only care about the first $s<p$ bits, so we can choose $e=2^{p-s}-1$.Hence,
\begin{align}
U_{\rm PE} \ket{\psi_\mu}\ket{0^m} &=  \ket{\psi_\mu}\ket{\rest_\mu'} \otimes \left[ \sum_{|\ell-b_\mu|\le e} \alpha_\ell^\mu \ket{\ell}  + \sum_{|\ell-b_\mu|> e} \alpha_\ell^\mu \ket{\ell} \right]  \nonumber \\
&= \ket{\psi_\mu}\ket{\rest_\mu'}\otimes\left(\sqrt{1-p_\mu^\text{error}}\ket{\tilde{E}_\mu}\ket{\rest^1_\mu} + \sqrt{p_\mu^\text{error}}\ket{\rest^2_\mu}\right)
\end{align}
Comparing this with the idealized output in Eq.~\eqref{eq:ideal-PE}, we can identify $\ket{\rest_\mu}=\ket{\rest_\mu'}\ket{\rest_\mu^1}$, and observe that
\begin{align}
(U_{\rm PE} - U_{\rm PE}^\ideal) \ket{\psi_\mu}\ket{0^m} = \ket{\psi_\mu}\ket{\text{error}_\mu}, \quad
\text{where}
\quad
\left\|\ket{\text{error}_\mu}\right\|^2 \le 2p^\text{error}_\mu = O(2^{-(p-s)})
\end{align}
Thus, for any normalized state $\ket{\psi}=\sum_\mu c_\mu \ket{\psi_\mu}\ket{0^m}$, we have
\begin{align}
\|(U_{\rm PE} - U_{\rm PE}^\ideal)\sum_\mu c_\mu \ket{\psi_\mu}\ket{0^m}\|^2 = O(2^{-(p-s)}) = O(1/\poly(n))
\end{align}
where we assume that we can choose, for example, $p=2s$ and $s=O(\log(n))$, and thus make this first source of error due to imprecision to be polynomially small.

The second source of error is due to the fact that we need to implement the circuit using only local gates, in order to ensure the corresponding circuit-Hamiltonian is local,
The only non-local gates that we need to address are $u_j=e^{-iH\tau_j}$, where $\tau_j=2^{j-1}\tau$.
This can be implemented with local gates via ``Trotterization''.
Specifically, we write $H=\sum_a H_a$, where $H_a$ is a $k$-local term,
and implement $\tilde{u}_j=(\prod_a e^{-iH_a \tau_j/r_j})^{r_j}$ for some integer $r_j$, so that $\|\tilde{u}_j-u_j\| \le O(\tau_j^2/r_j)$.
Assuming $s=O(\log n)$ and so $\tau_j = O(\poly(n))$, we can then choose $r_j = O(\tau_j^2 \poly(n))= O(\poly(n))$ to ensure each such error is polynomially small.
The error from Trotterization is bounded by
\begin{equation}
\|U_{\rm PE}^\local - U_{\rm PE}\| \le \sum_{j=1}^s O(\tau_j^2/r_j) = O(1/\poly(n)).
\end{equation}

In sum, we can choose any $\zeta=O(1/\poly(n)$, and construct the actual phase estimation circuit $U_{\rm PE}^\local$ in such a way that it is $\zeta$-close to $U_{\rm PE}^\ideal$ on any valid input state $\ket{\psi}\ket{0^m}$:
\begin{align}
\|(U_{\rm PE}^\local - U_{\rm PE}^\ideal) \ket{\psi}\ket{0^m} \| \le \zeta \equiv O(1/\poly(n)). \label{eq:PE-error}
\end{align}

\noindent
\textbf{Part II -- Constructing degree-reducer Hamiltonian from circuit}---
We first replace $U_{\rm PE}^\local$ with a sparsified version, so that $\deg(U_{\rm PE}^\local)=O(1)$.
This can be done by adding swap gates and ancilla qudits after each gate, so that the computational states are mapped to the new ancilla qudits.
Assuming each gate is $k$-local, this only increase the total number of gates and qudits by a factor of $k$, and the error from idealized phase-estimation is still bounded by Eq.~\eqref{eq:PE-error}

Suppose the sparsified circuit $U_{\rm PE}^\local$ now has $t_0$ gates.
Then we construct the following circuit
\begin{align}
U_\circuit = (\Id)^L U_{\rm PE}^{\local\dag} (\Id)^s U_{\rm PE}^\local.
\end{align}
Note we add $U_{\rm PE}^{\local \dag}$ for uncomputing and $s+L$ idling identity gates, making the entire circuit gate count $T=2t_0+s+L$.
The $s$ identity gates are used for local measurements of energy to $s$-bit precision, and $L=O((2t_0+s)/\epsilon^2)=O(\poly(n)/\epsilon^2)$ identity gates are used to ensure $\epsilon$-incoherence as in Lemma~\ref{lem:circuit-idling}.
The history states with respect to eigenstate $\ket{\psi_\mu}$ of $H$ and this circuit are
\begin{align}
\ket{\eta_\mu} =\frac{1}{\sqrt{T+1}}\sum_{t=0}^T \Big( U_t \cdots U_2 U_1 \ket{\psi_\mu}\ket{0^m} \Big) \ket{1^t 0^{T-t}}
\end{align}

We can convert the circuit to Hamiltonian $\tilde{H}_\circuit$ using the method described in Lemma~\ref{lem:circuit-ham-simul}, where $H_{out}$ is chosen to be
\begin{align}
H_{out} = (T+1)\sum_{b=1}^s \frac{2\pi}{\tau} 2^{-b}\ketbra{1}_b^\anc \otimes P^\clock(t=t_0+b).
\end{align}
Here, we denote $P^\clock(t)=\ketbra{110}_{t-1,t,t+1}^\clock$ as the effective projector onto legal clock states corresponding to time step $t$.

To show that $\tilde{H}_\circuit$ simulates the original Hamiltonian $H$, we first show that $H_{out}$ restricted to the subspace of history states $\L=\spn\{\ket{\eta_\mu}:1\le \mu \le d^n \}$ can be approximated by the following effective Hamiltonian
\begin{equation}
H_\eff = \sum_\mu \tilde{E}_\mu \ketbra{\eta_\mu}.
\end{equation}
Consider arbitrary states $\ket{\eta}\in \L_-$.
We write $\ket{\eta} = \sum_{\mu} a_\mu \ket{\eta_\mu}$, and observe
\begin{align}
\braket{\eta| H_{out}|\eta}  &= \frac{2\pi}{\tau} \sum_{b=1}^s 2^{-b} \left[\sum_{\nu} a_\nu^* \bra{\psi_\nu}\bra{0^m} U_{\rm PE}^{\local\dag} \right] \ketbra{1}_b \left[\sum_{\mu} a_\mu U_{\rm PE}^{\local} \ket{\psi_\mu}\ket{0^m}\right] \nonumber \\
&= \frac{2\pi}{\tau} \sum_{b=1}^s 2^{-b} \left[\sum_{\nu} a_\nu^* \bra{\psi_\nu}\bra{\tilde{E}_\nu}\right] \ketbra{1}_b \left[\sum_{\mu} a_\mu \ket{\psi_\mu}\ket{\tilde{E}_\mu} \right] + O(\|H\|\|(U_{\rm PE}^{\local}  - U_{\rm PE}^\text{ideal})\ket{\psi}\ket{0^m}\|) \nonumber \\
&= \sum_\mu |a_\mu|^2 \tilde{E}_\mu + O(\zeta\|H\|)
\end{align}
Note the extra factor of $\|H\|$ comes from $\tau=O(1/\|H\|)$.
Since we can choose $\zeta=O(1/\poly(n))$ to be arbitrarily polynomially small, we write $\xi/2=O(\zeta\|H\|)=O(1/\poly(n))$.
Hence,
\begin{equation}
\left|\braket{\eta |  H_{out} - H_\eff|\eta}\right| \le \xi/2 \quad \forall \ket{\eta} \in \L_- \quad
\Longrightarrow \quad
\|H_\eff - H_{out}|_{\L_-} \| \le \xi/2
\end{equation}
Thus by Lemma~\ref{lem:circuit-ham-simul}, for any $\eta>0$, the constructed $\tilde{H}_\circuit$ full-spectrum-simulates $H_\eff$ to precision $(\eta,\xi)$ below some energy cut-off $\Delta=O(\xi^{-1}\|H_{out}\|^2 + \eta^{-1}\|H_{out}\|)$.

Finally, we show that a slightly rescaled $\tilde{H}_\circuit$ gap-simulates $H$.
Note since we had chosen $O(log n)$-bit precision, $\|\tilde{E}_\mu - E_\mu|\le O(1/\poly(n)$.
Thus, similar to the given Hamiltonian $H$, $H_{\eff}$ has the quasi-groundspace projector $Q=\sum_{\mu=1}^q \ketbra{\eta_\mu}$ with quasi-spectral gap $\gamma+O(1/\poly(n))$ and energy spread $w+O(1/\poly(n))$.
Thus, we can choose $\xi=\Theta(\epsilon\gamma)=\Omega(\epsilon/\poly(n))$ and $\eta=\Theta(\epsilon)$ and $\alpha=\frac43 + O(1/\poly(n))$, and then use Lemma~\ref{lem:relate-CMP} to show that $\tilde{H}_\circuit' = \alpha \tilde{H}_\circuit$, gap-simulates $H_\eff$ with incoherence $\epsilon/2$, while the overall energy scale is still polynomial as $\Delta\ge (\xi^{-1}\|H_{out}\|^2 + \eta^{-1}\|H_{out}\|) = O(\poly(n,\epsilon^{-1}))$.
In other words, if $\tilde{P}$ is the quasi-groundspace projector of $\tilde{H}_\circuit'$ with quasi-spectral gap $\gamma$, then $\|\tilde{P} - Q\| \le \epsilon/2$.
Furthermore, by Lemma~\ref{lem:circuit-idling}, we have $\|Q-P\otimes P_\anc\|\le \epsilon/2$ since we have added $L=O(\poly(n)/\epsilon^2)$ identity gates to the end of the circuit.
This implies that
\begin{equation}
\|\tilde{P} - P\otimes P_\anc\| \le \|\tilde{P} - Q\| + \|Q-P\otimes P_\anc\| \le \epsilon
\end{equation}
This means $\tilde{H}_\circuit'$ gap-simulates $(H,P)$ with incoherence $\epsilon$ and energy spread $w+O(1/\poly(n))$.
Finally, we note that
$\tilde{H}_\circuit$ has degree $O(\deg(U))=O(1)$, $O(T)=O(\poly(n)/\epsilon^2)$ number of terms, and $O(\poly(n,T,\xi^{-1},\eta^{-1}, \|H_{out}\|)) = O(\poly(n,\epsilon^{-1}))$ interaction strength.
This concludes our proof of Theorem~\ref{thm:degree-reduction-poly}.
\end{proof}

\subsection{Coherent Dilution and DR of $H_\oneone$ with Polynomial Interaction Strength (Proposition~\ref{prop:circuit})\label{sec:prop-circuit}}

Recall that we have developed an incoherent diluter and degree-reducer of $H_\oneone$ in Prop.~\ref{prop:incoherent-tree} from Appendix~\ref{sec:incoherenttree}.
Although the construction only need bounded interaction strength, it was completely incoherent.
As shown in Theorem~\ref{thm:degree-reduction-poly}, coherent degree-reduction is possible by allowing polynomial-strength interactions.
It turns out that for some special cases such as $H_\oneone$, coherent dilution is also possible with polynomial-strength interaction.

This construction is done by constructing a similar circuit as in Prop.~\ref{prop:incoherent-tree} that counts the number of excitations locally and arranging them on a tree geometry so that individual qubit/qudit has constant degree.
Since the circuit has $O(n)$ gates and constant degree, we can apply the circuit-to-Hamiltonian mapping as in Lemma~\ref{lem:circuit-ham-simul}, producing a coherent diluter and degree-reducer of $H_\oneone$.

{
\renewcommand{\theprop}{\ref{prop:circuit}}
\begin{prop}[constant-coherence dilution and DR for $H_A$ with polynomial interaction strength]
There is a 6-local $[6, O(n/\epsilon^2),O(\poly(n,\epsilon^{-1}))]$-degree-reducer of $H_\oneone$ with $\epsilon$-incoherence, energy spread $\tilde{w}=0$, and trivial encoding. This is also a $[6, O(n/\epsilon^2),O(\poly(n,\epsilon^{-1}))]$-diluter of $H_\oneone$.
\end{prop}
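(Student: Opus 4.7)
The plan is to combine the tree-based counting idea of Proposition~\ref{prop:incoherent-tree} with the ``uncompute and idle'' technique from Theorem~\ref{thm:degree-reduction-poly}. First I arrange the $n$ original qubits as the leaves of a balanced binary tree, and associate to each of the $n-1$ internal nodes an ancilla qudit $c_v$ storing a saturated excitation count in $\{0,1,2{+}\}$. I then build a reversible circuit $C_\textnormal{count}$ of total gate count $t_0=O(n)$ that sweeps the tree from leaves to root by applying one local gate per internal node: the gate at node $v$ reads the counters of its two children and writes their saturating sum into $c_v$. Because each such gate acts only on three tree-adjacent qudits, $C_\textnormal{count}$ is $3$-local and of bounded circuit degree, and its root counter $c_\textnormal{root}$ ends up in state $2{+}$ if and only if the input on the original qubits has at least two excitations.

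Next, I form the composite circuit $U = (\Id)^{L}\, C_\textnormal{count}^{\,\dag}\,(\Id)^{s}\, C_\textnormal{count}$ with $s=O(1)$ idle steps between the forward and backward passes, and $L=O(n/\epsilon^2)$ trailing identity gates whose only role is to saturate the weight of history states in the final idling subspace, as in Lemma~\ref{lem:circuit-idling}. Applying the circuit-to-Hamiltonian mapping of Lemma~\ref{lem:circuit-ham-simul} to $U$, I take
\begin{equation*}
H_{out} \;=\; (T+1)\cdot \ketbra{2{+}}_{c_\textnormal{root}} \otimes \ketbra{110}_{t_0-1,t_0,t_0+1}^\clock,
\end{equation*}
where $T=|U|=O(n/\epsilon^2)$ and the clock projector pins the probe to the time $t_0$ just after $C_\textnormal{count}$ has completed. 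A direct computation shows that on the history subspace $\L$ this reduces, up to the uniform $1/(T+1)$ weight per clock slice, to a faithful copy of (a constant multiple of) $H_\oneone$: any history state built from a computational-basis input with at most one excitation satisfies $c_\textnormal{root}\neq 2{+}$ at time $t_0$ and contributes zero energy, while any history state built from an input with at least two excitations receives an $\Omega(1)$ energy penalty after the $(T+1)$ prefactor is absorbed.

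The remaining accounting follows the pattern of Theorem~\ref{thm:degree-reduction-poly}. Lemma~\ref{lem:circuit-ham-simul} produces a Hamiltonian $\tilde{H}_\circuit$ that is $6$-local (propagation terms of $3$-local gates pick up $3$ clock qubits each), has degree $O(1)$ bounded by $6$ via careful node-by-node serialization of $C_\textnormal{count}$, has $O(T)=O(n/\epsilon^2)$ terms, and has interaction strength $O(\poly(n,T,\xi^{-1},\eta^{-1},\|H_{out}\|))=O(\poly(n,\epsilon^{-1}))$ upon taking $\xi,\eta=\Theta(\epsilon)$. Since $H_\oneone$ has $\Theta(n^2)$ terms while $\tilde{H}_\circuit$ has only $O(n/\epsilon^2)$, the same Hamiltonian is simultaneously a $[6, O(n/\epsilon^2), O(\poly(n,\epsilon^{-1}))]$-diluter. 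The final $\epsilon$-incoherence bound comes from combining the $O(\epsilon)$ error granted by Lemma~\ref{lem:circuit-ham-simul} via Lemma~\ref{lem:relate-CMP} with the $O(\epsilon)$ incoherence from the trailing $L=O(n/\epsilon^2)$ idle gates supplied by Lemma~\ref{lem:circuit-idling}, using the triangle inequality on $\|\tilde{P}-P\otimes P_\anc\|$, all with trivial encoding.

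The hard part will be two-fold. First, calibrating the tree-counter gadget so that the derived circuit-Hamiltonian simultaneously meets both the $6$-locality and degree-$6$ bounds: if $C_\textnormal{count}$ is compiled too aggressively in parallel, a single counter qudit can appear in too many propagation terms, blowing up its Hamiltonian degree, whereas serializing too much lengthens $T$ but remains acceptable as long as $T=O(n/\epsilon^2)$. Second, verifying that $H_{out}|_\L$ indeed separates the ``$\le 1$-excitation'' history states from the ``$\ge 2$-excitation'' history states by an $\Omega(1)$ gap uniformly in $n$, so that the rescaling built into Lemma~\ref{lem:relate-CMP} yields quasi-spectral gap $\gamma=1$ and energy spread $\tilde{w}=0$, matching exactly the parameters claimed in the proposition.
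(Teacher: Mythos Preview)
Your proposal is correct and follows essentially the same approach as the paper: a tree-based saturating-count circuit, uncomputed and padded with $L=O(n/\epsilon^2)$ trailing identities, converted via the circuit-to-Hamiltonian construction with an output penalty on the root counter. The only minor caveat is that the exact claim $\tilde{w}=0$ does not follow from Lemma~\ref{lem:relate-CMP} (which would give $\tilde{w}=O(\xi/\gamma)$); instead you should observe directly, as the paper does, that the history states of $\le 1$-excitation inputs are \emph{exact} zero-energy groundstates of $\tilde{H}_\circuit$ since both $H_0$ and $H_{out}$ annihilate them.
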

\addtocounter{prop}{-1}
}

\begin{proof}

First, let us construct a circuit to count the number of excitations.
Similar to Fig.~\ref{fig:tree}, we add $n-1$ ancilla qutrits to the system of $n$ qubits and arrange them in a tree.
In this arrangement, the original qubits are on the leaf nodes, and the ancilla qutrits are on the internal nodes.
Note qutrits are particles with three possible states: $\ket{0},\ket{1},\ket{2}$.

Starting from the internal nodes right below the leaf nodes, we label each internal node with an index $t$ and work our way down to the root node, so that no parent node has an index smaller than its children. For $t=1,2,\ldots,n-2$, we apply a gate $U_t$ for $t$-th internal node, where $U_t$ is a 3-local unitary satisfying
\begin{gather}
U_t\ket{00}\ket{z} = \ket{00}\ket{z},\quad
U_t\ket{10}\ket{z} = \ket{10}\ket{z\oplus1},\quad
U_t\ket{01}\ket{z} = \ket{01}\ket{z\oplus1}
\\U_t\ket{xy}\ket{z} = \ket{xy}\ket{2-z} \quad \forall xy\neq00,01,10,
\end{gather}
where $\ket{lr}\ket{b}$ denote the state where the internal node qutrit is in the state $\ket{b}$, while its left (right) child node is in the state $\ket{l}$ ($\ket{r}$).
Here, we denote $\oplus$ as addition modulo 3.
For $t=n-1$, we apply $U_{n-1}$ on the root node and its two children where $U_{n-1}$ satisfies
\begin{eqnarray}
U_{n-1}\ket{xy}\ket{z}&=&\ket{xy}\ket{z} \quad \quad \text{for} \quad xy=00,01,10,\\
\text{and} \quad
U_{n-1}\ket{xy}\ket{z}&=&\ket{xy}\ket{z\oplus1} \quad \textnormal{otherwise.}
\end{eqnarray}
Assuming the ancilla qutrits are initialized at $\ket{0}$, this circuit checks how many excitations (1s) are among the $n$ system qubits, and keep the root node qutrit at $\ket{0}$ if and only if the circuit accepts the input when less than two excitations are counted.

Since there are $n-1$ internal nodes, there are also $n-1$ such gates.
However, since we need to maintain coherence on the ancilla by un-computing, the full circuit should be
\begin{equation} \label{eq:treecircuit}
U_\textnormal{circuit} = U_1^\dag U_2^\dag \cdots U^\dag_{n-2} U_{n-1} U_{n-2}\cdots U_2 U_1.
\end{equation}
Note this circuit consists of $D\equiv 2(n-1)-1=2n-3$ gates, where $U_{n-1}$ is the only gate that acts on the root node.
It is then clear that for input states of the form $\ket{x_1\cdots x_n}\otimes\ket{0^{n-1}}^\anc$, where $x_i\in\{0,1\}$ are states of the original qubits, the final output of the full circuit is $\ket{x_1\cdots x_n}\ket{0^{n-2}}\ket{0}$ if there are zero or one excitations amongst $x_i$, and $\ket{x_1\cdots x_n}\ket{0^{n-2}}\ket{1}$ otherwise.

\begin{figure}[t]
\centering
\includegraphics[width=0.9\textwidth]{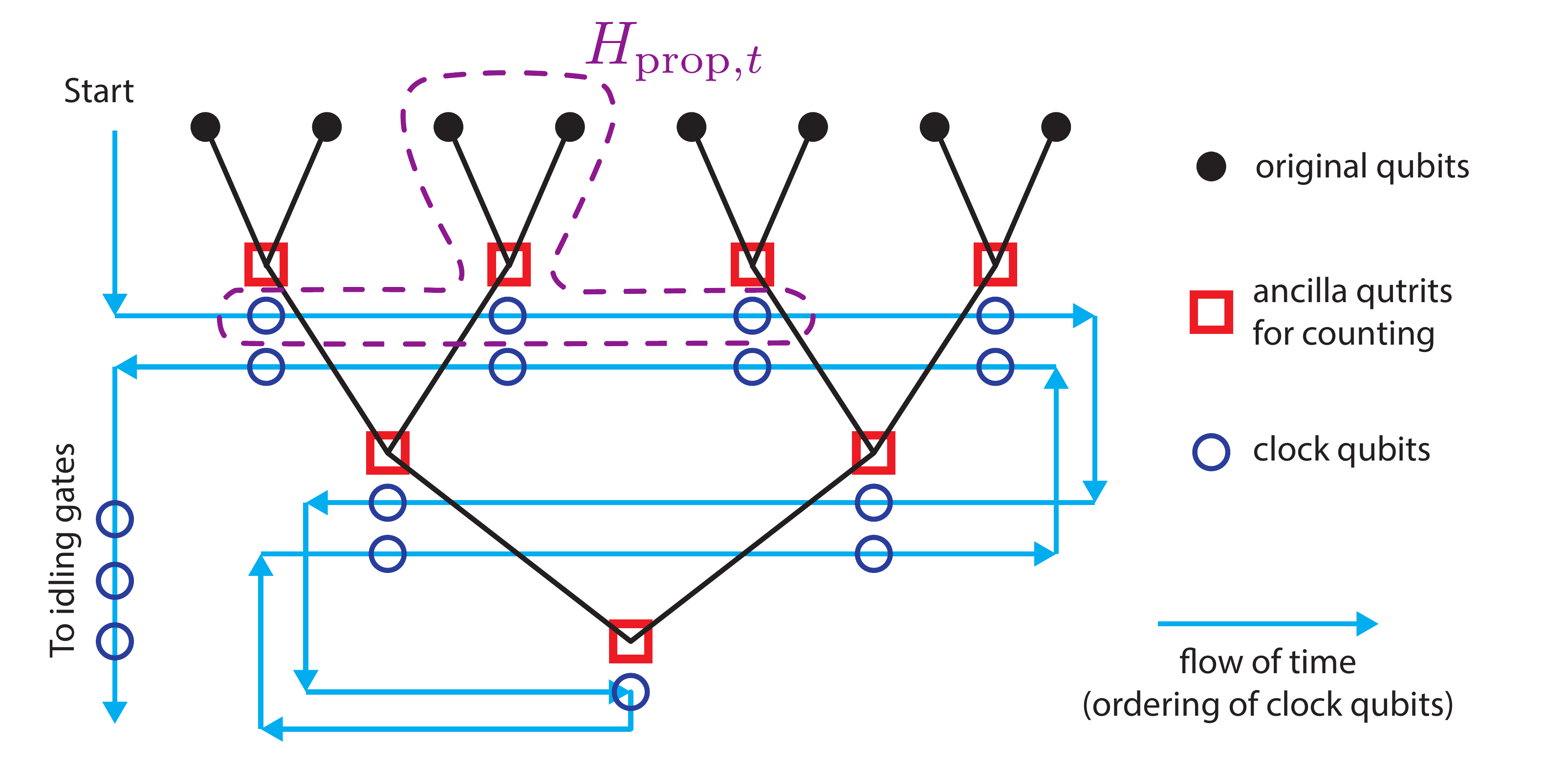}
\caption{\label{fig:circuit}Schematic diagram of a constant-degree, poly-strength diluter and degree-reducer of $H_\oneone$ via circuit Hamiltonian on a tree, for $n=8$. $H_{prop,t}$ are 6-local term that connects all the qudits at each branching as well as the nearby clock qubits on the flow of time. The clock qubits are linked together in the blue line that represents flow of time through $H_{clock}$, which consists of 2-local terms acting on nearest neighbors on the line. Each ancilla qutrit is also connected to the clock qubit directly below through $H_{in}$ and $H_{out}$.}
\end{figure}

Per Lemma~\ref{lem:circuit-idling}, we add $L=O(D/\epsilon^2)$ identity gates to the end of $U_\circuit$ to ensure $\epsilon$-incoherence.
We then convert the resultant circuit with $T=D+L$ gates to a Hamiltonian $\Hcirc=H_0 + H_\oneone^{out}$ using the construction laid out in Lemma~\ref{lem:circuit-ham-simul}.
To ensure that our circuit only accepts outputs where the root ancilla qutrit is $\ket{0}$ once the circuit reaches it (with $U_{n-1}$ at $t=n-1$), we use the following Hamiltonian
\begin{equation}
H_\oneone^{out} = J_{out} (\ketbra{1}_{n-1}^\anc+\ketbra{2}_{n-1}^\anc)\otimes \ketbra{1}^\clock_{n-1}.
\end{equation}

Let $\ket{\psi_\mu}$ be eigenstates of $H_\oneone$.
For $1\le \mu\le n+1$, let $\ket{\psi_\mu}$ be the groundstates of $H_\oneone$.
We consider the restriction to the subspace of history states, $\L=\spn\{\ket{\eta_\mu}:1\le\mu \le 2^n\}$.
Let us write $\tilde{P}=\sum_{\mu=1}^{n+1} \ketbra{\eta_\mu}$ as the projector onto history states corresponding to groundstate input.
It is easy to see that $\Hcirc\tilde{P} = H_\oneone^{out} \tilde{P}=0$, so $\tilde{P}$ is in fact the groundspace projector of $\Hcirc$.
Let us denote $U_{t\leftarrow0}=U_t\cdots U_2 U_1$.
We note that for any $\ket{\psi_\mu},\ket{\psi_\nu} \in \L$ but $\perp \tilde{P}$, we have
\begin{align}
\braket{\eta_\mu|H_\oneone^{out}|\eta_\nu} &= \frac{J_{out}}{T+1}\sum_{t',t=0}^T \left(\bra{\psi_\mu} \bra{0^{n-1}}\bra{t'}\right) U_{t'\leftarrow 0}^\dag H_{out} U_{t\leftarrow 0} \left(\ket{\psi_\nu}\ket{0^{n-1}}\ket{t}\right) \nonumber \\
&= \frac{J_{out}}{T+1}\sum_{t',t=n-1}^T \left(\bra{\psi_\mu} \bra{0^{n-1}}\right) U_{t'\leftarrow 0}^\dag U_{t\leftarrow 0} \left(\ket{\psi_\nu}\ket{0^{n-1}}\right) \delta_{t,t'}\nonumber \\
&= \frac{J_{out}}{T+1}\sum_{t=n-1}^T \braket{\psi'|\psi} = J_{out} \frac{T+2-n}{T+1}\braket{\psi'|\psi} = J_{out} \frac{n-1+L}{2(n-1)+L}\braket{\psi'|\psi}.
\end{align}
Consequently, $H_\oneone^{out}|_{\L}$ is diagonal in the basis of history states, and we can write
\begin{equation}
H_\oneone^{out}|_\L = \tilde\gamma (\Id - Q) \equiv H_\eff, \quad \text{where}\quad
\tilde\gamma \equiv J_{out}\frac{n-1+L}{2(n-1)+L} \ge J_{out}/2
\end{equation}
We can then apply Lemma~\ref{lem:circuit-ham-simul} to show that for any $\eta,\xi>0$, the constructed $\Hcirc$ full-spectrum-simulates $H_\eff$ to precision $(\eta,\xi)$ below energy cut-off $\Delta\ge O(\xi^{-1} J_{out}^2 + \eta^{-1}J_{out})$, with trivial encoding.
By choosing $\xi=O(\epsilon J_{out})$ and $\eta=O(\epsilon$), we can apply Lemma~\ref{lem:relate-CMP} and show that $\frac{4}{3}\Hcirc$ gap-simulates $H_\eff$ with incoherence $\epsilon$ and trivial encoding.
We note that by choosing $J_{out}=2$, we ensure that the spectral gap of $\Hcirc$ is $\ge 1$.
Furthermore, we note that $\Hcirc$ in fact has energy spread $\tilde{w}=0$ since $\Hcirc \tilde{P}=0$.
Since $\|\tilde{P}-P\otimes P_\anc\|\le \epsilon$ by Lemma~\ref{lem:circuit-idling}, we have shown that
$\frac43 \Hcirc$ gap-simulates $H$ with incoherence $\epsilon$, energy spread $\tilde{w}=0$, and trivial encoding, whose maximum interaction strength is $O(\poly(n,\epsilon^{-1})$.

A schematic illustrating the connectivity of the qubits/qutrits in $\tilde{H}_\oneone^\textnormal{circuit}$ is shown in Fig.~\ref{fig:circuit}.
By inspection we see that the computation qubits have max degree 2, the ancilla qutrits have max degree 5, and the clock qubits have max degree 6.
The Hamiltonian consists of $M=O(n+T)=O(n/\epsilon^2)$ terms, each at most 6-local.
Hence, $\tilde{H}_\oneone^\textnormal{circuit}$ is a 6-local $[6, O(n/\epsilon^2),O(\poly(n,\epsilon^{-1}))]$-sparsifier of $H_\oneone$ with $\epsilon$-incoherence.
\end{proof}

\section{Bounding error of perturbed groundspace  (Lemma~\ref{lem:PPgroundspace} and \ref{lem:PPgsv2})\label{sec:PPgroundspace-proof}}

In this Appendix, we prove Lemma~\ref{lem:PPgroundspace} which is used earlier in Appendix~\ref{sec:comp-defns} and \ref{sec:imposs-dilute}.
In fact, we prove a more general version than the version previously stated.
This is Lemma~\ref{lem:PPgsv2}, of which Lemma~\ref{lem:PPgroundspace} is a special case with the further restriction that $w\le1/2$.
Note this Lemma uses essentially the same technique as Lemma A.1 from Ref.~\cite{OliveiraTerhal}.

\begin{lemma}[Error bound on perturbed groundspace]
\label{lem:PPgsv2}
Let $H$ and $\tilde{H}$ be two Hamiltonians on the same Hilbert space.
Per Def.~\ref{defn:gap}, let $P$ project onto a quasi-groundspace of $H$ with energy spread $w$ and quasi-spectral gap $\gamma$.
If $\|\tilde{H}-H\| \le \kappa$, and $\kappa < (1-w)\gamma/4$, then there is a quasi-groundspace projector $\tilde{P}$ of $\tilde{H}$ with quasi-spectral gap at least $\tilde{\gamma}$, comprised of eigenstates of $\tilde{H}$ up to energy at most $\lambda_1(\tilde{H}) + \tilde{w}\tilde{\gamma}$, where
$\tilde{\gamma} \ge \gamma-2\kappa$ and $\tilde{w}\tilde{\gamma} \le w\gamma + 2\kappa$. Furthermore,
\begin{equation}
\|\tilde{P}- P \| < \kappa \left(\frac{4}{(1-w)\gamma} +  \frac{2(1+w)/(1-w)}{(1-w)\gamma-4\kappa} \right).
\end{equation}
In particular, if $w\le 1/2$ and $\kappa \le (1-w)\gamma/8$, then $\|\tilde{P}-P\| < 32\kappa/\gamma$.
\end{lemma}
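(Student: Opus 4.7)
The plan is to follow the Green's function (resolvent) approach from Lemma~A.1 of Oliveira--Terhal. The proof proceeds in two stages: first establish the spectrum of $\tilde H$ using Weyl's inequality, and then bound the projector difference via a contour integral of resolvents.

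For the spectrum stage, since $\tilde H = H + V$ with $\|V\| \le \kappa$, Weyl's inequality places each $\lambda_i(\tilde H)$ within $\kappa$ of $\lambda_i(H)$. Letting $q = \rank(P)$, I would define $\tilde P$ as the projector onto the bottom $q$ eigenvectors of $\tilde H$. Its top eigenvalue is at most $E^g_H + w\gamma + \kappa$, its bottom at least $E^g_H - \kappa$, and the first excluded eigenvalue at least $E^g_H + \gamma - \kappa$. Combined with $\tilde E^g = \lambda_1(\tilde H) \in [E^g_H - \kappa, E^g_H + \kappa]$, this yields $\tilde\gamma \ge \gamma - 2\kappa$ and $\tilde w\tilde\gamma \le w\gamma + 2\kappa$. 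Crucially, the assumption $\kappa < (1-w)\gamma/4$ guarantees that the perturbed quasi-GS eigenvalues of $\tilde H$ are strictly separated from the perturbed rest, so a common closed contour $\Gamma \subset \C$ can be drawn that encloses the quasi-GS eigenvalues of \emph{both} $H$ and $\tilde H$ while excluding all others.

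For the projector difference, the Riesz projection formula writes
\begin{equation}
P = \frac{1}{2\pi i}\oint_\Gamma (z-H)^{-1}\,dz, \qquad \tilde P = \frac{1}{2\pi i}\oint_\Gamma (z-\tilde H)^{-1}\,dz,
\end{equation}
and the resolvent identity $(z-\tilde H)^{-1} - (z-H)^{-1} = (z-H)^{-1} V (z-\tilde H)^{-1}$ gives
\begin{equation}
\|\tilde P - P\| \;\le\; \frac{\kappa}{2\pi}\oint_\Gamma \frac{|dz|}{\dist(z,\sigma(H))\,\dist(z,\sigma(\tilde H))}.
\end{equation}
I would take $\Gamma$ as a rectangle (or an appropriate limit) with right vertical side lying inside the unperturbed spectral gap and with the left side and horizontal sides pushed out, exploiting the $O(1/|z|)$ decay of the resolvents at infinity to kill the far contributions. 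The two-term structure of the claimed bound strongly suggests splitting the remaining integral into two pieces where different distance estimates dominate: one controlled purely by $\dist(z,\sigma(H))$ and contributing $\sim \kappa/[(1-w)\gamma]$ with no $\kappa$ correction in the denominator, and one controlled by the tighter $\dist(z,\sigma(\tilde H))$ and contributing $\sim \kappa/[(1-w)\gamma - 4\kappa]$, with the precise constants $4$ and $2(1+w)/(1-w)$ emerging from the chosen geometry.

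The main obstacle is matching the specific numerical form of the bound: the constant $4$ and the appearance of $4\kappa$ (rather than $2\kappa$) in the second denominator are sensitive to exactly where along the gap the vertical side of $\Gamma$ is placed and how its contributions are split. Once the general bound is established, the stated special case $w \le 1/2,\ \kappa \le (1-w)\gamma/8$ follows by direct algebra: in that regime $(1-w)\gamma \ge \gamma/2$ and $(1-w)\gamma - 4\kappa \ge (1-w)\gamma/2 \ge \gamma/4$, so the first term is at most $8\kappa/\gamma$ and the second at most $24\kappa/\gamma$, summing to $32\kappa/\gamma$.
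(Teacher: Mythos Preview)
Your spectrum stage via Weyl's inequality is correct and matches the paper exactly.

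For the projector bound, your approach is viable but genuinely different from the paper's, and your guess about the origin of the two-term structure is off. The paper does \emph{not} bound $\|\tilde P - P\|$ by a single contour integral of the full resolvent difference. Instead it writes $\tilde P - P = (\tilde P - \Pi_-\tilde P\Pi_-) + (\Pi_-\tilde P\Pi_- - P)$, where $\Pi_- = P$ is the projector onto the low-energy subspace of the \emph{unperturbed} $H$. The first piece is bounded purely algebraically: one shows $\|\tilde P - \Pi_-\tilde P\Pi_-\|\le 2\|\Pi_+\tilde P\|$ and then squeezes $\|\Pi_+ H\tilde P\|$ from above and below to get $\|\Pi_+\tilde P\|\le \|V\|/(\Delta-\eps)$, yielding the $4\kappa/[(1-w)\gamma]$ term. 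The second piece is bounded by a contour integral, but on the subspace $\L_-$ only, comparing $(z-\Sigma_-(z))^{-1}$ with $(z-H_-)^{-1}$ around a \emph{circle} of radius $r=(1+w)\gamma/2$; here the self-energy bound $\|\Sigma_-(z)-H_-\|\le \kappa\Delta/(\Delta-2\kappa)$ produces the second term with its $4\kappa$ denominator and $(1+w)/(1-w)$ prefactor.

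Your direct Riesz-projection route (full resolvent identity, circular or vertical-line contour) would in fact yield a \emph{single}-term bound of the form $2\kappa(1+w)/[(1-w)((1-w)\gamma-2\kappa)]$, which is comparable but not the stated expression; the two terms do not arise from splitting one contour integral into regions dominated by $\dist(z,\sigma(H))$ versus $\dist(z,\sigma(\tilde H))$. So your method proves a correct lemma of the same flavor, but if you want the precise constants in the statement you would need to adopt the paper's $\Pi_-\tilde P\Pi_-$ decomposition and self-energy machinery. Your algebra for the special case is fine.
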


To prove the above Lemma, we borrow the Green's function techniques from Ref.~\cite{KKR06,OliveiraTerhal} to bound error due to perturbations.

First, let us establish some notations, similar to that in Ref.~\cite{KKR06,OliveiraTerhal}.
We consider Hamiltonians of the form $\tilde{H} = H + V$, defined on some Hilbert space $\tilde{\H}$.
(Note the symbol $V$ in this Appendix refers to a Hermitian operator, not an isometry.)
Furthermore, we assume $H$ has a gap of width $\Delta>0$ in the spectrum centered at some value $\lambda_*$;
in other words, no eigenvalue of $H$ lies between $\lambda_- = \lambda_* - \Delta/2$ and $\lambda_+ = \lambda_* + \Delta/2$.
We decompose the Hilbert space $\tilde{\H}=\L_+\oplus\L_-$, where $\L_-$ is the low-energy subspace of eigenstates of $H$ with eigenvalue  $\le \lambda_-$, and $\L_+$ corresponds to high-energy eigenstates of $H$ with eigenvalues $\ge \lambda_+$.
Correspondingly, we denote $\Pi_\pm$ as projectors onto subspaces $\L_\pm$.
Furthermore, let us denote the operator-valued Green's functions $G(z) = (z-H)^{-1}$ and $\tilde{G}(z) = (z - \tilde{H})^{-1}$.
We can decompose all operators in the Hilbert space $\tilde{H}$ into four blocks according to $\L_{\pm}$:
\begin{equation}
\begin{aligned}
H &= \begin{pmatrix}
H_{+} & 0 \\
0 & H_{-} \\
\end{pmatrix}, \quad
&V &= \begin{pmatrix}
V_{++} & V_{+-} \\
V_{-+} & V_{--} \\
\end{pmatrix}, &\quad
\tilde{H} =
\begin{pmatrix}
\tilde{H}_{++} & \tilde{H}_{+-} \\
\tilde{H}_{-+} & \tilde{H}_{--}
\end{pmatrix}, \\
G &= \begin{pmatrix}
G_{+} & 0 \\
0 & G_- \\
\end{pmatrix},
\quad
&\tilde{G} &=
\begin{pmatrix}
\tilde{G}_{++} & \tilde{G}_{+-} \\
\tilde{G}_{-+} & \tilde{G}_{--}
\end{pmatrix}.
\end{aligned}
\label{eq:gadget-convention-1}
\end{equation}
We denote $A_{\pm\pm}=\Pi_{\pm}A\Pi_{\pm}$ and $A_{\pm\mp}= \Pi_{\pm}A\Pi_{\mp}$ as parts of operator $A$ restricted to mapping between the corresponding subspaces.
In cases when the operator is block diagonal in this basis, we simplify notation by denoting $G_+\equiv G_{++}$ for example.
Finally, we define the self-energy $\Sigma_-(z)$ as the following operator acting on the subspace $\L_-$:
\begin{equation}
\Sigma_-(z) = z - \tilde{G}_{--}^{-1}(z) = H_{-} + V_{--} +  \sum_{p=0}^\infty V_{-+} (G_+ V_{++})^p G_+ V_{+-}
\label{eq:gadget-convention-2}
\end{equation}
where after the last equality, we wrote out the series expansion of $\Sigma_-(z)$ that will be very useful.

Before we proceed to the proof of Lemma~\ref{lem:PPgsv2}, we first state a useful result proved in Ref.~\cite{KKR06}:

\begin{lemma}[Error bound on perturbed eigenvalues, Theorem 3 of \cite{KKR06}]
\label{lem:gadget-eigenvalue}
Consider a Hamiltonian $\tilde{H}= H + V$.
Let us denote a precision parameter $\eps>0$, and assume the existence of a Hermitian operator $H_\eff$ whose eigenvalues lie in some range $[a,b]$.
Suppose that all the following conditions are satisfied:
\begin{itemize}
\item For some constants $\Delta>0$ and $\lambda_* > b+\eps$, $H$ has no eigenvalues between $\lambda_- = \lambda_* - \Delta/2$ and $\lambda_+ = \lambda_* + \Delta/2$.
\item $\|V\| < \Delta/2$.
\item For all $z\in [a-\eps,b+\eps]$, the following inequality holds for the self-energy:
\begin{equation}
\|\Sigma_-(z) - H_\eff\| \le \eps
\end{equation}
\end{itemize}
Let $\tilde{H}|_{<\Delta/2}$ denote the operator $\tilde{H}$ restricted to eigenstates with eigenvalues $<\lambda_*$.
Then
\begin{equation}
\left|\lambda_j\left(\tilde{H}|_{<\lambda_*}\right) - \lambda_j\left(H_\eff\right)\right| \le \eps \quad \forall j
\end{equation}
where $\lambda_j(X)$ is the $j$-th eigenvalue of Hermitian operator $X$.
\end{lemma}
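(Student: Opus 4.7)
The plan is to reduce the eigenvalue comparison to an eigenvalue comparison between $\Sigma_-(z)$ and $H_\eff$ via Weyl's inequality, after first establishing a bijection between the low-energy spectrum of $\tilde H$ and fixed points of $\Sigma_-(z)$. First I would check the standard Feshbach/Schur identity: if $\ket{\phi}$ is a normalized eigenvector of $\tilde H$ with eigenvalue $\lambda<\lambda_*$, then decomposing $\ket{\phi}=\Pi_-\ket{\phi}+\Pi_+\ket{\phi}$ and solving the $++$-block for $\Pi_+\ket{\phi}$ in terms of $\Pi_-\ket{\phi}$ yields
\begin{equation}
\Sigma_-(\lambda)\,\Pi_-\ket{\phi}=\lambda\,\Pi_-\ket{\phi},
\end{equation}
with $\Sigma_-(z)=H_-+V_{--}+V_{-+}(z-H_+-V_{++})^{-1}V_{+-}$. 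Here the inverse makes sense because for $z<\lambda_*<\lambda_+-\|V\|$ the operator $z-H_+-V_{++}$ is strictly negative definite (using $\|V\|<\Delta/2$). A quick variational argument using $\|V\|<\Delta/2$ also shows $\Pi_-\ket{\phi}\ne 0$: no eigenstate of $\tilde H$ with eigenvalue $<\lambda_*$ can lie in $\L_+$, since any such vector has $\tilde H$-expectation at least $\lambda_+-\|V\|>\lambda_*$.

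Next I would establish the crucial monotonicity. Differentiating in $z$ gives $\tfrac{d}{dz}\Sigma_-(z)=-V_{-+}(z-H_+-V_{++})^{-2}V_{+-}\preceq 0$, so
\begin{equation}
\frac{d}{dz}\bigl[\Sigma_-(z)-z\Pi_-\bigr]\preceq -\Pi_-.
\end{equation}
Hence, in the operator sense, $\Sigma_-(z)-z\Pi_-$ is strictly decreasing in $z$; by the min-max principle, each ordered eigenvalue $\mu_j(z)$ of $\Sigma_-(z)-z\Pi_-$ is strictly decreasing in $z$, with slope at most $-1$. Consequently, the equation $\mu_j(z)=0$ — equivalently, $z=\lambda_j(\Sigma_-(z))$ — has at most one solution in the admissible range. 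Combined with Step 1, this shows that the $j$-th eigenvalue of $\tilde H|_{<\lambda_*}$ is exactly the unique $z_j$ solving $z_j=\lambda_j(\Sigma_-(z_j))$. A dimension count (eigenvalues of $\tilde H$ move continuously under the homotopy $H+tV$, $t\in[0,1]$, and none can cross $\lambda_*$ because $\tilde H$ has no spectrum in $(\lambda_*-\Delta/2+\|V\|,\lambda_*+\Delta/2-\|V\|)$) then confirms that $\dim\L_-$ such $z_j$ exist and exhaust the spectrum of $\tilde H|_{<\lambda_*}$.

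Finally I would invoke the hypothesis $\|\Sigma_-(z)-H_\eff\|\le\eps$ on $[a-\eps,b+\eps]$. By Weyl's inequality, $|\lambda_j(\Sigma_-(z))-\lambda_j(H_\eff)|\le\eps$ for all $z$ in this range. Evaluating at $z=z_j$ (which I must verify lies in $[a-\eps,b+\eps]$, using that $\lambda_j(H_\eff)\in[a,b]$ and sliding $z$ by at most $\eps$ still keeps $\Sigma_-(z)$ within $\eps$ of $H_\eff$) gives $|z_j-\lambda_j(H_\eff)|\le\eps$, which is the desired bound. The main obstacle is the careful bookkeeping in this last step: I must show that the unique fixed point $z_j$ really does sit inside the strip where the self-energy hypothesis applies, so that Weyl's bound can be legitimately evaluated at $z=z_j$. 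This is handled by the monotonicity argument — $\mu_j(a-\eps)\ge0\ge\mu_j(b+\eps)$ follows from $\|\Sigma_-(z)-H_\eff\|\le\eps$ at the endpoints — so the intermediate-value argument traps $z_j$ in $[a-\eps,b+\eps]$, closing the proof.
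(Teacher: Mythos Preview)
The paper does not actually prove this lemma; it merely states it as ``a useful result proved in Ref.~\cite{KKR06}'' (Kempe--Kitaev--Regev) and uses it as a black box. Your proposal is a correct sketch of the standard Feshbach--Schur argument that appears in \cite{KKR06}: the reduction of low-lying eigenvalues of $\tilde H$ to fixed points of $z\mapsto\lambda_j(\Sigma_-(z))$, the operator-monotone decrease of $\Sigma_-(z)-z\Pi_-$, and the endpoint/intermediate-value argument to trap each $z_j$ in $[a-\eps,b+\eps]$ before applying Weyl's inequality. Since there is no proof in the present paper to compare against, the relevant observation is simply that your argument matches the cited source.
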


\begin{proof}[\textbf{Proof of Lemma~\ref{lem:PPgsv2}}]
Let $E^g=\lambda_1(H)$.
WLOG let us assume $E^g=0$, since otherwise we can simply redefine $H\mapsto H'=H-E^g$ and $\tilde{H} \mapsto \tilde{H}'=\tilde{H}-E^g$, which have the same spectrum as the original $H$ and $\tilde{H}$ except with the eigenvalues shifted by $E^g$.

Note by Def.~\ref{defn:gap}, $H$ has no eigenvalue between $\lambda_- =  w\gamma$ and $\lambda_+= \gamma$, so there is a gap $\Delta=(1-w)\gamma$ in the spectrum of $H$ centered at $\lambda_* = \frac12(\lambda_+ + \lambda_-) = (1+w\gamma)/2$.
Thus let us decompose the Hilbert space $\H=\L_+ \oplus \L_-$, where $\L_+$ ($\L_-$) corresponds to eigenstates of $H$ with eigenvalue $\ge \lambda_+$ ($\le \lambda_-$).
We note that $P=\Pi_-$.
Now, let us denote $V = \tilde{H}-H$, which satisfies $\|V\|\le \kappa$ by assumption.
We consider a region $R = \{z\in \mathds{C}:|z|\le (1+w)\gamma/2\}$ in the complex plane, which is a disk centered at $z=0$ with radius $r=(1+w)\gamma/2$.
For any $z\in R$, we have
$\|G_+(z)\| = \| \Pi_+(z-H)^{-1} \Pi_+ \| \le 2/[(1-w)\gamma] = 2/\Delta$.
Thus, treating $H_-\equiv \Pi_- H \Pi_-$, which is $H$ restricted to $\L_-$, as the effective Hermitian operator $H_\eff=H_-$, we have
\begin{eqnarray}
\Sigma_-(z) - H_-  &=& V_{--} + \sum_{p=0}^\infty V_{-+} (G_+ V_{++})^p G_+ V_{+-} \nonumber \\
\|\Sigma_-(z) - H_- \| &\le & \|V_{--}\|+ \sum_{p=0}^\infty \|V_{-+}\|^2 \|G_+\|^{p+1} \|V_{++}\|^p   \nonumber \\
&\le& \kappa + \sum_{p=0}^\infty \frac{\kappa^{p+2}}{(\Delta/2)^{p+1}} = \frac{\kappa\Delta}{\Delta -2\kappa} \equiv \eps
\label{eq:PP-self-energy-bound}
\end{eqnarray}
Observe that the region $R$ includes the interval $[-\eps, w\gamma+\eps]$, if
\begin{equation}
\eps < \frac12 (1-w)\gamma = \frac{\Delta}{2}
\quad \Longleftrightarrow \quad
\kappa < \frac{1}{4}(1-w)\gamma = \frac{\Delta}{4},
\label{eq:small-kappa}
\end{equation}
which is what we assumed in the premise of the Lemma.

Since $\|\tilde{H}-H\|\le \kappa$, then by Weyl's inequality we have $|\lambda_j(\tilde{H})-\lambda_j(H)|\le \kappa$ for all $j$.
Note $\lambda_1(H)=0$, so $|\lambda_1(\tilde{H})|\le \kappa$.
Let us denote $\tilde{P}$ as the projector onto the corresponding eigenstates of $\tilde{H}$ with eigenvalue $< \lambda^*$.
Note the eigenstates in $P$ of $H$ have maximum eigenvalue $\le \lambda_-$;
consequently, the eigenstates in $\tilde{P}$ of $\tilde{H}$ has maximum eigenvalue $\le \lambda_-+\kappa$, and all other eigenstates of $\tilde{H}$ have eigenvalue $\ge \lambda^+-\kappa  = \gamma - \kappa$.
Hence, $\tilde{P}$ corresponds to a quasi-groundspace of $\tilde{H}$ with quasi-spectral gap $\tilde{\gamma}$ and energy spread $\tilde{w}$, given by:
\begin{align}
\tilde{\gamma} &\ge \lambda_+-\kappa -\lambda_1(\tilde{H}) \ge \gamma-2\kappa \\
\tilde{w}\tilde{\gamma} &\le \lambda_- + \kappa -\lambda_1(\tilde{H}) \le w\gamma + 2\kappa
\end{align}

Now, let us bound the error $\|\tilde{P}-P\|$ between the groundspace projectors, following
the same idea in Ref.~\cite{OliveiraTerhal}.
We first bound
\begin{equation}
\|\tilde{P}-\Pi_- \tilde{P}\Pi_-\| = \|\tilde{P} - \Pi_- \tilde{P} + \Pi_- \tilde{P} - \Pi_- \tilde{P}\Pi_-\| = \|\Pi_+ \tilde{P} + \Pi_- \tilde{P} \Pi_+\| \le 2\|\Pi_+ \tilde{P}\|
\end{equation}
Furthermore, we can bound the quantity $\|\Pi_+ H \tilde{P}\|$ from above:
\begin{eqnarray}
\|\Pi_+ H \tilde{P}\| &=& \|\Pi_+ (\tilde{H}-V) \tilde{P}\| \le \|\Pi_+ \tilde{H} \tilde{P}\| + \|\Pi_+ V \tilde{P}\| \le \|\Pi_+ \tilde{P} \tilde{H}\tilde{P}\| + \|V\| \nonumber \\
&\le& (\lambda_- + \eps)\|\Pi_+ \tilde{P}\| + \|V\|.
\end{eqnarray}
where we used the fact that $\|\tilde{P}\tilde{H}\tilde{P}\| \le \lambda_-+\eps$.
Using $\|\Pi_+ H\Pi_+\| \ge \lambda_+$, we can also bound $\|\Pi_+ H \tilde{P}\|$ from below:
\begin{equation}
\|\Pi_+ H \tilde{P}\| = \|\Pi_+ H \Pi_+ \tilde{P}\| \ge \lambda_+ \|\Pi_+ \tilde{P}\|
\end{equation}
Thus
\begin{equation}
\|\tilde{P}-\Pi_- \tilde{P}\Pi_-\|  \le \frac{2\|V\|}{\lambda_+ - \lambda_- - \eps} = \frac{2\|V\|}{\Delta - \eps} \le \frac{2\kappa}{\Delta-\eps} < \frac{4\kappa}{\Delta} = \frac{4\kappa}{(1-w)\gamma},
\label{eq:PP-bound-1}
\end{equation}
where we used the assumption in Eq.~\eqref{eq:small-kappa} that $\eps < \Delta/2$, so $(\Delta-\eps)^{-1} < 2/\Delta$.
Now, let us bound $\|\Pi_- \tilde{P} \Pi_- - P\|$.
To this end, let us denote $C$ as the contour in the complex plane around the region $R$, centered at $z=0$ with radius $r=(1+w)\gamma/2$.
Due to Lemma~\ref{lem:gadget-eigenvalue}, we know all eigenvalues of $\tilde{H}$ that correspond to $\tilde{P}$ are enclosed by $C$.
Using the Cauchy integral formula, we have
\begin{equation}
\Pi_- \tilde{P} \Pi_- = \Pi_-\left(\frac{1}{2\pi i} \oint_C \tilde{G}(z) dz\right)\Pi_- = \frac{1}{2\pi i} \oint_C \tilde{G}_{--}(z)dz = \frac{1}{2\pi i} \oint_C (z-\Sigma_-(z))^{-1}dz.
\end{equation}
Also, observe that
\begin{equation}
P =   \frac{1}{2\pi i} \oint_C G_-(z) =  \frac{1}{2\pi i} \oint_C (z-H_-)^{-1} dz.
\end{equation}
To bound the difference between the two operators, we use the following identity
\begin{equation}
\|(A-B)^{-1} - A^{-1}\| = \|(\Id-A^{-1}B)^{-1}A^{-1} - A^{-1}\|
\le
\|A^{-1}\|\left( (1-\|A^{-1}\| \|B\|)^{-1} - 1\right),
\end{equation}
which is true if $\|A^{-1}\|\|B\|< 1$.
By choosing $A=z-H_-$ and $B = \Sigma_-(z) - H_-$ for $z\in C$ (on the contour), we have $\|B\|\le \eps$ by Eq.~\eqref{eq:PP-self-energy-bound}, and $\|A^{-1}\| \le (r-w\gamma)^{-1}=2/\Delta$ by inspection.
Hence $\|A^{-1}\|\|B\| \le 2\eps/\Delta$, which is $<1$ since we assumed $\eps < \Delta/2$ as in Eq.~\eqref{eq:small-kappa}.
Then, we can apply the aforementioned identity and bound
\begin{eqnarray}
\sup_{z\in C} \|(z-\Sigma_-(z))^{-1} - (z-H_\eff)^{-1}\| \le
\frac{4\eps}{\Delta(\Delta-2\eps)}= \frac{4\kappa}{\Delta(\Delta-4\kappa)}.
\end{eqnarray}
Consequently, we have
\begin{eqnarray}
\|\Pi_-\tilde{P}\Pi_- - P\| &=& \left\| \frac{1}{2\pi i} \oint_C [(z-\Sigma_-(z))^{-1} - (z-H_\eff)^{-1}]dz \right \| \nonumber \\
&\le&  \frac{4\kappa r}{\Delta(\Delta-4\kappa)}  =
 \frac{2\kappa(1+w)/(1-w)}{(1-w)\gamma-4\kappa} ,
\end{eqnarray}
where we plugged in $r=(1+w)\gamma/2$ and $\Delta=(1-w)\gamma$. Combining with the first bound in Eq.~\eqref{eq:PP-bound-1}, we have
\begin{equation}
\|\tilde{P}- P \| < \kappa \left(\frac{4}{(1-w)\gamma} +  \frac{2(1+w)/(1-w)}{(1-w)\gamma-4\kappa} \right)
\end{equation}

Let us now consider the particular case when $\kappa \le (1-w)\gamma/8$ and $w\le 1/2$.
Given these constraints, we have $(1-w)^{-1}\le 2$ and $1/[(1-w)\gamma-4\kappa] \le 2/[(1-w)\gamma]$, which implies
\begin{equation}
\|\tilde{P}-P\| < \kappa \left(\frac{4}{(1-w)\gamma} + \frac{4(1+w)}{(1-w)^2\gamma} \right) \le \frac{32\kappa}{\gamma}.
\end{equation}
\end{proof}

\section{General Coherent DR with Exponential Interaction Strength \\ (Theorem~\ref{thm:degree-reduction-exp})\label{sec:degree-reduction-unbounded} }

In this Appendix, we prove Theorem~\ref{thm:degree-reduction-exp}, which shows that given unbounded interaction strength, one can perform degree-reduction for arbitrary local Hamiltonians.
The proof makes heavy use of perturbative gadgets.
Specifically, we use versions of subdivision, 3-to-2-to-local, and fork gadgets first presented in Ref.~\cite{OliveiraTerhal} to construct a 2-local coherent degree-reducer for any given local Hamiltonian.
The analyses in Ref.~\cite{KKR06, CaoImprovedOTGadget} have also provided some inspirations.

The proof can be divided into two sections.
In Sec.~\ref{subsec:perturbativeGtools}, we will show that the three above mentioned perturbative gadget tools can indeed be used for gap-simulation (Definition~\ref{defn:hamsimul}).
To this end we first prove Lemma~\ref{lem:gadget-ground-space}
which is a cousin of  Lemma~\ref{lem:PPgsv2} used previously,
providing error bound on perturbed groundspace.
Then, Claims~\ref{claim:subdiv}, \ref{claim:3-to-2} and \ref{claim:fork} prove the applicability of the three tools of perturbative gadgets to our coherent gap-simulation framework, respectively.

Subsequently, in Sec.~\ref{sec:proof-theorem-gadget} we use these
tools in a fairly straight-forward sequence of mappings to degree-reduce any $O(1)$-local Hamiltonian.

\subsection{Gap-simulation by Perturbative Gadgets}\label{subsec:perturbativeGtools}

Perturbative gadgets are Hamiltonians of the form $\tilde{H}=H_\anc + V$, where $V$ contains perturbations that act on highly degenerate groundstates of $H_\anc$ and produce effective interactions that mimic some target Hamiltonian $H_\eff$.
(We emphasize that the symbol $V$ in this Appendix refers to a Hermitian operator, not an isometry.)
Generally, the quality of how well the gadget Hamiltonian simulates the target Hamiltonian is given by a precision parameter $\eps \ll 1$ that one can freely choose at the end of the construction (see for example the statement of Lemma~\ref{lem:gadget-eigenvalue}).

To prove the results in this section,
we use the same Green's function machinery described above in Appendix~\ref{sec:PPgroundspace-proof}, which studies perturbation theory on $\tilde{H}=H+V$.
The notation we use is the same as in Eq.~\eqref{eq:gadget-convention-1}\eqref{eq:gadget-convention-2},
except we change $H\to H_\anc$.
Note that, Lemma~\ref{lem:gadget-eigenvalue} already allow us to bound eigenvalues of $\tilde{H}$ relative to $H_\eff$, which can allow us to satisfy condition 1 of gap-simulation Definition~\ref{defn:hamsimul}
We still need to bound errors of perturbed (quasi-)groundspace to satisfy
condition 2 of the definition.
To that end, we prove the following Lemma, which is a cousin of Lemma~\ref{lem:PPgsv2}.
The proof uses essentially the same arguments as in Lemma A.1 from Ref.~\cite{OliveiraTerhal} and Lemma~\ref{lem:PPgsv2},
but adapted to prove statements more directly useful for the goals in the section.

\begin{lemma}[Gadget groundspace error bound, modified from Ref.~\cite{OliveiraTerhal}]
\label{lem:gadget-ground-space}
Suppose we are given a target Hamiltonian $H_\targ$ defined on Hilbert space $\H$, and let $E^g = \lambda_1(H_\targ)$.
Let us denote its quasi-groundspace projector $P$, energy spread $w$, and quasi-spectral gap $\gamma$ per Def.~\ref{defn:gap}.
Additionally, let us denote $q\equiv \textnormal{rank}(P)$ as the degeneracy of the quasi-groundspace.
Now, consider a gadget Hamiltonian $\tilde{H} = H_\anc + V$ acting on Hilbert space $\tilde{\H}=\H\otimes \H_\anc$, and some precision parameter $\eps$ such that $0<\eps < (1-w)\gamma/2$.
Suppose the following conditions are satisfied:
\begin{itemize}
\item $H_\anc$ acts trivially on $\H$. When restricted to the ancilla Hilbert space $\H_\anc$, we denote $P_\anc$ as the projector onto the eigenstates of $H_\anc$ with eigenvalue $\lambda_-=0$, and all other eigenvalues are $\ge \lambda_+ = \Delta$.
In other words, the subspace $\L_- = \textnormal{range}(\Id\otimes P_\anc)$.
\item The conditions of Lemma~\ref{lem:gadget-eigenvalue} are satisfied with $H_\eff =  (H_\targ \otimes P_\anc) |_{\L_-}$ and precision parameter $\eps$.
\item Consider again the self energy $\Sigma_-(z) \equiv z - \tilde{G}_{--}^{-1}(z)$ now generalized for $z\in \C$.
For some constant $r$ satisfying $w\gamma + \eps < r \le \frac12(1+w)\gamma$, we have for all $|z - E^g | \le r$,
\begin{equation}
\|\Sigma_-(z) - H_\eff\|\le \eps
\end{equation}
\end{itemize}
Let $\tilde{P}$ be the projector onto the $q$ lowest eigenstates of $\tilde{H}$. Then
\begin{equation} \label{eq:ground-space-error}
\| \tilde{P} - P\otimes P_\anc\| \le \frac{2 \|V\|}{\Delta- (|E^g| + w\gamma+\eps)} + \frac{\eps r}{(r-w\gamma)(r-w\gamma-\eps)}
\end{equation}
In particular, if we choose $r=(1+w)\gamma/2$, then
\begin{equation}
\| \tilde{P} - P\otimes P_\anc\| \le \frac{2 \|V\|}{\Delta- (|E^g| + w\gamma+\eps)}+ 2\eps\frac{1+w}{(1-w)[(1-w)\gamma-2\eps]}
\end{equation}
\end{lemma}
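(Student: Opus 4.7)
The plan is to mirror the strategy used for Lemma~\ref{lem:PPgsv2}, adapted to the gadget setting in which $H_\anc$ plays the role of the unperturbed Hamiltonian and the low-energy subspace $\L_- = \range(\Id \otimes P_\anc)$ is highly degenerate. I will split the error by triangle inequality, writing
\begin{equation}
\|\tilde{P} - P\otimes P_\anc\| \;\le\; \|\tilde{P} - \Pi_- \tilde{P}\Pi_-\| \;+\; \|\Pi_-\tilde{P}\Pi_- - P\otimes P_\anc\|,
\end{equation}
where $\Pi_- = \Id\otimes P_\anc$ and $\Pi_+ = \Id - \Pi_-$, and bound each term separately. Before starting, I note that by Lemma~\ref{lem:gadget-eigenvalue} applied with $H_\eff = (H_\targ\otimes P_\anc)|_{\L_-}$, the $q$ lowest eigenvalues of $\tilde{H}$ (those projected onto by $\tilde{P}$) are $\eps$-close to the $q$ eigenvalues of $H_\targ$ inside the quasi-groundspace, so $\tilde{P}\tilde{H}\tilde{P}$ has spectrum contained in $[E^g - \eps,\, E^g + w\gamma + \eps]$, which gives $\|\tilde{P}\tilde{H}\tilde{P}\| \le |E^g| + w\gamma + \eps$.

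For the first term, I will follow the trick from the proof of Lemma~\ref{lem:PPgsv2} to reduce it to $\|\tilde{P} - \Pi_-\tilde{P}\Pi_-\| \le 2\|\Pi_+\tilde{P}\|$. Then I will sandwich $\|\Pi_+ H_\anc\tilde{P}\|$: on the one hand, using $\tilde{H}\tilde{P} = \tilde{P}\tilde{H}\tilde{P}$ together with $H_\anc = \tilde{H} - V$ gives the upper bound
\begin{equation}
\|\Pi_+ H_\anc \tilde{P}\| \;\le\; \|\Pi_+ \tilde{P}\tilde{H}\tilde{P}\| + \|V\| \;\le\; (|E^g| + w\gamma + \eps)\,\|\Pi_+\tilde{P}\| + \|V\|;
\end{equation}
on the other hand, since $H_\anc$ is block-diagonal with $\Pi_+ H_\anc \Pi_+ \ge \Delta\,\Pi_+$, I get the lower bound $\|\Pi_+ H_\anc \tilde{P}\| \ge \Delta\,\|\Pi_+\tilde{P}\|$. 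Solving for $\|\Pi_+\tilde{P}\|$ yields the first summand in Eq.~\eqref{eq:ground-space-error}, provided $\Delta > |E^g| + w\gamma + \eps$ (a mild implicit requirement).

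For the second term, I will invoke the Cauchy integral representation on a contour $C$ of radius $r$ centered at $E^g$. Because $r > w\gamma + \eps$ and $r \le (1+w)\gamma/2 < \gamma - \eps$ (using $\eps < (1-w)\gamma/2$), $C$ encloses exactly the $q$ perturbed eigenvalues picked out by $\tilde{P}$, so $\Pi_-\tilde{P}\Pi_- = \frac{1}{2\pi i}\oint_C \tilde{G}_{--}(z)\, dz = \frac{1}{2\pi i}\oint_C (z - \Sigma_-(z))^{-1} dz$ and $P\otimes P_\anc = \frac{1}{2\pi i}\oint_C (z - H_\eff)^{-1} dz$. Using the operator identity $\|(A-B)^{-1} - A^{-1}\| \le \|A^{-1}\|^2\|B\|/(1 - \|A^{-1}\|\|B\|)$ with $A = z - H_\eff$ and $B = \Sigma_-(z) - H_\eff$, and bounding $\|A^{-1}\| \le 1/(r - w\gamma)$ and $\|B\| \le \eps$ from the hypothesis, the integrand is at most $\eps/[(r - w\gamma)(r - w\gamma - \eps)]$. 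Multiplying by the contour length $2\pi r$ gives the second summand. The specialization to $r = (1+w)\gamma/2$ plugs $r - w\gamma = (1-w)\gamma/2$ into this expression and simplifies to the stated form.

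The only mild obstacle is bookkeeping: unlike the setting of Lemma~\ref{lem:PPgsv2}, here the target energy $E^g$ can be negative and of large magnitude (forcing the $|E^g|$ in the denominator of the first term), and the contour radius $r$ must simultaneously enclose every eigenvalue perturbed from the quasi-groundspace while excluding every eigenvalue perturbed from above the gap $\gamma$. Verifying these geometric constraints (and that the $\|A^{-1}\|\|B\| < 1$ needed to apply the operator-Neumann identity holds on all of $C$) reduces to the inequalities $w\gamma + \eps < r \le (1+w)\gamma/2$ and $\eps < (1-w)\gamma/2$, both of which are guaranteed by hypothesis; everything else is the routine Green's function calculation already carried out in Appendix~\ref{sec:PPgroundspace-proof}.
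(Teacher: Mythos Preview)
Your proposal is correct and follows essentially the same two-part argument as the paper: the same triangle-inequality split $\|\tilde{P} - P\otimes P_\anc\| \le \|\tilde{P} - \Pi_-\tilde{P}\Pi_-\| + \|\Pi_-\tilde{P}\Pi_- - P\otimes P_\anc\|$, the same sandwich of $\|\Pi_+ H_\anc \tilde{P}\|$ for the first piece, and the same Cauchy-integral/resolvent-identity estimate on the circle of radius $r$ about $E^g$ for the second. The bookkeeping you flag (the role of $|E^g|$ in the first denominator, and verifying $\|A^{-1}\|\|B\|<1$ on $C$ via $r>w\gamma+\eps$) is exactly what the paper checks.
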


\begin{proof}
Per our convention, we denote $\Pi_-=\Id\otimes P_\anc$ and $\Pi_+ = \Id-\Pi_-$ as projectors onto low- and high- energy subspace $\L_\mp$ of $H_\anc$.
The proof proceeds in two parts: bounding $\|\tilde{P} - \Pi_- \tilde{P} \Pi_-\|$ and $\|\Pi_-\tilde{P} \Pi_- - P\otimes P_\anc\|$, which together yields Eq.~\eqref{eq:ground-space-error}.

\emph{Part 1}---
Using the triangle inequality for the spectral norm, we can bound
\begin{equation}
\|\tilde{P} - \Pi_- \tilde{P} \Pi_-\| = \|\tilde{P} - \Pi_- \tilde{P} + \Pi_-\tilde{P}  - \Pi_- \tilde{P}\Pi_- \| = \|\Pi_+\tilde{P} + \Pi_- \tilde{P} \Pi_+\| \le 2\|\Pi_+ \tilde{P}\|.
\end{equation}
Observe that since $\lambda_1(H_\eff)=E^g$ and $\lambda_q(H_\eff) \le E^g + w\gamma$, Lemma~\ref{lem:gadget-eigenvalue} tells us $\lambda_1(\tilde{H}) \ge E^g -\eps$ and $\lambda_q(\tilde{H}) \le E^g + w\gamma+\eps$, which means $\|\tilde{P} \tilde{H} \tilde{P}\|\le \max\{|\lambda_1(\tilde{H})|, |\lambda_q(\tilde{H})| \} \le |E^g| + w\gamma + \eps$.
This allows us to bound the quantity from $\|\Pi_+ H_\anc \tilde{P}\|$ above:
\begin{eqnarray}
\|\Pi_+ H_\anc \tilde{P}\| &=& \|\Pi_+ (\tilde{H} - V)\tilde{P}\| \le \|\Pi_+ \tilde{H}\tilde{P}\| +  \| \Pi_+ V \tilde{P} \| = \|\Pi_+ \tilde{P}\tilde{H}\tilde{P}\| +  \| \Pi_+ V \tilde{P} \| \nonumber \\
&\le& (|E^g| + w\gamma + \eps)\|\Pi_+ \tilde{P}\| +  \|V\|.
\end{eqnarray}
Also, we can bound the same quantity from below
\begin{equation}
\|\Pi_+ H_\anc \tilde{P}\| = \|\Pi_+ H_\anc \Pi_+ \tilde{P}\| \ge \lambda_+ \|\Pi_+ \tilde{P}\| = \Delta\|\Pi_+ \tilde{P}\|.
\end{equation}
Consequently, we obtain the bound
\begin{equation}
\|\tilde{P}- \Pi_- \tilde{P} \Pi_-\| \le \frac{2\|V\|}{\Delta - (|E^g| + w\gamma + \eps)}
\end{equation}

\emph{Part 2}---
Consider a circular contour $C$ in the complex plane centered at $z=E^g$ with radius $r$ satisfying the assumption $w\gamma + \eps < r \le \frac12(1+w)\gamma$.
Due to Lemma~\ref{lem:gadget-eigenvalue}, we know all eigenvalues of $\tilde{H}$ corresponding to $\tilde{P}$ is inside $C$.
Using the Cauchy integral formula, we have
\begin{equation}
\Pi_- \tilde{P} \Pi_- = \Pi_-\left(\frac{1}{2\pi i} \oint_C \tilde{G}(z) dz\right)\Pi_- = \frac{1}{2\pi i} \oint_C \tilde{G}_{--}(z)dz = \frac{1}{2\pi i} \oint_C (z-\Sigma_-(z))^{-1}dz.
\end{equation}
Also, observe that
\begin{equation}
P\otimes P_\anc = \frac{1}{2\pi i} \oint_C (z-H_\eff)^{-1} dz.
\end{equation}
To bound the difference between the two operators, we use the following identity
\begin{equation}
\|(A-B)^{-1} - A^{-1}\| = \|(\Id-A^{-1}B)^{-1}A^{-1} - A^{-1}\|
\le
\|A^{-1}\|\left( (1-\|A^{-1}\| \|B\|)^{-1} - 1\right),
\end{equation}
which is true if $\|A^{-1}\|\|B\|< 1$.
Let us choosing $A=z-H_\eff$ and $B = \Sigma_-(z) - H_\eff$ for $z\in C$ (on the contour).
Observe that we have $\|B\|\le \eps$ by assumption.
Since we assumed $r\le (1+w)\gamma/2$, we have $r-w\gamma \le \frac12 (1-w)\gamma \le \gamma - r$, and thus$\|A^{-1}\| \le (r-w\gamma)^{-1}$.
Also, since we assumed $w\gamma+\eps < r$, we have $\|A^{-1}\|\|B\| \le \eps/(r-w\gamma) < 1$.
Therefore, we can apply the aforementioned identity and bound
\begin{eqnarray}
\sup_{z\in C} \|(z-\Sigma_-(z))^{-1} - (z-H_\eff)^{-1}\| \le \frac{\eps}{(r-w\gamma)(r-w\gamma - \eps)}.
\end{eqnarray}
Consequently, using the estimation lemma for contour integrals,  we have
\begin{equation}
\|\Pi_-\tilde{P}\Pi_- - P\otimes P_\anc\| = \left\| \frac{1}{2\pi i} \oint_C [(z-\Sigma_-(z))^{-1} - (z-H_\eff)^{-1}]dz \right \| \le \frac{\eps r}{(r-w\gamma)(r-w\gamma -\eps)}.
\end{equation}
Eq.~\eqref{eq:ground-space-error} thus follows.

\end{proof}

Now we will use the above Lemma~\ref{lem:gadget-eigenvalue} and \ref{lem:gadget-ground-space} to prove three Claims about how different gadget reductions can produce gap-simulating Hamiltonians.
We will then use these Claims to prove Theorem~\ref{thm:degree-reduction-exp}.
In the following, we denote $X\equiv \sigma_x$, $Y \equiv \sigma_y$, $Z\equiv \sigma_z$ for convenience.

\begin{claim}[gap-simulation by subdivision gadget]
\label{claim:subdiv}
Given an $n$-qubit $k$-local Hamiltonian $H_\targ$ with a quasi-groundspace projector $P$ with quasi-spectral gap $\gamma$ and energy spread $w$.
Let us write it as
\begin{equation}
H_\targ = H_\els +  \sum_{\mu=1}^m c_{\mu} \sigma_{\mu_1}^{(s_{\mu_1})}\otimes \sigma_{\mu_2}^{(s_{\mu_2})} \otimes \cdots \otimes \sigma_{\mu_k}^{(s_{\mu_k})} , \quad \sigma_{\mu_i} \in \{ X, Y, Z\}
\end{equation}
where $H_\els$ is some $k'$-local for $k'\le  \lceil k/2 \rceil + 1$.
where $|c_\mu| \le J$.
Let $A_\mu = \sigma_{\mu_1}^{(i_1)} \otimes \cdots \otimes \sigma_{\mu_{j}}^{(i_j)}$, and $B_\mu = \sigma_{\mu_{j+1}}^{(i_{j+1})}\otimes \cdots  \otimes \sigma_{\mu_{k}}^{(i_k)}$, where $j=\lceil k/2\rceil$.
Considering adding an ancilla qubit $a_\mu$ for each $1\le \mu \le m$,
and write the following $\eps$-precision $(\lceil k/2 \rceil +1)$-local \emph{subdivision gadget} Hamiltonian on $n+m$ qubits
\begin{gather}
\tilde{H}_\gadget = H_\anc + V, \quad
H_\anc = \sum_{\mu} \Delta \ketbra{1}^{(a_\mu)}, \\
V = H_\els + \sum_\mu \left[ \sqrt{\frac{|c_\mu|\Delta}{2}} ( \sgn(c_\mu)A_\mu -  B_\mu ) \otimes X^{(a_{\mu})} + |c_\mu| \right]
\end{gather}
assuming we choose $\eps \ll (1-w)\gamma$ and
\begin{equation}
\Delta = O\left(\frac{m^2 J(m^4J^2 + \|H_\els\|)}{\eps^2}\right)
\end{equation}
Let $c=\gamma/(\gamma-2\eps)=O(1)$, then $\tilde{H} = c\tilde{H}_\gadget$  gap-simulates $(H_\targ,P)$ with incoherence $\epsilon=\O(\eps/(1-w)^2\gamma)$ and energy spread $\tilde{w}\le w+2\eps/\gamma$.
\end{claim}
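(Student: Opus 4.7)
The plan is to verify the hypotheses of Lemma~\ref{lem:gadget-eigenvalue} and Lemma~\ref{lem:gadget-ground-space} for $\tilde H_\gadget = H_\anc + V$, identifying the low-energy subspace $\L_-$ as the image of $\Id \otimes P_\anc$ where $P_\anc = \bigotimes_\mu \ketbra{0}^{(a_\mu)}$. Under this identification $H_\anc$ has a gap $\Delta$ above $\L_-$, and the target effective Hamiltonian is $H_\eff = (H_\targ \otimes P_\anc)|_{\L_-}$. The central task is to show that the self-energy $\Sigma_-(z)$ approximates $H_\eff$ to precision $\eps$ uniformly on the disk $|z - E^g|\le r = (1+w)\gamma/2$, and then to combine this with a rescaling by $c = \gamma/(\gamma - 2\eps)$ that restores the quasi-spectral gap.

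First I would expand $\Sigma_-(z) = V_{--} + \sum_{p=0}^\infty V_{-+}(G_+(z) V_{++})^p G_+(z) V_{+-}$ and isolate the leading terms. The diagonal $V_{--}$ contributes $(H_\els + \sum_\mu |c_\mu|)\otimes P_\anc$. Because each perturbation term $\sqrt{|c_\mu|\Delta/2}(\sgn(c_\mu)A_\mu - B_\mu)\otimes X^{(a_\mu)}$ flips exactly one ancilla, the $p=0$ piece decomposes across $\mu$; using $(\sgn(c_\mu) A_\mu - B_\mu)^2 = 2(\Id - \sgn(c_\mu) A_\mu B_\mu)$ together with $G_+(z) = (z-\Delta)^{-1}$ on each single-excitation ancilla sector,
\begin{equation*}
V_{-+} G_+(z) V_{+-} = \frac{\Delta}{z-\Delta}\sum_\mu \bigl(|c_\mu|\Id - c_\mu A_\mu B_\mu\bigr)\otimes P_\anc.
\end{equation*}
Expanding $\Delta/(z-\Delta) = -1 + \O(|z|/\Delta)$ and adding $V_{--}$, the constant $\sum_\mu |c_\mu|$ cancels and the leading piece becomes $H_\eff$, up to a residual of order $\O((|E^g| + r)\,mJ/\Delta)$.

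Next I would bound the higher-order tail ($p\ge 1$). Using $\|V_{\pm\mp}\| = \O(m\sqrt{J\Delta})$, $\|V_{++}\| = \O(m\sqrt{J\Delta} + \|H_\els\|)$, and $\|G_+(z)\| \le 2/\Delta$ on the disk, convergence of the series requires $\|V_{++}\|\cdot \|G_+\| < 1/2$, and the resummed tail is $\O(m^3 J^{3/2}/\Delta^{1/2} + m^2 J \|H_\els\|/\Delta)$. Plugging in the prescribed $\Delta = \O(m^2 J(m^4 J^2 + \|H_\els\|)/\eps^2)$ shows that this tail together with the second-order residual is $\O(\eps)$ uniformly on the disk. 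At this point Lemma~\ref{lem:gadget-eigenvalue} implies the $q = \rank(P)$ lowest eigenvalues of $\tilde H_\gadget$ lie within $\eps$ of those of $H_\eff$, so its lowest $q$ eigenstates form a quasi-groundspace with quasi-gap at least $\gamma - 2\eps$ and energy spread at most $w\gamma + 2\eps$; rescaling by $c = \gamma/(\gamma - 2\eps)$ restores the quasi-gap to $\gamma$ and yields $\tilde w \le w + 2\eps/\gamma$, satisfying condition 1 of Definition~\ref{defn:hamsimul}. Lemma~\ref{lem:gadget-ground-space} then bounds the incoherence: its first summand $2\|V\|/[\Delta - (|E^g| + w\gamma + \eps)]$ reduces to $\O(m\sqrt{J/\Delta}) = \O(\eps)$ under the chosen $\Delta$, while its second summand is directly $\O(\eps/[(1-w)^2\gamma])$ on expanding $(1-w)\gamma - 2\eps$, summing to the claimed incoherence $\epsilon = \O(\eps/[(1-w)^2\gamma])$.

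The main obstacle is the careful balancing of contributions to the self-energy: one must simultaneously control the second-order residual, the $p\ge 1$ resummation, and the $H_\els$-mediated cross terms, and the quoted scaling of $\Delta$ is chosen precisely so that the dominant error across all three is $\eps$. A subtlety worth stating up front is the convergence condition $\|V_{++}\|\cdot \|G_+(z)\| < 1$ on the contour, which justifies the term-by-term manipulations and is what forces the overhead $m^4 J^2$ inside $\Delta$ (from requiring $m\sqrt{J\Delta}/\Delta \ll 1$, i.e., $\Delta \gg m^2 J$). Unlike the 3-to-2-local or fork gadgets that follow, the subdivision gadget leaves $H_\els$ untouched, so no iterated application of the perturbative lemmas is needed within this proof.
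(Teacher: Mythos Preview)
Your proposal is correct and mirrors the paper's proof: expand $\Sigma_-(z)$, show the zeroth- and second-order pieces combine to $H_\targ\otimes P_\anc$ plus a residual $E_1 = \O(|z|\,mJ/\Delta)$, bound the tail $E_2$ for $p\ge1$ as $\O(m^3J^{3/2}/\Delta^{1/2}+m^2J\|H_\els\|/\Delta)$, choose $\Delta$ so that $\|E_1\|+\|E_2\|\le\eps$, and then invoke Lemmas~\ref{lem:gadget-eigenvalue} and~\ref{lem:gadget-ground-space} followed by the rescaling $c=\gamma/(\gamma-2\eps)$. The only minor slip is that the self-energy bound must be checked on the full interval $|z|\le\|H_\targ\|+\eps$ (as the paper does) in order to satisfy the hypotheses of Lemma~\ref{lem:gadget-eigenvalue}, not just on the disk $|z-E^g|\le r$ needed for Lemma~\ref{lem:gadget-ground-space}; your norm estimates on $G_+$, $V_{\pm\mp}$, and $V_{++}$ carry over to this larger region unchanged, so this is purely a matter of stating the correct domain.
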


\begin{proof}
Let us denote $P_\anc = \bigotimes_\mu \ketbra{0}^{(a_\mu)}=\ketbra{\vect{a}=0}$ which projects onto the ancilla state described by the binary string $\vect{a}=0$.
It is also the groundspace projector of $H_\anc$.
Let us denote $\Pi_- = \Id\otimes P_\anc$ and $\Pi_+ = \Id-\Pi_-$ be two projectors that partition the full Hilbert space into $\L_-$ and $\L_+$ respectively.
We now follow the same convention laid out in Eq.~\eqref{eq:gadget-convention-1} and \eqref{eq:gadget-convention-2}.
Note $G_+(z) = \Pi_+ (z-H_\anc)^{-1} \Pi_+ = \sum_{\vect{a}\neq 0} \ketbra{\vect{a}}/(z-h(\vect{a})\Delta)$, where $h(\vect{a})$ is the Hamming weight of the binary string $\vect{a}$.
Observe that
\begin{eqnarray}
V_{--} &=& (H_\els + \sum_\mu |c_\mu|)\otimes P_\anc \\
V_{-+} G_+ V_{+-} &=& \sum_\mu \frac{|c_\mu|\Delta}{2(z-\Delta)} ( \sgn(c_\mu)  A_\mu - B_\mu )^2 \otimes P_\anc = \frac{\Delta}{z-\Delta} \sum_\mu (|c_\mu|-c_\mu A_\mu \otimes B_\mu ) \otimes P_\anc
\nonumber \\
&=& \sum_{\mu} (c_\mu A_\mu \otimes B_\mu - |c_\mu|)\otimes P_\anc  + \frac{z}{z-\Delta} \sum_\mu (|c_\mu|-c_\mu A_\mu \otimes B_\mu ) \otimes P_\anc.
\end{eqnarray}
We used the fact that in the second-order perturbation $V_{-+} G_+ V_{+-}$, there's no ``cross-gadget'' term because $\sum_{\mu,\mu'} \Pi_- X^{(a_\mu)} G_+ X^{(a_\mu')}\Pi_- = \delta_{\mu,\mu'}\Pi_-/(z-\Delta)$.
Noting that $H_{\anc,-}=0$, we have
\begin{eqnarray}
\Sigma_-(z) = H\otimes P_\anc +\underbrace{\frac{z}{z-\Delta} \sum_\mu (|c_\mu|-c_\mu A_\mu \otimes B_\mu ) \otimes P_\anc }_{E_1} + \underbrace{\sum_{p=1}^\infty V_{-+}(G_+ V_{++})^p G_+ V_{+-}}_{E_2}
\end{eqnarray}
We want to show $\|\Sigma_-(z) - H_\eff\|< \eps$ for an appropriate range of $z$ if $\Delta$ is sufficiently large.
Consider $|z| \le \|H\|+\eps$, which is sufficient for applying Lemma~\ref{lem:gadget-eigenvalue} and \ref{lem:gadget-ground-space}.
Let us assume we choose $\Delta \gg 2\|H\|$, and consequently $\Delta \gg J$.
Then we can bound the first error term $\|E_1\| \le O(mJ/\Delta)$.
Note we have $\|G_+(z)\| \le 1/(\Delta - \|H\|) \le 2/\Delta$, $\|V_{-+}\| \le O(m \sqrt{J\Delta})$ and $\|V_{++}\| \le \|V\|  = \|H_\els\| + O(mJ) + O(m\sqrt{J\Delta}) = \|H_\els\| + O(m\sqrt{J\Delta})$, and thus
\begin{equation}
\|E_2\| \le \sum_{p=1}^\infty \frac{\|V^{(0)}_{-+}\|^2\|V^{(0)}_{++}\|^p}{(\Delta/2)^{p+1}} = \frac{4 \|V_{-+}\|^2 \|V_{++}\|}{\Delta(\Delta - 2\|V_{++}\|)} \le  O\left(\frac{m^3 J^{3/2}}{\Delta^{1/2}} \right) + O\left(\frac{m^2 J \|H_\els\|}{\Delta}\right)
\end{equation}
To make sure $\|E_1\| + \|E_2\| \le \eps$, we need $\Delta = \Omega(m^6 J^3/\eps^2)$ and $\Delta=\Omega(m^2 J \|H_\els\|/\eps)$.
Hence, a sufficient choice for $\Delta$ is
\begin{equation}
\Delta = O\left(\frac{m^2 J(m^4J^2 + \|H_\els\|)}{\eps^2}\right)
\end{equation}
Note this choice would also ensure $\|V\|/\Delta \ll \eps$.
Let us denote $\tilde{P}$ as the quasi-groundspace projector of $\tilde{H}_\gadget $ corresponding to the lowest $\rank(P)$ eigenstates.
By applying Lemma~\ref{lem:gadget-eigenvalue}, we can see that the corresponding $H_\targ$ and $\tilde{H}_\gadget$  differ by at most $\eps$.
By rescaling $\tilde{H}_\gadget\mapsto \tilde{H} = c \tilde{H}_\gadget$, where $c=\gamma/(\gamma-2\eps)$, we can ensure $\tilde{H}$ has $\tilde{P}$ as a quasi-groundspace projector with quasi-spectral gap at least $\gamma$.
Furthermore, assuming $\eps < (1-w)\gamma/2$, we can bound the energy spread of $\tilde{P}$ in $\tilde{H}$ is by
\begin{equation}
\tilde{w} \le 	\frac{w \gamma + 2\eps}{\gamma} = w + \frac{2\eps}{\gamma}.
\end{equation}
By applying Lemma~\ref{lem:gadget-ground-space} with $r=(1+w)\gamma/2$, noting that $|E^g| \le \|H_\targ\| \ll \Delta$ and $\eps \ll (1-w)\gamma/2$, we can bound incoherence by
\begin{equation}
\|\tilde{P} - P\otimes P_\anc\| \le \O\left(\frac{\eps}{(1-w)^2\gamma}\right).
\end{equation}
\end{proof}

\begin{claim}[gap-simulation by 3-to-2-local gadget]
\label{claim:3-to-2}
Given a $n$-qubit 3-local Hamiltonian $H_\targ$ with a quasi-groundspace projector $P$ with quasi-spectral gap $\gamma$ and energy spread $w$.
Let us write it as
\begin{equation}
H_\targ = H_\els + \sum_{\mu=1}^m c_\mu A_\mu^{(i_\mu)} \otimes B_\mu^{(j_\mu)} \otimes C_\mu^{(k_\mu)}
\quad A_\mu, B_\mu, C_\mu \in \{X,Y,Z\}
\end{equation}
where $|c_\mu|\le \sqrt{\Delta_0}$ for some $\Delta_0$.
We assume $H_\els$ contains only 2-local terms and $\|H_\els\| \le O(m \Delta_0)$.
Consider adding an ancilla qubit $a_\mu$ for each $1\le \mu \le m$.
Then consider the following $\eps$-precision 2-local gadget Hamiltonian on $n+m$ qubits
\begin{equation}
\begin{aligned}
& \tilde{H}_\gadget = H_\anc + V, \quad H_\anc = \sum_\mu \Delta \ketbra{1}^{(a_\mu)}, \quad V = V_1 + V_2 \\
& V_1 =  H_\els +  \sum_\mu \left[ \frac12 \Delta^{1/3}(A_\mu^{(i_\mu)} - B_\mu^{(j_\mu)})^2 + c_\mu C_\mu^{(k_\mu)} \otimes \ketbra{0}^{(a_\mu)}   \right], \\
& V_2 = \Delta^{2/3} \sum_\mu \left[\frac{1}{\sqrt{2}} (A_\mu^{(i_\mu)} - B_\mu^{(j_\mu)})\otimes X^{(a_\mu)}  - c_\mu C_\mu^{(k_\mu)} \otimes \ketbra{1}^{(a_\mu)}\right] \\
\end{aligned}
\end{equation}
where $\Delta = O(m^{12} \Delta_0^3 /\eps^3)$, where we assume $\eps \ll (1-w)\gamma/2$.
Let $c=\gamma/(\gamma-2\eps)=O(1)$, then $\tilde{H}=c\tilde{H}_\gadget$
gap-simulates $(H_\targ, P)$ with incoherence $\epsilon = \O(\eps/(1-w)^2\gamma)$ and energy spread $\tilde{w} \le w + 2\eps/\gamma$,
\end{claim}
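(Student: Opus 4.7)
The plan is to mirror the structure of the proof of Claim~\ref{claim:subdiv}, but now the target 3-local interactions must be produced at \emph{third order} in perturbation theory rather than second order. I first set up the Green's function machinery as in Eqs.~\eqref{eq:gadget-convention-1}--\eqref{eq:gadget-convention-2}, choosing $P_\anc = \bigotimes_\mu \ketbra{0}^{(a_\mu)}$ so that $\L_-=\textnormal{range}(\Id\otimes P_\anc)$ and $H_\anc$ has spectral gap $\Delta$ above $\L_-$. For $|z|\le \|H_\targ\|+\eps$ and $\Delta \gg \|H_\targ\|$, one has $\|G_+(z)\|\le 2/\Delta$, and the self-energy admits the expansion $\Sigma_-(z)=H_{\anc,-}+V_{--}+\sum_{p\ge 0}V_{-+}(G_+V_{++})^p G_+ V_{+-}$.

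The next step is to compute the first three orders of this expansion carefully. By construction, $V_1$ is block-diagonal in the $a_\mu$ basis while $V_2$ is purely off-diagonal, so $V_{-+}=(V_2)_{-+}$ and $(V_2)_{--}=0$. I would verify the following: (i)~the zeroth-order piece $V_{--}$ equals $H_\els \otimes P_\anc + \sum_\mu \tfrac12\Delta^{1/3}(A_\mu-B_\mu)^2\otimes P_\anc + \sum_\mu c_\mu C_\mu \otimes P_\anc$; (ii)~the second-order piece $V_{-+}G_+V_{+-}$, upon using $(A_\mu-B_\mu)^2$ and the absence of cross-gadget couplings (each $X^{(a_\mu)}$ acts on a distinct ancilla), contributes a term $-\tfrac12\Delta^{1/3}(A_\mu-B_\mu)^2\otimes P_\anc$ that \emph{cancels} the unwanted compensator in $V_1$, up to a $z/(z-\Delta)$ correction of size $O(\Delta^{1/3}/\Delta)$; (iii)~the third-order piece $V_{-+}G_+V_{++}G_+V_{+-}$ produces exactly $c_\mu A_\mu B_\mu C_\mu \otimes P_\anc$ together with compensating terms that cancel the remaining $c_\mu C_\mu\otimes \ketbra{0}^{(a_\mu)}$ piece of $V_1$, again up to $z$-dependent corrections suppressed by $1/\Delta$. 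Summing over $\mu$ and adding $H_\els$ yields $H_\eff=(H_\targ\otimes P_\anc)|_{\L_-}$.

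The main technical obstacle, and where I expect most of the work to go, is carefully bounding the higher-order tails $\sum_{p\ge 3}V_{-+}(G_+V_{++})^p G_+ V_{+-}$ together with the residual $z/(z-\Delta)$-type errors from the truncation of orders 0, 2, and 3. Using the bounds $\|V_{-+}\|=O(m\Delta^{2/3})$, $\|V_{++}\|\le \|V\|=O(m\Delta^{2/3}+\|H_\els\|+m\Delta_0)=O(m^2\Delta^{2/3})$ (since $\|H_\els\|=O(m\Delta_0)$ and $\Delta_0\le\Delta^{2/3}$ for large $\Delta$), the geometric tail is bounded by $O(m^2\Delta^{4/3}\cdot m^2\Delta^{2/3}/\Delta)\cdot\sum_{p\ge 0}(2m^2\Delta^{2/3}/\Delta)^p=O(m^4 \Delta/\Delta)=O(m^4 \Delta^{-1/3}\cdot\Delta_0)$ when expanded carefully, with the dominant fourth-order contribution scaling as $m^4 \Delta_0 /\Delta^{1/3}$. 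Requiring each residual to be $\le \eps/\text{const}$ gives the stated choice $\Delta=O(m^{12}\Delta_0^3/\eps^3)$.

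Once $\|\Sigma_-(z)-H_\eff\|\le \eps$ is established on the disk $|z-E^g|\le (1+w)\gamma/2$ (which is inside the region $|z|\le \|H_\targ\|+\eps\ll\Delta$), I would apply Lemma~\ref{lem:gadget-eigenvalue} to conclude that the lowest $\rank(P)$ eigenvalues of $\tilde H_\gadget$ lie within $\eps$ of those of $H_\targ$. Rescaling by $c=\gamma/(\gamma-2\eps)$ gives quasi-spectral gap $\ge\gamma$ and energy spread $\tilde w\le w+2\eps/\gamma$, fulfilling condition~1 of Def.~\ref{defn:hamsimul}. Finally, applying Lemma~\ref{lem:gadget-ground-space} with $r=(1+w)\gamma/2$ and noting that $\|V\|/\Delta=O(m\Delta^{-1/3})\ll \eps$, the incoherence bound $\|\tilde P - P\otimes P_\anc\|=\O(\eps/(1-w)^2\gamma)$ follows, completing the proof.
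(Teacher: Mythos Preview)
Your proposal is correct and follows essentially the same approach as the paper: expand the self-energy $\Sigma_-(z)$ through third order, verify that the $\Delta^{1/3}$ compensator in $V_1$ is canceled at second order while the $c_\mu C_\mu\otimes\ketbra{1}^{(a_\mu)}$ piece of $V_2$ generates $c_\mu A_\mu B_\mu C_\mu$ at third order, bound the remaining errors to force $\Delta=O(m^{12}\Delta_0^3/\eps^3)$, and finish via Lemmas~\ref{lem:gadget-eigenvalue} and~\ref{lem:gadget-ground-space}. Two small slips to fix: $V_2$ is not ``purely off-diagonal'' (the $\ketbra{1}^{(a_\mu)}$ term is diagonal and contributes to $V_{++}$, which is precisely what makes the third-order mechanism work), and the tail you must bound starts at $p\ge 2$ in the convention of Eq.~\eqref{eq:gadget-convention-2}, not $p\ge 3$ --- but your identification of the dominant $O(m^4\Delta_0/\Delta^{1/3})$ fourth-order contribution shows you are already accounting for this correctly.
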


\begin{proof}
Let us denote $P_\anc = \bigotimes_\mu \ketbra{0}^{(a_\mu)}=\ketbra{\vect{a}=0}$ which projects onto the ancilla state described by the binary string $\vect{a}=0$.
It is also the groundspace projector of $H_\anc$.
Let us denote $\Pi_- = \Id\otimes P_\anc$ and $\Pi_+ = \Id-\Pi_-$ be two projectors that partition the full Hilbert space into $\L_-$ and $\L_+$ respectively.
We now follow the same convention laid out in Eq.~\eqref{eq:gadget-convention-1} and \eqref{eq:gadget-convention-2}.
Note $G_+(z) = \Pi_+ (z-H_\anc)^{-1} \Pi_+ = \sum_{\vect{a}\neq 0} \ketbra{\vect{a}}/(z-h(\vect{a})\Delta)$, where $h(\vect{a})$ is the Hamming weight of the binary string $\vect{a}$.

In the following, we will simplify notation by denoting $A_\mu=A_\mu^{(i_\mu)}, B_\mu\equiv B_\mu^{(j_\mu)}, C_\mu \equiv C_\mu^{(k_\mu)}$, and $X_\mu \equiv X_\mu^{(a_\mu)}$.
Observe that
\begin{eqnarray}
V_{--} &=& V_{1,--} = H_\els \otimes P_\anc +  \sum_\mu \left[ \frac12 \Delta^{1/3}(A_\mu - B_\mu)^2 + c_\mu C_\mu   \right]\otimes P_\anc \\
V_{-+} &=& \Delta^{2/3} \sum_\mu \left[\frac{1}{\sqrt{2}} (A_\mu - B_\mu)\otimes \ketbrat{0}{1}^{(a_\mu)}\right]
\end{eqnarray}
Then
\begin{eqnarray}
V_{-+} G_+ V_{+-} &=& \frac{\Delta^{4/3}}{z-\Delta} \sum_\mu \frac{1}{2}(A_\mu-B_\mu)^2\otimes P_\anc  \nonumber\\
V_{--} + V_{-+} G_+ V_{+-} &=& \left[H_\els + \sum_\mu c_\mu C_\mu\right]\otimes P_\anc + \underbrace{\frac{z\Delta^{1/3}}{2(z-\Delta)}	\sum_\mu (A_\mu - B_\mu)^2  \otimes P_\anc}_{E_1}
\end{eqnarray}
At the third order perturbation theory, the only allowed virtual transition on the ancilla qubits are of the form $\ket{0\cdots0}\to\ket{0\cdots010\cdots0} \to\ket{0\cdots010\cdots0}  \to \ket{0\cdots0}$.
In other words, the only non-zero terms at this order involve exciting some ancilla $a_\mu$ from $\ket{0}$ to $\ket{1}$ by $V_{+-}$, keeping it at $\ket{1}$ by $V_{++}$, and then return it to $\ket{0}$ by $V_{-+}$.
Hence
\begin{eqnarray*}
&& V_{-+} G_+ V_{++} G_+ V_{+-} = -\frac{\Delta^2}{2(z-\Delta)^2}\sum_\mu c_\mu (A_\mu-B_\mu)^2C_\mu \otimes P_\anc
\nonumber  \\
&& + \underbrace {\frac{1}{(z-\Delta)^2}\Bigg[ \frac{\Delta^{4/3}}{2} \sum_\mu (A_\mu-B_\mu) H_\els (A_\mu - B_\mu)+\sum_{\mu,\mu'} \frac{\Delta^{5/3}}{4} (A_\mu - B_\mu)^2(A_{\mu'} - B_{\mu'})^2
\Bigg]\otimes P_\anc}_{E_2}
\end{eqnarray*}
Let us denote $\xi = \Delta^2/[(z-\Delta)^2]$, then
\begin{eqnarray}
&& V_{--} + V_{-+} G_+ V_{+-} + V_{-+} G_+ V_{++} G_+ V_{+-}\nonumber \\
&=& \left[H_\els + \sum_\mu \left(c_\mu C_\mu  -\frac{\xi}{2} c_\mu (A_\mu-B_\mu)^2C_\mu\right)\right] \otimes P_\anc + E_1 + E_2 \nonumber \\
&=& \left[H_\els + \xi \sum_\mu c_\mu A_\mu B_\mu C_\mu  + (1-\xi)\sum_\mu c_\mu C_\mu
\right]\otimes P_\anc + E_1 + E_2 \nonumber \\
&=& \underbrace{[H_\els + \sum_\mu c_\mu A_\mu B_\mu C_\mu]}_{H_\targ} \otimes P_\anc + \underbrace{(1-\xi) \sum_\mu   c_\mu(C_\mu - A_\mu B_\mu C_\mu)\otimes P_\anc}_{E_3} + E_1 + E_2
\end{eqnarray}
Thus, we can write the self-energy as
\begin{eqnarray}
\Sigma_-(z) &=& H_\targ\otimes P_\anc + E_1 + E_2 + E_3 + \underbrace{\sum_{p=2}^\infty V_{-+}(G_+ V_{++})^p G_+ V_{+-}}_{E_4}
\end{eqnarray}

Suppose we choose the relevant range of $|z|\le z_{\max} = \|H_\els\| = O(m\Delta_0)$. Assuming we will choose $\Delta \gg \|H_\els\|$, we have
\begin{eqnarray}
\|E_1\| &\le&  O\left(m \frac{ \|H_\els\| }{\Delta^{2/3}} \right) = O\left(\frac{m^2\Delta_0}{\Delta^{2/3}}\right) \\
\|E_2\| &\le& O\left(\frac{m z_{\max}}{\Delta^{2/3}}\right) + O\left(\frac{m^2}{\Delta^{1/3}}\right)  = O\left(\frac{m^2\Delta_0}{\Delta^{2/3}}\right) + O\left(\frac{m^2}{\Delta^{1/3}}\right)\\
\|E_3\| &\le& 2m \sqrt{\Delta_0}|1-\xi| = 2m \sqrt{\Delta_0} \frac{(2\Delta - \|H_\els\| )\|H_\els\| }{(\Delta-\|H_\els\| )^2} \le  O \left(\frac{m^2\Delta_0^{3/2}}{\Delta}\right) \\
\|E_4\| &\le& \sum_{p=2}^\infty \frac{\|V_{-+}\|^2 \|V_{++}\|^p}{(\Delta/2)^{p+1}} = \frac{8\|V_{-+}\|^2 \|V_{++}\|^2}{\Delta^2(\Delta-2\|V_{++}\|)} \le O\left(\frac{m^4 \Delta_0}{\Delta^{1/3}}\right)+ O\left(\frac{m^4\Delta_0^2}{\Delta^{5/3}}\right)
\end{eqnarray}
where we used the fact that
$\|V_{-+}\|= O(m\Delta^{2/3})$ and $\|V_{++}\|\le \|V\| = O(m \sqrt{\Delta_0} \Delta^{2/3}) + \|H_\els\| = O(m\sqrt{\Delta_0} \Delta^{2/3}) + O(m\Delta_0)$.
Hence, as sufficient choice for $\Delta$ to ensure that $\|\Sigma_-(z) - H_\eff\|\le \eps\|$ would be
\begin{equation}
\Delta = O(m^{12} \Delta_0^3 /\eps^3)
\end{equation}
Let us denote $\tilde{P}$ as the quasi-groundspace projector of $\tilde{H}_\gadget$ corresponding to the lowest $\rank(P)$ eigenstates.
By applying Lemma~\ref{lem:gadget-eigenvalue}, we can see that the corresponding $H_\targ$ and $\tilde{H}_\gadget$  differ by at most $\eps$.
By rescaling $\tilde{H}_\gadget\mapsto \tilde{H} = c \tilde{H}_\gadget$, where $c=\gamma/(\gamma-2\eps)$, we can ensure $\tilde{H}$ has $\tilde{P}$ as a quasi-groundspace projector with quasi-spectral gap at least $\gamma$.
Furthermore, assuming $\eps < (1-w)\gamma/2$, we can bound the energy spread of $\tilde{P}$ in $\tilde{H}$ is by
\begin{equation}
\tilde{w} \le 	\frac{w \gamma + 2\eps}{\gamma} = w + \frac{2\eps}{\gamma}.
\end{equation}
By applying Lemma~\ref{lem:gadget-ground-space} with $r=(1+w)\gamma/2$, noting that $|E^g| \le \|H_\targ\| \ll \Delta$ and $\eps \ll (1-w)\gamma/2$, we can bound incoherence by
\begin{equation}
\|\tilde{P} - P\otimes P_\anc\| \le \O\left(\frac{\eps}{(1-w)^2\gamma}\right).
\end{equation}

\end{proof}

\begin{claim}[degree-reduction via fork gadget]
\label{claim:fork}
Consider a 2-local Hamiltonian $H_\targ$ of the form
\begin{equation}
H_\targ = H_\els + \sum_{i=1}^n \sum_{\alpha=x,y,z}\sum_{\kappa_{\alpha,i}=1}^{r_{\alpha,i}} \lambda_{\kappa_{\alpha,i}} \sigma^{(i)}_{\alpha} \otimes X^{(\kappa_{\alpha,i})}
\end{equation}
which contains $n$ original qubits interacting only with ancilla qubits $\kappa_{\alpha,i}$ through $\sigma_\alpha\otimes X$.
Let $r_0 \equiv \max_{i,\alpha}r_{\alpha,i} = O(\poly n)$ be the maximum ``Pauli degree'',
and $|\lambda_{\kappa_{\alpha,i}}| \le \sqrt{\Delta_0}$.
We assume $H_\els$ does not act on the $n$ original qubits, and contains $O(n r_0)$ terms with interaction strength at most $\Delta_0$.
Lastly, we assume the ancilla (non-original) qubits in $H_\targ$ has degree at most 5.
Now let $P$ be a quasi-groundspace projector of $H_\targ$ with energy spread $w$ and quasi-spectral gap $\gamma$.
Then for some precision parameter $\eps \ll (1-w)\gamma$, there is a Hamiltonian $\tilde{H}$ that gap-simulates $(H_\targ, P)$ with incoherence $\epsilon = O((\eps/\gamma) \log n)$ and  energy spread $\tilde{w} \le w + O((\eps/\gamma) \log n)$, and has maximum degree 6, $O(nr_0)$ terms with interaction strength $J=O((\poly(n) \Delta_0/\eps^2)^{\poly(n)})$.
\end{claim}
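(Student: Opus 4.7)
The plan is to build $\tilde{H}$ by iteratively applying the \emph{fork gadget} of Oliveira-Terhal~\cite{OliveiraTerhal} to reduce the Pauli degree $r_{\alpha,i}$ of each original qubit $i$ in each Pauli direction $\alpha$ down to a constant, while keeping ancilla-qubit degrees bounded. Since the total degree of an original qubit is essentially $\sum_\alpha r_{\alpha,i}$, reducing each of the three Pauli degrees to constant yields total degree at most $6$ on all qubits.

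First, I would set up a single fork gadget that acts on one original qubit $i$ in one Pauli direction $\alpha$: it introduces two ``copy'' qubits $i_L, i_R$ and a mediator ancilla with $H_\anc=\Delta\ketbra{1}$, and a perturbation $V$ of the form $\sqrt{\Delta/2}\,(\sigma_\alpha^{(i)}-\sigma_\alpha^{(i_L)})\otimes X^{(w_L)}+\sqrt{\Delta/2}\,(\sigma_\alpha^{(i)}-\sigma_\alpha^{(i_R)})\otimes X^{(w_R)}$ together with the redistributed interactions $c_1 \sigma_\alpha^{(i_L)}\otimes X^{(\kappa_1)} + c_2 \sigma_\alpha^{(i_R)}\otimes X^{(\kappa_2)}$. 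At second order in the Green's function expansion, the effective self-energy on the ancilla-$\ket{0}$ subspace will be shown to reproduce the target $H_\targ$ up to the penalty $\Delta(\Id-\sigma_\alpha^{(i)}\sigma_\alpha^{(i_L)})+\Delta(\Id-\sigma_\alpha^{(i)}\sigma_\alpha^{(i_R)})$, which vanishes on configurations where all copies agree in the $\alpha$ direction. Applying Lemma~\ref{lem:gadget-eigenvalue} and Lemma~\ref{lem:gadget-ground-space} exactly as in the proofs of Claims~\ref{claim:subdiv} and~\ref{claim:3-to-2}, and choosing $\Delta$ polynomial in the operator norm of the current Hamiltonian and inversely polynomial in $\eps$, yields incoherence $\O(\eps/((1-w)^2\gamma))$ and an energy spread increase of $\O(\eps/\gamma)$ per gadget application.

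Second, I would apply this fork gadget iteratively. In each round, every qubit whose Pauli degree in some direction $\alpha$ still exceeds the constant threshold is forked (in parallel with the others, each using a distinct mediator ancilla so cross-gadget terms in $V_{-+} G_+ V_{+-}$ vanish, as in Claim~\ref{claim:subdiv}), with its $\sigma_\alpha$-interactions split evenly between its two children. After $k=O(\log r_0)=O(\log n)$ rounds every qubit has Pauli degree $O(1)$ in each direction, giving maximum degree $\le 6$. The intermediate gap-simulations compose via Lemma~\ref{lem:composition}: incoherence accumulates additively to $\O((\eps/\gamma)\log n)$, the energy spread grows by the same amount, and the term count stays $O(n r_0)$ since each round multiplies the number of terms by a constant while only $O(\log n)$ rounds are performed. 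Identifying $\eps$ with $\eps'/\log n$ at the end recovers the parameters stated in the claim.

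The main obstacle will be controlling the exponential blow-up of interaction strength across the $O(\log n)$ iterations. At round $k$, the gadget requires $\Delta_k$ polynomial in the operator norm of the round-$(k-1)$ Hamiltonian --- itself of order $\Delta_{k-1}$ --- and in $1/\eps$ (compare the bound $\Delta=O(m^{12}\Delta_0^3/\eps^3)$ in Claim~\ref{claim:3-to-2}). Iterating the recurrence $\Delta_k=\poly(n,\Delta_{k-1},1/\eps)$ for $O(\log n)$ steps produces a tower that grows as $\Delta_{\log n}=(\poly(n)\Delta_0/\eps^2)^{\poly(n)}$, matching the stated bound on $J$. A secondary subtlety, which I would address by adapting the convergence argument for the series $\sum_{p\ge 0} V_{-+}(G_+V_{++})^p G_+V_{+-}$ in Claim~\ref{claim:subdiv}, is to verify that with many fork gadgets applied simultaneously the higher-order perturbation terms remain bounded by $O(\eps)$ despite the many ancillae, so that Lemma~\ref{lem:gadget-ground-space} still applies at each round.
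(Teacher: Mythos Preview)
Your fork gadget is structurally different from the one in the paper, and as written it does not work. You propose to introduce \emph{copy qubits} $i_L,i_R$ of the original qubit and redistribute the $\sigma_\alpha^{(i)}$-interactions onto them, enforcing agreement via a perturbatively generated penalty. But with coupling $\sqrt{\Delta/2}\,(\sigma_\alpha^{(i)}-\sigma_\alpha^{(i_L)})\otimes X^{(w_L)}$, the second-order term is $V_{-+}G_+V_{+-}\approx -\tfrac{1}{2}(\sigma_\alpha^{(i)}-\sigma_\alpha^{(i_L)})^2=-(1-\sigma_\alpha^{(i)}\sigma_\alpha^{(i_L)})$, which is $O(1)$ (not $O(\Delta)$ as you claim) and has the \emph{wrong sign}: it lowers the energy of disagreement rather than penalizing it. So the self-energy is not $H_\targ+\text{penalty}$; it is $H_\els+c_1\sigma_\alpha^{(i_L)}X^{(\kappa_1)}+c_2\sigma_\alpha^{(i_R)}X^{(\kappa_2)}-(1-\sigma_\alpha^{(i)}\sigma_\alpha^{(i_L)})-\cdots$, a Hamiltonian on a strictly larger Hilbert space that does not equal $H_\targ\otimes P_\anc$ on $\L_-$. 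Lemma~\ref{lem:gadget-ground-space} therefore cannot be invoked with $H_\eff=H_\targ\otimes P_\anc$.

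Even if you patched the sign and scale by putting a large first-order penalty $K(\Id-\sigma_\alpha^{(i)}\sigma_\alpha^{(i_L)})$ directly into $V_{--}$, the construction would still fail to be \emph{coherent}. On the agreement subspace, the copy qubit's $\alpha$-eigenvalue is locked to that of $i$, so the groundspace of the effective Hamiltonian consists of states like $\alpha\ket{+}_i\ket{+}_{i_L}+\beta\ket{-}_i\ket{-}_{i_L}$ (in the $\sigma_\alpha$-basis). This cannot be written as $P\otimes P_\anc$ for any ancilla projector $P_\anc$; the incoherence is $\Omega(1)$. This is precisely the obstruction behind the paper's main impossibility theorem: making coherent ``copies'' of a quantum degree of freedom is not possible by local means.

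The paper's fork gadget avoids copy qubits entirely. For each \emph{pair} of existing ancillas $(2\kappa-1,2\kappa)$ coupled to $i$, it adds a single mediator $a_\kappa$ with perturbation $\sqrt{\Delta_s/2}\,(\sigma_\alpha^{(i)}-\lambda_{2\kappa-1}X^{(2\kappa-1)}-\lambda_{2\kappa}X^{(2\kappa)})\otimes X^{(a_\kappa)}$. Squaring this at second order directly regenerates the target terms $\lambda_{2\kappa-1}\sigma_\alpha^{(i)}X^{(2\kappa-1)}+\lambda_{2\kappa}\sigma_\alpha^{(i)}X^{(2\kappa)}$, plus a cross-term $\lambda_{2\kappa-1}\lambda_{2\kappa}X^{(2\kappa-1)}X^{(2\kappa)}$ and constants that are cancelled by compensating first-order terms in $V_1$. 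Thus $\Sigma_-(z)=H_\targ\otimes P_\anc+O(\eps)$ exactly, with the mediators sitting in the fixed product state $\ket{0}$. The original qubit now couples only to the $\lfloor r_{\alpha,i}/2\rfloor$ new mediators, halving its Pauli degree; the old ancillas pick up one extra edge and are untouched in later rounds. Iterating $O(\log r_0)$ times and tracking the recursion $\Delta_s=O(M_0^4\Delta_{s-1}^2/\eps^2)$ gives the stated $J=(\poly(n)\Delta_0/\eps^2)^{\poly(n)}$ bound; your analysis of the composition of incoherences and energy spreads via Lemma~\ref{lem:composition} is the same as the paper's.
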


\begin{proof}
We can reduce the degrees of original qubits with serial application of the ``fork gadget''~\cite{OliveiraTerhal}.
We will apply the fork gadget in $S=O(\log_2(r_0))=O(\log(n))$ iterative steps, starting with the target Hamiltonian $H_\targ^{(1)} = H_\targ$, producing gadget Hamiltonian that gap-simulates the target, which then becomes the target Hamiltonian for the next step:
\begin{equation}
[H_\targ \equiv H_\targ^{(1)} ]\to [\tilde{H}_\gadget^{(1)} \equiv H_\targ^{(2)} ] \to \cdots
\to [\tilde{H}_\gadget^{(S-1)} \equiv H_\targ^{(S-1)}] \to \tilde{H}_\gadget^{(S)} = \tilde{H}
\end{equation}
At each step $s=1,\ldots,S$,
our target Hamiltonian is of the form
\begin{eqnarray}
H_\targ^{(s)} \equiv  \tilde{H}_\gadget^{(s-1)} = H_\els^{(s)} + \sum_{i=1}^n \sum_{\alpha=x,y,z}\sum_{\kappa_{\alpha,i}^s=1}^{r_{\alpha,i}^{(s)}} \lambda_{\kappa_{\alpha,i}^s} \sigma^{(i)}_{\alpha} \otimes X^{(\kappa_{\alpha,i}^s)}
\end{eqnarray}
where $H_\els^{(s)}$ contains all terms that are not 2-local terms that involve an original qubit.
Note in the last sum, $\kappa_{\alpha,i}^s$ indexes all ancilla qubits that interacts with the original qubit $i$ with $\sigma_\alpha$ at the beginning of step $s$.
We (self-consistently) assume that the degrees of these ancilla qubits $\kappa_{\alpha,i}^s$ are at most 5 in $H_{\targ}^{(s)}$, and furthermore (denoting $M_0=nr_0$) that
\begin{equation}
\text{for } 1\le s \le S, \quad
\Delta_s \gg M_0 \Delta_{s-1}, \quad
\|H_\els^{(s)}\| \le \|H_\targ^{(s)}\| \le O(M_0 \Delta_{s-1}), \quad
\lambda_{\kappa_{\alpha,i}^s} = O(\sqrt{\Delta_{s-1}}).
\label{eq:fork-self-consistent-assumption}
\end{equation}
where $\Delta_s$ is thought of as the interaction strength of $\tilde{H}_\gadget^{(s)}$.

To simplify notation, we will denote $\kappa\equiv\kappa_{\alpha,i}^s$ from now on with the implicit understanding that the index $\kappa$ depends on the Pauli type $\alpha$, original qubit index $i$, as well as step index $s$.
Then, the \emph{fork gadget} Hamiltonian that roughly halves the Pauli degrees of original qubits is
\begin{gather}
\tilde{H}_\gadget^{(s)} = H_\anc^{(s)} + V^{(s)}, \quad
H_\anc^{(s)} = \Delta_s \sum_{i=1}^n \sum_{\alpha=x,y,z} \sum_{\kappa=1}^{\lfloor r_{\alpha,i}^{(s)} /2 \rfloor} \ketbra{1}^{()} , \quad
V^{(s)} = V_1^{(s)}+V_2^{(s)}\\
V_1^{(s)} =  H_\els^{(s)}{}' + \sum_{i,\alpha}\sum_{\kappa=1}^{\lfloor r_{\alpha,i}^{(s)} /2 \rfloor}  \left[ \lambda_{2\kappa-1}\lambda_{2\kappa} X^{(2\kappa-1)} X^{(2\kappa)} + \frac12(\lambda_{2\kappa-1}^2 + \lambda_{2\kappa}^2 + 1)\right]\\
V_2^{(s)} =  \sqrt{\frac{\Delta_s}{2}} \sum_{i,\alpha}\sum_{\kappa=1}^{\lfloor r_{\alpha,i}^{(s)} /2 \rfloor}  (\sigma_\alpha^{(i)} - \lambda_{2\kappa-1} X^{(2\kappa-1)} - \lambda_{2\kappa} X^{(2\kappa)})\otimes X^{(a_\kappa)}
\end{gather}
Here, we introduce an extra ancilla qubit $a_\kappa$ for each pair of relevant ancilla qubit $(2\kappa-1, 2\kappa)$ where the fork gadget is applied, and it has degree 3.
For every odd $r_{\alpha,i}^{(s)}$,  we add the left-over term $\lambda_{\kappa} \sigma_{\alpha}^{(i)}\otimes X^{(\kappa)}$ for $\kappa = r_{\alpha,i}^{(s)}$ to $H_\els^{(s)}$, which gives us $H_\els^{(s)}{}'$.
Each original qubit $i$ thus has its $\alpha$-Pauli-degree reduced to $r_{\alpha,i}^{(s+1)} = \lceil r_{\alpha,i}^{(s)} /2 \rceil$.
However, the pre-existing ancilla qubits $2\kappa-1$ and $2\kappa$ acquire an extra ``edge'' (interaction term) in $V_1^{(s)}$, so their degrees increase by one, but will not increase further as they are unaffected by subsequent gadget applications.
Therefore, since we assumed these qubits have degree at most 5 in $H_\targ^{(s)}$, their degree in $\tilde{H}_\gadget^{(s)}$ is at most 6.
Note when we consider $\tilde{H}_\gadget^{(s)}\equiv H_\targ^{(s+1)}$, the last two of the three self-consistent assumptions we made in Eq.~\eqref{eq:fork-self-consistent-assumption} are satisfied.

\textit{Effective simulation of $H_\targ^{(s)}$}---
Now we've written down the gadget Hamiltonian, we want to show that it reproduces the effective Hamiltonian $H_\eff^{(s)} = H_\targ^{(s)}\otimes P_\anc|_{\L_-}$ in its self-energy $\Sigma_-(z)$.
Let us denote $P_\anc^{(s)}=\bigotimes_{a_\kappa}\ketbra{0}^{(a_\kappa)}$ as the ground space projector of $H_\anc^{(s)}$ restricted to Hilbert space of newly added ancilla.
We also denote $\Pi_-^{(s)}=\Id\otimes P_\anc$, $\Pi_+^{(s)} = \Id - \Pi_-^{(s)}$ as projectors the partition the full Hilbert space into low and high energy subspace with respect to $H_\anc^{(s)}$.
Following the same convention outlined earlier in Eq.~\eqref{eq:gadget-convention-1}, we denote $V_{\pm\pm}^{(s)}$, $V_{\pm\mp}^{(s)}$, and $G_+^{(s)}(z)\equiv \Pi_+^{(s)}(z-H_\anc^{(s)})^{-1}\Pi_+^{(s)}$, etc.
We then apply the perturbation series expansion for self-energy in Eq.~\eqref{eq:gadget-convention-2} to find
\begin{equation} \label{eq:self-energy-fork}
\Sigma_-^{(s)}(z) =  H^{(s)}_{\anc,-}  + V^{(s)}_{--} +  V^{(s)}_{-+} G^{(s)}_+ V^{(s)}_{+-} + \sum_{p=1}^\infty V^{(s)}_{-+} (G^{(s)}_{+} V^{(s)}_{++})^p G^{(s)}_+ V^{(s)}_{+-}
\end{equation}
Here $H_{\anc,-}^{(s)}=0$, but
\begin{eqnarray}
&& V_{--}^{(s)}+
V_{-+}^{(s)}G^{(s)}_+ V^{(s)}_{+-} = V_1^{(s)}\otimes P_\anc^{(s)} + \frac{\Delta_s}{2(z-\Delta_s)} \sum_{i,\alpha}\sum_{\kappa=1}^{\lfloor r_{\alpha,i}^{(s)} /2 \rfloor} (\sigma_\alpha^{(i)} - \lambda_{2\kappa-1} X^{(2\kappa-1)} - \lambda_{2\kappa} X^{(2\kappa)})^2 \otimes P_\anc^{(s)} \nonumber \\
&=& H_\targ^{(s)} \otimes P_\anc^{(s)}  + \underbrace{\frac{z}{2(z-\Delta_s)} \sum_{i,\alpha}\sum_{\kappa=1}^{\lfloor r_{\alpha,i}^{(s)} /2 \rfloor} (\sigma_\alpha^{(i)} - \lambda_{2\kappa-1} X^{(2\kappa-1)} - \lambda_{2\kappa} X^{(2\kappa)})^2 \otimes P_\anc^{(s)}}_{E_1} \label{eq:second-order-fork}
\end{eqnarray}
where we used the fact that there is no ``cross-gadget'' term because $\sum_{\kappa,\kappa'} \Pi_- X^{(a_\kappa)} G_+ X^{(a_\kappa')}\Pi_- = \delta_{\kappa,\kappa'}\Pi_-/(z-\Delta_s)$.

\textit{Optimizing $\Delta_s$}---
We need to find the optimal choice of interaction strength $\Delta_s$ that allows $\tilde{H}_\gadget^{(s)}$ to gap-simulate $H_\targ^{(s)}$.
To that end, we want to show $\|\Sigma_-^{(s)}(z) - H_\eff^{(s)}\|\le \eps$, for some range of $|z| \le  z_{\max} = \|H_\targ^{(s)}\|$ and appropriately large $\Delta_s$.
Recall our self-consistent assumptions in Eq.~\eqref{eq:fork-self-consistent-assumption} that:
\begin{equation}
\Delta_s \gg M_0 \Delta_{s-1}, \quad
\|H_\els^{(s)}\| \le \|H_\targ^{(s)}\| \le O(M_0 \Delta_{s-1}), \quad
\lambda_{\kappa_{\alpha,i}^s} = O(\sqrt{\Delta_{s-1}})
\end{equation}
Note the error term $E_1$ in Eq.~\eqref{eq:second-order-fork} can be bounded by
\begin{equation}
\|E_1\| \le O(M_0\sqrt{\Delta_{s-1}} z_{\max}/\Delta_s) = O(M_0^2\Delta_{s-1}^{3/2}/\Delta_s).
\end{equation}
We also have $\|V_{-+}^{(s)}\| , \|V_{2,++}^{(s)}\| \le \|V^{(s)}\| = O(M_0\sqrt{\Delta_s\Delta_{s-1}})$, $ \|V_{1,++}^{(s)}\|  = O(M_0\Delta_{s-1})$,
and $\|G_+^{(s)}\| \le 1/(\Delta_s - z_{\max}) \le 2/\Delta_s$.
We are now ready to bound the higher order terms ($p\ge 1$) in Eq.~\eqref{eq:self-energy-fork}.
In order to obtain a better overall error bound, let us bound the $p=1$ term (third-order perturbation) in Eq.~\eqref{eq:self-energy-fork} separately.
Observe that at this order, with only three possible applications of $V$, the only possible virtual transition on the ancilla level is of the form $\ket{0\cdots0}\to\ket{0\cdots010\cdots0} \to\ket{0\cdots010\cdots0}  \to \ket{0\cdots0}$.
Therefore,
\begin{eqnarray}
V_{-+}^{(s)} G_+^{(s)} V_{++}^{(s)} G_+^{(s)} V_{+-}^{(s)} &=& V_{-+}^{(s)} G_+^{(s)} V_{1,++}^{(s)} G_+^{(s)} V_{+-}^{(s)}, \nonumber \\
\|E_2\| \equiv \|V_{-+}^{(s)} G_+^{(s)} V_{++}^{(s)} G_+^{(s)} V_{+-}^{(s)}\| &\le &  \frac{\|V_{-+}^{(s)} \|^2 \|V_{1,++}^{(s)}\|}{(\Delta_s - z_{\max})^2} = O\left(\frac{M_0^3\Delta_{s-1}^2}{\Delta_s}\right).
\end{eqnarray}
Also, $\|V_{++}^{(s)}\| \le \|V_{1,++}^{(s)}\| + \|V_{2,++}^{(s)}\|  = O(M_0\sqrt{\Delta_s\Delta_{s-1}})$.
Consequently, we can bound the remaining terms in Eq.~\eqref{eq:self-energy-fork}:
\begin{eqnarray}
\|E_3\| \equiv\left\|\sum_{p=2}^\infty V^{(s)}_{-+} (G^{(s)}_{+} V^{(s)}_{++})^p G^{(s)}_+ V^{(s)}_{+-}\right\|
&\le& \sum_{p=2}^\infty \frac{\|V^{(s)}_{-+}\|^2\|V^{(s)}_{++}\|^p}{(\Delta_s/2 )^{p+1}} = \frac{8\|V_{-+}^{(s)}\|^2 \|V_{++}^{(s)}\|^2}{\Delta_s^2(\Delta_s- 2\|V_{++}^{(s)}\|)} \nonumber \\
&\le& O\left(\frac{M_0^4 \Delta_{s-1}^2}{\Delta_s}\right).
\end{eqnarray}
Hence, for $H_\eff^{(s)} = H_\targ^{(s)}\otimes P_\anc^{(s)}$, we have
\begin{equation}
\|\Sigma_-^{(s)}(z) - H_\eff^{(s)}\|\le \|E_1\|+\|E_2\| + \|E_3\| =O\left(\frac{M_0^4 \Delta_{s-1}^2}{\Delta_s}\right)
\end{equation}
Furthermore, in order to bound the groundspace projector error to $\O(\eps)$ per Lemma~\ref{lem:gadget-ground-space}, we also need $\|V^{(s)}\|/\Delta_s \le\O(M_0\sqrt{\Delta_{s-1}/\Delta_s}) \le \eps_s$.
Hence, a sufficient choice of $\Delta_s$ that satisfy all these bounds is
\begin{equation} \label{eq:gadget-norm2}
\Delta_s = O \left(\frac{M_0^4\Delta_{s-1}^2}{\eps^2}\right).
\end{equation}

\textit{Analysis of gap-simulation}---
We now analyze the gap-simulation of $H_\targ^{(s)}$ by $\tilde{H}_\gadget^{(s)}$, for $s=1,\ldots, S$.
By Lemma~\ref{lem:gadget-eigenvalue}, one can see that corresponding eigenvalues of $\tilde{H}_\gadget^{(s)}$ and $H_\targ^{(s)}$ differ by at most $\eps$.
Let $\tilde{P}^{(0)}=P$ be the given quasi-groundspace projector of $H_\targ = H_\targ^{(1)}$ with energy spread $w$ and quasi-spectral gap $\gamma$.
We let $\tilde{P}^{(s)}$ as the quasi-groundspace projector onto $\rank(P)$ lowest eigenstates of $\tilde{H}_\gadget^{(s)}$,
with energy spread $\tilde{w}_s$ and quasi-spectral gap $\gamma_s$.
We also generalize to the case of $s=0$ by denoting $\tilde{w}_0 = w$ and $\gamma_0=\gamma$.
To ensure $\gamma_s \ge \gamma$, we simply scale $\tilde{H}_\gadget^{(s)} \mapsto c \tilde{H}_\gadget^{(s)}$ where $c = \frac{\gamma}{\gamma-2\eps}=O(1)$.
Assuming $\eps < (1-\tilde{w}_{s-1})\gamma/2$, the energy spread of $\tilde{H}_\gadget^{(s)}$ can be bounded by
\begin{equation}
\tilde{w}_s\le \frac{\tilde{w}_{s-1}\gamma + 2 \eps}{\gamma} =\tilde{w}_{s-1} + \frac{2\eps}{\gamma}.
\label{eq:fork-energy-spread}
\end{equation}
Let $E^g_0$ be the groundstate energy of $H_\targ$ and $E^g_s$ be the groundstate energy of $\tilde{H}_\gadget^{(s)}$, then Lemma~\ref{lem:gadget-eigenvalue} tells us that $|E^g_s| \le |E^g_0| + \sum_{r=1}^{s} \eps_r \le O(\|H_\targ\|) + \O(S \eps) \ll \Delta_s$.
Our choice of $\Delta_s$ satisfies $\|V^{(s)}\|/\Delta_s \le \O(\eps_s)$, we can use Lemma~\ref{lem:gadget-ground-space} with $r=(1+\tilde{w}_{s-1})\gamma/2$ and $\eps \ll (1-\tilde{w}_{s-1})\gamma/2$ to bound the error in the ground space projector by
\begin{equation} \label{eq:fork-ground-space}
\|\tilde{P}^{(s)} - \tilde{P}^{(s-1)} \otimes P_\anc^{(s)}\| \le \O\left(\frac{\eps_s}{(1-\tilde{w}_{s-1})^2\gamma}\right).
\end{equation}

\textit{Analysis of output Hamiltonian}---
After $S=O(\log r_0) = O(\log n)$ iterations of fork gadgets, we obtained the final gadget Hamiltonian $\tilde{H} = \tilde{H}_\gadget^{(S)}$.
Here, each original qubit has final Pauli-$\alpha$-degree of at most $2$.
Since the different Pauli couplings are handled independently with different groups of ancilla qubits, the maximum degree of each original qubit is $2\times3=6$.
As noted earlier, each ancilla qubit has degree at most 6 in $\tilde{H}_\gadget^{(s)}$.
Hence $\tilde{H}$ has maximum degree 6.
Moreover, since we had assumed $H_\targ$ has $O(n r_0)$ terms, so does $\tilde{H}$.
By Eq.~\eqref{eq:fork-energy-spread}, $\tilde{H}$ gap-simulates $(H_\targ, P)$ with energy spread
\begin{equation}
\tilde{w}_S \le w + \sum_{s=1}^S \frac{2\eps}{\gamma} = w + \frac{2S\eps}{\gamma} =  w + O(\frac{\eps}{\gamma} \log n)
\end{equation}
By choosing $P_\anc=\bigotimes_{s=1}^S P_\anc^{(s)}$, we can bound
the incoherence of gap-simulation by
\begin{equation}
\|\tilde{P}^{(S)} - P\otimes P_\anc\| \le \sum_{s=1}^{S} \O\left(\frac{\eps}{(1-\tilde{w}_{s-1})^2\gamma} \right) \lesssim \O(S\eps/\gamma) = O(\eps \log n/\gamma).
\end{equation}
Furthermore, solving the recursive relation in Eq.~\eqref{eq:gadget-norm2} yields the maximum required interaction strength in $\tilde{H}$:
\begin{equation}
\Delta_{S}
= O\left(\left(\frac{M_0^4\Delta_0}{\eps^2}\right)^{2^S-1} \right)
= O\left(\left(\frac{M_0^4\Delta_0}{\eps^2}\right)^{\poly(n)} \right)
 = O\left(\left(\frac{\poly(n) \Delta_0}{\eps^2}\right)^{\poly(n)}\right).
\end{equation}
Note that in this final gadget Hamiltonian, the resultant geometry is a cluster of $O(r_0)$ qubits arranged in a tree-like graph that mediate all $r_0$ interactions between the original qubit with the rest.
\end{proof}

\subsection{Proof of Theorem~\ref{thm:degree-reduction-exp}\label{sec:proof-theorem-gadget}}
Now that we have established that perturbative gadgets can be used for gap-simulation, we can use them constructively for degree-reduction of any local Hamiltonian. This is Theorem~\ref{thm:degree-reduction-exp}, which we restate here for convenience:

{
\renewcommand{\thethm}{\ref{thm:degree-reduction-exp}}
\begin{thm}[Coherent DR with exponential interaction strength]
Let $H$ be an $n$-qubit $O(1)$-local Hamiltonian with
$M_0$ terms, each with bounded norm.
Suppose $H$ has quasi-spectral gap $\gamma$ and energy spread $w$ according to Def.~\ref{defn:gap}.
For any $\epsilon>0$, one can construct a $2$-local
$[O(1), O(M_0), O ((\gamma\epsilon)^{-\poly (n)} )]$-degree-reducer of $H$ with incoherence $\epsilon$, energy spread $w+\O(\epsilon)$, and trivial encoding.
\end{thm}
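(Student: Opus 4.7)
The plan is to construct the desired $2$-local $[O(1),O(M_0),J]$-degree-reducer via a sequence of perturbative-gadget reductions, where each stage is a gap-simulation by Claims~\ref{claim:subdiv}, \ref{claim:3-to-2}, or \ref{claim:fork}, and the composition is handled by Lemma~\ref{lem:composition}. Since we have $S$ stages with $S=O(\log n)$, the strategy is to choose the per-stage precision $\eps_s$ at level $s$ to be $\eps_s=\Theta(\epsilon\gamma/S)$, so that the total incoherence and the total added energy spread across all stages telescope to $O(\epsilon)$ and $O(\epsilon)$ respectively, matching the theorem's bounds. I list the stages below in the order I would execute them.

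\emph{Stage 1 (locality reduction to $3$-local).} Starting from the $k$-local input $H$ (with $k=O(1)$), I iterate the subdivision gadget of Claim~\ref{claim:subdiv}, which maps every $k$-local term into terms of locality at most $\lceil k/2\rceil+1$ while introducing one mediator ancilla per term. After $O(\log k)=O(1)$ serial applications the resulting Hamiltonian is $3$-local, still has $O(M_0)$ terms, and (by composition) gap-simulates $H$ with incoherence $O(\eps_1/\gamma)$ and energy spread $w+O(\eps_1/\gamma)$.

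\emph{Stage 2 ($3$-to-$2$-local).} I apply the gadget of Claim~\ref{claim:3-to-2} once to reduce every remaining $3$-local term to a sum of $2$-local terms, again with $O(M_0)$ terms.

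\emph{Stage 3 (isolating original qubits).} Before reducing degree, I insert one extra subdivision gadget on each $2$-local term that touches one of the $n$ original qubits. This replaces every direct original--original or original--ancilla edge by an interaction mediated through a fresh ancilla qubit. After this step, each original qubit interacts only with its own cluster of mediator ancillas, and each ancilla mediator has degree $\le 5$. The degree of an original qubit may still be $O(n^{k-1})$, but it is of the ``fork-ready'' form required by Claim~\ref{claim:fork}, with interaction strengths $O(\sqrt{\Delta_0})$ for some $\Delta_0=O(\poly(n,\epsilon^{-1}))$.

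\emph{Stage 4 (degree reduction via iterated fork gadgets).} I apply the fork gadget of Claim~\ref{claim:fork} to the current Hamiltonian. A single application of the claim already reduces every original qubit's maximum degree to $6$, at the cost of $O(\log n)$ internal iterations of the basic fork. This yields the final Hamiltonian $\tilde H$, which is $2$-local with maximum degree $6$, has $O(M_0)$ terms, and by Lemma~\ref{lem:composition} gap-simulates $H$ with total incoherence $O(\epsilon)$ and energy spread $w+O(\epsilon)$.

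\emph{Main obstacle.} The principal difficulty is bookkeeping the interaction strength through Stage 4. Each fork iteration $s$ requires $\Delta_s=\Omega(M_0^4\Delta_{s-1}^2/\eps_S^2)$ by Claim~\ref{claim:fork}, and solving this recursion over $\Theta(\log n)$ iterations yields $\Delta_S=((\poly(n)\Delta_0)/\eps_S^{2})^{2^{\Theta(\log n)}}=(\gamma\epsilon)^{-\poly(n)}$, which is precisely the exponential blow-up claimed. I have to verify that the self-consistent assumptions in Claim~\ref{claim:fork} (that $\Delta_s\gg M_0\Delta_{s-1}$, that $\|H_{\mathrm{els}}^{(s)}\|=O(M_0\Delta_{s-1})$, and that couplings $\lambda_{\kappa}=O(\sqrt{\Delta_{s-1}})$) remain valid inductively under this schedule, and that the per-stage energy spread increments $O(\eps_s/\gamma)$ really sum to $O(\epsilon)$ rather than to something larger once amplified by the rescaling $c=\gamma/(\gamma-2\eps_s)$ in each composition. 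Both are routine but require the choice $\eps_s=\epsilon\gamma/\Theta(\log n)$ to be carried consistently throughout, which is the only delicate point; composition via Lemma~\ref{lem:composition} then assembles everything into the stated $2$-local $[6,O(M_0),(\gamma\epsilon)^{-\poly(n)}]$-degree-reducer with trivial encoding.
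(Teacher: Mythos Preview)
Your proposal is correct and follows essentially the same four-stage construction as the paper: subdivision to $3$-local, then $3$-to-$2$-local, then subdivision to isolate original qubits, then iterated fork gadgets for degree reduction, with the exponential blow-up arising from the recursion $\Delta_s=\Theta(M_0^4\Delta_{s-1}^2/\eps^2)$ over $\Theta(\log n)$ fork iterations. The only cosmetic differences are that the paper tracks the cumulative incoherence and energy spread directly stage by stage rather than invoking Lemma~\ref{lem:composition}, and in Part~III it subdivides only the original--original edges (since after Part~II the original--ancilla edges are already of the form $\sigma_\alpha\otimes X$ required by Claim~\ref{claim:fork}), whereas you subdivide all edges touching an original qubit; either works.
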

\addtocounter{thm}{-1}
}

\begin{proof}
Let $H$ be the given $k$-local $n$-qubit Hamiltonian, $H$, where $k=O(1)$.
Note we can always write $H$ in the form
\begin{equation}
H = \sum_{\mu=1}^{M_0} \alpha_{\mu} \sigma_{\mu_1}^{(s_{\mu_1})}\otimes \sigma_{\mu_2}^{(s_{\mu_2})} \otimes \cdots \otimes \sigma_{\mu_k}^{(s_{\mu_k})} , \quad \sigma_{\mu_i} \in \{\Id, X, Y, Z\},
\end{equation}
where $M_0 = O(n^{k})$, and $|\alpha_\mu|=O(1)$.
We call these $n$ qubits the ``original qubits'', and they have maximum degree $d_0 = O(n^{k-1})$.
Let us denote the groundspace projector of $H$ as $P$, with spectral gap $\gamma$ and energy spread $w$.
We now want to construct a degree-reducer of $H$ using gadgets that gap-simulates $(H,P)$, and our construction proceeds in four parts:
\begin{enumerate}
\item Reduce locality to 3 by $O(\log k)$ serial applications of subdivision gadget (use Claim~\ref{claim:subdiv}).
\item Reduce locality to 2 by one application 3-to-2 local gadget in parallel (use Claim~\ref{claim:3-to-2}).
\item Isolate each original qubit by one application of subdivision gadget (use Claim~\ref{claim:subdiv}).
\item Reduce maximum degree to 6 by $O(\log n)$ serial applications of fork gadget (use Claim~\ref{claim:fork}).
\end{enumerate}

\paragraph{Part I}---
We apply the subdivision gadget $K$ times to reduce locality to 3.
At iteration $q = 1,\ldots,K$, we have
\begin{equation}
[H\equiv \tilde{H}_{\gadget,0} \equiv H_{\targ,1}] \to [\tilde{H}_{\gadget,1} \equiv H_{\targ,2}] \to \cdots \to \tilde{H}_{\gadget,K}.
\end{equation}
Let us denote the locality of $\tilde{H}_{\gadget,q}$ as $k_q$, its energy spread $\tilde{w}_q$, its incoherence relative to $H$ as $\epsilon_q$, and $\Delta_q$ as the parameter chosen as in Claim~\ref{claim:subdiv}.
We denote $k_0=k$, $\tilde{w}_0=w$, $\epsilon_0=0$, and $\Delta_0=O(1)$.
By Claim~\ref{claim:subdiv}, we have $k_q \le \lceil k_{q-1}/2\rceil +1$,
hence $K=O(\log k) = O(1)$ iterations is sufficient to reduce locality at the end to $k_K=3$.
Note at any given iteration, the number of terms whose locality need to be reduced is $m=O(M_0)$.
At iteration $q$, let us denote interaction strength in front of the $k_q$ local terms of $H_{\targ,q}$ as $J_q$, where $J_1=\max_\mu |\alpha_\mu| = O(1)$.
Then $\Delta_1 = O((M_0^2J_1(M_0^4 J_1^2 + M_0)/\eps^2)= O(M_0^6/\eps^2)$.
For $q\ge 2$, we have
\begin{equation}
J_q = O(\sqrt{J_{q-1}\Delta_{q-1}}), \quad
\Delta_q = O\left(\frac{M_0^2J_q(M_0^4 J_q^2 + M_0 \Delta_{q-1}) }{\eps^2}\right)
\end{equation}
Since $K=O(1)$, $M_0 = O(n^k) = \poly(n)$, we have
\begin{equation}
J_K, \Delta_K = O(\poly(n, \eps^{-1}))
\end{equation}
We note that after these iterative applications, we added $O(k)$ ancilla qubits for each $k$-local term.
A total of $O(k M_0)$ ancilla qubits are added since there are $O(M_0)$ $k$-local terms.
The added ancilla qubits have degree at most 2.
The original qubits still have degree $O(r_0)$, since for every $k$-local term there were a part of, they still need to interact with some other qubit to effectively generate the interactions.
By Claim~\ref{claim:subdiv}, $\tilde{H}_I := \tilde{H}_{\gadget,K}$ gap-simulates $(H,P)$ with incoherence $\epsilon_I$ and energy spread $\tilde{w}_I$, where
\begin{equation}
\begin{aligned}
\epsilon_I &= \O(K\eps/\gamma) = \O(\log k \eps/\gamma) = \O(\eps/\gamma), \\
\tilde{w}_I &= w + 2K\eps/\gamma = w + \O(\log k \eps/\gamma) = w + \O(\eps/\gamma).
\end{aligned}
\end{equation}
The interaction strength of the $(>1)$-local terms in $\tilde{H}_I \equiv$ are $\sqrt{\Delta_I} := \sqrt{J_K\Delta_K} = O(\poly(n,\eps^{-1}))$.

\paragraph{Part II}---
We apply the 3-to-2 local gadget to reduce locality of $H_{\gadget,K}$ to 2.
We note that there are $O(kM_0)=O(n^k)$ 3-local terms after the previous part,
In particular, we note that since we can apply subdivision gadget to even 3-local terms in the previous part, we can make it so that any 3-local term in $\tilde{H}_{\gadget, K}$ contains at least one ancilla qubits, where the 3-local term acts on it with $X$.
Hence, we can apply Claim~\ref{claim:3-to-2} for every 3-local term $\mu$ simultaneously, while choosing $C_\mu=X$.
The parameters in the premise of Claim~\ref{claim:3-to-2} in this context are $m=O(n^k)$ and $\Delta_0 =\Delta_I = O(\poly(n,\eps^{-1})$.
This allows us to generate a 2-local Hamiltonian $\tilde{H}_{II}$ with interaction strength $\Delta_{II} = O(m^{12}\Delta_0^3/\eps^3) = O(\poly(n,\eps^{-1}))$.
$\tilde{H}_{II}$ gap-simulate $(H,P)$ with incoherence $\epsilon_{II}$ and energy spread $\tilde{w}_{II}$, where
\begin{equation}
\eps_{II} \le \epsilon_{I} + \O(\eps/\gamma) = \O(\eps/\gamma),\quad
\tilde{w}_{II} \le \tilde{w}_{I} + 2\eps/\gamma = w + \O(\eps/\gamma).
\end{equation}
Importantly, in this construction, all the original qubits will interact with ancilla only in the form of $\sigma_\alpha^{(i)}\otimes X^{(a_i)}$, where $i$ is an original qubit and $a_i$ is some ancilla qubit.
(They still might interact with other original qubits with some arbitrary 2-local term.)
The ancilla qubits will have maximum degree 4 (they had maximum degree 2 in the 3-local Hamiltonian of Part I, and the 3-to-2-local gadget Hamiltonian results in a maximum of degree 2 per conversion of a 3-local term).
This is useful to keep in mind because it satisfies the assumptions in Claim~\ref{claim:fork}.

\paragraph{Part III}---
Now we want to isolate original qubits from each other with the subdivision gadget, so that each original qubit $i$ only interacts with some set of ancilla qubit $\{a_i\}$ in the form of $\sigma_\alpha^{(i)}\otimes X^{(a_i)}$.
The idea is to that we can write
\begin{equation}
H_{\targ,III} = \tilde{H}_{II} = H_{\els,III} + \sum_{\nu} c_\nu \sigma_{\nu_1} ^{(i_1)} \otimes \sigma_{\nu_2}^{(i_2)}
\end{equation}
where $(i_1, i_2)$ are pairs of original qubits, $|c_\nu|\le \Delta_{II}^{2/3} = O(\poly(n,\eps^{-1}))$, and $H_{\els,III}$ contains all other terms (i.e. 1-local terms, 2-local terms interacting ancilla with original through $\sigma\otimes X$, or ancilla with ancilla).
We note that $\|H_{\els,III}\| \le O(M_0\Delta_{II})$.
Again, since we have not reduced the degree of the original qubit, there are $m=O(M_0)$ interactions between the original qubits that we need to address.
Thus, we can use the following gadget Hamiltonian
\begin{gather}
\begin{aligned}
&\tilde{H}_{III} = H_{\anc,III} + V_{III} , \quad
H_{\anc,III} = \Delta_{III} \sum_\nu \ketbra{1}^{(a_{\nu})},  \\
&V_{III} = H_{\els,III} +  \sum_\nu \left[ \sqrt{\frac{|c_\nu| \Delta_{III}}{2}} ( \sgn(c_\nu)  \sigma_{\nu_1}^{(i_1)} - \sigma_{\nu_2}^{(i_2)} ) \otimes X^{(a_{\nu})} + |c_\nu| \right], \label{eq:VIII}
\end{aligned}
\end{gather}
so that the original qubit $i_1$ and $i_2$ no longer interact directly, but now interacts with a new ancilla qubit $a_\nu$.
By Claim~\ref{claim:subdiv}, it is sufficient to choose $\Delta_{III} = \poly(n,\eps^{-1})$ to ensure $\tilde{H}_{III}$ gap-simulates $(H,P)$ with incoherence $\epsilon_{III}$ and energy spread $\tilde{w}_{III}$, where
\begin{equation}
\eps_{III} \le \epsilon_{II} + \O(\eps/\gamma) = \O(\eps/\gamma),\quad
\tilde{w}_{III} \le \tilde{w}_{II} + 2\eps/\gamma = w +\O(\eps/\gamma).
\end{equation}
Note the added ancilla qubits have degree at most 2, while the ancilla qubits in $H_{\els, III}$ have degree at most 3.

\paragraph{Part IV}---
Note the Hamiltonian $\tilde{H}_{III}$ from the previous part satisfy the assumptions for applying Claim~\ref{claim:fork}.
This is because the original qubits only interact with ancilla qubits through the form $\sigma\otimes X$ in $V_{III}$ as seen in Eq.~\eqref{eq:VIII}, and implicitly so in $H_{\els,III}$ due to our construction in Part II.
Note the maximum Pauli degree is $r_0 \le d_0=O(n^k)$.
By choosing $\Delta_0 \equiv \Delta_{0,IV} = O(|c_\nu|\Delta_{III}) = O(\Delta_{II}^{2/3}\Delta_{III}) = \poly(n,\eps^{-1})$, the rest of the assumptions in Claim~\ref{claim:fork} are satisfied.
Therefore, by Claim~\ref{claim:fork}, there is a Hamiltonian $\tilde{H}_{IV}$ that gap-simulate $(H,P)$ with incoherence $\epsilon_{IV}$ and energy spread $\tilde{w}_{IV}$ where
\begin{equation}
\epsilon_{IV} \le \epsilon_{III} + O(\frac{\eps}{\gamma} \log n) = O(\frac{\eps}{\gamma}\log n), \quad
\tilde{w}_{IV} \le \tilde{w}_{III} + O(\frac{\eps}{\gamma} \log n) = w + O(\frac{\eps}{\gamma}\log n).
\end{equation}
To ensure that $\eps_{IV} \le \epsilon$ for some constant $\epsilon$, we need to have chosen
\begin{equation}
\eps \le O(\frac{\gamma \epsilon}{\log n})
\quad \Longrightarrow \quad \epsilon_{IV} \le \epsilon \quad \text{and} \quad \tilde{w}_{IV} \le w + \O(\epsilon).
\end{equation}
By Claim~\ref{claim:fork}, this gap-simulating Hamiltonian has maximum degree $6$, $O(nr_0)=O(n^k)=O(M_0)$ terms, each with norm (interaction strength) bounded by
\begin{equation}
J = O\left(\left(\frac{\poly(n) \Delta_{0,IV}}{\eps^2}\right)^{\poly(n)}\right) = O\left(\left(\poly(n,\eps^{-1})\right)^{\poly(n)}\right) =  O\left(\left(\poly(n)/\gamma\epsilon)\right)^{\poly(n)}\right).
\end{equation}
This concludes our construction.
\end{proof}

\section{Connection to Quantum PCP\label{sec:qPCP}}

In this Appendix, we draw a connection between our idea of gap-simulation to reductions of Hamiltonian in the context on quantum PCP.
As explained in Sec.~\ref{sec:other-results},
there is a very important distinction between gap-simulating degree-reductions
and degree-reductions used in the PCP
context. In a nut-shell, this distinction
boils down to the difference between
spectral gap and promise gap.
Nevertheless, there is a meaningful setting
in which this difference can be bridged.

We note that classical PCP reduction algorithms are usually
{\it constructive}: they map not only CSPs to CSPs
but also assignments to assignments\cite{dinurgoldreich,BenSassonPCP}.
Moreover, if the CSP is satisfiable,
then any of its non-satisfying assignments is mapped
to one with at least as many violations.
In other words, they preserve the {\it properties of the
assignment}.

In Sec.~\ref{sec:qPCP-implication} below, we suggest a definition of qPCP
reductions which extends this notion
to the quantum world.
Specifically, we requires that the reduction preserve groundstate
properties in the similar sense as gap-simulations do, and
maps excited states of $H$ to high-energy states of $\tilde{H}$.
Very importantly,
we do \emph{not} require preservation of the spectral gap, as this is
the essence of the difference between the qPCP and gap-simulating settings.
We discuss the connection between this definition and gap-simulation in Sec.~\ref{sec:qPCP-gap-simulation}.

With these restrictions, it is possible to connect
the worlds of spectral gap and promise gap.
In Sec.~\ref{sec:imposs-qPCP-degree}, we prove Theorem~\ref{thm:imposs-qPCP} of that shows the impossibility of qPCP-DR and qPCP-dilution with close-to-perfect coherence, based on similar ideas of those in Lemma~\ref{lem:imposs1-DR} and Theorem~\ref{thm:imposs1-dilute}.
Unfortunately, these
impossibility results hold only
for inverse polynomial incoherence.
One would hope to improve these results to {\it constant}
incoherence, similar to Theorem \ref{thm:main}.
Alas, this remains open;
See Sec.~\ref{sec:proof-sketch-main} for discussion about the difficulty in strengthening
to constant $\epsilon$, which is the
relevant regime in the context of PCPs.

\subsection{Definitions of Quantum PCP Reductions\label{sec:qPCP-implication}}

To derive implications of our gap-simulation framework to quantum PCP,
we need to define quantum PCP reductions,
and then restrict those in a way
that will enable bridging the differences
between spectral-gap and promise-gap worlds.

We start by defining general quantum
PCP reductions.
In gap-simulation, one considers reductions of Hamiltonians that preserve all groundstate properties {\it including their
spectral gaps}.
In the context of quantum NP, it is the so-called
\emph{promise gap}, and not the spectral gap, that must be controlled through the reductions.
Formally:

\begin{defn}[$\mu$-qPCP-reduction]
\label{defn:qPCP-reduction0}
An algorithm $\A$ is a $\mu$-\emph{qPCP-reduction} if it takes as its input any $n$-qubit, $O(1)$-local, positive semi-definite Hamiltonian $H$, and two numbers $a,b$ where $0\le a<2^{-\Omega(n)}$, the \emph{promise gap}
$b-a>1/\poly(n)$, and its output $\tilde{H}=\A(H,a,b)$ is an $O(1)$-local, positive semi-definite Hamiltonian on $n+\poly(n)$ qubits such that
(1) if $\lambda_1(H) \le a$, then $\lambda_1(\tilde{H}) \le \mu a$, and
(2) if $\lambda_1(H) \ge b$, then $\lambda_1(\tilde{H}) \ge \mu b$.
\end{defn}

The definition of classical PCP reduction can easily be deduced.
Note that $\A$ above is not required to preserve any properties of the groundspace.
Thus,
as explained in the introduction, there is no hope to prove information-theoretical
impossibility in this setting.
Hence, we consider the following the restrictions on the quantum PCP reductions,
which we believe extend the properties of constructive PCP reductions in the classical context to the quantum world:

\begin{defn}[$(\mu,\delta, \epsilon)$-qPCP*-reduction]
\label{defn:qPCP-reduction}
A $\mu$-qPCP-reduction $\A$ is a $(\mu,\delta, \epsilon)$-\emph{qPCP*-reduction} with encoding $V$ ($V$ is an isometry) if there exists a projector $P_\anc$ on some ancilla so that:
\begin{enumerate}
\vspace{-5pt}
\item Let $P$ be the projector onto eigenstates of $H$ with eigenvalue $\le a$.
Let $\tilde{P}$ be the projector onto eigenstates of $\tilde{H}=\A(H,a,b)$ with eigenvalue $\le \mu a$.
Then they satisfy
$\|\tilde{P} - V (P\otimes \Id_\anc)V^\dag \tilde{P}\| \le \delta$,
and
$\|\tilde{P} - V({P}\otimes P_\anc)V^\dag \| \le \epsilon$.
We call $\delta$ \emph{unfaithfulness} and $\epsilon$ \emph{incoherence}, same as in Definitions~\ref{defn:hamsimul} and \ref{defn:hamsimul-incoherent}.
\item If $\ket{\psi}$ is an eigenstate of $H$ with eigenvalue $\ge b$, then $\forall \ket\alpha\in P_\anc$, let $\ket{\bar{\psi}} = V \ket{\psi}\ket{\alpha}$ which must satisfy $\braket{\bar{\psi}|\tilde{H}|\bar{\psi}} \ge \mu b$.
\end{enumerate}
\end{defn}

Essentially, our definition of the analogously ``constructive'' qPCP reduction requires that the eigenstates of $H$ serving as satisfying assignments to be faithfully and/or coherently mapped to low-energy eigenstates of the output Hamiltonian $\tilde{H}$.
Furthermore, we require any high-energy eigenstates (violating assignments) of $H$ with eigenvalues above the promise gap to also be mapped mapped to high-energy states in $\tilde{H}$.

As a sanity check, note that a $\mu$-qPCP-reduction with the additional condition that its output $\tilde{H}=\A(H,a=0,b)$ gap-simulates $H$ with $\delta$-unfaithfulness, $\epsilon$-incoherence, and energy spread $\tilde{w}=0$ is a $(\mu,\delta,\epsilon)$-qPCP*-reduction, if the output promise gap is not larger than the spectral gap (see Lemma~\ref{lem:connecting-qPCP-gap}
in Sec.~\ref{sec:qPCP-gap-simulation}).
We now aim to study whether we can rule out
DR and dilution in the qPCP context.
We first define these notions:

\begin{defn}[qPCP-DR and qPCP-dilution]
Consider any $n$-qubit input Hamiltonians of the form $H=\sum_{i=1}^{M_0} H_i$, which is a sum of $M_0=M_0(n)$ terms, each of which is $O(1)$-local.
\begin{itemize}
\vspace{-5pt}
\item A $(\mu,\delta, \epsilon)$-qPCP*-reduction $\A$ is a $(\mu,\delta, \epsilon)$-\emph{qPCP-DR} if $\A(H,a,b)$ has O(1) degree.
\vspace{-5pt}
\item A $(\mu,\delta, \epsilon)$-qPCP*-reduction $\A$ is a $(\mu,\delta, \epsilon)$-\emph{qPCP-dilution} if $\A(H,a,b)$ has $o(M_0(n))$ local terms.
\vspace{-5pt}
\end{itemize}
\end{defn}

We note that the construction used in Proposition~\ref{prop:classical-deg-reduct}, which is based on DR for classical PCP\cite{dinur}, directly
implies a $(1,0,1)$-qPCP-DR by the above definition, for all classical Hamiltonians.

\subsection{Impossibility Result on qPCP-DR and qPCP-dilution (Theorem~\ref{thm:imposs-qPCP})\label{sec:imposs-qPCP-degree}}

Recall our example family of 2-local $n$-qubit Hamiltonian that was previously used:
\begin{equation}
H_\oneone  = \left(\J_z+\frac{n}{2} \right)\left(\J_z+\frac{n}{2}-1\right),
\end{equation}
whose $n+1$ groundstates are
\begin{equation}
\ket{00\cdots00}, \ket{00\cdots01}, \ket{00\cdots10}, \ldots, \ket{10\cdots00}.
\end{equation}

Using the above Hamiltonian, we show
that {\it generic} quantum $(\mu,\delta,\epsilon)$-qPCP-DR
and $(\mu,\delta,\epsilon)$-qPCP-dilution
are impossible assuming the encoding $V$ is unitary and localized.
Unfortunately, we are only
able to show this
for polynomially small $\epsilon$:

\begin{thm}[Limitation on qPCP-DR and qPCP-dilution]
\label{thm:imposs-qPCP}
For any localized encoding $V$, it is impossible to have a $(\mu,\delta,\epsilon)$-qPCP-DR or $(\mu,\delta,\epsilon)$-qPCP-dilution algorithm with encoding $V$ that works on the $n$-qubit Hamiltonian $H_\oneone$ and outputs $\tilde{H}_A$ with $\|\tilde{H}_\oneone\|=O(n^p)$, and incoherence $\epsilon(n) \le o(\sqrt{\mu(n)/n^{p}})$.
\end{thm}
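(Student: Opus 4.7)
The plan is to adapt the contradiction-by-energy trick from Lemma~\ref{lem:imposs1-DR} and Theorem~\ref{thm:imposs1-dilute} to the qPCP setting, replacing the spectral-gap assumption on $\tilde{H}$ with condition 2 of Definition~\ref{defn:qPCP-reduction}, which forces the encoded $2$-excitation states of $H_\oneone$ to have energy at least $\mu b$. I would feed $H_\oneone$ into the alleged reduction $\A$ with $a\le 2^{-\Omega(n)}$ and $b=1$ (note that every eigenstate of $H_\oneone$ with Hamming weight $\ge 2$ has eigenvalue $\ge 1$), and denote $\tilde{H}_\oneone = \A(H_\oneone, a, 1)$. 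Fixing an arbitrary pair of original qubits $(i,j)$, suppose towards contradiction that no term of $\tilde{H}_\oneone$ acts nontrivially on both encoded supports $S_i$ and $S_j$. Then $\tilde{H}_\oneone$ decomposes as $\tilde{H}_{\oneone,i}+\tilde{H}_{\oneone,j}$ with $\tilde{H}_{\oneone,i}$ trivial on $S_i$, and the same commutator manipulation as in Eq.~\eqref{eq:lemma-2-energy-trick} (using $[\tilde{X}_i,\tilde{H}_{\oneone,i}]=0$, $[\tilde{X}_i,\tilde{X}_j]=0$, and $\tilde{X}_i^2\bar g_0=\bar g_0$) yields the identity
\begin{equation}
\braket{\bar{e}_{ij}|\tilde{H}_\oneone|\bar{e}_{ij}} = \braket{\bar g_i|\tilde{H}_\oneone|\bar g_i} + \braket{\bar g_j|\tilde{H}_\oneone|\bar g_j} - \braket{\bar g_0|\tilde{H}_\oneone|\bar g_0},
\end{equation}
where $\bar g_k\equiv V\ket{g_k}\ket{\alpha}$, $\bar e_{ij}\equiv V\ket{e_{ij}}\ket{\alpha}$, and $\ket{\alpha}\in P_\anc$ is arbitrary.

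Next, I would bound both sides of this identity, using incoherence on the right-hand side and condition 2 on the left. Because $\tilde{P}$ projects onto eigenstates of $\tilde{H}_\oneone$ with eigenvalue $\le \mu a$, it commutes with $\tilde{H}_\oneone$, so each groundstate energy splits as $\braket{\bar g_k|\tilde{H}_\oneone|\bar g_k} = \braket{\bar g_k|\tilde{P}\tilde{H}_\oneone\tilde{P}|\bar g_k}+\braket{\bar g_k|\tilde{P}^\perp\tilde{H}_\oneone\tilde{P}^\perp|\bar g_k}$. The incoherence assumption gives $\|\tilde{P}^\perp\bar g_k\|\le \epsilon$ (since $V(P\otimes P_\anc)V^\dag\bar g_k = \bar g_k$), which combined with $\|\tilde{P}\tilde{H}_\oneone\tilde{P}\|\le \mu a$ and $\|\tilde{H}_\oneone\|=O(n^p)$ upper-bounds each such term by $\mu a + \epsilon^2\,O(n^p)$; positive semi-definiteness of $\tilde{H}_\oneone$ (required by Definition~\ref{defn:qPCP-reduction0}) also gives $\braket{\bar g_0|\tilde{H}_\oneone|\bar g_0}\ge 0$. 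On the other side, $\ket{e_{ij}}$ is an eigenstate of $H_\oneone$ with eigenvalue $1\ge b$, so condition 2 forces $\braket{\bar e_{ij}|\tilde{H}_\oneone|\bar e_{ij}}\ge \mu b = \mu$. Plugging these into the identity gives $\mu\le 2\mu a + 2\epsilon^2\,O(n^p)$, and since $a\le 2^{-\Omega(n)}$ makes the first term negligible for $n$ large, this forces $\epsilon = \Omega(\sqrt{\mu/n^p})$, contradicting the hypothesis $\epsilon \le o(\sqrt{\mu/n^p})$.

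Hence every pair of original qubits must be directly interacted by $\tilde{H}_\oneone$ through their encoded supports. For a localized encoding the supports $S_i$ have bounded size and are either disjoint (Case 1 of Definition~\ref{defn:localized-encoding}) or become so after conjugating $\tilde{H}_\oneone\mapsto V^\dag \tilde{H}_\oneone V$ (Case 2, which only inflates locality by a constant factor); the counting argument at the end of the proof of Lemma~\ref{lem:imposs1} then applies verbatim and forces some qubit to have degree $\Omega(n)$ and the total number of local terms to be $\Omega(n^2)$, contradicting the $O(1)$-degree requirement of qPCP-DR and the $o(M_0)=o(n^2)$-terms requirement of qPCP-dilution, respectively. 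The main obstacle to strengthening this to constant $\epsilon$ (the qPCP-relevant regime) is, as already noted in Sec.~\ref{sec:proof-sketch-main}, that the natural route would be to migrate from contradiction-by-energy to contradiction-by-correlation in the style of Theorem~\ref{thm:main}; but the Hastings--Koma input used there requires a spectral gap of $\tilde{H}_\oneone$, which the qPCP framework does not guarantee, so a genuinely new idea appears to be needed.
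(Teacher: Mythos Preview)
Your proposal is correct and follows essentially the same approach as the paper: both feed $H_\oneone$ into $\A$ with $b=1$, derive the same energy identity for $\braket{\bar e_{ij}|\tilde H_\oneone|\bar e_{ij}}$ via commuting $\tilde X_i,\tilde X_j$ through the two halves of $\tilde H_\oneone$, upper-bound the $\bar g_k$ energies using incoherence and $\|\tilde P\tilde H_\oneone\tilde P\|\le\mu a$, lower-bound the $\bar e_{ij}$ energy by $\mu b$ via condition~2 of Definition~\ref{defn:qPCP-reduction}, and then invoke the counting argument from Lemma~\ref{lem:imposs1} after handling the localized encoding by conjugation. Your decomposition $\braket{\bar g_k|\tilde H_\oneone|\bar g_k}=\braket{\bar g_k|\tilde P\tilde H_\oneone\tilde P|\bar g_k}+\braket{\bar g_k|\tilde P^\perp\tilde H_\oneone\tilde P^\perp|\bar g_k}$ (using $[\tilde P,\tilde H_\oneone]=0$) is in fact slightly cleaner than the paper's expansion, avoiding the extra $(1+2\epsilon)$ factor, but the resulting bound and conclusion are the same.
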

\vspace{-5pt}

In particular, if we restrict the output Hamiltonians to have
degree $d$ on $n+m$ number of qubits, with $O(1)$-norm terms and encoded by some localized $V$, then no $(\mu,\delta,\epsilon)$-qPCP-DR or $(\mu,\delta,\epsilon)$-qPCP-dilution algorithm exists that works on $H_\oneone$ with $\mu=\Omega(1)$ and $\epsilon \le o (1/\sqrt{d(n+m)})$.

We prove this Theorem using the same idea as in Lemma~\ref{lem:TotalCoherentImpossible} (that was used to prove Lemma~\ref{lem:imposs1-DR} and Theorem~\ref{thm:imposs1-dilute}), which in the current context yields:

\begin{lemma}
\label{lem:imposs-qPCP}
Suppose there exists a $(\mu,\delta,\epsilon)$-qPCP*-reduction algorithm $\A$ with trivial encoding $V=\Id$.
Let $\tilde{H}_\oneone = \A(H_\oneone,a,b)$ be its output, with $0\le a < b \le 1$.
If either (1) $\epsilon=0$  and $b>2a$, or (2) $\|\tilde{H}_\oneone\| < \frac{\mu (b - 2a - 4\epsilon)}{2\epsilon^2}$, then for every pair of qubits $(i,j)$, $\tilde{H}_\oneone$ must contain a term that acts nontrivially on both qubits.
\end{lemma}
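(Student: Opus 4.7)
\medskip
\noindent\textbf{Proof proposal.} The plan is to mirror the ``contradiction-by-energy'' strategy of Lemma~\ref{lem:TotalCoherentImpossible}, adapted to the qPCP*-reduction framework. Suppose for contradiction that there is a pair of original qubits $(i,j)$ such that no term in $\tilde{H}_\oneone$ acts nontrivially on both $i$ and $j$. Then we can decompose $\tilde{H}_\oneone = \tilde{H}_{\oneone,i} + \tilde{H}_{\oneone,j}$ where $\tilde{H}_{\oneone,i}$ acts trivially on $i$ (so $[\tilde{H}_{\oneone,i}, X_i] = 0$) and similarly for $j$. Fix any $\ket{a}\in P_\anc$ on the ancilla, and define $\ket{\bar{g}_0} = V\ket{g_0}\ket{a}$, $\ket{\bar{g}_i} = V X_i\ket{g_0}\ket{a}$, $\ket{\bar{g}_j} = V X_j\ket{g_0}\ket{a}$, and $\ket{\bar{e}_{ij}} = V X_i X_j \ket{g_0}\ket{a}$. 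Just as in the derivation of Eq.~\eqref{eq:lemma-2-energy-trick}, the commutativity assumption yields
\begin{equation}
\braket{\bar{e}_{ij}|\tilde{H}_\oneone|\bar{e}_{ij}} \;=\; \braket{\bar{g}_i|\tilde{H}_\oneone|\bar{g}_i} + \braket{\bar{g}_j|\tilde{H}_\oneone|\bar{g}_j} - \braket{\bar{g}_0|\tilde{H}_\oneone|\bar{g}_0}.
\end{equation}

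Next, I will upper bound the right-hand side and lower bound the left-hand side using the two conditions of Def.~\ref{defn:qPCP-reduction}. For the upper bound: because $\ket{g_0}, \ket{g_i}, \ket{g_j}\in P$ and $\ket{a}\in P_\anc$, we have $V(P\otimes P_\anc)V^\dag \ket{\bar{g}_*}= \ket{\bar{g}_*}$, so the incoherence condition gives $\tilde{P}\ket{\bar{g}_*} = \ket{\bar{g}_*} + \ket{\epsilon_*}$ with $\|\ket{\epsilon_*}\|\le \epsilon$. Using $\|\tilde{P}\tilde{H}_\oneone \tilde{P}\|\le \mu a$ (from the definition of $\tilde{P}$) and the positive semi-definiteness of $\tilde{H}_\oneone$, an expansion identical in spirit to the one leading to Eq.~\eqref{eq:lemma2-excited-energy-bound} yields
\begin{equation}
\braket{\bar{g}_*|\tilde{H}_\oneone|\bar{g}_*} \;\le\; \mu a + 2\epsilon \mu a + \epsilon^2 \|\tilde{H}_\oneone\|.
\end{equation}
For the lower bound: the state $\ket{e_{ij}}=X_iX_j\ket{g_0}$ is an eigenstate of $H_\oneone$ with eigenvalue $1\ge b$, so condition 2 of Def.~\ref{defn:qPCP-reduction} gives $\braket{\bar{e}_{ij}|\tilde{H}_\oneone|\bar{e}_{ij}}\ge \mu b$. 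Dropping the $-\braket{\bar{g}_0|\tilde{H}_\oneone|\bar{g}_0}\le 0$ term (using positive semi-definiteness of $\tilde{H}_\oneone$) and combining the bounds, one obtains
\begin{equation}
\mu b \;\le\; 2\mu a(1+2\epsilon) + 2\epsilon^2 \|\tilde{H}_\oneone\|,
\end{equation}
i.e., $\|\tilde{H}_\oneone\| \ge \mu(b-2a-4a\epsilon)/(2\epsilon^2)\ge \mu(b-2a-4\epsilon)/(2\epsilon^2)$ (using $a\le 1$), contradicting condition (2). The case $\epsilon=0$ similarly forces $\mu b \le 2\mu a$, contradicting $b>2a$ under condition (1).

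The main subtlety, and the only real ``work'' beyond the template of Lemma~\ref{lem:TotalCoherentImpossible}, is the choice of bound for the cross term $2\Re\braket{\epsilon_*|\tilde{H}_\oneone \tilde{P}|\bar{g}_*}$: since $\tilde{P}$ confines $\tilde{H}_\oneone$ to eigenvalues $\le \mu a$ rather than to the narrow energy spread $\tilde{w}\gamma$ of the gap-simulation setting, the cross term is $\le 2\epsilon \mu a$ rather than $2\epsilon\tilde w$, which is exactly what produces the $4\epsilon a$ (bounded by $4\epsilon$) appearing in the final estimate. Once this accounting is done, the remainder of the argument reduces to the same ``energy ledger'' manipulation as before, and Theorem~\ref{thm:imposs-qPCP} then follows (as in the proof of Lemma~\ref{lem:imposs1}) by noting that any pair $(i,j)$ not directly coupled in $\tilde{H}_\oneone$ contradicts the lemma, which under localized encoding and degree/term bounds becomes incompatible with the stated parameter regime $\epsilon=o(\sqrt{\mu/n^p})$.
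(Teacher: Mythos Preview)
Your proposal is correct and follows essentially the same approach as the paper's proof: the same decomposition $\tilde{H}_\oneone=\tilde{H}_{\oneone,i}+\tilde{H}_{\oneone,j}$, the same energy identity for $\braket{\bar{e}_{ij}|\tilde{H}_\oneone|\bar{e}_{ij}}$, the same upper bound $\mu a(1+2\epsilon)+\epsilon^2\|\tilde{H}_\oneone\|$ on each $\braket{\bar{g}_*|\tilde{H}_\oneone|\bar{g}_*}$ using $\|\tilde{P}\tilde{H}_\oneone\tilde{P}\|\le\mu a$, the same lower bound $\mu b$ from condition~2 of Def.~\ref{defn:qPCP-reduction}, and the same use of positive semi-definiteness to discard $-\braket{\bar{g}_0|\tilde{H}_\oneone|\bar{g}_0}$. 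Your identification of the cross-term bound $2\epsilon\mu a$ (in place of $2\epsilon\tilde w$) as the only substantive change from Lemma~\ref{lem:TotalCoherentImpossible} is exactly right.
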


\begin{proof}
For the sake of contradiction, suppose $\tilde{H}_\oneone$ contains no term that acts nontrivially on both qubit $i$ and $j$.
This means we can decompose $\tilde{H}_\oneone$ into two parts: $\tilde{H}_\oneone=\tilde{H}_{\oneone,i}+\tilde{H}_{\oneone,j}$, where $\tilde{H}_{\oneone,i}$ acts trivially on qubit $i$.
In other words, $[\tilde{H}_{\oneone,i},\sigma_i]=0$ for any Pauli operator $\sigma_i$ on qubit $i$.

Let us denote states $\ket{g_0}=\ket{0\cdots0}$ and $\ket{g_i}=X_i\ket{g_0}=\ket{0\cdots01_i0\cdots0}$.
Let us denote $P=\sum_{i=0}^n \ketbra{g_i}$, which is a projector onto eigenstates of $H_\oneone$ with eigenvalue $\le a $.
We also denote $\tilde{P}$ as the projector onto eigenstates of $\tilde{H}_\oneone$ with eigenvalue $\le \mu a$.
By Definition~\ref{defn:qPCP-reduction} there exists $P_\anc$, such that $\|\tilde{P} - P\otimes P_\anc\| \le \epsilon$.
Let $\ket{\alpha}\in P_\anc$, and denote $\ket{\bar{g}_i} \equiv \ket{g_i}\ket{\alpha}$ for $0\le i \le n$.
Observe that
\begin{equation}
\tilde{P} \ket{\bar{g}_i}= (\tilde{P} - P \otimes P_\anc + P \otimes P_\anc)\ket{g_i}\ket{\alpha} =  (\tilde{P} - P \otimes P_\anc)\ket{g_i}\ket{\alpha} + \ket{g_i}\ket{\alpha} = \ket{\epsilon_i} + \ket{\bar{g}_i}
\end{equation}
where $\ket{\epsilon_i} = (\tilde{P}-P\otimes P_\anc)\ket{\bar{g}_i}$ satisfying $\|\ket{\epsilon_i}\|\le \epsilon$.
Thus,
\begin{eqnarray}
\braket{\bar{g}_i | \tilde{H}_\oneone |\bar{g}_i} &=& (\bra{\bar{g}_i} \tilde{P} - \bra{\epsilon_i}) \tilde{H}_\oneone (\tilde{P}\ket{\bar{g}_i}  - \ket{\epsilon_i}) \nonumber\\
&=& \braket{\bar{g}_i |\tilde{P}\tilde{H}_\oneone \tilde{P}|\bar{g}_i} + \braket{\epsilon_i |\tilde{H}_\oneone|\epsilon_i} - 2\Re \braket{\bar{g}_i |\tilde{P}\tilde{H}_\oneone |\epsilon_i} \nonumber \\
&\le& \braket{\epsilon_i |\tilde{H}_\oneone|\epsilon_i}  + \mu a(1+2\epsilon) \le \epsilon^2 \|\tilde{H}_\oneone\| + \mu a(1+2\epsilon),
\end{eqnarray}
where we used the fact that $\|\tilde{P}\tilde{H}_\oneone \| =\|\tilde{P} \tilde{H}_\oneone \tilde{P}\| \le \mu a$.

Now consider the eigenstate $\ket{e_{ij}}=X_iX_j\ket{g_0}$ of $H_\oneone$ with eigenvalue $1 \ge b$, and let $\ket{\bar{e}_{ij}} = \ket{e_{ij}}\ket{\alpha}$.
By the second condition in Definition~\ref{defn:qPCP-reduction} of a $(\mu,\delta,\epsilon)$-qPCP*-reduction algorithm, we must have
\begin{equation}
\braket{\bar{e}_{ij}|\tilde{H}_\oneone|\bar{e}_{ij}} \ge \mu b
\end{equation}
In addition, observe that
\begin{eqnarray}
\braket{\bar{e}_{ij}|\tilde{H}_\oneone|\bar{e}_{ij}} &=& \braket{\bar{g}_0|X_iX_j (\tilde{H}_{\oneone,i}+\tilde{H}_{\oneone,j})X_iX_j|\bar{g}_0} = \braket{\bar{g}_0|X_i\tilde{H}_{\oneone,j}X_i|\bar{g}_0} + \braket{\bar{g}_0|X_j\tilde{H}_{\oneone,i}X_j|\bar{g}_0} \nonumber \\
&=& \braket{\bar{g}_i|\tilde{H}_{\oneone,j}|\bar{g}_i} + \braket{\bar{g}_j|\tilde{H}_{\oneone,i}|\bar{g}_j} \nonumber\\
&=& \braket{\bar{g}_i|\tilde{H}_{\oneone}|\bar{g}_i} + \braket{\bar{g}_j|\tilde{H}_{\oneone}|\bar{g}_j} - \braket{\bar{g}_i|\tilde{H}_{\oneone,i}|\bar{g}_i} - \braket{\bar{g}_j|\tilde{H}_{\oneone,j}|\bar{g}_j} \nonumber \\
&=& \braket{\bar{g}_i|\tilde{H}_{\oneone}|\bar{g}_i} + \braket{\bar{g}_j|\tilde{H}_{\oneone}|\bar{g}_j} - \braket{\bar{g}_0|X_i\tilde{H}_{\oneone,i}X_i|\bar{g}_0} - \braket{\bar{g}_0|X_j\tilde{H}_{\oneone,j}X_j|\bar{g}_0} \nonumber \\
&=& \braket{\bar{g}_i|\tilde{H}_{\oneone}|\bar{g}_i} + \braket{\bar{g}_j|\tilde{H}_{\oneone}|\bar{g}_j} - \braket{\bar{g}_0|\tilde{H}_{\oneone}|\bar{g}_0} \le 2 \epsilon^2 \|\tilde{H}_\oneone\| + 2\mu a(1+2\epsilon).
\end{eqnarray}
where we used the fact that $\braket{\bar{g}_0|\tilde{H}_{\oneone,j}|\bar{g}_0}\ge 0$ (because $\tilde{H}_\oneone$, as an output of a qPCP*-reduction, is positive semi-definite).
This contradicts the previous equation whenever
\begin{equation}
2\epsilon^2 \|\tilde{H}_\oneone\| + 2\mu a(1+2\epsilon) < \mu b
\quad \Longleftrightarrow \quad
\begin{dcases}
b>2a, & \text{if } \epsilon = 0 \\
\|\tilde{H}_\oneone\| < \frac{\mu (b - 2a - 4\epsilon)}{2\epsilon^2},
& \text{if } \epsilon > 0
\end{dcases}
\label{eq:HA-norm-small}
\end{equation}
Hence, if we assume (1) $\epsilon=0$ and $b>2a$, or (2) $\|\tilde{H}_\oneone\| < [\mu (b - 2a - 4\epsilon)]/(2\epsilon^2)$, then $\tilde{H}_\oneone$ must contain a term that acts nontrivially on both qubit $i$ and $j$.
\end{proof}

\begin{proof}[\textbf{Proof of Theorem~\ref{thm:imposs-qPCP}}]
We know $\lambda_1(H_\oneone) = 0$.
Suppose $\A$ is a $(\mu,\delta,\epsilon)$-qPCP-DR or $(\mu,\delta,\epsilon)$-qPCP-dilution algorithm with some localized encoding $V$.
Since $V$ is an encoding supplied by $\A$, we can define $\A'(H,a,b)=V^\dag \A(H,a,b) V$, which effectively has trivial encoding.
We run the qPCP*-reductions with $a=0$ $b=1$, obtaining $\tilde{H}_\oneone = \A(H_\oneone,0,1)$ and $\tilde{H}_\oneone'=\A'(H_\oneone,0,1) = V^\dag \tilde{H}_\oneone V$.
However, if we require the norm of $\|\tilde{H}_\oneone\| =\|\tilde{H}_\oneone'\| = O(n^p)$ and $\epsilon(n)\le o(\sqrt{\mu(n)/n^p})$, where $\mu(n)$ and $\epsilon(n)$ are parameters of $\A$ that we change with the system size $n$, then we have $ O(n^p) = \|\tilde{H}_\oneone'\| < [\mu(1-4\epsilon)]/(2\epsilon^2) = \omega(n^p)$ for sufficiently large $n$.
Then by Lemma~\ref{lem:imposs-qPCP},
for sufficiently large $n$,
each qubit in $\tilde{H}_\oneone'$ must interact with at least $n-1$ other qubits in $\tilde{H}_\oneone'$.
Since if $\tilde{H}_\oneone$ is assumed to be $O(1)$-local, then so is $\tilde{H}_\oneone'$ when $V$ is a localized encoding.
Thus, this means that $\tilde{H}_\oneone'$ must have degree $r'=\Omega(n)$, and $M'=\Omega(n^2)$ terms to cover all $\binom{n}{2}$ required pairwise interactions between qubits.
As $V$ maps local terms in $\tilde{H}_\oneone$ to local terms in $\tilde{H}_\oneone'$, their maximum degree and number of terms is related by a constant factor.
Hence, $\tilde{H}_\oneone$ also must have degree $r=\Theta(r')=\Omega(n)$, and $M=M'=\Omega(n^2)$ local terms.
Therefore, $\tilde{H}_\oneone$ cannot be the output of a qPCP-DR or a qPCP-dilution algorithm with localized encoding,
given the assumptions of polynomial bound on $\|\tilde{H}_\oneone\|$ and inverse polynomially small incoherence $\epsilon$.\end{proof}

\subsection{Relationship of Gap-Simulation to qPCP*-reduction\label{sec:qPCP-gap-simulation}}

\begin{lemma}
\label{lem:connecting-qPCP-gap}
Fix $a=0$. Suppose $\A$ is a $\mu$-qPCP-reduction such that for all input $H$, its output $\tilde{H}=\A(H,a=0,b)$ also gap-simulates $H$ with encoding $V$, unfaithfulness $\delta$, incoherence $\epsilon$, and energy spread $\tilde{w}=0$.
Then $\A$ is also a $(\mu,\delta,\epsilon)$-qPCP*-reduction with encoding $V$, as long as the output promise gap $\mu b \le \gamma(1-\delta^2)$, where $\gamma$ is the spectral gap of $H$.
\end{lemma}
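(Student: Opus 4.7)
The plan is to verify the two conditions in the definition of a $(\mu,\delta,\epsilon)$-qPCP*-reduction separately. Condition 1 will follow essentially by rephrasing the unfaithfulness and incoherence of gap-simulation; Condition 2 is where the hypothesis $\mu b \le \gamma(1-\delta^2)$ will be used, via a short operator-inequality argument.

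First I would identify the relevant projectors. Because $a=0$ and both $H$ and $\tilde H$ are positive semidefinite (outputs of a $\mu$-qPCP-reduction are required to be so), we have $\lambda_1(H)=0$ and, by the reduction property, $\lambda_1(\tilde H)\le \mu\cdot 0 = 0$, whence $\lambda_1(\tilde H)=0$. Thus the projector $P$ onto eigenstates of $H$ with eigenvalue $\le a$ is exactly the groundspace projector of $H$. Since the gap-simulation has energy spread $\tilde w=0$, the gap-simulator's quasi-groundspace projector $\tilde P$ projects onto the zero-energy eigenspace of $\tilde H$, matching the projector onto eigenstates of $\tilde H$ with eigenvalue $\le \mu a = 0$ demanded by Condition 1. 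With these identifications, the bounds $\|\tilde P - V(P\otimes\Id_\anc)V^\dag\tilde P\|\le\delta$ and $\|\tilde P - V(P\otimes P_\anc)V^\dag\|\le\epsilon$ are literally the unfaithfulness and incoherence hypotheses.

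For Condition 2, take any eigenstate $\ket\psi$ of $H$ with eigenvalue $\ge b>0$ and any $\ket\alpha\in P_\anc$; set $\ket{\bar\psi}=V\ket\psi\ket\alpha$. Since $P$ projects onto the zero-energy groundspace, $P\ket\psi=0$. Using $V^\dag V=\Id$ (isometry) gives $V^\dag\ket{\bar\psi}=\ket\psi\ket\alpha$, so $(P\otimes\Id_\anc)V^\dag\ket{\bar\psi}=0$. Taking the adjoint of the unfaithfulness bound (operator norm is invariant under adjoint) yields $\|\tilde P - \tilde P V(P\otimes\Id_\anc)V^\dag\|\le\delta$; evaluating this on $\ket{\bar\psi}$ gives $\|\tilde P\ket{\bar\psi}\|\le\delta$, hence $\braket{\bar\psi|\tilde P|\bar\psi}\le\delta^2$.

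Finally I would combine this with the spectral-gap property of $\tilde P$. With $\tilde w=0$ and quasi-spectral gap $\ge\gamma$, one has $\tilde P\tilde H\tilde P=0$ and $\tilde P^\perp\tilde H\tilde P^\perp\ge\gamma \tilde P^\perp$; together with $[\tilde H,\tilde P]=0$, this gives the operator inequality $\tilde H\ge\gamma\tilde P^\perp$. Therefore
\[
\braket{\bar\psi|\tilde H|\bar\psi}\ge \gamma\bigl(1-\|\tilde P\ket{\bar\psi}\|^2\bigr)\ge \gamma(1-\delta^2)\ge \mu b,
\]
where the last inequality is precisely the assumed bound $\mu b\le\gamma(1-\delta^2)$. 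This verifies Condition 2 and completes the proof. The only real subtlety, and the step I would expect to double-check carefully, is the adjoint trick that converts the unfaithfulness inequality into the bound $\|\tilde P\ket{\bar\psi}\|\le\delta$; this is the step that translates the groundspace-support condition into the promise-gap guarantee needed for the qPCP* framework.
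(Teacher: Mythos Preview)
Your proof of Condition~2 via the adjoint trick and the operator inequality $\tilde H \ge \gamma \tilde P^\perp$ is correct and is essentially a cleaner repackaging of the paper's computation (the paper writes $\tilde P^\perp\ket{\bar\psi}=\ket{\bar\psi}-\ket{\delta}$ with $\|\ket\delta\|\le\delta$ and then expands the energy; you compress this into one line).

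There is, however, a genuine gap at the outset. You assert that $\lambda_1(H)=0$ ``because $a=0$ and $H$ is positive semi-definite,'' but positive semi-definiteness only gives $\lambda_1(H)\ge 0$. A qPCP*-reduction must satisfy its conditions for \emph{all} inputs $H$, including the ``no'' case $\lambda_1(H)\ge b>0$. In that case the projector onto eigenstates of $H$ with eigenvalue $\le a=0$ is the \emph{zero} projector, which does not coincide with the groundspace projector $P$ coming from the gap-simulation hypothesis; your identification of the two projectors then fails. The paper handles this as a separate case: when $\lambda_1(H)\ge b$, both $P_q$ and $\tilde P_q$ (the qPCP* projectors) are zero, so Condition~1 is trivially $0=0$, and Condition~2 follows immediately from $\lambda_1(\tilde H)\ge \mu b$. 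You should add this case explicitly; once you do, your argument is complete.
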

\begin{proof}
Let us consider the two cases, where $\lambda_1(H) = 0$ or $\lambda_1(H) \ge b$.

To clarify notations, let us denote $P_q$ as the projector onto eigenstates of $H$ with eigenvalue $0$. and $\tilde{P}_q$ as the projector onto eigenstates of $\tilde{H}$ with eigenvalue $0$, both of which may be empty.
These are not to be confused with (quasi-)groundspace projector $P$ and $\tilde{P}$ in the gap-simulation context, although we'll show that under certain conditions, they coincide.

\textbf{Case 1: $\lambda_1(H) \ge b$. }---
Because $\lambda_1(H) \ge b \neq 0$, then there is no eigenstate of $H$ with eigenvalue $\le 0$, so the projector $P_q=0$.
Also since $\lambda_1(H) \ge b \Longrightarrow \lambda_1(\tilde{H})\ge \mu b$ according to Def.~\ref{defn:qPCP-reduction0},
there is also no eigenstate of $\tilde{H}$ with eigenvalue $\le \mu\times 0 =0$, so $\tilde{P}_q=0$.
Trivially, $\|\tilde{P}_q - V P_q V^\dag \tilde{P}_q\| = 0\le \delta$, and $\|\tilde{P}_q - V (P_q\otimes P_\anc)V^\dag \|=0\le \epsilon$.
Thus, condition 1 of Def.~\ref{defn:qPCP-reduction} is satisfied.
Condition 2 is trivially satisfied because $\lambda_1(\tilde{H}) \ge \mu b$.

\textbf{Case 2: $\lambda_1(H) = 0$. }---
Let us denote $P$ the groundspace projector of $H$ with energy $0$, and its spectral gap be $\gamma$.
Since $\lambda_1(H) = 0 \Longrightarrow \lambda_1(\tilde{H})=0$ by Def.~\ref{defn:qPCP-reduction0}, let us denote $\tilde{P}$ as the groundspace projector of $\tilde{H}$ with energy $0$.
Note $P_q=P$ and $\tilde{P}_q=\tilde{P}$.
Since by definition of gap-simulation, we have $\|\tilde{P} - VPV^\dag \tilde{P}\|\le \delta$ and $\|\tilde{P} - V(P\otimes P_\anc) V^\dag \| \le \epsilon$ for some $P_\anc$, then condition 1 of Def.~\ref{defn:qPCP-reduction} is satisfied

Let us now check if condition 2 of Def.~\ref{defn:qPCP-reduction} is satisfied.
Consider any $\ket{\alpha}\in P_\anc$, and $\ket{g^\perp}$ be any eigenstate of $H$ with eigenvalue $\ge b$ (which satisfies $P\ket{g^\perp}=0$).
Let $\ket{\bar\psi}= V \ket{g^\perp}\ket{\alpha}$, we have
\begin{equation}
\tilde{P}^\perp \ket{\bar{\psi}} = \ket{\bar{\psi}} - \tilde{P}\ket{\bar{\psi}} = \ket{\bar{\psi}} - \tilde{P} V\ket{g^\perp}\ket{\alpha} = \ket{\bar{\psi}} - (\tilde{P}-\tilde{P}V P V^\dag) V\ket{g^\perp}\ket{\alpha} = \ket{\bar{\psi}} - \ket{\delta},
\end{equation}
where we denoted $\ket{\delta}\equiv(\tilde{P}-\tilde{P}VPV^\dag )\ket{\bar{\psi}}$ satisfying $\|\ket{\delta}\|\le \delta$.
Also observe that $\tilde{P}^\perp \ket{\delta}=0$, which means $\tilde{P}^\perp\ket{\bar\psi}$ and $\ket{\delta}$ are orthogonal.
Then using $1=\|\ket{\bar\psi}\|^2 = \|\tilde{P}^\perp\ket{\bar\psi}\|^2 + \|\ket{\delta}\|^2$, we have $\|\tilde{P}^\perp\ket{\bar\psi}\| \ge \sqrt{1-\delta^2}$.
Furthermore, note that $[\tilde{P},\tilde{H}]=0$ implies $[\tilde{P}^\perp, \tilde{H}]=0$.
Hence, the energy of $\ket{\bar\psi}$ with respect to $\tilde{H}$ can be lower-bounded:
\begin{eqnarray}
\braket{\bar{\psi} |\tilde{H} |\bar{\psi}} &=& \braket{\bar{\psi}|\tilde{P}^\perp \tilde{H}\tilde{P}^\perp|\bar{\psi}} + 2 \Re\braket{\delta| \tilde{H}\tilde{P}^\perp|\bar{\psi}} + \braket{\delta|\tilde{H}|\delta} \nonumber\\
&=& \braket{\bar{\psi}|\tilde{P}^\perp \tilde{H}\tilde{P}^\perp|\bar{\psi}} + 2 \Re\braket{\delta| \tilde{P}^\perp \tilde{H}\tilde{P}^\perp|\bar{\psi}} + \braket{\delta|\tilde{H}|\delta} = \braket{\bar{\psi}|\tilde{P}^\perp \tilde{H}\tilde{P}^\perp|\bar{\psi}}+  \braket{\delta|\tilde{H}|\delta} \nonumber \\
&\ge& \braket{\bar{\psi}|\tilde{P}^\perp \tilde{H}\tilde{P}^\perp|\bar{\psi}} = \braket{\bar{\psi}|\tilde{P}^\perp(\tilde{P}^\perp \tilde{H} \tilde{P}^\perp+\gamma\tilde{P})\tilde{P}^\perp|\bar{\psi}},
\end{eqnarray}
where in the last step we added a term $\gamma \tilde{P}$ (which would evaluate to zero) inside the parenthesis to utilize the fact that the operator $\tilde{P}^\perp(\tilde{P}^\perp \tilde{H} \tilde{P}^\perp+\gamma\tilde{P})\tilde{P}^\perp$ has minimum eigenvalue $\gamma$, by Def.~\ref{defn:hamsimul} of gap-simulation.
Normalizing $\tilde{P}^\perp \ket{\bar\psi}$ by its norm, we can apply the lower bound of eigenvalue to get a lower bound of $\braket{\bar{\psi}|\tilde{H}|\bar{\psi}}$:
\begin{eqnarray}
\braket{\bar{\psi} |\tilde{H} |\bar{\psi}} \ge \frac{\braket{\bar{\psi}|\tilde{P}^\perp(\tilde{P}^\perp \tilde{H} \tilde{P}^\perp+\gamma\tilde{P})\tilde{P}^\perp|\bar{\psi}}}{\left\|\ket{\tilde{P}^\perp|\bar{\psi}}\right\|^2 } \left\|\ket{\tilde{P}^\perp|\bar{\psi}}\right\|^2  \ge \gamma (1-\delta^2).
\end{eqnarray}
We satisfy condition 2 of Def.~\ref{defn:qPCP-reduction} as long as $\gamma(1-\delta^2)\ge \mu b$.
This completes the proof of the Lemma.
\end{proof}

\section{Weak Gap-Simulation and Coherent Weak Dilution of $H_\oneone$ with Constant Interaction Strength\label{sec:weak-sparsifier}}
In this Appendix we introduce the definition of weak gap-simulation, which is an even weaker version of Hamiltonian simulation than our
gap-simulation.
In certain instances, we find that it can be helpful to allow the
Hamiltonian $\tilde{H}$ to simulate the original $H$
not in its ``groundspace" by in an excited space.
In particular, we can consider allowing $\tilde{P}$ to project onto a subspace that is not necessarily the lowest-energy, but rather isolated in the spectrum by a spectral gap of $\gamma$ from both above and below.
This may have physical applications, for example, in Floquet Hamiltonian engineering\cite{floquet,FloquetEngineering, LindnerFloquetTopo, ChoiDynamicalEngineering} where the system is driven time-dependently and periodically, and thus eigenvalues are only well-defined up to a period.
Hence, this motivates a definition of a weaker
version of gap-simulation that we provide below:

\begin{defn}[weak gap-simulation of Hamiltonian]
\label{defn:weaksimul}
Let $H$ and $\tilde{H}$ be two Hamiltonians, defined on Hilbert spaces $\H$ and $\tilde{\H}$ respectively, where $\tilde{\H}$.
Let $V: \H\otimes \H_\anc  \to \tilde{\H}$ be an isometry ($V^\dag V=\Id$), where $\H_\anc$ is some ancilla Hilbert space.
Denote $\tilde{E}^g \equiv \lambda_1(\tilde{H})$.
Per Definition~\ref{defn:gap}, let $P$ be a quasi-groundspace projector of $H$, $\gamma$ its quasi-spectral gap.
We say that $\tilde{H}$ \emph{weakly gap-simulates}
$(H,P)$ with \emph{encoding} $V$, \emph{incoherence} $\epsilon\ge 0$ and \emph{energy spread} $0\le\tilde{w}<1$ if the following conditions are both satisfied:
\begin{enumerate}
\item There exists a Hermitian projector $\tilde{P}$ projecting onto a
subspace of eigenstates of $\tilde{H}$ such that
    \begin{gather}
    [\tilde{H}, \tilde{P}]= 0, \quad \|\tilde{P}(\tilde{H} -\tilde{E}^g)\tilde{P}\|\le \tilde{w}\gamma, \quad \textnormal{and} \quad
    \left|\lambda_j(\tilde{P}^\perp (\tilde{H} - \tilde{E}^g )\tilde{P}^\perp + \gamma \tilde{P})\right| \ge \gamma \quad \forall j.
    \label{eq:weaksimul}
    \end{gather}
	I.e., $\tilde{P}$ projects onto a
 quasi-groundspace of $\tilde{H}$ with quasi-spectral gap not smaller
than that of $P$ in $H$, and energy spread $\tilde{w}$.
\item There exists a Hermitian projector $P_\anc$ acting on $\H_\anc$, so that
\end{enumerate}
\begin{flalign}
\textnormal{[bounded incoherence]} &&
\|\tilde{P} - V(P\otimes P_\anc)V^\dag \| \le \epsilon &&
\phantom{\textnormal{(incoherence)}}
\end{flalign}
When $P$ projects onto the groundspace of $H$, rather than onto a
quasi-groundspace,
we usually omit $P$ and simply
say $\tilde{H}$ \emph{weakly gap-simulates} $H$.

\end{defn}

\noindent The only difference between Definition~\ref{defn:hamsimul} and \ref{defn:weaksimul} is that we replaced Eq.~\eqref{eq:strongsimul} with Eq.~\eqref{eq:weaksimul}.
Correspondingly, any degree-reducer (diluter) that only weakly gap-simulates is called a \emph{weak} degree-reducer (diluter).

\begin{figure}
\centering
\includegraphics[height=5cm]{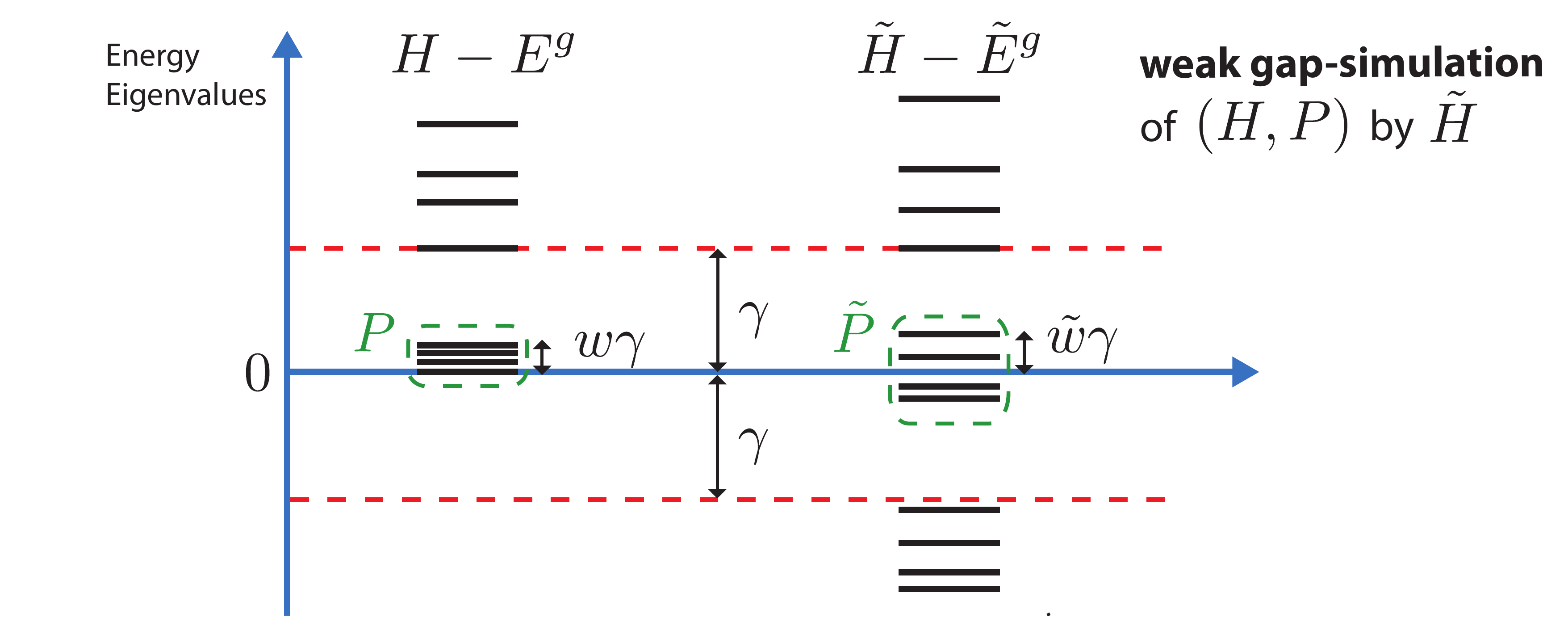}
\caption{\label{fig:weakgapsimul}Visualizing weak gap-simulation of $H$ by $\tilde{H}$.}
\end{figure}

\begin{prop}[weak star-graph diluter of $H_\oneone$]
\label{prop:star}
There is a 2-local, weak $[n, O(n), O(1/\epsilon^2)]$-diluter of $H_\oneone$ with $\epsilon$-incoherence, energy spread $\tilde{w}=\O(\epsilon^2)$ and trivial encoding $V=\Id$, using one additional ancilla qubit that interacts with all original qubits in a star-graph geometry.
\end{prop}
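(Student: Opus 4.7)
The construction I propose is the following single-ancilla Hamiltonian:
\begin{equation*}
\tilde{H}_\star \;=\; J\,\ketbra{1}^{(a)} \;+\; g\sum_{i=1}^{n}\ketbra{1}^{(i)}\otimes X^{(a)} \;-\; \frac{g^2}{J}\sum_{i=1}^{n}\ketbra{1}^{(i)},
\end{equation*}
where $a$ labels the single ancilla qubit and $X^{(a)}$ is its Pauli-$X$. I will take $g^2 = J$ with $J = C/\epsilon^2$ for a sufficiently large absolute constant $C$. By inspection, $\tilde{H}_\star$ is a sum of $2n+1 = O(n)$ terms, each of spectral norm at most $J = O(\epsilon^{-2})$ and each acting on at most two qubits; only the ancilla has degree larger than a constant (its degree is $n$). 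So the resource profile matches the claimed $[n,O(n),O(\epsilon^{-2})]$, and the encoding is trivial.

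The key observation is that $\tilde{H}_\star$ commutes with the total excitation number $N_1 = \sum_i \ketbra{1}^{(i)}$, and therefore decomposes as a direct sum over sectors labeled by an original computational-basis state $\ket{x}$ of Hamming weight $m = N_1(x)$. On each such two-dimensional sector, spanned by $\{\ket{x}\ket{0}^{(a)},\ket{x}\ket{1}^{(a)}\}$, the restriction of $\tilde{H}_\star$ is the $2\times 2$ matrix
\begin{equation*}
M(m) \;=\; \begin{pmatrix} -\tfrac{g^2}{J}\, m & g m \\ g m & J - \tfrac{g^2}{J}\, m \end{pmatrix},
\end{equation*}
with eigenvalues $\lambda_{\pm}(m) = -\tfrac{g^2}{J}m + J/2 \pm \sqrt{J^2/4 + g^2 m^2}$. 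Hence the full spectrum of $\tilde{H}_\star$ decomposes into a lower band $\{\lambda_{-}(m)\}_{m=0}^n$ and an upper band $\{\lambda_{+}(m)\}_{m=0}^n$ (with degeneracies $\binom{n}{m}$) separated by a gap of order $J$.

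I then define $\tilde{P}$ as the projector onto the $(n+1)$-dimensional subspace of upper-band eigenstates with $m\in\{0,1\}$, and choose $P_\anc = \ketbra{1}^{(a)}$. The weak gap-simulation conditions reduce to elementary algebra on $\lambda_{\pm}(m)$: using the identity $\sqrt{J^2/4 + g^2 m^2} - J/2 = 2g^2 m^2/\bigl(J + 2\sqrt{J^2/4 + g^2 m^2}\bigr)$, a Taylor expansion yields $\lambda_{+}(m) - \lambda_{+}(0) = (g^2/J)\,m(m-1) + r(m)$, with residual $r(m) = O(g^4 m^4/J^3)$. Taking $g^2 = J$, this implies $|\lambda_{+}(1) - \lambda_{+}(0)| = O(1/J) = O(\epsilon^2)$ (the claimed energy spread $\tilde{w} = O(\epsilon^2)$), $\lambda_{+}(m) - \lambda_{+}(0) \ge 1$ for all $2 \le m \le n$ once $J$ exceeds a fixed constant independent of $n$ (gap above $\tilde{P}$), and $\lambda_{+}(1) - \lambda_{-}(m') \ge J - O(1)$ for every $m'$ (gap below $\tilde{P}$, since the top of the lower band is $\lambda_{-}(0) = 0$ by direct computation). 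The incoherence estimate follows from diagonalizing $M(m)$: the upper eigenvector at $m=0$ is exactly $\ket{1}^{(a)}$, while at $m=1$ it is $\cos\theta\,\ket{1}^{(a)} + \sin\theta\,\ket{0}^{(a)}$ with $\tan 2\theta = 2g/J$, so $|\sin\theta| = O(g/J) = O(\epsilon)$; a block-wise evaluation of $\tilde{P} - P\otimes\ketbra{1}^{(a)}$ in the $N_1$ sectors then gives $\|\tilde{P} - P\otimes P_\anc\| = O(\epsilon)$, as required.

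The main technical obstacle, as I see it, is not conceptual but rather the need to handle the gap-above-$\tilde{P}$ condition \emph{non-perturbatively} in $m$: one must show $\lambda_{+}(m) - \lambda_{+}(0) \ge 1$ uniformly for every $2 \le m \le n$, not just to leading order in small $g^2/J$. This will follow from monotonicity of $m \mapsto \lambda_{+}(m) - \lambda_{+}(0)$ (which can be checked by an elementary derivative computation on the closed-form expression) together with the explicit value at $m=2$; but extracting a clean absolute constant $C$ for which $J = C/\epsilon^2$ works requires some careful book-keeping. Once these spectral bounds are in place, verification of Definition~\ref{defn:weaksimul} is immediate, with the reference energy of $\tilde{P}$ sitting in the middle of the spectrum (at $\lambda_{+}(0) \approx J$) and isolated by gap $\gamma = 1$ from both above and below, exactly as depicted in Fig.~\ref{fig:weakgapsimul}.
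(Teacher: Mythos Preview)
Your proposal is correct and takes essentially the same approach as the paper. Your Hamiltonian $\tilde{H}_\star$ (with $g^2=J$) coincides with the paper's $\tilde{H}_A^{\text{star}} = -\Delta\ketbra{0}^{\anc} - \hat{N}_e + \sqrt{\Delta}\hat{N}_e\otimes\sigma_x^{\anc}$ up to the additive constant $-J\Id$ (since $-\Delta\ketbra{0} = \Delta\ketbra{1} - \Delta\Id$), and the analysis via block-diagonalization in $N_1$-sectors, extraction of the $2\times 2$ eigenvalues, and the incoherence bound via the mixing angle are all the same. Your treatment is in fact slightly more careful than the paper's on one point: you explicitly address the gap-above condition uniformly for all $2\le m\le n$ via monotonicity of $m\mapsto\lambda_+(m)$, whereas the paper only exhibits the value at $m=2$ and leaves the rest implicit.
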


\begin{proof}
Let us denote the number operator $\hat{N}_e=\sum_{i=1}^n \ketbra{1}^{(i)}$. Note
\begin{equation}
H_\oneone = -\hat{N}_e+\hat{N}_e^2
\end{equation}
Consider the following Hamiltonian on $n$ system qubits and 1 ancilla qubit:
\begin{equation}
\tilde{H}_\oneone^\text{star} = -\Delta \ketbra{0}^\anc  - \hat{N}_e + \sqrt{\Delta}\hat{N}_e\otimes \sigma_x^\anc
\end{equation}
Note when expanded in terms of 2-local terms, $\tilde{H}_\oneone^\text{star}$ has $O(n)$ terms, and has maximum degree $n$ at the ancilla qubit.
Since $[\hat{N}_e, \tilde{H}_\oneone^\text{star}]=0$, we can replace $\hat{N}_e$ with its eigenvalue $m$ in $\tilde{H}_\oneone^\text{star}$, where $m=0,1,\ldots,n$. The Hamiltonian can now be rewritten as
\begin{equation}
\tilde{H}_\oneone^\text{star} = -\Delta/2-m + (m\sqrt{\Delta},0,-\Delta/2)\cdot \vec\sigma^\anc
\end{equation}
The energy eigenvalues are
\begin{eqnarray}
E_m^\pm = -\Delta/2-m \pm \sqrt{m^2\Delta + \Delta^2/4} =
\begin{cases}
-m+m^2-O(m^4/\Delta)\\
-\Delta-m-m^2+O(m^4/\Delta)
\end{cases}
\end{eqnarray}
The corresponding eigenstates are
\begin{eqnarray}
\ket{E_m^+} = \ket{m_e}\otimes(\cos\frac{\theta_m}{2}\ket{1}+\sin\frac{\theta_m}{2}\ket{0})_\anc \equiv \ket{m_e}\otimes \ket{\theta_m^+}\\
\ket{E_m^-} = \ket{m_e}\otimes(\sin\frac{\theta_m}{2}\ket{1}-\cos\frac{\theta_m}{2}\ket{0})_\anc \equiv \ket{m_e}\otimes \ket{\theta_m^-}
\end{eqnarray}
where $\ket{m_e}$ is any eigenvector of $\hat{N}_e$ with eigenvalue $m$, and $\tan\theta_m=m\sqrt{\Delta}/(\Delta/2)=2m/\sqrt{\Delta}$. In the small $m$ sector
\begin{eqnarray*}
E^+_0 &=& 0\\
E^+_1 &=& - 1/\Delta + O(1 /\Delta^2) \\
E^+_2 &=& 2 - 16/\Delta + O(1/\Delta^2)
\end{eqnarray*}
So $m=0,1$ states are quasi-degenerate up to $1/\Delta$, separated by a gap of $\gamma\approx2$ from $m=2$ states.
More explicitly, the states
\begin{eqnarray*}
\ket{E_0^+} &=& \ket{0_e}\otimes\ket{1}_\anc \\
\ket{E_1^+} &=& \ket{1_e}\otimes(\cos\frac{\theta_1}{2}\ket{1}+ \sin\frac{\theta_1}{2}\ket{0})_\anc
\end{eqnarray*}
spans a quasi-degenerate ``groundspace'' of zero energy.
From below, these states are gapped by $\Delta$ from the $\ket{E_m^-}$ states.
From above, these states are gapped by $E_2^+\simeq 2$ from states like
\begin{equation}
\ket{E_2^+} = \ket{2_e}\otimes(\cos\frac{\theta_2}{2}\ket{1}+\sin\frac{\theta_2}{2}\ket{0})_\anc
\end{equation}
The projector onto the ``groundspace'' of $\tilde{H}_\oneone^\text{star}$ is
\begin{equation}
\tilde{P} = P_{0e}\otimes\ketbra{1}_\anc + P_{1e} \otimes \ketbra{\theta_1^+}_\anc
\end{equation}
where $P_{me}$ projects onto $\{\ket{m_e}:\hat{N}_e\ket{m_e}=m\ket{m_e}\}$, the set of states containing $m$ excitations.
The original Hamiltonian's groundspace projector is $P=P_{0e} + P_{1e}$.
We can guess $P_\anc = \ketbra{1}$, and thus
\begin{eqnarray*}
\tilde{P}-P\otimes P_\anc &=& P_{1e} \otimes \left(\ketbra{\theta_1^+} - \ketbra{1}\right) \\
&=& P_{1e}\otimes \left(\sin^2\frac{\theta_1}{2}(\ketbra{0}-\ketbra{1}) + \frac{1}{2}\sin\theta_1(\ketbrat{0}{1}+\ketbrat{1}{0})\right)
\end{eqnarray*}
The incoherence of the $m=0,1$ ``groundspace'' is
\begin{equation}
\|\tilde{P}-P\otimes P_\anc\| \le \sin\theta_1 = \sqrt{\frac{2/\Delta}{1+4/\Delta+\sqrt{1+4/\Delta}}} \approx \sqrt{\frac{2}{\Delta}}
\end{equation}
Hence, by choosing $\Delta=O(1/\epsilon^2)$, $\tilde{H}_\oneone^\text{star}$ is a 2-local weak $[n,O(n),O(1/\epsilon^2)]$-sparsifier of $H_\oneone$ with $\epsilon$-incoherence and energy spread $\tilde{w}=1/\Delta=\O(\epsilon^2)$.
\end{proof}

\end{appendices}

\bibliographystyle{hieeetr}
\bibliography{sparsify}

\end{document}